\documentclass[a4paper,12pt]{article}

\RequirePackage{amsthm,amsmath,amssymb}
\RequirePackage[numbers]{natbib}

\RequirePackage[colorlinks,citecolor=blue,urlcolor=blue]{hyperref} 
\usepackage{graphicx}
\usepackage{authblk}
\usepackage{titlesec}
\usepackage{color}
\numberwithin{equation}{section}

\theoremstyle{plain}
\newtheorem{theorem}{Theorem}[section]
\newtheorem{lemma}[theorem]{Lemma}
\newtheorem{remark}{Remark}
\newtheorem{assumption}{Assumption}
\newtheorem{proposition}[theorem]{Proposition}
\newtheorem{corollary}[theorem]{Corollary}

\theoremstyle{definition}

\usepackage{booktabs}
\usepackage{ulem}
\usepackage{makecell}
\usepackage{enumitem}

\providecommand{\keywords}[1]
{
  \small	
  \textbf{Keywords:} #1
}
 
\usepackage[lmargin=2.5cm, rmargin=2.5cm, bmargin=2.7cm, tmargin=2.3cm]{geometry}

\begin{document}

\title{Splitting schemes and estimators for stochastic differential equations with H\"older multiplicative noise}
\author[1]{Bowen Fang}
\author[1]{Dario Span\`o}
\author[1]{Massimiliano Tamborrino}
\affil[1]{\small Department of Statistics, University of Warwick, Coventry,UK}
\maketitle

\begin{abstract}
We study parameter estimation for univariate stochastic differential equations with locally Lipschitz drift and Hölder continuous multiplicative diffusion, a class commonly arising in several applications. Existing inference methods typically rely on either the Euler–Maruyama discretisation, despite its lack of strong convergence and failure to preserve the state space, or on approximations, e.g. Gaussian approximation or truncation of Hermite's expansions, impacting on their stability and computational efficiency. We introduce the first explicit pseudo-likelihood estimators based on numerical splitting schemes that are both strong mean-square convergent and state space preserving for this class of SDEs. Our approach is based on a novel decomposition of the SDE that exploits reducibility and the Lamperti transform, leading to Lie–Trotter (LT) and Strang splitting schemes yielding explicit pseudo-likelihoods and maximum likelihood estimators based on them. We prove strong mean-square convergence, state space preservation, and improved robustness with respect to the discretisation step compared to Euler–Maruyama-based methods.
We further establish consistency and asymptotic normality of the LT estimator. Because the proposed numerical scheme couples drift and diffusion parameters in the pseudo-likelihood, the asymptotic analysis requires new proof techniques. Extensive simulations demonstrate that the proposed estimators outperform existing methods in both accuracy and computational efficiency.
\end{abstract}
\noindent\keywords{Asymptotic normality, consistency, 
discretely observed diffusions, Hölder diffusion coefficients, splitting schemes, mean square convergence, Lamperti transform}
\section{Introduction}\label{sec:intro}
In this work, we consider univariate stochastic differential equations (SDEs) with multiplicative noise of the form
\begin{equation}\label{eqn:the sde}
    dX_t=f(X_t;\theta)dt+g(X_t;\sigma)dW_t,
\end{equation}
where the drift coefficient $f$ is a locally Lipschitz function, the diffusion coefficient $g$ is a H\"older continuous function, and $W_t$ is a one-dimensional Wiener process. Our primary goal is the estimation of the underlying unknown parameter vector $\alpha=(\theta,\sigma)^T\in\Theta=\Theta_{\theta}\times\Theta_{\sigma}$, from discrete time observations of the process, where $\theta\in \Theta_{\theta}\subseteq\mathbb{R}^{d_1}$ and $\sigma\in\Theta_{\sigma}\subseteq\mathbb{R}^{d_2}$  enter in the drift and diffusion coefficients respectively. This broad class of SDEs includes several well known models, such as the Pearson diffusion class \cite{forman_pearson_2008}, the stochastic Ginzburg-Landau equation \cite{ginzburg_superconductivity_2009}, the stochastic Verhulst equation \cite{verhulst_recherches_1845}, the Feller diffusion with logistic growth \cite{lambert_branching_2005}, and the Ahn and Gao model \cite{ahn_parametric_1999}. The Pearson diffusion class includes, among others, processes like the Wright-Fisher or Jacobi diffusion, which is used in population genetics to model changes in allele frequencies \cite{casanova_duality_2017,koskela_bernoulli_2024}, and neuroscience to model single neuron dynamics \cite{donofrio_jacobi_2018}, the Ornstein–Uhlenbeck (OU) \cite{uhlenbeck_theory_1930} (also known as Va\v{s}icek
 model in the finance literature \cite{vasicek_equilibrium_1977}), and the Cox–Ingersoll–Ross (CIR) process \cite{cox_theory_1985}, both used, among other fields, in finance and econometrics to model asset prices and interest rates with features such as mean reversion, volatility clustering, and heavy-tailed distributions \cite{cox_theory_1985}, with the CIR also used to model population
growth.

We are interested in estimating the unknown parameter vector $\alpha$ in a likelihood-based framework. 
Exact likelihood-based estimators rely on the true underlying transition density of the model being available either in closed analytical form, or through a procedure that allows sampling from it. This is rarely possible for SDEs of the form \eqref{eqn:the sde} with non-additive noise,  except for a few explicitly solvable cases such as, for example, the CIR, and the geometric Brownian motion, or for sampleable ones, such as the Wright-Fisher diffusion. For the majority of processes, the transition density is instead  approximated to compute what is known as \textit{pseudo-likelihood}, which is then maximised to derive the so-called \textit{pseudo maximum likelihood estimator} (MLE).  
 The most straightforward and widely used approximation is obtained via the Euler-Maruyama (EuM) numerical scheme \cite{kloeden_numerical_1992}, yielding a Gaussian transition density \cite{florens-zmirou_approximate_1989}. While popular for its (computational) simplicity, the EuM method (as well as other canonical It\^o-Taylor expansion approaches such as Milstein \cite{milshtejn_approximate_1975}) requires globally Lipschitz and linear growth conditions on $f,g$ to ensure a strong convergence property, otherwise it may diverge in the mean-square sense \cite{hutzenthaler_strong_2011}. Several modified EuM methods have been proposed for SDEs with non-globally Lipschitz drift and superlinear growth, e.g., truncated EuM \cite{mao_truncated_2015,mao_strong_2013,yang_truncated_2020}, tamed EuM \cite{hutzenthaler_strong_2012,Sabanis_2016}, 
and implicit EuM \cite{higham_strong_2002}, all mean-square convergent schemes with an approximate Gaussian  transition density (except the implicit EuM). Nevertheless, these modified EuM-based estimators struggle with \lq\lq large\rq\rq observation time steps $h$, see e.g. \cite{,buckwar_spectral_2020,Huangetal2025} and references therein, as the underlying numerical schemes fail to preserve some desired properties of the SDE solution, such as state space of the processes, boundary behaviour, and asymptotic distribution \cite{buckwar_splitting_2022}. Kessler \cite{kessler_estimation_1997} 
 derived estimators based on approximating the transition density with a Gaussian density with true conditional moments. A different approach is proposed in the celebrated work of A\"{i}t-Sahalia
 \cite{ait-sahalia_maximum_2002}, where a pseudo MLE is derived by maximising the pseudo-likelihood obtained by applying a Lamperti transform \cite{lamperti_semi-stable_1972}
 to the SDE \eqref{eqn:the sde}, yielding a new SDE in $Y_t$ with (generally) non-linear drift and unit diffusion coefficients. $Y_t$ is then normalised to obtain a new random variable whose density  is approximated by means of a Hermite polynomial expansion, before being mapped back to the desired transition density of $X_t$. As the Lamperti transform is also crucial in our proposed numerical schemes and estimators, we recall it here for convenience. The Lamperti transform of the SDE \eqref{eqn:the sde}
 is given by
\begin{equation}\label{eqn:Lamperti transform}
    Y_t=v(X_t)=\int^{X_t}\frac{1}{g(u)}du,
\end{equation}
leading to 
  \begin{equation}\label{Lamperti}
    dY_t=
    b(Y_t)dt+dW_t:=
    \left[\frac{f\left(v^{-1}(Y_t)\right)}{g\left(v^{-1}(Y_t)\right)}-\frac{1}{2} g'\left(v^{-1}(y)\right)\right]dt+dW_t,
\end{equation}
where $g'(x)$ denotes the derivative of $g(x)$ in $x$. The transformed process $Y_t$ may be more analytically tractable than the
original process $X_t$, and may sometimes admit an explicit
solution, as is the case for the different types of reducible SDEs \cite{kloeden_numerical_1992}. 
 
Estimators based on strong convergent and property-preserving numerical schemes for SDEs with multiplicative H\"older noise are currently missing. Indeed, the previously-mentioned approaches, as well as all the main alternative inferential approaches (such as local linearisation \cite{Ozaki1985}, martingale estimation functions \cite{bibby_martingale_1995}, density-based approximation \cite{iguchi_closed-form_2025}, Markov chain Monte Carlo methods \cite{Beskosetal2006}), either use EuM, which is not convergent, or focus on the pseudo-likelihood rather than the underlying numerical scheme. Here, we fill this gap by proposing two new estimators based on the
Lie-Trotter (LT) and Strang (S)  splitting schemes derived via a clever  decomposition of the original SDE \eqref{eqn:the sde}, making use of the Lamperti transform and reducible SDEs.  Splitting schemes have been recently shown to be both strong convergent and structure-preserving for the scalar inhomogeneous geometric Brownian motion \cite{tubikanec_qualitative_2022}, and 
multivariate semi-linear SDEs with locally Lipschitz drift and constant  diffusion coefficient \cite{buckwar_splitting_2022}. 
The idea behind these schemes is to decompose 
the SDE into several solvable subequations, and then compose their solutions in a suitable way to approximate the true solution of the original system. We refer to \cite{strang_construction_1968,trotter_product_1959,vovchanskyi_quick_2023} for a thorough discussion of operator-splitting methods in differential equations, and to \cite{buckwar_splitting_2022,foster_high_2024,kelly_adaptive_2023,misawa_numerical_2000,ninomiya_new_2009,tubikanec_qualitative_2022} for more recent applications in SDEs. Crucially for our goal, splitting schemes have been recently used for inferential purposes, either to simulate trajectories within Approximate Bayesian inference (ABC) algorithms \cite{buckwar_spectral_2020,ditlevsen_network_2025,Jovanovskietal2025,STT2025}, or to perform pseudo MLE from the derived pseudo-likelihood \cite{ditlevsen_warning_2023,Huangetal2025,pilipovic_parameter_2024,pilipovic_parameter_2025}. Here, we focus on the latter approach, but  our findings can be also naturally applied to other inferential methods beyond a ``direct MLE" perspective. Indeed, our splitting schemes yield explicit one-step conditional means and variances which could be used within the generalised method of moments, or would represent the cornerstone for any methods tying simulation to inference (ABC 
\cite{sisson2018handbook}, sequential Monte Carlo \cite{del2006sequential}, Markov Chain Monte Carlo \cite{brooks2011handbook}, simulation-based-inference \cite{cranmer2020frontier}, etc.), where simulating accurate trajectories for large simulation time steps is of paramount interest to minimise the computational costs.
 In \cite{pilipovic_parameter_2024},  
the numerical splitting schemes proposed in \cite{buckwar_splitting_2022} for multivariate SDEs with locally Lipschitz drift and additive noise are used to derive pseudo-likelihood-based estimators which are consistent and asymptotically normal.  We go  beyond this: we derive estimators based on mean-square-convergent numerical splitting schemes for the  broader class of  SDEs with multiplicative noise. 

Deriving a general, suitable, splitting decomposition is the hardest task, particularly for complex SDEs such as \eqref{eqn:the sde}, as the underlying subequations would typically depend on the diffusion coefficient. 
Recently, a few methods have been proposed combining Lamperti transform and splitting schemes, with the idea of numerically solving the transformed SDE \eqref{Lamperti} for $Y_t$, and then map the solution back to the original SDE in $X_t$ via its inverse $v^{-1}(Y_t)$. 
In \cite{kelly_adaptive_2023}, the authors first Lamperti-transform the CIR process and then decompose the transformed SDE into three subequations. In \cite{ulander_boundary-preserving_2024}, the author considers the class of bounded  SDEs, Lamperti-transformed it, added and subtracted a constant to the new drift term $b(y)$, leading to a splitting into an ordinary differential equation (ODE), and a SDE with constant drift and diffusion.

Here, we propose a splitting decomposition of \eqref{eqn:the sde} taking  advantage of 
the so-called \textit{reducible SDEs} \cite{kloeden_numerical_1992}, i.e., SDEs which, with an appropriate transformation, can be reduced to (linear) solvable SDEs. By adding and subtracting a suitable, generally non-constant term in the drift (e.g., $ g(x)g'(x)/2$) of the original SDE \eqref{eqn:the sde}, we can decompose it  into an ODE and an explicitly solvable SDE (as the Lamperti SDE \eqref{Lamperti} has solution  $Y_t=W_t$), and then compose their solutions via the LT and S approach. Crucially for our inferential purposes, the derived schemes yield non-Gaussian pseudo-likelihoods via approximating transition densities which are functions of Gaussian distributions. A similar setting was recently observed in the above-mentioned paper \cite{pilipovic_parameter_2024}. 
Compared to \cite{pilipovic_parameter_2024}, the framework considered in this paper is further generalised in two ways. First, we consider a broader class of SDEs, allowing the diffusion coefficient to be H\"older continuous. Second, the derived pseudo-likelihoods are functions of terms depending jointly on the whole parameter vector $\alpha=(\theta,\sigma)^T$ rather than only $\theta$ and $\sigma$ considered separately, requiring the development of novel moment bounds and proof techniques, which can then be used in other settings where a similar framework is observed. Intuitively, the need to tackle $\theta$ and $\sigma$ jointly arises because the new splitting decomposition considered here \lq\lq brings in\rq\rq\ the parameter $\sigma$ from the diffusion coefficient into the drift of the SDE subequation.

Proving that the proposed splitting schemes are strong convergent is not immediate in our generalised setting \eqref{eqn:the sde}. Indeed, existing results are based on 
Milstein's fundamental theorem \cite{milstein_numerical_1995} 
which holds for SDEs with globally Lipschitz drift and diffusion coefficients, or on its more general version by Tretyakov and Zhang \cite{tretyakov_fundamental_2013}  for SDEs with one-sided Lipschitz drift with polynomial growth. For these SDEs, the convergence on the whole time interval is guaranteed by only checking their one-step numerical consistency (intuitively, the \lq\lq convergence\rq\rq\ in one interval, from time $t_i$ to time $t_{i+1}$), notably simplifying the proofs. However, Milstein's-type theorems cannot be applied here, due to the diffusion coefficient being, in general, only H\"older continuous. \textit{Ad-hoc} proofs have then been proposed in the literature, see, e.g., \cite{chen_strong_2023} and \cite{kelly_adaptive_2023}, for the the Wright-Fisher and the CIR, respectively. Here, we provide a comprehensive approach to prove the desired strong mean-square convergence of the LT scheme for a general splitting decomposition, suitable also for time-inhomogeneous versions of \eqref{eqn:the sde}. The proof is based on the idea of deriving a LT approximation on the whole time interval, as recently proposed for semi-discrete (SD) schemes \cite{stamatiou_numerical_2016,stamatiou_boundary_2018} and Lamperti-splitting schemes \cite{ulander_boundary-preserving_2024}. 

To summarise, for the considered class of scalar SDEs with locally Lipschitz drift and H\"older diffusion coefficients, our main contributions are:
\begin{enumerate}
    \item The derivation of two explicit and easy-to-construct numerical splitting schemes for the solution of the whole class of SDEs \eqref{eqn:the sde} which:
    \begin{itemize}
        \item[(i)] are strong mean-square convergent (cf. Theorem \eqref{thrm:LT convergence} for LT),  even for a time-inhomogeneous version of the SDE and a more general splitting decomposition based only on locally Lipschitz drift coefficient;
        \item[(ii)] are state space preserving when considering entrance boundaries;
        \item[(iii)] have explicit one-step numerical consistencies (quantifying the 
        average accuracy of the numerical scheme after one iteration, starting from the true solution at the previous step) of order 2 and 3 for LT and S respectively (cf. Proposition \ref{prop:pearson consistency});
        \item[(iv)] admit  an explicit formulation of the one-step and global conditional means (and variances), which may be used to study the bias introduced by the numerical schemes, or to perform the generalised method of moments;
        \item[(v)] may be used within simulation-based inference \cite{cranmer2020frontier} and likelihood-free approaches, where simulating accurate trajectories
for large simulation time steps is crucial to minimise the computational costs.
    \end{itemize}
    \item The derivation of two splitting-based estimators, which are: 
    \begin{itemize}
        \item[(i)] simple to compute and evaluate, especially the LT scheme;
        \item[(ii)] shown to be intrinsically linked to A\"it-Sahalia's Hermite expansions;
        \item[(iii)]  consistent and asymptotically normal under the more general assumption of locally H\"older diffusion coefficient and pseudo-likelihood with tangled $(\theta,\sigma)$, with nontrivial proofs.
    \end{itemize}
\end{enumerate}
 
The paper is structured as follows. In Section \ref{sec:set up}, we introduce the class of SDEs of interest, specifying the assumptions on the drift and diffusion coefficients, construct the proposed splitting numerical schemes and estimators, and recall alternative estimators based on Kessler's Gaussian approximation and Hermite expansion. 
In Section \ref{sec:convergence}, we prove the strong mean-square convergence of the LT scheme and derive further properties for both schemes. In Section \ref{sec:likelihood}, we use these splitting schemes to approximate the transition densities of the processes, deriving pseudo MLEs which we show being consistent and asymptotically normal for the LT estimator. A remark on the S asymptotic results is provided. Finally, in Sections \ref{sec:numerical results},  we report the empirical performance of the numerical schemes and the splitting estimators on a wide class of SDEs, comparing them with alternative numerical schemes and inferential approaches, before a final discussion in Section \ref{sec:conclusion}.
\section{Model and properties}\label{sec:set up}
Let $\left(\Omega, \mathcal{F}, \mathbb{P}\right)$ be a complete probability space with a complete and right-continuous filtration $\left(\mathcal{F}_t\right)_{t \geq 0}$, and let $(W_t)_{t\geq 0}$ be a one-dimensional Brownian motion defined on that space, and adapted to $\left(\mathcal{F}_t\right)_{t \geq 0}$. In this paper, we focus on SDEs of the form \eqref{eqn:the sde}, with the following assumptions on the drift and diffusion coefficients. 
\begin{assumption}\label{assump:main}
    For SDE \eqref{eqn:the sde}, the functions $f,g:\mathbb{R}\to \mathbb{R}$ satisfy  
    \begin{itemize}
        \item[(i)] $f(x)=f_1(x)+f_2(x)$, where $f_1,f_2$ are continuous locally Lipschitz functions, i.e.,
        for any $R>0$ and $\max(|x_1|,|x_2|)\leq R$, there exists an arbitrary constant $C_R$ that only depends on $R$ such that
        \begin{equation}
            \left|f_i(x_1)-f_i(x_2)\right|\leq C_R\left(|x_1-x_2|\right), \quad i\in \{1,2\},\label{eqn:f condition}
        \end{equation}
        \item[(ii)] The function $g$ is locally H\"older with exponent $q\in(0,1/2)$, i.e., for any $R>0$ and $\max(|x_1|,|x_2|)\leq R$, there exists $q\in (0,\frac{1}{2})$ such that
        \begin{equation}
            \left|g(x_1)-g(x_2)\right|\leq C_R \left(|x_1-x_2|+|x_1-x_2|^q\right)\label{eqn:g condition}.
        \end{equation}
    \end{itemize}
    
    \end{assumption}
 Assumption \ref{assump:main} guarantees the existence and uniqueness of a strong solution of the SDE \eqref{eqn:the sde}, see \cite{khasminskii_stochastic_2012,mao_stochastic_2011}. The locally Lipschitz condition relaxes the commonly assumed globally Lipschitz or polynomial growth conditions on the drift and diffusion terms, broadening the class of SDEs considered in this work. 

    The following assumption states all conditions needed for the proposed numerical splitting scheme to be well defined.

\begin{assumption}\label{Assumption2}
\begin{itemize}
\item[(i)] The function $g$ is differentiable.
\item[(ii)] The function $g(x)g'(x)$ is locally Lipschitz.
\item[(iii)] The function $1/g(x)$ is integrable, i.e. $v(x)=\int^x 1/g(s)ds$ is well defined (cf. \cite{kloeden_numerical_1992}).
\item[(iv)] The function $v^{-1}$ is well defined (cf. \cite{kloeden_numerical_1992}) when evaluated in $\mathbb{R}$.
\end{itemize}
\end{assumption}
Finally, the following  conditions are needed to prove the consistency and asymptotic normality of the derived splitting estimators. Below, we use the notation $f(x;\theta)$ and $g(x;\theta)$ to emphasise the dependency of $f$ and $g$ on $\theta$ and $\sigma$, respectively. 
\begin{assumption}\label{Assumption3}
\begin{enumerate}
    \item[(i)] 
The SDE \eqref{eqn:the sde} is ergodic, and for all $p\geq0$, $\sup _t \mathbb{E}\left|X_t\right|^p<\infty$.      \item[(ii)]$f(x;\theta)$ and $g(x;\sigma)$ are of polynomial growth in $x$, uniformly in $\Theta$.  $f_1(x;\alpha)$ and $g(x;\sigma)$ are twice continuously differentiable with respect to $\alpha$ and $\sigma$, uniformly in $x$. The derivatives are of polynomial growth in $x$, uniformly in $\alpha$.
    \item[(iii)] $v(x;\sigma)$ is twice continuously differentiable with respect to $\sigma$ for all $x$, and its derivative is of polynomial growth uniformly in $\sigma$. 
    \item[(iv)] $f(x;\theta)$ and $g(x;\sigma)$ are identifiable, i.e. if $f(x;\theta_1)=f(x;\theta_2)$ for all $x\in \mathbb{R}$, then $\theta_1=\theta_2$.
    \end{enumerate}
\end{assumption}
\subsection{Splitting schemes}\label{sec:splittings}
Let us consider a discretised time interval $[0,T]$ with equidistant time steps $h=t_k-t_{k-1}$, with $t_0=0, \ldots, t_n = T$. We split \eqref{eqn:the sde} as 
\begin{eqnarray}\label{eqn:LT subequations}
dX^{[1]}_t&=&f_1(X^{[1]}_t)dt,\\
   \label{eqn:LT subequations2}dX^{[2]}_t&=&f_2(X^{[2]}_t)dt+g(X^{[2]}_t)dW_t,
\end{eqnarray}
with $f_1(x)+f_2(x)=f(x)$, 
such that $f_1, f_2$ and $g$ satisfy Assumption \ref{assump:main}. An explicit solution of \eqref{eqn:LT subequations2} is generally not available. To tackle this, we consider a specific formulation of $f_1$ and $f_2$ leading to
\begin{eqnarray}
\label{eqn:ODE1}
dX_t^{[1]}&=&
\left[f(X_t^{[1]})-\frac{1}{2}g(X_t^{[1]})g'(X_t^{[1]})\right]dt,
\\
\label{eqn:SDE2}
dX_t^{[2]}&=&
\frac{1}{2}g(X_t^{[2]})g'(X_t^{[2]})dt + g(X_t^{[2]})dW_t,
\end{eqnarray}
with $g$ satisfying Assumption \ref{assump:main}(ii) and Assumption \ref{Assumption2}(ii), guaranteeing that the  corresponding $f_1(x)=f(x)-g(x)g'(x)/2$ and $f_2(x)= g(x)g'(x)/2$ satisfy Assumption \ref{assump:main}(i), with both subequations explicitly solvable.  
The proposed decomposition takes advantage of the obtained reducible SDE \eqref{eqn:SDE2} (Case 1 in Chapter 4, \cite{kloeden_numerical_1992}), with one-step solution (or $h$-step solution) from $X^{[2]}_{t_k}$ obtained via the Lamperti transform \cite{lamperti_semi-stable_1972} $Y^{[2]}_t=v(X_{t_k}^{[2]})$:
\begin{equation}\label{solSDE2}
X_{t_{k+1}}^{[2]}=\varphi_h^{[2]}(X_{t_k}^{[2]})=v^{-1}(\xi_k+v(X_{t_k}^{[2]})),\qquad k=0,\ldots, N-1,
\end{equation}
with $v(x)$ defined in \eqref{eqn:Lamperti transform} for $x$ being an arbitrary value in the state space of $X=(X_t)_{t\geq 0}$ and $\xi_k, k=0,\ldots, N-1$ being independent and identically normally distributed random variables with zero mean  and variance $h$, i.e. $\xi_k\sim \mathcal{N}(0,h)$.
\begin{remark}\label{case2}
In \cite[Chapter 4]{kloeden_numerical_1992}, two more cases of reducible SDEs are proposed with diffusion coefficient $g(x)$ and drift $d_1(x)=\alpha g(x)+g(x)g'(x)/2$ or $d_2(x)=\alpha g(x)v(x)+g(x)g'(x)/2$. Then, our approach of decomposing the SDE \eqref{eqn:the sde} via \eqref{eqn:LT subequations}-\eqref{eqn:LT subequations2}, with $f_1(x)=f(x)-d_i(x)$ and $f_2(x)=d_i(x), i=1,2$ would lead to additional novel splitting schemes for a broader class of SDEs, as $f_1, f_2$ would also be H\"older when $g(x)$ is H\"older. We illustrate the decomposition via the drift $d_1$ on an SDE with cubic drift and linear diffusion coefficient in Suppl. Mat. \ref{AppendixC}, showing how this leads to a state preserving numerical scheme.
\end{remark}

By assuming the locally Lipschitz condition \eqref{eqn:ODE1} on $f_1$, from Picard's theorem \cite{sideris_ordinary_2013}, the ODE \eqref{eqn:ODE1} has a unique global solution $\varphi_t^{[1]}:\mathbb{R}\to\mathbb{R}$ on $[0,T]$ (so it does not explode in finite time), and its one-step solution becomes
\begin{equation}\label{solODE1}
X^{[1]}_{t_{k+1}}=\varphi_h^{[1]}(X^{[1]}_{t_k}).
\end{equation}
After explicitly solving the two subequations \eqref{eqn:ODE1}-\eqref{eqn:SDE2}, we  compose their exact solutions \eqref{solSDE2}, \eqref{solODE1} to obtain an explicit numerical approximation of the SDE \eqref{eqn:the sde} via the LT    \cite{trotter_product_1959} 
\begin{equation}\label{eqn:LT subsystem}
X^{\text{LT}}_{t_{k+1}}=\left(\varphi_h^{[2]}\circ\varphi_h^{[1]}\right)(X_{t_k})=v^{-1}
    \left(\xi_k+v(\varphi^{[1]}_{h}(X_{t_k}))\right)
\end{equation}
and the S composition \cite{strang_construction_1968}
\begin{equation}\label{eqn:Strang subsystem}
        X^{\text{S}}_{t_{k+1}}=\left(\varphi^{[1]}_{\frac{h}{2}}\circ \varphi_{\frac{h}{2}}^{[2]}\circ \varphi_{\frac{h}{2}}^{[1]}\right)(X_{t_k})=
\varphi^{[1]}_{\frac{h}{2}}\left(v^{-1}({\xi_k}+v(\varphi^{[1]}_{\frac{h}{2}}(X_{t_k})))\right),
    \end{equation}
    starting in $X_{t_k}$ at time $t_k$, with $\xi_k \sim N(0,h)$. 
The composition of  solutions is not unique, as any of their permutation would yield another possible splitting approximation. However, the considered compositions bring  advantages on the inferential side, as discussed in Section \ref{sec:likelihood}. Moreover, the considered S solution is less computationally expensive than the other S composition $(\varphi^{[2]}_{\frac{h}{2}}\circ\varphi^{[1]}_{h}\circ\varphi^{[2]}_{\frac{h}{2}})$, as it only requires the simulation of one random Gaussian increment per iteration instead of two.
\begin{remark}\label{semi-discrete}
If the ODE \eqref{eqn:ODE1} (or more generally \eqref{eqn:LT subequations}) cannot be solved explicitly, a suitable numerical scheme may be used instead. If using  Euler, the resulting LT scheme will coincide with the so-called SD scheme \cite{halidias_semi-discrete_2012,stamatiou_boundary_2018}. As such scheme converges strongly to the true solution under Assumption 1 \cite{higham_strong_2002,stamatiou_boundary_2018}, the convergence of the LT scheme  with the true ODE solution is also expected, as shown in Section \ref{sec:convergence}.
\end{remark}
\subsection{Estimators}
We aim to estimate $\alpha=(\theta,\sigma)\in \Theta=\Theta_{\theta}\times\Theta_{\sigma}$, with both $\Theta_{\theta}\subseteq\mathbb{R}^{d_1}$ and $\Theta_{\sigma}\subseteq\mathbb{R}^{d_2}$  compact, from observations of the process $X$ at discrete time $t_k$ in $[0,T]$, i.e. $X_{t_k}=x_k, k=0,\ldots, N$ and $\underline{x}=(x_1,\ldots, x_k)$. In this section, we first derive the LT and S estimators, and then, we recall the Kessler and the Hermite estimators, which we will use in the simulation study for comparison.

\subsubsection{Splitting estimators}
By looking at the one-step LT and S splitting schemes derived in
\eqref{eqn:LT subsystem} and \eqref{eqn:Strang subsystem}, respectively, we have that, conditioned on $X_{t_k}=x$, the corresponding one-step transition densities are non-linear transformations of  Gaussian random variables  $\xi_k+v(\varphi^{[1]}_{h}(x))\sim \mathcal{N}(v(\varphi^{[1]}_{ h}(x)),h)$ via the functions $v^{-1}$ and $\varphi_{\frac{h}{2}}^{[1]}\circ v^{-1}$, respectively.

If the inverse of $v^{-1}$ and $(\varphi_{\frac{h}{2}}^{[1]}\circ v^{-1})$ exist and are monotone, thanks to the transformation theorem, the LT transition density becomes
\begin{equation}\label{fLT}
f^{\textrm{LT}}_{X_{t_{k+1}}|X_{t_{k}}}(x|x_0;
\alpha)=\frac{1}{g(x;\sigma)}f_{N}(v(x;\sigma);v(\varphi_h^{[1]}(x_0;\alpha);\sigma),h),
\end{equation}
where $f_N(a;b,c)$ denotes the density of a normal distribution $\mathcal{N}(b,c)$ evaluated in $a$, and similarly for the S transition density. Hence,  
the negative pseudo-log-likelihood functions $l_h$ of the LT and S splitting schemes become
\begin{eqnarray}
    l_h^{\textrm{LT}}(\underline{x};\alpha)&=&\sum_{i=1}^{N}\left[\frac{1}{2h}\left(v(x_{i};\sigma)-v(\varphi^{[1]}_h(x_{{i-1}};\alpha);\sigma)\right)^2+\log |g(x_{i};\sigma)|\right],\label{eqn:LT loglikelihood}\\ 
\nonumber l_h^{\textrm{S}}(\underline{x};\alpha)&=&\sum_{i=1}^N \left[\frac{1}{2h}\left(v(\varphi^{-1}_{\frac{h}{2}}(x_{i};\alpha);\sigma)-v(\varphi^{[1]}_{\frac{h}{2}}(x_{i-1};\alpha);\sigma)\right)^2\right.\\
&&\label{eqn:ST loglikelihood}\left.+\log |g(\varphi^{-1}_{\frac{h}{2}}(x_{i};\alpha);\sigma)|-\log\left|\frac{d\varphi^{-1}_{\frac{h}{2}}(y;\alpha)}{dy}\right|_{y=x_{i}}\right],
\end{eqnarray}
where the last term in \eqref{eqn:LT loglikelihood} and the last two terms in \eqref{eqn:ST loglikelihood} are the effect of the nonlinear nature of the transformation.
Here, we use the notations $v(x;\sigma)$ and $\varphi^{[1]}(x;\alpha)$ to emphasise the dependence of the function $v$ and $\varphi^{[1]}$ on $\sigma$ and $\alpha$, respectively, and throughout the paper, denote $\varphi^{-1}_h:=(\varphi^{[1]}_h)^{-1}$ to simplify the notation.

If $v^{-1}$ is nonmonotone, the LT negative pseudo-log-likelihood becomes 
\begin{equation}\label{logLT2}
    l_h^{\textrm{LT}}(\underline{x};\alpha)=\sum_{i=1}^{N}\sum_{j=1}^{n(x_i)}\left[
    \frac{1}{2h}\left((v^{-1})_j^{-1}(x_{i};\sigma)-v(\varphi^{[1]}_h(x_{{i-1}};\alpha);\sigma)\right)^2-\log \left|\frac{d}{dy}(v^{-1})_j^{-1}(x_{i};\sigma)\right|\right],
    \end{equation}
    where $(v^{-1})^{-1}_j(x_i), j=1,\ldots, n(x_i)$ are the solutions to $v^{-1}(y)=x_i$. A similar expression can be derived for the S case when the inverse of $(\varphi_{\frac{h}{2}}^{[1]}\circ v^{-1})$ is nonmonotone.

Then, the pseudo maximum likelihood estimates are obtained as 
\begin{equation}\label{eqn:MLE estimator}
\hat\alpha^{\bullet}_N( \underline x):=\underset{\alpha}{\arg \min}\   l_h^\bullet(\underline x;\alpha),\qquad \bullet= LT, S.
\end{equation}
\subsubsection{Kessler estimator} 
In \cite{kessler_estimation_1997}, Kessler proposed a contrast estimator based on the idea of maximising the pseudo-log-likelihood obtained  approximating the unknown transition density with a normal distribution characterised by the exact first and second one-step conditional moments given by
\begin{equation}\label{eqn:moment functions}
    m(X_{t_{k-1}};\alpha)=\mathbb{E}_{\alpha}\left[X_{t_{k}}\mid X_{t_{k-1}}\right],\quad m_2(X_{t_{k-1}};\alpha)=\mathbb{E}_{\alpha}\left[\left(X_{t_{k}}-m(X_{t_{k-1}};\alpha)\right)^2\mid X_{t_{k-1}}\right].
\end{equation}
This approach preserves the computational gain and convenience of having a Gaussian transition density, as for the EuM numerical scheme, but it improves on it by replacing the numerically approximated conditional moments with the true ones. When such moments are not available, they can be approximated by the Ito-Taylor expansion-based method introduced in \cite{florens-zmirou_approximate_1989}.

\subsubsection{Hermite estimator}\label{SecHermite} 
 $\text{Aït-Sahalia}$'s method \cite{ait-sahalia_maximum_2002}
 is to first use first the Lamperti transformation \eqref{eqn:Lamperti transform} to transform the SDE $X_t$ to $Y_t$ in \eqref{Lamperti}. Then, the normalised density function of $Z=\frac{Y_{t_{k+1}}-Y_{t_k}}{\sqrt{h}}$, conditionally on $Y_{t_k}=y_0$ for any time step $h:=t_{k+1}-t_{k}$,  is given by
\begin{equation*}
    f_{Z}(z\mid y_0;\alpha )=\phi(z)\sum_{j=0}^\infty c_j(y_0,h;\alpha)H_j(z),
\end{equation*}
where $\phi(\cdot)$ denotes the standard Gaussian density, $H_j(\cdot)$ is the Hermite polynomial of order $j$ defined by
\begin{equation*}
    H_j(z)=(-1)^j e^{z^2 / 2} \frac{d^j}{d z^j}\left(e^{-z^2 / 2}\right),
\end{equation*}
and  
\begin{equation*}
   c_j\left(y_0,h;
\alpha\right):=\frac{1}{j!}\int_\mathbb{R} H_j(z)f_{Z}(z|y_0,\alpha)dz,
\end{equation*}
where both $c_j$ and $f_Z$ can be approximated via suitable expansion/truncation. By using the Jacobian formula for the transition density of $Y_{t_{k+1}}$ given $Y_{t_k}$ first, and then of $X_{t_{k+1}}$ given $X_{t_k}$, we obtain
\begin{eqnarray}
\label{Hermite}
f^H_{X_{t_{k+1}}|X_{t_k}}(x|x_0;\alpha)&=&\frac{1}{g(x;\sigma)}f^{\textrm{H}}_{Y_{t_{k+1}}|Y_{t_k}}(v(x;\sigma)|v(x_0;\sigma);\alpha),
\end{eqnarray}
with 
\[
f^{\textrm{H}}_{Y_{t_{k+1}}|Y_{t_k}}(y|y_0;\alpha)=h^{-1}f_Z\left(\frac{1}{\sqrt{h}}(y-y_0)|y_0;\alpha\right),
\]
with \eqref{Hermite} which can be used to derive the corresponding pseudo-likelihood.
\begin{remark}\label{remarkLamperti}
If the true transition density of $Y_{t_{k+1}}|Y_{t_k}$ of the Lamperti SDE \eqref{eqn:Lamperti transform} is used instead of the Hermite expansion, then \eqref{Hermite} will become exact. If instead 
\eqref{eqn:Lamperti transform} is solved using EuM or a different numerical method, e.g. the splitting scheme  proposed in \cite{kelly_adaptive_2023}, then \eqref{Hermite} will lead to an approximated transition density which may be used to derive the corresponding pseudo MLE, as investigated  for the CIR process in Section \ref{sec:numerical results} in what we call \textit{Lamperti plus EuM}, and \textit{Lamperti plus LT splitting}, respectively. Interestingly, both approaches lead to less favourable numerical and statistical results compared to the proposed splitting approach, see  Section \ref{sec:numerical results}.
\end{remark}
\begin{remark}
The LT transition density \eqref{fLT} closely resembles \eqref{Hermite}, with the key differences that the underlying density is known, being Gaussian,  and that it starts from $v(\varphi^{[1]}(x_0))$ rather than $v(x_0)$. Thanks to the Gaussianity, evaluating \eqref{fLT} is straightforward and less computationally costly than \eqref{Hermite}, which is limited by the burden of evaluating/approximating $f_Z$.
\end{remark}
\section{Mean-square convergence and one-step accuracy of the splitting schemes}\label{sec:convergence}
{Here, we first derive a lower bound of the $L^1$ error of the LT scheme and prove its strong mean-square convergence for the 
general \eqref{eqn:LT subequations}-\eqref{eqn:LT subequations2} decomposition in Section \ref{Section3.1}, before deriving the one-step consistency (i.e. one-step conditional mean) of the LT and S schemes for the specific \eqref{eqn:ODE1}-\eqref{eqn:SDE2}  decomposition in Section \ref{Section3.2}. All proofs of the derived numerical results are reported in Suppl. Mat. \ref{sec:appendix A}.} 
\subsection{$L^1$ bound and strong mean-square convergence of the Lie-Trotter splitting}\label{Section3.1} 
As the considered SDE has a H\"older diffusion coefficient, the fundamental strong convergent results based on the numerical one-step consistency by Milstein \cite{milshtejn_approximate_1975,milstein_numerical_1995}, and Tretyakov and Zhang \cite{tretyakov_fundamental_2013} cannot be used. 
To tackle this, we prove the strong mean-square convergence  of the LT scheme (for the decomposition \eqref{eqn:LT subequations}-\eqref{eqn:LT subequations2}) directly \cite{higham_strong_2002}, showing that it  satisfies \begin{equation}\label{eqn:mean-square convergence}
         \lim_{h\rightarrow 0}\mathbb{E}\left[\sup_{0\leq t \leq T}\left|X^{\textrm{LT}}_t-X_t\right|^2\right]=0.
     \end{equation} 
This is challenging,  as it requires deriving the numerical scheme on the whole time interval $[0,T]$ instead of only  analysing its one-step behaviour 
on the time interval $[t_k,t_{k+1})$, as done for SDEs with one-sided Lipschitz coefficients. 
\begin{remark}
The mean-square convergence condition \eqref{eqn:mean-square convergence} involves  $X^{\textrm{LT}}_t$, a continuous-time extension of the discrete numerical solution $X_{t_{k+1}}, k=0,\ldots, n-1$, intuitively obtained by writing the LT solution $X_{t_{k+1}}^{\textrm{LT}}$ in a continuous-time approximation $X_t^{\textrm{LT}}$ for $t\in [t_k,t_{k+1})$ and then back-iterating it to obtain a solution $X^{\textrm{LT}}_t$ on the whole time interval, $t\in[0,T]$, with the continuous-time solution coinciding with the discrete approximation at the gridpoints $\hat t=t_k$. A formal definition and derivation is provided in Suppl. Mat. \ref{sec:appendix A}.
\end{remark} 
We start by stating two results which will be needed in the proof of the mean-square convergence of the LT scheme. The first is a lemma showing the $p$-th moment bound of $X^{LT}_{t \wedge \theta_R}$, for any $t\in [t_k,t_{k+1})$ and $\hat t=t_k$.

\begin{lemma}\label{lemma:LT_onestep_stop}
     Denote $x\wedge y=\min(x,y),  \hat{t}=t_k$ for any $ t\in [t_k,t_{k+1}), k=0,\ldots, n-1$, and define $\theta_R=\inf \left\{t \in[0, T]:\left|X^{LT}_t\right|>R\right.$ or $\left.\left|X^{LT}_{\hat{t}}\right|>R\right\}$, for $R>0$. If Assumption \ref{assump:main} holds, then, for any $p>0$ and $t\in [0,T]$, we have
    \begin{equation}\label{eqn:LT onestep bound}
\mathbb{E}\left[\left|X^{LT}_{t \wedge \theta_R}-X^{LT}_{\widehat{t \wedge \theta_R}}\right|^p\right] \leq C_{R,p} h^{p / 2},
 \end{equation}
which  implies 
\begin{equation}\label{eqn:LT onestep convergence rate}
    \lim_{h\rightarrow 0}\sup _{t \in\left[t_{k}, t_{k+1}\right]} \mathbb{E}\left[\left|X^{LT}_{t \wedge \theta_R}-X^{LT}_{\widehat{t \wedge \theta_R}}\right|^p\right]=O\left(h^{p / 2}\right), 
\end{equation}
where $C_{R,p}$ is a constant depending only on $R$ and $p$, and $O(h)$ denotes functions of order $h$.
\end{lemma}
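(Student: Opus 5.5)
We first fix the continuous-time extension of the Lie--Trotter scheme, which the paper formalises in Suppl. Mat. \ref{sec:appendix A}. For $t\in[t_k,t_{k+1})$ we set $Z_t:=\varphi^{[1]}_{t-t_k}(X^{LT}_{t_k})$. We then let $X^{LT}_t$ be the solution on $[t_k,t)$ of the subequation \eqref{eqn:LT subequations2} started at $X^{LT}_{t_k}$ and driven by $W_s-W_{t_k}$. We use the one-step composition of this extension as the definition of $X^{LT}_t$; in the reducible case it is $v^{-1}(W_t-W_{t_k}+v(Z_t))$. The only structural fact needed is that on a single interval the extension can be written as
\begin{equation*}
X^{LT}_t-X^{LT}_{\hat t}=\int_{\hat t}^{t}a_s\,ds+\int_{\hat t}^{t}b_s\,dW_s ,
\end{equation*}
where $a_s,b_s$ are evaluations of $f_1,f_2,g$ at points that stay in $[-R,R]$ whenever $s\le\theta_R$. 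Concretely, the ODE flow contributes $\int f_1(\varphi^{[1]}_{s-\hat t}(X^{LT}_{\hat t}))ds$, and the SDE flow contributes $\int f_2(\cdot)ds+\int g(\cdot)dW_s$. If the extension is defined through the inverse Lamperti map, Itô's formula applied to $v^{-1}$ produces exactly the drift $f_2$ and diffusion $g$ of \eqref{eqn:LT subequations2}. Either way we obtain this representation.

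Next we stop at $\theta_R$. Since $\widehat{t\wedge\theta_R}$ and $t\wedge\theta_R$ lie in the same grid cell (or coincide), we write
\[
X^{LT}_{t\wedge\theta_R}-X^{LT}_{\widehat{t\wedge\theta_R}}=\int_{\widehat{t\wedge\theta_R}}^{t\wedge\theta_R}a_s\,ds+\int_{\widehat{t\wedge\theta_R}}^{t\wedge\theta_R}b_s\,dW_s .
\]
Before $\theta_R$ both $|X^{LT}_s|\le R$ and $|X^{LT}_{\hat s}|\le R$. By the local Lipschitz and local H\"older bounds of Assumption \ref{assump:main} (continuity on compacts suffices), $|f_i(x)|\le |f_i(0)|+C_R R$ and $|g(x)|\le |g(0)|+C_R(R+R^q)$ for $|x|\le R$, so $|a_s|,|b_s|\le K_R$.

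One caveat concerns the intermediate ODE point $Z_s$, which need not be bounded by $R$ automatically. We handle it by noting that $\varphi^{[1]}$ is a global flow of a locally Lipschitz field. On $[0,h]$ with $h\le T$, it therefore maps $[-R,R]$ into a compact set $[-R',R']$ with $R'$ depending only on $R$ and $T$. Enlarging the constants to $C_{R'}$ gives the same conclusion. Alternatively, $\theta_R$ could be defined to include this point.

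For $p\ge2$ the drift term is bounded by $(K_Rh)^p$. The stochastic integral is handled by the Burkholder--Davis--Gundy inequality applied to the stopped martingale $\int_{\hat t}^{\cdot\wedge\theta_R}b_s\,dW_s$ on the cell, giving $\mathbb{E}|\cdot|^p\le C_p\,\mathbb{E}\big(\int K_R^2ds\big)^{p/2}\le C_pK_R^ph^{p/2}$. Combining these with $(x+y)^p\le2^{p-1}(x^p+y^p)$ and $h\le T$ yields \eqref{eqn:LT onestep bound} with $C_{R,p}=2^{p-1}K_R^p(T^{p/2}+C_p)$. For $0<p<2$ we apply Jensen/Lyapunov to the bound for $p=2$: $\mathbb{E}|\cdot|^p\le(\mathbb{E}|\cdot|^2)^{p/2}\le C_{R,2}^{p/2}h^{p/2}$. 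The bound is uniform in $t$, so \eqref{eqn:LT onestep convergence rate} follows immediately by taking the supremum.

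The main obstacle is the representation step, namely justifying the Itô decomposition with coefficients that are bounded under the stopping. When $g$ is only H\"older and $v^{-1}$ may be non-smooth near degeneracy points of $g$, applying Itô's formula to $v^{-1}$ is delicate. We would avoid it by defining the continuous extension directly as the strong solution of \eqref{eqn:LT subequations2} on each cell, whose existence follows from Assumption \ref{assump:main}; this yields the integral representation for free. The remaining care is the boundedness of the ODE-flow point, treated above via compactness of the flow image.
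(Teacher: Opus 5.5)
Your overall strategy matches the paper's: bound the coefficients locally before $\theta_R$, apply BDG to the stopped stochastic integral, and use Jensen from $p=2$ for small $p$. The gap is the representation step, which fails for the extension you actually chose. With $X^{LT}_t=v^{-1}(W_t-W_{t_k}+v(Z_t))$ and $Z_t=\varphi^{[1]}_{t-t_k}(X^{LT}_{t_k})$, one has $\frac{d}{dt}v(Z_t)=f_1(Z_t)/g(Z_t)$, so It\^o's formula gives
\[
dX^{LT}_t=\Big(\frac{g(X^{LT}_t)}{g(Z_t)}\,f_1(Z_t)+\tfrac12\,(gg')(X^{LT}_t)\Big)dt+g(X^{LT}_t)\,dW_t .
\]
The drift is not $f_1(Z_t)+f_2(X^{LT}_t)$, because the ODE increment is transported through the spatial derivative of the SDE flow. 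The factor $g(X^{LT}_t)/g(Z_t)$ is not bounded by a constant depending only on $R$ once $g$ vanishes in $[-R,R]$. That is exactly the H\"older regime the paper targets: the CIR with $g(x)=\sqrt{2\theta bx}$ and $Z_t$ near $0$, or Wright--Fisher near $0$ and $1$. Hence $|a_s|\le K_R$ fails in general, and the drift bound $(K_Rh)^p$ is unjustified. Bounding $\int g(X^{LT}_s)\,\frac{d}{ds}v(Z_s)\,ds$ instead would need information on the modulus of continuity of $v$ near the zeros of $g$, which Assumption \ref{assump:main} does not provide. For the general splitting \eqref{eqn:LT subequations}--\eqref{eqn:LT subequations2} there is no $v$ at all. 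Composing the SDE flow with a time-dependent initial point would then require differentiability of the stochastic flow in its initial condition, which is unavailable for H\"older $g$.

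The paper avoids this with the extension \eqref{eqnLTt}. There the full ODE increment $\varphi^{[1]}_h(X_{t_k})-X_{t_k}=f_1(X_{t_k})h+\mathcal{R}(h^2;X_{t_k})$ is spread linearly over the cell as the frozen drift $f_1(X^{LT}_{\hat s})+h^{-1}\mathcal{R}(h^2;X^{LT}_{\hat s})$. The SDE part is an It\^o integral with coefficients $f_2(X^{LT}_s)$ and $g(X^{LT}_s)$. The integral representation then holds by definition, with coefficients bounded before $\theta_R$, and your remaining estimates go through verbatim.

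Your closing fallback points in this direction, but as written (the cell SDE ``started at $X^{LT}_{t_k}$'') it drops the ODE increment. The extension would then not pass through the scheme's grid values. You must either spread the increment as in \eqref{eqnLTt}, or start the cell SDE at $\varphi^{[1]}_h(X^{LT}_{t_k})$ and accept a jump of size at most $C_Rh$ at $t_k$, which is harmless here.

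Your compact-image remark is worth keeping in either version, since $\theta_R$ does not control the ODE trajectory $\varphi^{[1]}_z(X^{LT}_{\hat t})$, $z\in[0,h]$. Write $\mathcal{R}(h^2;x)=\int_0^h\big[f_1(\varphi^{[1]}_s(x))-f_1(x)\big]ds$ and use the Lipschitz constant of $f_1$ on $[-R',R']$. This is what justifies the bound $|\mathcal{R}(h^2;\cdot)|\le C_Rh^2$ that the paper asserts without comment, and it does not need $f_1'$.
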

\noindent We can then use this result to bound the $L^1$ error of the LT scheme.
\begin{proposition}[$L^1$ error bound of LT]\label{prop:LT L1}
Define $\theta_R$ as in Lemma \ref{lemma:LT_onestep_stop}. 
If Assumption \ref{assump:main} holds, then
    \begin{equation}\label{eqn:L1 bound}
        \sup_{0\leq t\leq T}\mathbb{E}\left[\left|X^{\textrm{LT}}_{t\wedge \theta_R}-X_{t\wedge \theta_R}\right|\right]\leq \left(\frac{C_Re_m}{m}+C_Rh^{\frac{1}{2}}+C_Rh^{2}\right)e^{C_R+\frac{C_Rqe_m^{(2-2/q)}}{m}T},
    \end{equation}
    where $e_m=\exp(-m(m+1)/2)$, for any arbitrary $m>1,m\in \mathbb{N}$, and
    $C_R$ is a constant depending only on $R$. 
\end{proposition}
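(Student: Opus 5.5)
The bound has the typical shape of a Yamada--Watanabe argument, so I would prove it with that technique. The term $e_m=\exp(-m(m+1)/2)$, the factor $1/m$, and the exponent containing $q e_m^{(2-2/q)}/m$ are exactly what appears when the Yamada--Watanabe functions $\psi_m$ approximate $|x|$ and Hölder increments of $g$ are controlled inside the Itô correction.

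First I would write the continuous-time LT extension from Suppl. Mat.\ \ref{sec:appendix A} as an Itô process:
\[
X^{\textrm{LT}}_t=X_0+\int_0^t F(s)\,ds+\int_0^t g(\tilde X_s)\,dW_s ,
\]
where on $[t_k,t_{k+1})$ the process $\tilde X_s$ is the solution of subequation \eqref{eqn:LT subequations2} started from $\varphi^{[1]}_h(X^{\textrm{LT}}_{t_k})$. The drift $F$ collects $f_2(\tilde X_s)$ and the ODE contribution $f_1$ evaluated at the flow from $X^{\textrm{LT}}_{\hat s}$. Setting $e_t=X^{\textrm{LT}}_t-X_t$, the error is then an Itô process whose drift difference is $F(s)-f(X_s)$ and whose diffusion difference is $g(\tilde X_s)-g(X_s)$.

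Next I would use Yamada--Watanabe. Take $a_m=e_m$ and $a_{m-1}=e_{m-1}$, so that $\int_{a_m}^{a_{m-1}}du/u=m$. Choose $\psi_m\in C^2$ even, with $|\psi_m'|\le1$, $0\le\psi_m''(x)\le 2/(m|x|)$ supported on $a_m<|x|<a_{m-1}$, and $|x|-a_{m-1}\le\psi_m(x)\le|x|$. Apply Itô's formula to $\psi_m(e_{t\wedge\theta_R})$. Taking expectations removes the martingale term, which is bounded up to $\theta_R$. The remaining terms are handled as follows.

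(a) The drift term is bounded by $\mathbb E\int|F(s)-f(X_s)|\,ds$. By the local Lipschitz bounds in Assumption \ref{assump:main}(i), this is at most $C_R\int_0^t\mathbb E|e_{s\wedge\theta_R}|\,ds$ plus the discrepancies $C_R\,\mathbb E|X^{\textrm{LT}}_{s}-X^{\textrm{LT}}_{\hat s}|$ and $|\tilde X_s-X^{\textrm{LT}}_s|$. Lemma \ref{lemma:LT_onestep_stop} with $p=1$ bounds these by $C_Rh^{1/2}$. The flow mismatch of the ODE over one step, $\varphi^{[1]}$ against Euler, gives the $C_Rh^2$ term.

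(b) The Itô correction is $\tfrac12\mathbb E\int\psi_m''(e_s)\,|g(\tilde X_s)-g(X_s)|^2\,ds$. I would bound $|g(\tilde X_s)-g(X_s)|^2\le C_R\bigl(|e_s|^{2q}+|e_s|^2+|\tilde X_s-X^{\textrm{LT}}_s|^{2q}+\dots\bigr)$. On the support of $\psi_m''$, the term $\psi_m''(x)|x|^{2q}\le 2|x|^{2q-1}/m$ is at most $2e_m^{2q-1}/m$. The remaining discretisation pieces give $h^{q}$-type contributions, which I would absorb into $C_Rh^{1/2}$ after enlarging $C_R$ on compact time intervals. The cross term contributes $\frac{C_R}{m}e_m^{(2-2/q)}$ times an integral of $\mathbb E|e_s|$. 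To obtain this, I would use Young's inequality to split $|x|^{2q-1}\le \epsilon^{-1}|x|+\dots$ with $\epsilon$ tuned to $e_m$. That tuning is the source of the odd exponent $2-2/q$, and it keeps the coefficient linear in $\mathbb E|e|$.

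Finally I would combine (a) and (b) with $|x|\le \psi_m(x)+e_{m-1}$, where $e_{m-1}\le C e_m/m$ up to constants. This gives
\[
\mathbb E|e_{t\wedge\theta_R}|\le \frac{C_Re_m}{m}+C_Rh^{1/2}+C_Rh^2+\Bigl(C_R+\frac{C_Rq\,e_m^{(2-2/q)}}{m}\Bigr)\int_0^t\mathbb E|e_{s\wedge\theta_R}|\,ds ,
\]
and Gronwall's inequality yields \eqref{eqn:L1 bound}.

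The main obstacle is step (b). There the Hölder increment $|g(\tilde X)-g(X)|^{2q}$ with $2q<1$ must be turned into a term linear in $\mathbb E|e|$ without losing the $1/m$ factor. The Young-inequality balancing against $e_m$ is where I expect the bookkeeping to be most delicate. Controlling $\tilde X_s-X^{\textrm{LT}}_s$ inside the Hölder term, using only Lemma \ref{lemma:LT_onestep_stop} with small $p=2q$, is a secondary difficulty.
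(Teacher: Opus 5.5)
Your skeleton is the paper's: Yamada--Watanabe functions, It\^o's formula up to $\theta_R$, local Lipschitz bounds plus Lemma~\ref{lemma:LT_onestep_stop} in the drift, Young's inequality with $\delta=e_m^{(2q-2)/q}$ to linearise $|\mathcal{E}_s|^{2q-1}$, and Gronwall. The step that fails is (b), and it fails because of your choice of continuous-time extension.

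Your diffusion coefficient is $g(\tilde X_s)$ with $\tilde X_s\neq X^{\textrm{LT}}_s$. So the It\^o correction $\tfrac12\psi_m''(e_s)\,|g(\tilde X_s)-g(X_s)|^2$ contains a piece whose natural bound is $C_R|\tilde X_s-X^{\textrm{LT}}_s|^{2q}\le C_Rh^{2q}$, and this does not shrink with $e_s$. On the support of $\psi_m''$ the only available bound is $\psi_m''(e_s)\le 2/(m|e_s|)\le 2/(me_m)$, so this piece contributes $C_Rh^{2q}T/(me_m)$. That constant explodes as $m\to\infty$, so it cannot be hidden in an $m$-independent $C_R$. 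Moreover, $h^{q}$ is never $O(h^{1/2})$ for $q<1/2$, and $h^{2q}$ is not $O(h^{1/2})$ for $q<1/4$. Your ``absorb into $C_Rh^{1/2}$'' step is therefore unavailable, and the argument yields a bound with an extra $h^{2q}/(me_m)$ term rather than \eqref{eqn:L1 bound}.

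The missing idea is to keep the diffusion coefficient evaluated at the interpolant itself. In the paper, $X^{\textrm{LT}}_t$ is by definition the process \eqref{eqn:LT full step}, whose diffusion is $g(X^{\textrm{LT}}_s)$. The It\^o correction then involves only $g(X^{\textrm{LT}}_s)-g(X_s)$. By Assumption~\ref{assump:main}(ii), and using $|\mathcal{E}_s|<1$ on the support of $\phi_m''$, it is at most $\frac{C_R}{m}|\mathcal{E}_s|^{2q-1}$. All discretisation mismatch then sits in the drift, where $|\phi_m'|\le 1$ and Lemma~\ref{lemma:LT_onestep_stop} gives $C_Rh^{1/2}$. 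Your $\tilde X$ is arguably a more faithful interpolation of $\varphi^{[2]}_h\circ\varphi^{[1]}_h$, but it is precisely what destroys the estimate.

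Two smaller points. First, your claim $e_{m-1}\le Ce_m/m$ is false, since $e_{m-1}/e_m=e^{m}$. From $|x|\le\phi_m(x)+e_{m-1}$ you must carry $e_{m-1}$ as a separate additive term; it cannot be converted into $C_Re_m/m$. The paper's final display drops this term without comment, so do not rely on that step either. Second, $|\tilde X_s-X^{\textrm{LT}}_s|$ is not covered by Lemma~\ref{lemma:LT_onestep_stop}, which only controls $X^{\textrm{LT}}_t-X^{\textrm{LT}}_{\hat t}$. It is an ODE-flow increment of size $C_Rh$ when $|X^{\textrm{LT}}_{\hat s}|\le R$. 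That is harmless in the drift but, as explained above, not in the It\^o correction.
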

We now state the strong mean-square convergence of the LT scheme. A similar proof leads to the same convergent result for a time-inhomogeneous version of \eqref{eqn:the sde}, as stated in the corollary below.
\begin{theorem}[Mean-square convergence of LT]\label{thrm:LT convergence}
If Assumption \ref{assump:main} holds, and for some $p>2$, there exists a constant $A>0$ such that \begin{equation}\label{assump:LT_moment_bound}
         \mathbb{E}\left(\sup _{0 \leq t \leq T}\left|X^{LT}_t\right|^p\right) \vee \mathbb{E}\left(\sup _{0 \leq t \leq T}\left|X_t\right|^p\right)<A,
     \end{equation}
     then the LT scheme is strong mean-square convergent, i.e. \eqref{eqn:mean-square convergence} holds.
\end{theorem}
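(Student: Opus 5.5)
The plan is the standard localisation argument of Higham, Mao and Stuart \cite{higham_strong_2002}, as adapted to semi-discrete schemes in \cite{stamatiou_boundary_2018}. We combine the $L^1$ bound of Proposition \ref{prop:LT L1} on the stopped processes with the uniform $p$-th moment bound \eqref{assump:LT_moment_bound} to control what happens outside the localising event. Write $e_t=X^{LT}_t-X_t$, and let $\tau_R=\inf\{t:|X_t|>R\}$, $\rho_R=\theta_R\wedge\tau_R$. Here $\theta_R$ is as in Lemma \ref{lemma:LT_onestep_stop}, with the stopping on $X$ added so that both processes are bounded by $R$ up to $\rho_R$; the bounds of Lemma \ref{lemma:LT_onestep_stop} and Proposition \ref{prop:LT L1} continue to hold with $\rho_R$ in place of $\theta_R$, since they only use that both arguments stay in the ball of radius $R$.

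\textbf{Splitting on the localising event.} For any $\delta>0$, Young's inequality $ab\le \frac{\delta}{p}a^{p/2}+\frac{p-2}{p\delta^{2/(p-2)}}b^{p/(p-2)}$, applied with $a=\sup_t|e_t|^2$ and $b=\mathbf 1_{\{\rho_R\le T\}}$, gives
\begin{equation*}
\mathbb{E}\Big[\sup_{t\le T}|e_t|^2\Big]\le \mathbb{E}\Big[\sup_{t\le T}|e_{t\wedge\rho_R}|^2\Big]+\frac{2^{p}\delta A}{p}+\frac{p-2}{p\,\delta^{2/(p-2)}}\,\mathbb{P}(\rho_R\le T).
\end{equation*}
By Markov's inequality and \eqref{assump:LT_moment_bound}, $\mathbb{P}(\rho_R\le T)\le 2A/R^p$. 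Choosing first $\delta$ small and then $R$ large makes the last two terms smaller than any prescribed $\varepsilon$.

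\textbf{Bounding the stopped error.} It remains to show that, for fixed $R$, $\mathbb{E}[\sup_{t\le T}|e_{t\wedge\rho_R}|^2]\to0$ as $h\to0$. On $[0,\rho_R]$ we use the continuous-time representation of $X^{LT}$ from Suppl.\ Mat.\ \ref{sec:appendix A}: its drift and diffusion are $f_1,f_2,g$ evaluated at (compositions of the flows applied to) $X^{LT}_{\hat t}$. Writing $e_{t\wedge\rho_R}$ as a $dt$-integral plus a stochastic integral, we then apply the BDG and Cauchy--Schwarz inequalities. This bounds the drift part by
\begin{equation*}
C_R\int_0^T \mathbb{E}|e_{s\wedge\rho_R}|^2\,ds+C_R h,
\end{equation*}
using local Lipschitzness \eqref{eqn:f condition} together with Lemma \ref{lemma:LT_onestep_stop} for $|X^{LT}_s-X^{LT}_{\hat s}|$ and the $O(h)$ flow perturbation. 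The diffusion part is bounded, via \eqref{eqn:g condition}, by
\begin{equation*}
C_R\int_0^T\Big(\mathbb{E}|e_{s\wedge\rho_R}|^2+\mathbb{E}|e_{s\wedge\rho_R}|^{2q}\Big)ds+C_Rh^{q}.
\end{equation*}

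\textbf{Main obstacle.} The difficulty is the Hölder term $\mathbb{E}|e|^{2q}$, which prevents a direct Gronwall argument; this is exactly why Proposition \ref{prop:LT L1} is needed. Since $|e_{s\wedge\rho_R}|\le 2R$ and $2q<1$, Jensen's inequality gives $\mathbb{E}|e|^{2q}\le(\mathbb{E}|e|)^{2q}$, and Proposition \ref{prop:LT L1} bounds $\mathbb{E}|e|$. In \eqref{eqn:L1 bound} we let $h\to0$ first and then $m\to\infty$; since $e_m/m\to0$ and the exponent $C_Rqe_m^{2-2/q}/m$ must also be kept under control, this requires choosing $m=m(h)$ growing slowly enough. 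Hence $\sup_s\mathbb{E}|e_{s\wedge\rho_R}|\to0$, and so the $2q$-moment term vanishes as well. Gronwall's inequality then gives
\begin{equation*}
\mathbb{E}\Big[\sup_{t\le T}|e_{t\wedge\rho_R}|^2\Big]\le C_R\Big(h^{q}+h+\sup_s(\mathbb{E}|e_{s\wedge\rho_R}|)^{2q}\Big)e^{C_RT}\to0.
\end{equation*}
Combined with the previous step this proves \eqref{eqn:mean-square convergence}.
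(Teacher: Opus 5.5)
\textbf{Overall.} Your outline follows the paper's own route: localise, bound the $2q$-moment of the stopped error $\mathcal{E}=X^{LT}-X$ (your $e$) by Jensen and Proposition~\ref{prop:LT L1}, then apply Gronwall. The differences are cosmetic: Young instead of H\"older on $\{\rho_R\le T\}$, BDG on $\mathcal{E}$ directly instead of It\^o's formula for $\mathcal{E}^2$, and the useful extra stopping of $X$.

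\textbf{The gap.} It is exactly the step you flag as the main obstacle. No choice of $m$, fixed or $m=m(h)$, makes the right-hand side of \eqref{eqn:L1 bound} small. Since $q<1/2$, we have $2-2/q<-2$ and $e_m^{2-2/q}=\exp\{m(m+1)(1/q-1)\}$. So the $h$-independent part
\[
k_m:=\frac{C_Re_m}{m}\exp\Big(C_R+\frac{C_Rq\,e_m^{2-2/q}}{m}T\Big),\qquad \log k_m=\log\frac{C_R}{m}+C_R-\frac{m(m+1)}{2}+\frac{C_RqT}{m}e^{m(m+1)(1/q-1)},
\]
tends to $+\infty$. Hence \eqref{eqn:L1 bound} is bounded below by $\min_{m>1}k_m>0$, uniformly in $h$. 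A slowly growing $m(h)$ only tames the $h^{1/2}$ term. The claim $\sup_s\mathbb{E}|\mathcal{E}_{s\wedge\rho_R}|\to0$, on which your final Gronwall bound rests, therefore does not follow. The paper closes its proof the same way (``$h$ small enough and $m$ large enough''), so the gap is inherited, but it is genuine.

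\textbf{Why it fails, and how to repair it.} The Yamada--Watanabe device needs $q\ge1/2$. On $\{e_m<|\mathcal{E}|<e_{m-1}\}$ the It\^o correction obeys
$\tfrac12\phi_m''(\mathcal{E})\,|g(X^{LT})-g(X)|^2\le \frac{C_R}{m}\big(|\mathcal{E}|+|\mathcal{E}|^{2q-1}\big)$.
This is $O(1/m)$ if $2q\ge1$, but for $q<1/2$ it is only bounded by $C_Re_m^{2q-1}/m\to\infty$. The Young-plus-Gronwall step in the proof of Proposition~\ref{prop:LT L1} merely moves this divergence into the exponent. For $q<1/2$ even pathwise uniqueness of \eqref{eqn:the sde} can fail: Girsanov's example $dX_t=|X_t|^q\,dW_t$, $X_0=0$, satisfies Assumption~\ref{assump:main}. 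So no argument of this type can close there.

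Under $q\in[1/2,1]$, which covers CIR and Wright--Fisher, your plan works with two changes.
First, redo the $L^1$ estimate with the It\^o correction bounded by $C_RT/m$. This gives $\sup_t\mathbb{E}|\mathcal{E}_{t\wedge\rho_R}|\le(e_{m-1}+C_RT/m+C_Rh^{1/2})e^{C_RT}$, hence $C_Rh^{1/2}e^{C_RT}$ after $m\to\infty$.
Second, in the $L^2$ step replace Jensen by $|\mathcal{E}|^{2q}\le(2R)^{2q-1}|\mathcal{E}|$ on $\{|\mathcal{E}|\le2R\}$. Gronwall then yields $\mathbb{E}\sup_t|\mathcal{E}_{t\wedge\rho_R}|^2\le C_Rh^{1/2}$.
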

\begin{corollary}
The LT scheme for a time-inhomogeneous version of the SDE \eqref{eqn:the sde} decomposed as in \eqref{eqn:LT subequations}-\eqref{eqn:LT subequations2} is  strong mean-square convergent if Assumption \ref{assump:main}, adjusted for time-inhomogeneous functions $f_1(t,x), f_2(t,x)$, and \eqref{assump:LT_moment_bound} hold.
\end{corollary}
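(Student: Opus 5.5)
The statement to prove is the Corollary: mean-square convergence of LT for the time-inhomogeneous SDE. The plan is to repeat the proof of Theorem \ref{thrm:LT convergence} verbatim, checking at each stage that time dependence of the coefficients does not matter. Write the time-inhomogeneous SDE as $dX_t=[f_1(t,X_t)+f_2(t,X_t)]dt+g(t,X_t)dW_t$. Split it into the non-autonomous ODE $dX^{[1]}_t=f_1(t,X^{[1]}_t)dt$ and the SDE $dX^{[2]}_t=f_2(t,X^{[2]}_t)dt+g(t,X^{[2]}_t)dW_t$, each started afresh at $t_k$ on $[t_k,t_{k+1})$. Then build the continuous-time LT interpolant $X^{LT}_t$ exactly as in Suppl. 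Mat. \ref{sec:appendix A}: on $[t_k,t)$ it solves the second subequation started from $\varphi^{[1]}_{t_k,t_{k+1}}(X^{LT}_{t_k})$, where $\varphi^{[1]}_{s,t}$ is now the two-parameter ODE flow. Assumption \ref{assump:main} is imposed with constants $C_R$ uniform in $t\in[0,T]$. Under this uniformity, the ODE flow exists globally and uniquely by Picard's theorem for Carath\'eodory/continuous-in-$t$ locally Lipschitz fields. The Yamada--Watanabe-type existence and uniqueness for the H\"older diffusion also carries over unchanged.

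Next, re-establish Lemma \ref{lemma:LT_onestep_stop} and Proposition \ref{prop:LT L1} in the non-autonomous setting.

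For the lemma, write $X^{LT}_{t\wedge\theta_R}-X^{LT}_{\widehat{t\wedge\theta_R}}$ as the ODE increment plus the $f_2$-drift integral plus the stochastic integral, all over an interval of length at most $h$. Before $\theta_R$ the integrands are bounded by $\sup_{s\le T,|x|\le R}(|f_i(s,x)|+|g(s,x)|)$. This supremum is finite by continuity on the compact set $[0,T]\times[-R,R]$, so Burkholder--Davis--Gundy gives $C_{R,p}h^{p/2}$ exactly as before.

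For Proposition \ref{prop:LT L1}, use the Yamada--Watanabe functions $\psi_m$ with $e_m=\exp(-m(m+1)/2)$. Apply It\^o's formula to $\psi_m(X^{LT}_{t\wedge\theta_R}-X_{t\wedge\theta_R})$. The difference of coefficients is split as, for example, $g(s,X^{LT}_s)-g(s,X_s)$ plus an interpolation error $g(\hat s,\cdot)$ versus $g(s,\cdot)$ coming from the frozen ODE flow.

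The genuinely new term is this interpolation error in time. The LT interpolant effectively evaluates the $f_1$ contribution at the flow from $t_k$, so it can be controlled by local Lipschitz continuity in $x$ together with the lemma. The only explicit time shift sits inside $\varphi^{[1]}_{t_k,t_{k+1}}$. Its deviation from $X^{LT}_s$ is $O(h)$ on $\{s\le\theta_R\}$, because $|f_1|\le C_R$ there, so it enters only through terms already of size $C_Rh^{1/2}$ or $C_Rh^2$. Gronwall then yields the same bound \eqref{eqn:L1 bound}.

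The main obstacle, and where I would focus care, is making sure no hidden modulus of continuity in $t$ is needed. If the Appendix proof ever compared $f_i(s,\cdot)$ with $f_i(\hat s,\cdot)$ directly, I would add the mild assumption of uniform continuity in $t$ on compacts. This holds automatically for continuous coefficients on $[0,T]\times[-R,R]$, and gives an extra $o(1)$ error as $h\to0$ that does not affect the limit.

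Finally, pass from the stopped $L^1$ bound to \eqref{eqn:mean-square convergence} exactly as in Theorem \ref{thrm:LT convergence}. Split $\mathbb{E}\sup_t|X^{LT}_t-X_t|^2$ on $\{\theta_R\wedge\tau_R>T\}$ and its complement. The complement is handled by Young's inequality together with the moment bound \eqref{assump:LT_moment_bound} and Markov's inequality, giving a contribution of order $\delta A+ A\delta^{-2/(p-2)}R^{-p}$. On the good set, use BDG and the H\"older bound $|g(s,x)-g(s,y)|^2\le C_R(|x-y|+|x-y|^{2q})$ to reduce to the $L^1$ estimate plus the lemma. Then choose $\delta$ small, $R$ large, $m$ large, and finally $h\to0$.
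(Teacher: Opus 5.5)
Your route is the paper's. The corollary is justified there only by ``a similar proof'', i.e.\ by rerunning Lemma~\ref{lemma:LT_onestep_stop}, Proposition~\ref{prop:LT L1} and Theorem~\ref{thrm:LT convergence} with $t$-dependent coefficients. Your points about time dependence are right:
\begin{itemize}
\item A continuity assumption in $t$ is genuinely needed. In the time-dependent analogue of \eqref{eqn:LT full step} the drift error is $f_1(\hat s,X^{LT}_{\hat s})-f_1(s,X_s)$ plus the ODE remainder. Joint continuity of $f_1$ on $[0,T]\times[-R,R]$, with modulus $\omega_R$ in $t$, therefore adds a term of order $T\omega_R(h)$. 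Your sentence on the ``only explicit time shift'' controls the spatial displacement of the ODE flow, but not $f_1(u,\cdot)-f_1(s,\cdot)$ for $u\neq s$; your fallback assumption is what handles it.
\item There is no $g(\hat s,\cdot)$ versus $g(s,\cdot)$ term, since the SDE subflow runs at the true time.
\item Stopping the exact solution as well ($\theta_R\wedge\tau_R$), which is actually needed, and using Young instead of H\"older on the bad set are fine.
\end{itemize}

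The genuine gap is the step ``Gronwall then yields the same bound \eqref{eqn:L1 bound}'' followed by ``$m$ large, finally $h\to0$''. For $q<\tfrac12$ one has $2-2/q<-2$. Hence $\exp\bigl(C_Rq\,e_m^{2-2/q}T/m\bigr)$ grows doubly exponentially in $m$, while $C_Re_m/m$ decays only like $e^{-m(m+1)/2}$. As $h\to0$ the right-hand side of \eqref{eqn:L1 bound} tends to $c_m=\frac{C_Re_m}{m}e^{C_R+C_Rqe_m^{2-2/q}T/m}>0$, with $c_m\to\infty$. So you only get $\limsup_{h\to0}\mathbb{E}|\mathcal E_{t\wedge\theta_R}|\le\inf_m c_m>0$. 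The $L^2$ step, which feeds this bound (to the power $2q$) into the $g$-term, fails with it.

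This is inherited from the paper's Proposition~\ref{prop:LT L1} and is structural, not bookkeeping. On the support of $\phi_m''$ the It\^o correction is only bounded by $\frac{C_R}{m}|\mathcal E_s|^{2q-1}\le\frac{C_R}{m}e_m^{2q-1}\to\infty$. A Yamada--Watanabe sequence with $\phi_m''(x)|x|^{2q}\le 2/m$ and $\phi_m'$ rising from $0$ to $1$ requires $\int_{0^+}u^{-2q}\,du=\infty$, i.e.\ $q\ge\tfrac12$. This is the same threshold below which the pathwise uniqueness you say ``carries over unchanged'' fails, e.g.\ for $dX_t=|X_t|^q\,dW_t$, $X_0=0$.

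The repair is local $\tfrac12$-H\"older continuity uniformly in $t$: $|g(s,x)-g(s,y)|^2\le C_R|x-y|$ for $|x|,|y|\le R$, which is what the CIR and Wright--Fisher coefficients satisfy. Then the correction is at most $C_R/m$ with no Young step. Gronwall gives $\mathbb{E}|\mathcal E_{t\wedge\theta_R}|\le\bigl(e_{m-1}+C_{R,T}(m^{-1}+h^{1/2}+\omega_R(h))\bigr)e^{C_RT}$, and the rest of your argument goes through.

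One further detail concerns the interpolant. Yours (the subflow restarted at $\varphi^{[1]}_{t_k,t_{k+1}}(X^{LT}_{t_k})$) jumps by $O(h)$ at each grid point, so It\^o's formula would need jump terms. Conversely, \eqref{eqn:LT full step} is continuous but feeds itself into $f_2$ and $g$, so it does not reproduce the LT grid values. A clean fix is to work with
$\bar X_t=X^{[2]}_t-\frac{t_{k+1}-t}{h}\bigl(\varphi^{[1]}_{t_k,t_{k+1}}(X_{t_k})-X_{t_k}\bigr)$. This process is continuous, equals LT on the grid, and lies within $C_Rh$ of the subflow. It costs an extra $C_Rh/(me_m)$ in the correction, harmless if $h\to0$ before $m\to\infty$.
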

\begin{remark}
The mean-square convergence of the S scheme is expected to be of the same order, as higher-order stochastic integrals would be required for achieving higher orders \cite{BerglundLandon2012}, as also observed in our simulation study. 
\end{remark}
\subsection{One-step consistency of the splitting schemes}\label{Section3.2}
While we expect the LT and the S schemes to share the same strong convergence order, their one-step consistency, defined as $\mathbb{E}[X_{t_{k+1}}^\bullet-X_{t_{k+1}}|X_{t_k}]$, is of different order, as stated in the Proposition below, whose proof is given in Suppl. Mat. \ref{sec:appendix A}. This proposition is particularly important for inferential purposes, as discussed in \cite{pilipovic_parameter_2024} for semi-linear SDEs with additive noise, and also observed in our simulation studies in Section \ref{sec:numerical results}.

We now recall a useful lemma 
 (Lemma 1.10 in \cite{kessler_statistical_2012}) that provides the expansion of the one-step conditional moment of $X_t$ which will be needed to prove Proposition \ref{prop:pearson consistency}, as well as Lemma \ref{lemma:Z moment bounds}, Theorem \ref{thrm:asymptotic consistency}, and Theorem \ref{thrm:asymptotic normality}. For sufficiently smooth functions $u:\mathbb{R}\mapsto \mathbb{R}$, the infinitesimal generator of SDE \eqref{eqn:the sde}, denoted by $\mathcal{L}$, is given by
\begin{equation*}
    \mathcal{L}u(x)=f(x)u^{\prime}(x)+\frac{1}{2}g^2(x)u^{\prime \prime}(x),
\end{equation*}
with $\mathcal{D}(\mathcal{L})$ denoting the domain of the generator $\mathcal{L}$, and $\mathcal{L}^k,k\in \mathbb{N}$ the $k$-fold application of $\mathcal{L}$.
\begin{lemma}[Lemma 1.10 in \cite{kessler_statistical_2012}]\label{lemma:expectation expansion}
    Assume $u, \mathcal{L} u, \mathcal{L}^2 u \in \mathcal{D}(\mathcal{L})$, and $\mathcal{L}^3u$ is of polynomial growth. Then,
\begin{equation}\label{eqn:Conditional expansion}
        \mathbb{E}[u(X_{t_{k+1}})|X_{t_k}=x]=u(x)+\mathcal{L}u(x)h+\frac{1}{2}\mathcal{L}^2(u)h^2+O(h^{3}).
    \end{equation}
\end{lemma}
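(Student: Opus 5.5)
The statement is Kessler's Lemma 1.10, an Itô–Taylor (Dynkin) expansion of the one-step conditional expectation. The plan is to derive it by applying Itô's formula/Dynkin's formula three times, iterating on the generator.

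Since the increment is time-homogeneous, fix $x$ and write $X_s$ for the solution started at $X_{t_k}=x$, for $s\in[0,h]$. Because $u\in\mathcal{D}(\mathcal{L})$, Itô's formula gives
\[
u(X_h)=u(x)+\int_0^h \mathcal{L}u(X_s)\,ds+\int_0^h u'(X_s)g(X_s)\,dW_s .
\]
Taking conditional expectations then yields Dynkin's formula
\[
\mathbb{E}[u(X_h)\mid X_0=x]=u(x)+\int_0^h \mathbb{E}[\mathcal{L}u(X_s)]\,ds .
\]
For this step we need the stochastic integral to be a true martingale. I would secure that via polynomial growth of $u'g$ together with the moment bounds $\sup_{s\le h}\mathbb{E}|X_s|^p<\infty$, which hold under Assumption \ref{Assumption3}(i)–(ii). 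If needed, one can localise with the stopping times $\theta_R$ and pass to the limit $R\to\infty$ by dominated convergence.

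Next, apply the same formula to $\mathcal{L}u$, which is allowed since $\mathcal{L}u\in\mathcal{D}(\mathcal{L})$:
\[
\mathbb{E}[\mathcal{L}u(X_s)]=\mathcal{L}u(x)+\int_0^s\mathbb{E}[\mathcal{L}^2u(X_r)]\,dr .
\]
Apply it once more to $\mathcal{L}^2u$:
\[
\mathbb{E}[\mathcal{L}^2u(X_r)]=\mathcal{L}^2u(x)+\int_0^r\mathbb{E}[\mathcal{L}^3u(X_\rho)]\,d\rho .
\]
Substituting back and integrating the constant terms produces
\[
\mathbb{E}[u(X_h)\mid X_0=x]=u(x)+h\,\mathcal{L}u(x)+\frac{h^2}{2}\mathcal{L}^2u(x)+R_h(x),
\qquad
R_h(x)=\int_0^h\!\!\int_0^s\!\!\int_0^r \mathbb{E}[\mathcal{L}^3u(X_\rho)]\,d\rho\,dr\,ds .
\]

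To control the remainder, use that $\mathcal{L}^3u$ has polynomial growth, $|\mathcal{L}^3u(y)|\le C(1+|y|^m)$. Combined with the moment bound $\sup_{\rho\le h}\mathbb{E}[|X_\rho|^m\mid X_0=x]\le C(1+|x|^m)$, this gives $|R_h(x)|\le C(1+|x|^m)\,h^3/6$. This is exactly the $O(h^3)$ term in \eqref{eqn:Conditional expansion}, understood, as in Kessler, with a constant of polynomial growth in $x$.

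The main obstacle is justifying the martingale property and the Fubini interchanges under only locally Lipschitz/Hölder coefficients. This requires the conditional moment bound that is uniform on $[0,h]$ with polynomial dependence on $x$. I would obtain it from Itô's formula applied to $|X|^{2p}$ together with Gronwall's inequality, after localisation and using the polynomial growth in Assumption \ref{Assumption3}(ii). Since the lemma is quoted from \cite{kessler_statistical_2012}, one may alternatively simply invoke it once these growth conditions are verified.
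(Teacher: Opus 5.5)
Your proposal is correct and follows essentially the same route: the paper gives no proof of its own beyond the citation, and your iterated Dynkin (It\^o--Taylor) argument with remainder $\int_0^h\!\int_0^s\!\int_0^r \mathbb{E}_x[\mathcal{L}^3u(X_\rho)]\,d\rho\,dr\,ds$ is the standard proof of the cited result of Kessler. One caveat on your side argument for the moment bound: Assumption~\ref{Assumption3}(i) only controls unconditional moments, and the conditional bound $\sup_{\rho\le h}\mathbb{E}_x|X_\rho|^m\le C(1+|x|^m)$ does not follow from It\^o's formula on $|X|^{2p}$ plus Gronwall under mere polynomial growth of $f,g$ (the drift of $|X|^{2p}$ then has degree higher than $2p$, so the inequality does not close), so you need a one-sided condition such as $2xf(x)+(2p-1)g^2(x)\le C_p(1+x^2)$, which is automatic in Kessler's globally Lipschitz setting, or else you must add the polynomial conditional moment bound as a hypothesis.
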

\begin{proposition}[One-step consistency]
\label{prop:pearson consistency}
    Suppose the SDE \eqref{eqn:the sde} satisfies the regularity condition of Lemma \ref{lemma:expectation expansion}.
    Then, the LT \eqref{eqn:LT subsystem} and S \eqref{eqn:Strang subsystem} schemes for the decomposition \eqref{eqn:ODE1}-\eqref{eqn:SDE2}
    satisfy
    \begin{align}
       \left\| \mathbb{E}\left[X^{\textrm{LT}}_{t_{k+1}}-X_{t_{k+1}}\mid X_{t_k}\right]\right\|=O(h^2), \label{eqn:ergodic LT consistency} \\ 
       \left\| \mathbb{E}\left[X^{\textrm{S}}_{t_{k+1}}-X_{t_{k+1}}\mid X_{t_k}\right]\right\|=O(h^3). \label{eqn:ergodic Strang consistency}
    \end{align}
\end{proposition}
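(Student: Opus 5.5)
Fix $X_{t_k}=x$ and compare the second-order expansion in $h$ of $\mathbb{E}[X^\bullet_{t_{k+1}}\mid X_{t_k}=x]$ with that of the true conditional mean. Lemma \ref{lemma:expectation expansion} with $u(y)=y$ gives
\[
\mathbb{E}[X_{t_{k+1}}\mid X_{t_k}=x]=x+h\,\mathcal{L}\mathrm{id}(x)+\tfrac{h^2}{2}\mathcal{L}^2\mathrm{id}(x)+O(h^3),
\]
with $\mathcal{L}=\mathcal{L}_1+\mathcal{L}_2$. Here $\mathcal{L}_1u=f_1u'$ is the generator of the ODE \eqref{eqn:ODE1} and $\mathcal{L}_2u=\tfrac12 gg'u'+\tfrac12 g^2u''$ is the generator of the reducible SDE \eqref{eqn:SDE2}.

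Each flow gets its own expansion. For the ODE flow, the Taylor expansion in time is exact: $u(\varphi^{[1]}_s(x))=u(x)+s\mathcal{L}_1u(x)+\tfrac{s^2}{2}\mathcal{L}_1^2u(x)+O(s^3)$. For \eqref{eqn:SDE2}, the solution \eqref{solSDE2} is an exact solution of an SDE with generator $\mathcal{L}_2$. Lemma \ref{lemma:expectation expansion} therefore applies to it, giving
\[
\mathbb{E}[u(\varphi^{[2]}_s(y))]=u(y)+s\mathcal{L}_2u(y)+\tfrac{s^2}{2}\mathcal{L}_2^2u(y)+O(s^3).
\]
One can also check this directly by writing $v^{-1}(v(y)+\xi)$ with $\xi\sim\mathcal{N}(0,s)$ and expanding in $\xi$, using $(v^{-1})'=g\circ v^{-1}$.

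\textbf{Lie--Trotter.} Condition on $\varphi^{[1]}_h(x)$ and apply the SDE expansion with $u=\mathrm{id}$ at $y=\varphi^{[1]}_h(x)$. Then expand each resulting function of $\varphi^{[1]}_h(x)$ with the ODE expansion. This gives the semigroup composition
\[
\mathbb{E}[X^{LT}_{t_{k+1}}\mid x]=e^{h\mathcal{L}_1}e^{h\mathcal{L}_2}\mathrm{id}(x)
= x+h(\mathcal{L}_1+\mathcal{L}_2)\mathrm{id}+\tfrac{h^2}{2}\big(\mathcal{L}_1^2+2\mathcal{L}_1\mathcal{L}_2+\mathcal{L}_2^2\big)\mathrm{id}+O(h^3).
\]
The order of the operators is reversed relative to the order of the flows, as usual for Kolmogorov backward compositions. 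Subtracting the true expansion, the $O(h)$ terms cancel and the remainder is $\tfrac{h^2}{2}[\mathcal{L}_1,\mathcal{L}_2]\mathrm{id}(x)+O(h^3)$. This commutator is generically nonzero, which gives \eqref{eqn:ergodic LT consistency}.

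\textbf{Strang.} The same procedure gives the triple composition $e^{\frac h2\mathcal{L}_1}e^{h\mathcal{L}_2}e^{\frac h2\mathcal{L}_1}$. The middle flow is written with $\xi_k\sim\mathcal{N}(0,h)$, so it is really a time-$h$ step of \eqref{eqn:SDE2}, consistently with the standard Strang scheme. The composition is symmetric, so its $h^2$ coefficient is
\[
\tfrac12\big(\tfrac14\mathcal{L}_1^2\cdot 4/ \!\!\;1\big)
\]
more precisely, collecting terms gives $\tfrac12(\mathcal{L}_1^2+\mathcal{L}_1\mathcal{L}_2+\mathcal{L}_2\mathcal{L}_1+\mathcal{L}_2^2)=\tfrac12(\mathcal{L}_1+\mathcal{L}_2)^2$. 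The $h^2$ terms therefore cancel exactly with the true expansion. To obtain $O(h^3)$ one expands all three pieces and Lemma \ref{lemma:expectation expansion} to third order, i.e.\ keeps $\mathcal{L}^3$ terms with an $O(h^4)$ remainder. The symmetric BCH structure then leaves a nonvanishing $h^3$ discrepancy made of double commutators. Hence the difference is $O(h^3)$, giving \eqref{eqn:ergodic Strang consistency}.

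\textbf{Main obstacle.} The algebra is routine. The hard part is justifying the remainders: the $O(s^3)$ (respectively $O(s^4)$) terms must be uniform enough that they survive being composed and evaluated at the random point $\varphi^{[2]}_{h/2}(\varphi^{[1]}_{h/2}(x))$. I would require $\mathcal{L}_i^j\mathrm{id}$ and their compositions to be of polynomial growth, which follows from the regularity assumed in Lemma \ref{lemma:expectation expansion} together with smoothness of $f_1$ and $gg'$. I would then bound the remainders using polynomial moments of $v^{-1}(v(y)+\xi)$, which have polynomial growth in $y$, and the ODE flow, which is locally bounded for small $h$. As a result the $O(\cdot)$ constants grow polynomially in $x$, which is the sense in which the norm bound holds.
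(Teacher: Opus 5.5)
Your argument is correct, but it is organised differently from the paper's.

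\textbf{The paper's route.} It works with the explicit composed maps. It Taylor-expands $v^{-1}(\xi_k+\mathcal{V})$ in powers of $\xi_k$, using $(v^{-1})'=g\circ v^{-1}$ and its higher derivatives, and takes Gaussian moments. It expands $\varphi^{[1]}_h$ in time; for Strang it expands $\varphi^{[1]}_{h/2}$ twice, through $U_1+U_2+U_3$. It then checks by hand that about a dozen monomials in $f_1$, $g$ and their derivatives match the expansion of $\mathcal{L}^2\mathrm{id}=ff'+\tfrac12 g^2 f''$.

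\textbf{Your route.} You lift both flows to their backward semigroups and apply Lemma \ref{lemma:expectation expansion} to the reducible equation \eqref{eqn:SDE2} itself. Operator algebra then does the matching. This gives you the following.
\begin{itemize}
\item The Lie--Trotter bias appears as $\tfrac{h^2}{2}[\mathcal{L}_1,\mathcal{L}_2]\mathrm{id}=\tfrac{h^2}{4}\big(f_1(gg')'-gg'f_1'-g^2f_1''\big)$. This is exactly what remains when the paper's two explicit $h^2$ coefficients are subtracted. For the CIR it equals $\tfrac{h^2}{4}\theta^2 b$, consistent with Table \ref{TablePearson}.
\item The order-$h^2$ cancellation for Strang is forced by the symmetry of $e^{\frac h2\mathcal{L}_1}e^{h\mathcal{L}_2}e^{\frac h2\mathcal{L}_1}$, rather than emerging from many terms.
\item The argument applies unchanged to other splittings, such as those in Remark \ref{case2}.
\item The $O(h^3)$ remainder of the stochastic step comes packaged in the lemma. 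A truncated $\xi_k$-expansion instead has to be pushed past the $\xi_k^4$ term (the $\xi_k^5$ term has zero mean) to give $O(h^3)$ rather than $O(h^{5/2})$.
\end{itemize}
The paper's route is more elementary. It needs only the closed form of $\varphi^{[2]}_h$ and Gaussian moments, with no generator or domain considerations for the subflows. Your reading of the middle Strang step as a time-$h$ step is the right one, and it is what the paper's proof actually uses.

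\textbf{Three fixes.}
\begin{enumerate}
\item Delete the garbled display before ``more precisely''. As coefficients of $h^2$: $\mathcal{L}_1^2$ collects $\tfrac18+\tfrac18+\tfrac14=\tfrac12$, $\mathcal{L}_2^2$ collects $\tfrac12$, and the cross terms give $\tfrac12(\mathcal{L}_1\mathcal{L}_2+\mathcal{L}_2\mathcal{L}_1)$.
\item Drop the third-order expansion. Once the coefficients of $h^0,h^1,h^2$ agree and every remainder is $O(h^3)$, \eqref{eqn:ergodic Strang consistency} already follows. Going to $O(h^4)$ would only prove sharpness, and it would need $\mathcal{L}^4$-type regularity that the hypothesis does not provide.
\item The hypothesis of Lemma \ref{lemma:expectation expansion} concerns $\mathcal{L}$, not $\mathcal{L}_1$ and $\mathcal{L}_2$ separately. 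Polynomial growth of $\mathcal{L}_i^j\mathrm{id}$ and their compositions is therefore an additional smoothness assumption on $f_1$ and $g$, not a consequence of it. The paper uses the same derivatives tacitly, so this leaves you on the same footing.
\end{enumerate}
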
  

\section{Parameter inference via splitting schemes}\label{sec:likelihood}
In this Section, we state the  consistency and asymptotic normality results of the proposed LT estimator, with proofs reported in Suppl. Mat. \ref{AppendixB}. Similar results hold for the S estimator, with proofs that can be extended from those of LT, as discussed in Remark \ref{remarkS} below.

Differently from the case of semi-linear SDEs with additive noise recently considered in \cite{pilipovic_parameter_2024}, where their LT estimator yielded a Gaussian pseudo-likelihood, here the LT pseudo-likelihood differs from a Gaussian, so we cannot directly apply the canonical asymptotic results from \cite{kessler_estimation_1997}, also because they were derived for globally Lipschitz coefficients. Similarly, even if the non-Gaussianity of the pseudo-likelihood, and the locally Lipschitz drift have been recently tackled in \cite{pilipovic_parameter_2024} for their S estimator, we cannot directly use their proofs either, as our diffusion coefficient is H\"older continuous, and the pseudo-log-likelihood contains terms depending jointly on $(\theta,\sigma)$, which needs to be handled differently.

Let us denote  $l_N(\alpha)=l_N(\theta,\sigma):=l^{\textrm{LT}}_h(\underline{x};\alpha)$
the LT negative pseudo-log-likelihood to explicitly stress the dependence on $N, \theta$ and $\sigma$ for some given observations $\underline{x}$ of length $N$. By looking at \eqref{eqn:LT loglikelihood}, we see that it depends on
\begin{equation}\label{Z}
Z_{t_k}(\theta,\sigma)=v(X_{t_k};\sigma)-v(\varphi^{[1]}_h(X_{t_{k-1}};\alpha);\sigma).
\end{equation}
By adding and subtracting $Z_{t_k}^2(\theta_0,\sigma)$ into $l_N(\alpha)$, we can decompose it as  \begin{equation}\label{eqn:U decomposition}
l_N(\alpha)=\sum_{k=1}^{N}\log |g(x_{t_k};\sigma)|+\frac{1}{2}\left(\tilde U_1+\tilde U_2+2\tilde U_3\right),
    \end{equation}
    where 
    \begin{align*}
        \tilde U_1=&\frac{1}{h} \sum_{k=1}^N Z_{t_k}^2\left(\theta_0,\sigma\right)
        ,\\
        \tilde U_2=&\frac{1}{h} \sum_{k=1}^N \left(v\left(\varphi^{[1]}_h(X_{t_{k-1}};\alpha) ; \sigma \right)-v\left(\varphi^{[1]}_h(X_{t_{k-1}};\theta_0,\sigma) ; \sigma \right)\right)^2        ,\\
        \tilde U_3=&\frac{1}{N h} \sum_{k=1}^NZ_{t_k}(\theta_0,\sigma)\left(v\left(\varphi^{[1]}_h(X_{t_{k-1}};\theta_0,\sigma) ; \sigma\right)-v\left(\varphi^{[1]}_h(X_{t_{k-1}};\alpha) ; \sigma \right)\right).
    \end{align*}
\begin{remark}\label{RemarkZ}
Except for $g(x_{t_k};\sigma)$, all other terms entering into \eqref{eqn:U decomposition} depend on $\alpha=(\theta,\sigma)$ jointly, unlike the Kessler estimator and the Strang estimator from \cite{pilipovic_parameter_2024}, which have separation of drift and diffusion parameter in the pseudo-likelihood.
\end{remark}
For the consistency and asymptotic normality of the LT estimator, it is crucial that $l_N(\alpha)$ in \eqref{eqn:U decomposition}, properly rescaled, converges to the desired quantities. To prove this, we will extensively use the following lemma on moment bounds for $Z_{t_k}(\theta,\sigma)$. 
 
\begin{lemma}\label{lemma:Z moment bounds}
    Suppose $v(x)$ defined by \eqref{eqn:Lamperti transform} and an arbitrary function $u(x;\alpha)$ are sufficiently smooth such that Lemma \ref{lemma:expectation expansion} can be applied. Then, the followings hold:
    \begin{enumerate}
        \item[(i)]
            $\mathbb{E}_{\alpha_0}[Z_{t_k}(\theta_0,\sigma)|X_{t_{k-1}}=x]=\frac{h}{2}g^\prime(x;\sigma)(1-\frac{g(x;\sigma_0)^2}{g(x;\sigma)^2})+\mathcal{R}(h^2;x)$;
        \item[(ii)]
            $\mathbb{E}_{\alpha_0}[Z_{t_k}(\theta_0,\sigma)u(X_{t_k};\alpha)|X_{t_{k-1}}=x]$\\ $=h\left[\frac{1}{2}u(x;\alpha)g^\prime (x;\sigma)\left(1-\frac{g(x;\sigma_0)^2}{g(x;\sigma)^2}\right)+\frac{g(x;\sigma_0)^2}{g(x;\sigma)}u^\prime(x;\alpha)\right]+\mathcal{R}(h^2;x)$;
       \item[(iii)]
            $\mathbb{E}_{\alpha_0}[Z_{t_k}^2(\theta_0,\sigma)|X_{t_{k-1}}=x]=h\frac{g(x;\sigma_0)^2}{g(x;\sigma)^2}+\mathcal{R}(h^2;x)$.
    \end{enumerate}
\end{lemma}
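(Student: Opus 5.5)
The plan is to split $Z_{t_k}(\theta_0,\sigma)$ into a term handled by Lemma \ref{lemma:expectation expansion} and a deterministic term:
\[
Z_{t_k}(\theta_0,\sigma)=\bigl[v(X_{t_k};\sigma)-v(x;\sigma)\bigr]-\bigl[v(\varphi^{[1]}_h(x;\theta_0,\sigma);\sigma)-v(x;\sigma)\bigr],
\]
conditionally on $X_{t_{k-1}}=x$. The second bracket is deterministic given $x$. Since $\varphi^{[1]}_h(x)=x+hf_1(x)+O(h^2)$, a Taylor expansion of $v$ around $x$, using $v'=1/g(\cdot;\sigma)$, gives
\[
v(\varphi^{[1]}_h(x);\sigma)-v(x;\sigma)=\frac{h}{g(x;\sigma)}\Bigl(f(x;\theta_0)-\tfrac12 g(x;\sigma)g'(x;\sigma)\Bigr)+\mathcal{R}(h^2;x).
\]
Here $f_1$ is built with the candidate $\sigma$, not with $\sigma_0$, because $\varphi^{[1]}$ depends on $\alpha=(\theta_0,\sigma)$.

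The first bracket I treat with Lemma \ref{lemma:expectation expansion} under the true generator $\mathcal{L}_0u=f(\cdot;\theta_0)u'+\frac12 g^2(\cdot;\sigma_0)u''$, applied to $u(y)=v(y;\sigma)-v(x;\sigma)$ with $x$ frozen. Since $u'=1/g(\cdot;\sigma)$ and $u''=-g'/g^2$,
\[
\mathbb{E}[\,\cdot\,]=h\Bigl(\frac{f(x;\theta_0)}{g(x;\sigma)}-\frac{g(x;\sigma_0)^2g'(x;\sigma)}{2g(x;\sigma)^2}\Bigr)+O(h^2).
\]
Subtracting the deterministic part cancels the $f/g$ terms. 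What remains is $\frac h2 g'(x;\sigma)\bigl(1-g(x;\sigma_0)^2/g(x;\sigma)^2\bigr)$, which is (i).

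For (iii), I apply the lemma to $u(y)=\bigl(v(y;\sigma)-v(\varphi^{[1]}_h(x);\sigma)\bigr)^2$. Its value at $y=x$ is $O(h^2)$. Since $u'(x)=2\bigl(v(x)-v(\varphi_h^{[1]}(x))\bigr)/g(x;\sigma)=O(h)$, the term $f u'$ contributes only $O(h^2)$. The leading contribution is $\frac12 g(x;\sigma_0)^2u''(x)=g(x;\sigma_0)^2/g(x;\sigma)^2+O(h)$, and multiplying by $h$ gives (iii).

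For (ii), I use $u(y;\alpha)Z$ as the test function. Equivalently, I write $u(X_{t_k})=u(x)+\bigl(u(X_{t_k})-u(x)\bigr)$. The first piece gives $u(x)\times$(i). For the second piece, the product of increments has expectation $h\,g(x;\sigma_0)^2u'(x)v'(x;\sigma)$ plus $O(h^2)$; this is the carré du champ term. This yields the term $\frac{g(x;\sigma_0)^2}{g(x;\sigma)}u'(x;\alpha)$.

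The main obstacle is rigor in the remainders. Each function is $h$-dependent through $\varphi_h^{[1]}(x)$, so Lemma \ref{lemma:expectation expansion} must be applied with $x$ and $h$ as frozen parameters. I will then need the $O(h^2)$ (or $O(h^3)$) terms to be of the form $h^2 R(x)$ with $R$ of polynomial growth uniformly in $\alpha$. That is what $\mathcal{R}(h^2;x)$ means. Securing it requires the polynomial-growth assumptions of Assumption \ref{Assumption3}(ii)--(iii) on $v$, $g$, $f_1$ and their derivatives. It also requires a uniform expansion $\varphi^{[1]}_h(x)=x+hf_1(x)+h^2R(x)$, which follows from the ODE and the polynomial growth of $f_1$ and $f_1f_1'$.
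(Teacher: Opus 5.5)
Your proposal is correct and follows essentially the paper's route. Part (i) is obtained exactly as in the paper, by combining Lemma~\ref{lemma:expectation expansion} for $v(X_{t_k};\sigma)$ with the Taylor expansion of $v(\varphi^{[1]}_h(x;\theta_0,\sigma);\sigma)$. In (ii), your carr\'e du champ term is the extra term $g^2(\sigma_0)u'(\alpha)/g(\sigma)$ in the paper's generator product rule \eqref{eqn:product rule}.

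The only minor difference is in (iii). You expand the squared test function $\bigl(v(y;\sigma)-v(\varphi^{[1]}_h(x);\sigma)\bigr)^2$ directly. The paper instead takes $u=v$ in (ii) and subtracts $v(\varphi^{[1]}_h(x);\sigma)$ times (i). The two computations are equivalent.
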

\begin{remark}\label{remarkS}
A similar lemma holds for Strang too, with proofs for this and other asymptotic results resembling those of LT. Indeed, the S transition density is still a non-linear transformation of a Gaussian random variable with a different 
$Z^{\text{S}}(\alpha)=v(\varphi^{-1}_{\frac{h}{2}}(x_{i};\alpha);\sigma)-v(\varphi^{[1]}_{\frac{h}{2}}(x_{i-1};\alpha);\sigma)$, see \eqref{eqn:LT loglikelihood}, with extra terms coming from the Jacobian. Nevertheless, a similar decomposition of $l^{\textrm{S}}_N$ as the one above can be done, with proofs mimicking those of LT with extra assumptions on $\partial_x \varphi^{-1}_h$ and $\partial_\alpha \varphi^{-1}_h$, where $\partial_y $ denotes the partial derivative with respect to $y$. 
\end{remark}
\subsection{Consistency of the estimators}

\begin{theorem}[Consistency of the LT estimator]\label{thrm:asymptotic consistency}
Suppose the SDE \eqref{eqn:the sde} satisfies Assumption \ref{Assumption3} (i)-(iv), and $\Theta$ is compact. Let $\hat{\alpha}_N$ be the estimator defined by \eqref{eqn:MLE estimator}. Then, if $h\to0$ and $Nh\to \infty$, we have
    \begin{equation*}
        \hat{\theta}_N \stackrel{\mathbb{P}_{\alpha_0}}{\to} \theta_0,\quad \hat{\sigma}_N \stackrel{\mathbb{P}_{\alpha_0}}{\to} \sigma_0,
    \end{equation*}
    where $\stackrel{\mathbb{P}}{\to}$ denotes convergence in probability, and
    $\alpha_0$ is the true parameter of the SDE.
\end{theorem}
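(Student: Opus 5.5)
The proof follows the standard two-scale argument for contrast estimators (Kessler 1997; Pilipovi\'c et al.\ 2024), adapted to the fact that $\sigma$ enters both the drift-type term $Z_{t_k}(\theta,\sigma)$ and the Jacobian term $\log|g|$. We first prove consistency of $\hat\sigma_N$ from the $1/N$-scaled contrast, uniformly in $\theta$. We then prove consistency of $\hat\theta_N$ from the $1/(Nh)$-scaled contrast, evaluated along the consistent $\hat\sigma_N$. Throughout, conditional expectations of functionals of $(X_{t_{k-1}},X_{t_k})$ are reduced to integrals against the invariant measure $\nu_0$. For this we use Lemma~\ref{lemma:Z moment bounds} together with two standard tools: an ergodic LLN for $\frac1N\sum_k F(X_{t_{k-1}};\alpha)$ when $h\to0$, $Nh\to\infty$, which follows from Assumption~\ref{Assumption3}(i) and the polynomial growth in (ii); and the triangular-array lemma of Genon-Catalot--Jacod, i.e.\ $\sum_k\mathbb{E}[\chi_k\mid\mathcal F_{k-1}]\to U$ and $\sum_k\mathbb{E}[\chi_k^2\mid\mathcal F_{k-1}]\to0$ imply $\sum_k\chi_k\to U$ in probability.

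\textbf{Step 1: $\sigma$.} We show $\frac1N l_N(\theta,\sigma)\to \mathcal{G}(\sigma):=\int\bigl[\log|g(x;\sigma)|+\frac{g(x;\sigma_0)^2}{2g(x;\sigma)^2}\bigr]\nu_0(dx)$ uniformly in $\alpha$, using the decomposition \eqref{eqn:U decomposition}. By Lemma~\ref{lemma:Z moment bounds}(iii) and the triangular-array lemma, $\frac1N\tilde U_1\to\int g_0^2/g^2\,d\nu_0$; the conditional second moment is $O(1/N)$ because fourth moments of $Z$ are $O(h^2)$. For $\tilde U_2$ we Taylor expand in the first argument: $\varphi_h^{[1]}(x;\alpha)-\varphi_h^{[1]}(x;\theta_0,\sigma)=h\,(f_1(x;\alpha)-f_1(x;\theta_0,\sigma))+O(h^2)$, so $\frac1N\tilde U_2=O(h)\to0$. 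By Cauchy--Schwarz the cross term $\frac1N\tilde U_3$ also vanishes. The function $\mathcal G$ is minimised exactly when $g(\cdot;\sigma)^2=g(\cdot;\sigma_0)^2$ $\nu_0$-a.e., since $\log a+\frac{b}{2a}$ with $a=g^2$ is minimised at $a=b$; identifiability (iv) then gives the unique minimiser $\sigma_0$. Uniformity in $\alpha$ follows from tightness (equicontinuity) of $\alpha\mapsto \frac1N l_N$ on compact $\Theta$, obtained by bounding $\sup_\alpha|\partial_\alpha \frac1N l_N|$ in $L^1$ via (ii)--(iii). The standard argmin argument then yields $\hat\sigma_N\to\sigma_0$.

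\textbf{Step 2: $\theta$.} We consider $\frac{1}{Nh}[l_N(\theta,\hat\sigma_N)-l_N(\theta_0,\hat\sigma_N)]=\frac1{2Nh}\tilde U_2+\tilde U_3$, since the $\log g$ and $\tilde U_1$ terms cancel. By the same expansion,
\[
\frac{1}{Nh}\tilde U_2\to\int \frac{(f_1(x;\theta,\sigma_0)-f_1(x;\theta_0,\sigma_0))^2}{g(x;\sigma_0)^2}\,\nu_0(dx)
\]
once $\hat\sigma_N\to\sigma_0$. Note that $f_1(\cdot;\theta,\sigma)-f_1(\cdot;\theta_0,\sigma)=f(\cdot;\theta)-f(\cdot;\theta_0)$, because the $gg'/2$ part cancels; combined with $v'=1/g$, this makes the limit a function of the drift alone, and identifiability (iv) gives the unique minimiser $\theta_0$. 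The term $\tilde U_3$ is the martingale-plus-bias part. By Lemma~\ref{lemma:Z moment bounds}(i), its compensator is
\[
\frac1N\sum_k \tfrac12 g'(1-g_0^2/g^2)\bigl(f(\cdot;\theta_0)-f(\cdot;\theta)\bigr)/g+O(h),
\]
which tends to $0$ precisely because $\hat\sigma_N\to\sigma_0$. Its conditional variance is $O(1/(Nh))$ by (iii), which tends to $0$ since $Nh\to\infty$.

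\textbf{Main obstacle.} The hardest step is the coupling in Step 2. The bias in $\tilde U_3$ is $O(1)$ at the $1/(Nh)$ scale unless $\sigma=\sigma_0$, so the convergence must hold uniformly in $\theta$ while $\sigma=\hat\sigma_N$ is random. We plan to handle this by proving uniform convergence over $\theta\in\Theta_\theta$ and $\sigma$ in shrinking neighbourhoods of $\sigma_0$. The required ingredients are a Lipschitz-in-$\sigma$ bound, in $L^1$, on $\frac{1}{Nh}\partial_\sigma[\tilde U_2+2\tilde U_3]$, which uses (iii) and the polynomial growth of $\partial_\sigma v$, and a Kolmogorov/Sobolev-type tightness argument for the martingale part in $\theta$. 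The local Hölder nature of $g$ enters only through the moment bounds of Lemma~\ref{lemma:Z moment bounds}. If needed, we will localise with the stopping times $\theta_R$ as in Lemma~\ref{lemma:LT_onestep_stop} and use $\sup_t\mathbb{E}|X_t|^p<\infty$ to remove the localisation.
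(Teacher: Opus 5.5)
Your proposal is correct and follows essentially the same route as the paper. Uniform convergence of $\frac1N l_N$ to $\int\bigl[\log|g(x;\sigma)|+\frac{g(x;\sigma_0)^2}{2g(x;\sigma)^2}\bigr]d\mu_0(x)$ gives $\hat\sigma_N\to\sigma_0$. Uniform convergence of $\frac{1}{Nh}\bigl(l_N(\theta,\sigma)-l_N(\theta_0,\sigma)\bigr)$ to a limit whose bias term (from Lemma~\ref{lemma:Z moment bounds}(i)) vanishes at $\sigma=\sigma_0$ then gives $\hat\theta_N\to\theta_0$ by drift identifiability.

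The differences are minor:
\begin{itemize}
\item Your bound $|U_3|\le\sqrt{U_1U_2}$ is a cleaner substitute for the paper's triangular-array treatment of the cross term at the $1/N$ scale.
\item You note that uniformity in the second step needs the martingale structure of $Z_{t_k}$, since bounding the gradient termwise gives only $O(h^{-1/2})$. This makes explicit a point the paper passes over when it simply cites Lemma~\ref{lemma:Gloter uniform convergence}.
\item One small slip: with $a=g^2$ the pointwise function is $\tfrac12\log a+\tfrac{b}{2a}$, not $\log a+\tfrac{b}{2a}$. It is the former that is minimised at $a=b$.
\end{itemize}
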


\subsection{Asymptotic normality of the estimators}
We are now ready to state the asymptotic normality of the splitting pseudo MLEs under Assumption \ref{Assumption3} (i)-(iv), compactness of $\Theta$,  and $\alpha_0$ in the interior of $\Theta$. Let $\rho>0$ and $\mathcal{B}_\rho(\alpha_0)=\{\alpha\in \Theta|\left\|\alpha-\alpha_0\right\|\leq \rho\}$ be a ball around $\alpha_0$ of radius $\rho$. Then, we know that $\mathcal{B}_\rho(\alpha_0)\in \Theta$ for sufficiently small $\rho$, since $\alpha_0$ is in the interior of $\Theta$. For $\hat{\alpha}_N\in \mathcal{B}_\rho(\alpha_0)$, by the definition of the MLE estimator and Taylor's expansion, we have
\begin{equation}\label{eqn:gradient expansion}
    \left(\int_0^1\mathbf{H}_{l_N}\left(\alpha_0+t(\hat{\alpha}_N-\alpha_0)\right)dt\right)(\hat{\alpha}_N-\alpha_0)=-\nabla l_N(\alpha_0),
\end{equation}
where gradient ($\nabla$) and Hessian ($\mathbf{H}$) are the differential operators taken with respect to $\alpha$. Define 
\begin{align}
    \mathbf{C}_N(\alpha):=&\left[\begin{array}{cc}
\frac{1}{N h} \partial_{\theta\theta} l_N(\alpha) & \frac{1}{N \sqrt{h}} \partial_{\theta\sigma} l_N(\alpha) \\
\frac{1}{N \sqrt{h}} \partial_{\sigma\theta} l_N(\alpha) & \frac{1}{N} \partial_{\sigma\sigma} l_N(\alpha)
\end{array}\right],\label{eqn:C_N alpha}\\
\mathbf{S}_N:=&\left[\begin{array}{c}
\sqrt{N h}\left(\hat{\theta}_N-\theta_0\right) \\
\sqrt{N}\left(\hat{\sigma}_N-\sigma_0\right)
\end{array}\right],\quad \mathbf{\lambda}_N:=\left[\begin{array}{c}
-\frac{1}{\sqrt{N h}} \partial_\theta l_N\left(\alpha_0\right) \\
-\frac{1}{\sqrt{N}} \partial_\sigma l_N\left(\alpha_0\right)
\end{array}\right],\label{eqn:S_N lambda_N}
\end{align}
and $D_N:=\int_0^1\mathbf{C}(\alpha_0+t(\hat{\alpha}_N-\alpha_0))dt$. Then, by \eqref{eqn:gradient expansion}, we have $\mathbf{D}_N \mathbf{S}_N=\mathbf{\lambda}_N$. We also define
\begin{equation}\label{eqn:C alpha0}
    \mathbf{C}(\alpha_0):=\left[\begin{array}{cc}
\mathbf{C}_\theta\left(\alpha_0\right) & \mathbf{0}_{d_1 \times d_2} \\
\mathbf{0}_{d_2 \times d_1} & \mathbf{C}_\sigma\left(\alpha_0\right)
\end{array}\right],
\end{equation}
where 
\begin{align*}
    \left[\mathbf{C}_{\theta}(\alpha_0)\right]_{i_1,i_2}:=&\int \frac{\partial_{\theta_i}f(x;\theta_0)\partial_{\theta_j}f(x;\theta_0)}{g(x;\sigma_0)^2}d\mu_0(x),\quad 1\leq i_1,i_2\leq d_1,\\
    \left[\mathbf{C}_{\sigma}(\alpha_0)\right]_{j_1,j_2}:=&2\int\frac{\partial_{\sigma_i} g(x;\sigma_0)\partial_{\sigma_j}g(x;\sigma_0)}{g(x;\sigma_0)^2}d\mu_0(x),\quad 1\leq j_1,j_2\leq d_2.
\end{align*}
\begin{theorem}[Asymptotic normality of the LT estimator]\label{thrm:asymptotic normality}
    Under Assumption \ref{Assumption3}(i)-(iv), if $\Theta$ is compact and $\alpha_0$ is in the interior of $\Theta$, then, for $h\to 0$, $Nh\to \infty$ and $Nh^2\to 0$, under $\mathbb{P}_{\alpha_0}$, we have that
    \begin{equation}
        \left[\begin{array}{l}
\sqrt{N h}\left(\hat{\theta}_N-\theta_0\right) \\
\sqrt{N}\left(\hat{\sigma}_N-\sigma_0\right)
\end{array}\right]\xrightarrow{d}\mathcal{N}(0,\mathbf{C}^{-1}(\alpha_0)).  
    \end{equation}
\end{theorem}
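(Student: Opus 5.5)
The argument follows the standard route for asymptotic normality of contrast estimators, applied to the identity $\mathbf{D}_N\mathbf{S}_N=\mathbf{\lambda}_N$ obtained from \eqref{eqn:gradient expansion}. By Theorem \ref{thrm:asymptotic consistency}, $\hat\alpha_N\in\mathcal{B}_\rho(\alpha_0)$ with probability tending to one, so the Taylor expansion is legitimate on an event of asymptotically full probability. It then suffices to prove two facts. First, $\mathbf{\lambda}_N\xrightarrow{d}\mathcal{N}(0,\mathbf{C}(\alpha_0))$. Second, $\sup_{\alpha\in\mathcal{B}_{\rho_N}(\alpha_0)}\|\mathbf{C}_N(\alpha)-\mathbf{C}(\alpha_0)\|\stackrel{\mathbb{P}}{\to}0$ for any $\rho_N\to0$, which gives $\mathbf{D}_N\stackrel{\mathbb{P}}{\to}\mathbf{C}(\alpha_0)$. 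Since $\mathbf{C}(\alpha_0)$ is invertible (positive definite by identifiability, Assumption \ref{Assumption3}(iv)), Slutsky's lemma then yields $\mathbf{S}_N\xrightarrow{d}\mathcal{N}(0,\mathbf{C}^{-1}(\alpha_0)\mathbf{C}(\alpha_0)\mathbf{C}^{-1}(\alpha_0))=\mathcal{N}(0,\mathbf{C}^{-1}(\alpha_0))$.

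For the score $\mathbf{\lambda}_N$, I differentiate \eqref{eqn:LT loglikelihood} at $\alpha_0$. Using $\partial_\theta\varphi_h^{[1]}(x;\alpha)=h\,\partial_\theta f_1(x;\alpha)+O(h^2)$, the $\theta$-score becomes
\[
-\frac{1}{\sqrt{Nh}}\partial_\theta l_N(\alpha_0)=\frac{1}{\sqrt{Nh}}\sum_{k=1}^N Z_{t_k}(\theta_0,\sigma_0)\,\frac{\partial_\theta f(X_{t_{k-1}};\theta_0)}{g(X_{t_{k-1}};\sigma_0)}+o_{\mathbb{P}}(1).
\]
The $\sigma$-score collects three kinds of terms: $\frac{1}{h}Z_{t_k}\,\partial_\sigma Z_{t_k}$, the terms $\partial_\sigma\log g(X_{t_k};\sigma_0)$, and cross terms arising from the $\sigma$-dependence of $f_1=f-gg'/2$ inside $\varphi^{[1]}_h$. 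That last dependence is exactly the entanglement noted in Remark \ref{RemarkZ}.

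Each summand is written as a conditional mean plus a martingale difference. The conditional means are computed via Lemma \ref{lemma:Z moment bounds} at $\sigma=\sigma_0$; there (i) gives $\mathbb{E}[Z_{t_k}\mid X_{t_{k-1}}]=\mathcal{R}(h^2;x)$. The remainder sums are therefore of order $\sqrt{Nh}\,h$ for the $\theta$-score and $\sqrt N h$ for the $\sigma$-score, and both vanish because $Nh^2\to0$. This is exactly where that rate condition is needed.

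For the $\sigma$-score, Lemma \ref{lemma:Z moment bounds}(iii) shows that $\frac{1}{h}\mathbb{E}[Z^2]=1+O(h)$. Combined with the Jacobian term $-\partial_\sigma\log g$, this leaves a centered quantity whose conditional variance, after the expansion $\partial_\sigma v(x;\sigma)=-\int^x\partial_\sigma g/g^2$, tends to $2(\partial_\sigma g/g)^2$. For the $\theta$-score the conditional variance is $(\partial_\theta f)^2/g^2$. The $\theta$–$\sigma$ cross covariance involves odd Gaussian moments of $Z$ and is $O(\sqrt h)$, so the limit covariance is block diagonal, as in \eqref{eqn:C alpha0}. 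Summing and applying the ergodic theorem (Assumption \ref{Assumption3}(i), with polynomial growth from (ii)) identifies these limits as the integrals against $\mu_0$ defining $\mathbf{C}(\alpha_0)$. A martingale central limit theorem then applies, with the Lyapunov condition checked through fourth moments of $Z$, which are $O(h^2)$ by Lemma \ref{lemma:expectation expansion}.

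For the Hessian, each block of $\mathbf{C}_N(\alpha)$ is expanded in the same way. Averages such as $\frac1N\sum u(X_{t_{k-1}};\alpha)$ converge uniformly in $\alpha$ by ergodicity combined with a tightness argument: bound the $\alpha$-derivatives in $L^1$ using Assumption \ref{Assumption3}(ii)–(iii). Averages of the form $\frac{1}{N}\sum(\text{martingale differences})$ vanish via $L^2$ bounds. Continuity at $\alpha_0$ then handles the shrinking ball.

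The main obstacle will be the off-diagonal block $\frac{1}{N\sqrt h}\partial_{\theta\sigma}l_N$. Because $\sigma$ enters $\varphi_h^{[1]}$ through $gg'/2$, this block contains terms like $\frac{1}{N h\sqrt h}\sum Z_{t_k}\,\partial_\theta\partial_\sigma v(\varphi^{[1]}_h)$, whose naive size is $h^{-1/2}$. To show that it tends to zero in probability I need the conditional mean of $Z_{t_k}$ to be $O(h^2)$ near $\sigma_0$; Lemma \ref{lemma:Z moment bounds}(i) gives exactly this, since the factor $(1-g_0^2/g^2)$ vanishes at $\sigma_0$. I also need the leading $h$-order part of $\partial_\theta\partial_\sigma v(\varphi_h^{[1]})$ to be $O(h)$, which follows from $\varphi^{[1]}_h(x)=x+hf_1(x)+O(h^2)$. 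I would first verify this cancellation explicitly and only then carry out the uniform-in-$\alpha$ argument on $\mathcal{B}_{\rho_N}(\alpha_0)$.
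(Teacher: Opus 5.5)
Your proposal follows the same route as the paper: the Taylor identity $\mathbf{D}_N\mathbf{S}_N=\boldsymbol{\lambda}_N$, convergence of $\mathbf{C}_N$ uniformly on shrinking balls (Lemma~\ref{lemma:C alpha0 limit}), and a martingale-array CLT for the score (Lemmas~\ref{lemma:lambda limit}--\ref{lemma:normality conditions}) with conditional moments from Lemma~\ref{lemma:Z moment bounds}, where $Nh^2\to0$ removes the bias. Two minor corrections: identifiability (Assumption~\ref{Assumption3}(iv)) does not by itself make $\mathbf{C}(\alpha_0)$ nonsingular, which is an extra nondegeneracy condition that the statement, like the paper, implicitly presupposes; and since the $\sigma$-score is quadratic in $Z_{t_k}$, the Lyapunov check there needs moments of $Z_{t_k}$ beyond the fourth (e.g.\ $\mathbb{E}_{\alpha_0}[Z_{t_k}^{8}\mid X_{t_{k-1}}]=O(h^4)$), not only fourth moments.
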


\section{Simulation study}\label{sec:numerical results}
In this section, we test the properties and accuracy of the proposed LT and S splitting schemes and estimators at the numerical and statistical level on a variety of examples belonging to the SDE \eqref{eqn:the sde}, with diffusion coefficients which are either H\"older (such as the CIR, and the Ahn and Gao model), or  Lipschitz (such as the Student diffusion). Further experiments (on all members of the Pearson diffusion class, the stochastic Ginzburg-Landau equation, and the stochastic Verhulst equation) are reported in the Supplementary Material \cite{Fangetal26}. In Table \ref{TablePearson} (and Table S1 in  the Supplementary Material \cite{Fangetal26}), we report some of the quantities needed to perform simulation/parameter inference (e.g. $\varphi_h^{[1]}, \varphi^{[2]}_h, v, v^{-1}, \varphi_h^{[1]}$).
\begin{remark}\label{remarkPearson}
For Pearson diffusion SDEs, we explicitly compute the one-step conditional means $\mathbb{E}[X^\bullet_{t_{k+1}}|X_{t_k}]$, and the  conditional means $\mathbb{E}[X^\bullet_{t}|X_{t_k}], \bullet=\textrm{LT}, \textrm{S}$ 
of the derived LT and S schemes, see Table \ref{TablePearson} and Table S1 in the Supplementary Material \cite{Fangetal26}, with (one-step) conditional variances which can be similarly derived for some models. This may be used in two manners. From a numerical point of view,
by comparing these quantities to the exact ones, we would be able to quantify the bias/error introduced by the numerical schemes, both at finite time and asymptotically as $t\to\infty$, following the approach for (different types of) splitting schemes for the inhomogeneous geometric Brownian motion in \cite{tubikanec_qualitative_2022}. From a statistical point of view, these quantities can be used to perform the generalised method of moments. Both are out of the scope of this manuscript, and  will be considered in future work.
\end{remark}
To investigate the strong mean-square convergence order of our proposed splitting schemes, and compare it with alternative discretisation methods, we estimate the mean-squared error 
\begin{equation}\label{eqn:MSE}
S_N:=\left(\frac{1}{M} \sum_{k=1}^M\left|\widetilde X_{N,k}-X_{N,k}\right|^2\right)^{\frac{1}{2}},
\end{equation}
where $\widetilde X_{N,k}$ and $X_{N,k}$ denote the $k$th realisation of the numerical discretisation scheme and of the true SDE solution at time $T=Nh$, respectively, expressed as a function of the number of discretisation steps $N$ in $[0,T]$, with $M$ being the total number of simulated paths. When an explicit solution of $X$ is unavailable in closed form, $\widetilde X^{\rm{fine}}$ refers to a reference numerical scheme with a time step $h^{\textrm{fine}}$ much smaller than that $h$ for $\widetilde{X}$,  while using the same simulated Brownian paths. Here, we only report an illustration of the mean-square convergent order for the CIR process, referring to the Supplementary Material \cite{Fangetal26} for all other processes.

We compare our proposed LT and S estimators against established benchmarks like Kessler and  Hermite estimators, as well as ad-hoc estimators  derived for some of the considered models, testing their performance with respect to the number of observations $N$, the time horizon $T$, the observation time step $h_\textrm{obs}$, and the computational time to evaluate the underlying pseudo-likelihoods.

\paragraph*{Implementation details} We use the Python package \textit{pymle} \cite{kirkby2025pymle} to calculate the transition densities of Kessler and Hermite estimators. We modify the code for the latter to allow for higher order approximation (up to $j=3$ instead of the implemented $j=1$). We use the \textit{Nelder-Mead} method within the \textit{scipy.optimize} package to perform the MLE optimisation, using $\alpha_0=(1.0,1.0)$ as initial guess for CIR and student diffusion, and $\alpha_0=(5.0,5.0)$ as the initial guess for Ahn-Gao diffusion. For each value of $N$ and $T$, we generate $M=1000$ trajectories with time step $h=h_\textrm{fine}$, and subsample according to an observation time step $h_\textrm{obs}>h$ to obtain  the desired datasets, and visualise the $M$ estimates via density plots and violin plots, obtained using the Python \textit{sns.kdeplot} and \textit{sns.violinplot}, respectively. In all experiments, $X_0=1$. To evaluate the computational cost of the estimators, we report the runtime needed to perform $M=1000$ pseudo MLEs from simulated data up to time $T$ with $h_\textrm{obs}=0.001$. All
experiments are performed using the Warwick
University HPC facilities on Avon, with
Dell PowerEdge C6420 compute nodes each with 2 x Intel Xeon Platinum 8268 (Cascade Lake) 2.9 GHz 24-core processors; 48 cores per node; 180 nodes; 8640 compute cores; 192 GB DDR4-2933 RAM per node; 4 GB RAM per core. 

\begin{table}
\tiny
    \centering
    \begin{tabular}{ccc}
    \toprule
     & {\bf CIR - $g(x)=\sqrt{2\theta b x}$ } & {\bf Student diffusion  -  $g(x)=\sqrt{2a\theta(x^2+1)}$} \\ \midrule
$(\tilde \mu,\tilde\theta)$ & $(\mu-\frac{1}{2}b, \bullet )$ & $(\frac{\mu}{1+a},(1+a)\theta)$, \\ \midrule
Extra&$A^{\textrm{LT}}=1, \quad A^{\textrm{S}}=e^{-\theta h/2}$&$A^{\textrm{LT}}=\tilde\mu(e^{\theta ah}-e^{-\theta} h)$
    \\
    &&   $A^{\textrm{S}}=\tilde\mu\left[1-e^{-\theta h}-e^{-\frac{\tilde{\theta}h}{2}}(1-e^{\theta ah})\right]$
\\ \midrule
$X^{[2]}_{t_{k+1}}=\varphi^{[2]}_{h}(x)$&$\left(\sqrt{x}+\frac{\sqrt{2\theta}}{2}
\xi_k\right)^2$&$\sinh\left(\sqrt{2\theta a}{\xi_k}+\operatorname{arcsinh}(X_{t_k})\right)$ \\ \midrule
$\mathbb{E}[X^{\bullet}_{t_{k+1}}|X_{t_k}]$&$\tilde{\mu}+e^{-\theta h}\left(X_{t_k}-\tilde{\mu}\right)+\frac{\theta bh}{2}A^{\bullet}$&$e^{-\theta h}X_{t_k}+A^{\bullet}$\\ \midrule
$\mathbb{E}[X^{\bullet}_t|X_{0}]$& $e^{-\theta t}X_0+\tilde{\mu}\left(1-e^{-\theta t}\right)+A^{\bullet}\frac{\theta h(1-e^{-\theta t})}{2(1-e^{-\theta h})}$ &$e^{-\theta t}X_0+A^{\bullet}\frac{1-e^{-\theta t}}{1-e^{-\theta h}}$ \\ \midrule
$v(x); v^{-1}(x)$& $\sqrt{\frac{2x}{\theta b}}; \quad \frac{\theta b}{2}x^2 $&$\frac{1}{\sqrt{2\theta a}}\operatorname{arcsinh}(x); \quad \sinh(\sqrt{2\theta a}x)$\\ \midrule
\bottomrule
    \end{tabular}
\vspace{.5cm}
      \begin{tabular}{ccc}
    \toprule
    & {\bf Ahn and Gao model } - $g(x)=\sigma x^{3/2}$\\
    Extra & $X^{[1]}_{t_{k+1}}=\varphi_h^{[1]}(x)$&$X^{[2]}_{t_{k+1}}=\varphi_h^{[2]}(x)$\\ \midrule
   $A=\kappa \theta,B=\kappa+\frac{3}{4}\sigma^2$& $\frac{Ae^{Ah}X_{t_k}}{A+B X_{t_k}(e^{Ah}-1)}$&$\frac{4}{(\sigma\xi_k-\frac{2}{\sqrt{x}})^2}$\\ \midrule
   $v(x); v^{-1}(x)$& $\varphi_h^{-1}(x)$ &$\partial x\varphi^{-1}(x)$\\ \midrule
   $-\frac{2}{\sigma\sqrt{x}}; \frac{4}{\sigma^2x^2}$&$\frac{Ax}{Ae^{Ah}+xB(1-e^{Ah})}$& $\frac{A(Ae^{Ah}+xB(1-e^{Ah}))-B(1-e^{Ah})Ax}{(Ae^{Ah}+xB(1-e^{Ah}))^2}$\\
    \bottomrule
    \end{tabular}     \caption{Characterisation of the derived LT  \eqref{eqn:LT subsystem} and S \eqref{eqn:Strang subsystem} splitting schemes for the three models considered here (more in the Supplementary Material \cite{Fangetal26}), with $\varphi^{[1]}_h$ and $\varphi^{[2]}_h$ representing the one-step solutions to the subequations \eqref{eqn:ODE1}-\eqref{eqn:SDE2}, and $\varphi^{-1}_h$ denoting the inverse of $\varphi^{[1]}_h$. The one-step conditional means are also reported, together with the whole conditional mean. Here, $\xi_k\sim N(0,h), k=0,\ldots, N-1$ are iid normal increments. Note that \eqref{eqn:LT loglikelihood} and \eqref{eqn:ST loglikelihood} can only be used for the Student diffusion.}
    \label{TablePearson}
\end{table}

\subsection{Pearson diffusion class}\label{resultsPearson}
The Pearson diffusion class includes solutions to the following SDE
\begin{equation*}\label{eqn:Pearson diffusion}
d X_t=-\theta\left(X_t-\mu\right) d t+\sqrt{2 \theta\left(a X_t^2+b X_t+c\right)} d W_t,
\end{equation*}
where $\theta>0$, and $a,b,c$ are such that the square root is well defined when $X_t$ is in the state space \cite{forman_pearson_2008}. 
Our derived 
splitting schemes \eqref{eqn:LT subsystem} and \eqref{eqn:Strang subsystem} work particularly well for this class, as 
    $g(x)g^{\prime}(x)/2=\theta( a x+ b/2)$, 
so the decomposition \eqref{eqn:ODE1}-\eqref{eqn:SDE2} drastically simplifies, with $f_1$ and $f_2$ given by
\[
f_1(x)=-\tilde\theta(x-\tilde\mu),\quad f_2(x)=\theta (ax+b/2), \quad \tilde\theta=\theta(a+1), \quad \tilde\mu=(\mu-b/2)/(a+1),
\]
trivially fulfilling Assumption \ref{assump:main}. In particular, \eqref{eqn:ODE1} becomes a first-order linear ODE with one-step solution starting from $X_{t_k}$ given by 
\begin{equation}\label{ODEsolPearson}
X^{[1]}_{t_{k+1}}=\varphi^{[1]}_h(X_{t_k})=\tilde\mu+e^{-\tilde\theta h}(X_{t_k}-\tilde\mu),
\end{equation}
while \eqref{eqn:SDE2} becomes an SDE with linear drift, with solution $\varphi^{[2]}_t$ depending on $g(x)$. This allows us to compute the one-step conditional means of the derived schemes (as well as one-step conditional variances, results not shown), and, for all but the $F$-diffusion process, their conditional means, see Remark \ref{remarkPearson}. Moreover, the following result holds, see the Supplementary Material \cite{Fangetal26} for the proof for each Pearson diffusion model.
 \begin{proposition}
The derived \textrm{LT} and \textrm{S} splitting schemes preserve the state space $\mathcal{X}=[\alpha,\beta]$, $\alpha,\beta \in [-\infty,\infty]$ of all processes belonging to the Pearson diffusion class when the boundaries $\alpha$ and $\beta$ are of entrance type (and thus not attainable when $X_0\in\mathcal{X}$).  
 \end{proposition}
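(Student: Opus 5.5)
The LT and S maps are compositions of two flows, so the plan is to show that each flow maps the state space $\mathcal{X}$ into itself. The first flow is the linear ODE flow $\varphi^{[1]}_h$ in \eqref{ODEsolPearson}. The second is the exact flow $\varphi^{[2]}_h(x)=v^{-1}(\xi_k+v(x))$ of \eqref{solSDE2}. Once both are invariant, invariance of $\varphi^{[2]}_h\circ\varphi^{[1]}_h$ and of $\varphi^{[1]}_{h/2}\circ\varphi^{[2]}_{h/2}\circ\varphi^{[1]}_{h/2}$ follows immediately. The argument would go case by case over the Pearson subfamilies, classified by the sign of $a$ and the discriminant of $ax^2+bx+c$: OU, CIR, Jacobi, Student, reciprocal gamma, $F$ and skew-$t$. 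For each case I would record the state space and the boundary classification from \cite{forman_pearson_2008}.

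\textbf{Step 1 (ODE flow).} The map $\varphi^{[1]}_h(x)=\tilde\mu+e^{-\tilde\theta h}(x-\tilde\mu)$ is a contraction toward $\tilde\mu$ when $\tilde\theta=\theta(a+1)>0$. It therefore preserves any interval containing $\tilde\mu$. So it suffices to show that $\tilde\mu=(\mu-b/2)/(a+1)$ lies in $\mathcal{X}$. This is where the entrance-type hypothesis comes in. For a finite boundary $\alpha$ with $ax^2+bx+c$ vanishing there, Feller's test (the scale density $s(x)\propto\exp(\int^x (\mu-y)/(ay^2+by+c)\,dy)$) says the boundary is entrance exactly when the drift pushes strongly enough. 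The typical form is $\mu-\alpha\geq g(\alpha)g'(\alpha)/(2\theta)$-type, e.g. $2\theta\mu\geq 2\theta b$ for CIR. Rewriting this shows it is equivalent to $f_1$ pointing inward at $\alpha$, i.e. $f(\alpha)-\tfrac12 g g'(\alpha)\geq 0$, which is the same as $\tilde\mu\geq\alpha$ (and symmetrically $\tilde\mu\le\beta$). For CIR this is $\mu\geq b/2$, i.e. $\tilde\mu\ge 0$, which is implied by the Feller condition. When a boundary is $\pm\infty$, the check is trivial. The cases $a\le -1$ (so $\tilde\theta\le 0$) would be handled separately: for such parameters ($F$/reciprocal gamma with heavy tails) the admissible range of $a$ stated in \cite{forman_pearson_2008} should be checked, and if $\tilde\theta<0$ the flow is expanding away from $\tilde\mu$. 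On an interval $[\alpha,\infty)$, invariance then still holds when $\tilde\mu\le\alpha$.

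\textbf{Step 2 (SDE flow).} Here $v$ is a bijection from the interior of $\mathcal{X}$ onto an interval $I$. I would use the table formulas: CIR gives $(\sqrt{x}+\sqrt{2\theta}\xi_k/2)^2\ge0$, Student gives $\sinh(\cdot)\in\mathbb{R}$, and Jacobi gives $\sin^2$-type maps into $[0,1]$. In each case $v^{-1}$ is defined on all of $\mathbb{R}$ by its explicit formula (a periodic or even extension, consistent with Assumption \ref{Assumption2}(iv)) and takes values in $\mathcal{X}$. Hence $v^{-1}(\xi_k+v(x))\in\mathcal{X}$ for every realisation of $\xi_k$. This step is essentially a direct computation per model.

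\textbf{Main obstacle.} The delicate step is Step 1: translating the entrance-boundary condition into $\tilde\mu\in\mathcal{X}$ (or its analogue when $\tilde\theta\le0$) uniformly across all subfamilies, especially the $F$-, reciprocal gamma and skew-$t$ cases with $a<0$ or complex roots. I would first verify directly, for each family, that the entrance condition from the scale and speed measures implies $f_1(\alpha)\ge0\ge f_1(\beta)$ at finite boundaries. Since $f_1$ is affine, this gives invariance of $\mathcal{X}$ under the ODE flow without needing the contraction argument, and it also covers $\tilde\theta\le0$.
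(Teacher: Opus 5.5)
Your proposal is correct and is essentially the paper's argument. The paper also goes model by model: it uses the entrance conditions ($\mu\ge b$ for CIR, $\mu\ge a$ for the $F$ diffusion, $\min(\mu,1-\mu)\ge -a$ for Wright--Fisher, $\mu>0$ for the reciprocal gamma case) to place $\tilde\mu$ in $\mathcal{X}$ with $\tilde\theta>0$, so that $\varphi^{[1]}_h$ contracts $\mathcal{X}$ into itself, and then reads off from the explicit formulas for $\varphi^{[2]}_h$ (squares, $\sinh^2$, $\sin^2$, exponentials) that they land in $\mathcal{X}$ for every $\xi_k$.

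Two minor corrections. First, at a simple zero $\alpha$ of $g^2$ the entrance condition $f(\alpha)\ge\tfrac12 (g^2)'(\alpha)$ is strictly stronger than, not equivalent to, $f_1(\alpha)=f(\alpha)-\tfrac14 (g^2)'(\alpha)\ge 0$ (e.g.\ $\mu\ge b$ versus $\mu\ge b/2$ for CIR); the implication is all you need, though. Second, the case $\tilde\theta\le 0$ never occurs: all families other than Jacobi have $a\ge 0$, and entrance at both Jacobi boundaries forces $-\tfrac12\le a<0$.
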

Adaptive time steps may be be considered for attainable  (i.e. reachable with positive probability) boundaries, as discussed below. In the following, we focus on the CIR and the Student diffusion.
\subsubsection{Cox-Ingersoll-Ross (CIR) process - $g(x)=\sqrt{2\theta b x}$}

The CIR is probably one of the most commonly used models when proposing a numerical scheme for SDEs with H\"older diffusion coefficients, as it can be simulated exactly and has a non-negative state space, whose numerical preservation is of interest, as canonical Ito-Taylor schemes (such as EuM and Milstein) do not preserve it \cite{kelly_adaptive_2023}. 

In particular, the CIR state space is $[0,\infty)$, with the boundary $0$ being of entrance type for $\mu\geq b$, and attainable for $\mu\in(0,b)$, see \cite{forman_pearson_2008}.  The transition density is known to be 
\begin{equation}\label{fCIR}
X_t|X_0 = \frac{(1-e^{-\theta t})bY}{2},\qquad Y\sim \chi^2\left(2\frac{\mu}{b},\frac{2}{(1-e^{-\theta t})b}X_0e^{-\theta t}\right),
\end{equation}
where $\chi^2(x,y)$ denotes a non-central $\chi^2$ distribution with $x$ degrees of freedom and non-central parameter $y$. 
It also admits a unique ergodic solution  when $\mu>0$, with invariant Gamma distribution with shape parameter $\mu/b$ and rate parameter $1/b$. Using \eqref{ODEsolPearson} and the subequation' solution $\varphi^{[2]}_h$ reported in Table \ref{TablePearson}, the LT  and S compositions are then
\begin{eqnarray}
\label{eqn:CIR LT}X^{\textrm{LT}}_{t_{k+1}}&=& \left(\sqrt{\tilde\mu+e^{-\theta h}(X_{t_k}-\tilde\mu)}+\frac{\sqrt{2\theta b}}{2}\xi_k\right)^2,\\
X^{\textrm{S}}_{t_{k+1}}\label{eqn:CIR Strang}&=& \tilde\mu +e^{-\theta h/2}\left(\left(\sqrt{\tilde\mu+e^{-\theta h/2}(X_{t_k}-\tilde\mu)}+\frac{\sqrt{2\theta b}}{2}\xi_k\right)^2-\tilde \mu\right),
\end{eqnarray}
leading to the one-step $\mathbb{E}[X^\bullet_{t_{k+1}}|X_{t_k}]$ and  global conditional means $\mathbb{E}[X^\bullet_{t}|X_{0}], \bullet=\textrm{LT}, \textrm{S}$ reported in Table \ref{TablePearson}. From \eqref{eqn:CIR LT} and \eqref{eqn:CIR Strang}, we immediately see that both schemes are positive-preserving when $0$ is an entrance boundary, i.e. 
$X_{t_k}>0$ and $\mu\geq b$, as then $\tilde\mu>0$.  When 0 is attainable, i.e. $\mu<b$, the $X^{[1]}_{t_{k+1}}$ solution may be negative, making thus the LT and S schemes not defined. A remedy for this is using an adaptive step size: given $X_{t_k}$, choosing
\begin{equation*}
    h^{\textrm{LT}}_{t_k}=\min\left(h,-\frac{1}{\theta}\ln{\frac{-\tilde{\mu}}{X_{t_k}-\tilde{\mu}}}\right),\qquad  h^{\textrm{S}}_{t_k}=\min\left(h,-\frac{2}{\theta}\ln{\frac{-\tilde{\mu}}{X_{t_k}-\tilde{\mu}}}\right)
\end{equation*}
guarantees that $X^{[1]}_{t_{k+1}}$ stays non-negative, leading to well defined schemes.

In Figure \ref{fig:CIR strong convergence}, we illustrate the mean-squared error \eqref{eqn:MSE} of our splitting schemes, and compare it with other competing numerical methods, such as the commonly used EuM and Milstein schemes \cite{kloeden_numerical_1992},  the SD scheme \cite{halidias_semi-discrete_2012}, the Lamperti plus EuM and Lamperti plus LT splitting approach \cite{kelly_adaptive_2023},  see Remark \ref{remarkLamperti}. We fix $\mu=6.0, \theta=2.0, b=0.2, T=1.0,  h^{\textrm{fine}}=2^{-13}$, and evaluate $S_N$ for different $h$ on $M=1000$ simulated paths. All schemes but EuM achieve a strong order one convergence rate, with S outperforming them, also for different parameter values (figures nor reported). This is also observed when computing the Wasserstein distance between the true invariant density and that simulated with the numerical schemes, with the lowest distance being obtained for the S splitting scheme, followed by LT and Lamperti plus LT (see the Supplementary Material \cite{Fangetal26}).

\begin{figure}
    \centering    \includegraphics[width=.9\textwidth]{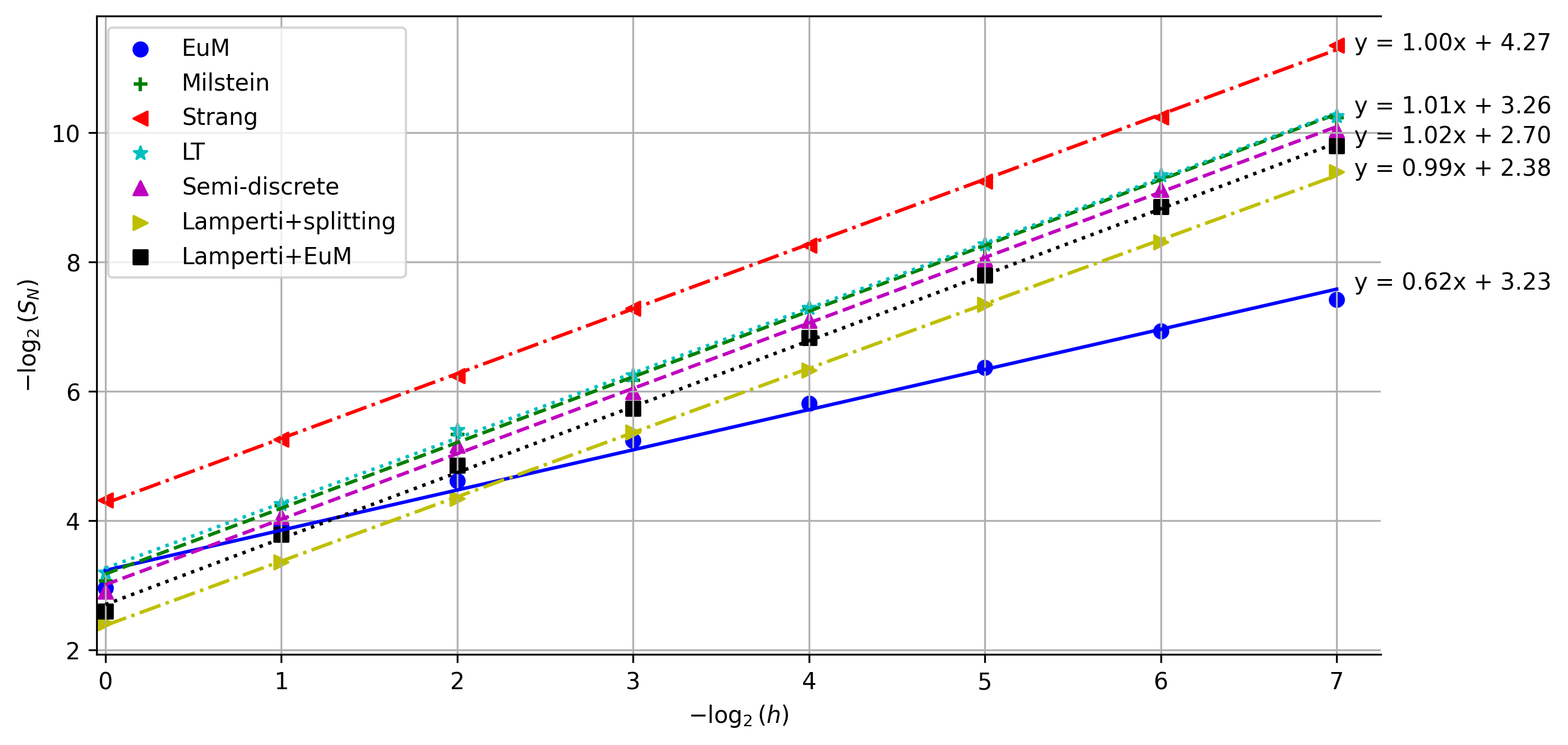}
    \caption{Illustration of the mean-square convergence order on the CIR process via the mean-squared error \eqref{eqn:MSE}, with $\mu=6.0,\theta=2.0,b=0.2, T=1.0$. 
   }
    \label{fig:CIR strong convergence}
\end{figure}

As $v^{-1}$ is not monotone, we cannot use \eqref{eqn:LT loglikelihood} and \eqref{eqn:ST loglikelihood}. Instead, we derive the log-likelihoods directly. The conditional distribution of the LT scheme \eqref{eqn:LT subsystem} is given by
\begin{equation}
    \mathbb{P}(X^{\textrm{LT}}_{t_{k+1}}\leq y|X_{t_{k}}=x)=\mathbb{P}(Y^2\leq y)=\mathbb{P}(-\sqrt{y}\leq Y\leq \sqrt{y}),
\end{equation}
with $Y\sim \mathcal{N}(\sqrt{\tilde\mu+e^{-\theta h}(x-\tilde\mu)},\theta bh/2)$, leading to the LT transition density
\begin{equation*}
f^{\textrm{LT}}_{X_{t_{k+1}}|X_{t_k}}(y|x)=\frac{1}{2\sqrt{y}}\left(f_{Y}(\sqrt{y})+f_{Y}(-\sqrt{y})\right),
\end{equation*}
where $f_Y(y)$ denotes the probability density function of $Y$ evaluated in $y$. 
For the S scheme \eqref{eqn:CIR Strang}, denote $X^{[1]}_{h/2}=\tilde{\mu}+e^{\frac{-\theta h}{2}}(X_{t_k}-\tilde{\mu})$, and  $Y^{\textrm{S}}\sim \mathcal{N}(\sqrt{X^{[1]}_{h/2}},\theta bh/2)$. 
Then, the S transition density becomes
\begin{equation*}
    f^{\textrm{S}}_{X_{t_{k+1}}|X_{t_k}}(y|x)=\frac{e^{\frac{\theta h}{2}}\left(f_{Y^{\textrm{S}}}(\sqrt{\tilde{\mu}+e^{\frac{\theta h}{2}}(y-\tilde{\mu})})+f_{Y^{\textrm{S}}}(-\sqrt{\tilde{\mu}+e^{\frac{\theta h}{2}}(y-\tilde{\mu})})\right)}{2\sqrt{e^{\frac{\theta h}{2}}(y-\tilde{\mu})+\tilde{\mu}}}.
\end{equation*}

To evaluate the performance of our splitting estimators, we fix $\theta$, and infer the unknown parameter $(\mu,b)$ from the $1000$ paths simulated from the true CIR transition density with simulation time step $h^{\textrm{fine}}=10^{-4}$ and then subsampled according to different observation time steps $h_\textrm{obs}$, and thus different number of observations $N$. 
In Figure \ref{fig:CIR inference}, we report the MLE densities obtained via our splitting estimators, the true MLE and several competing estimators, such as Kessler, Hermite, Lamperti plus splitting and Lamperti plus EuM. The pseudo-likelihoods for the Lamperti-based estimators were derived similarly to the LT and S cases, see Remark \ref{remarkLamperti}.   

Figure \ref{fig:CIR inference} shows that our
splitting estimators are robust with respect to the observation time step $h_\textrm{obs}$, performing satisfactorily (especially for $b$, and particularly S) with $h_\textrm{obs}$ as large as $0.5$, when the other approaches fail. As observed in other classes of SDEs, and expected from the higher one-step consistency order, the S estimator performs better than LT, especially for estimating $b$. All estimators lead to almost indistinguishable inference (especially for $\mu$) as $h_\textrm{obs}$ decreases, closely resembling  the true MLE. Note how considering a higher order Hermite expansion (j=3) does not lead to a higher accuracy due to the introduced numerical instability.

In terms of computation cost, on the cluster, it takes 39.72s to perform 1000 pseudo MLEs using LT, followed by S (40.37s), Lamperti plus splitting (43.28s), Lamperti plus EuM (46.17s), Kessler (168.02s), true density (183.70s), Hermite (1327.6s, j=2), and  Hermite (35284.35s, j=3).
\begin{figure}
    \centering    \includegraphics[width=.9\linewidth]{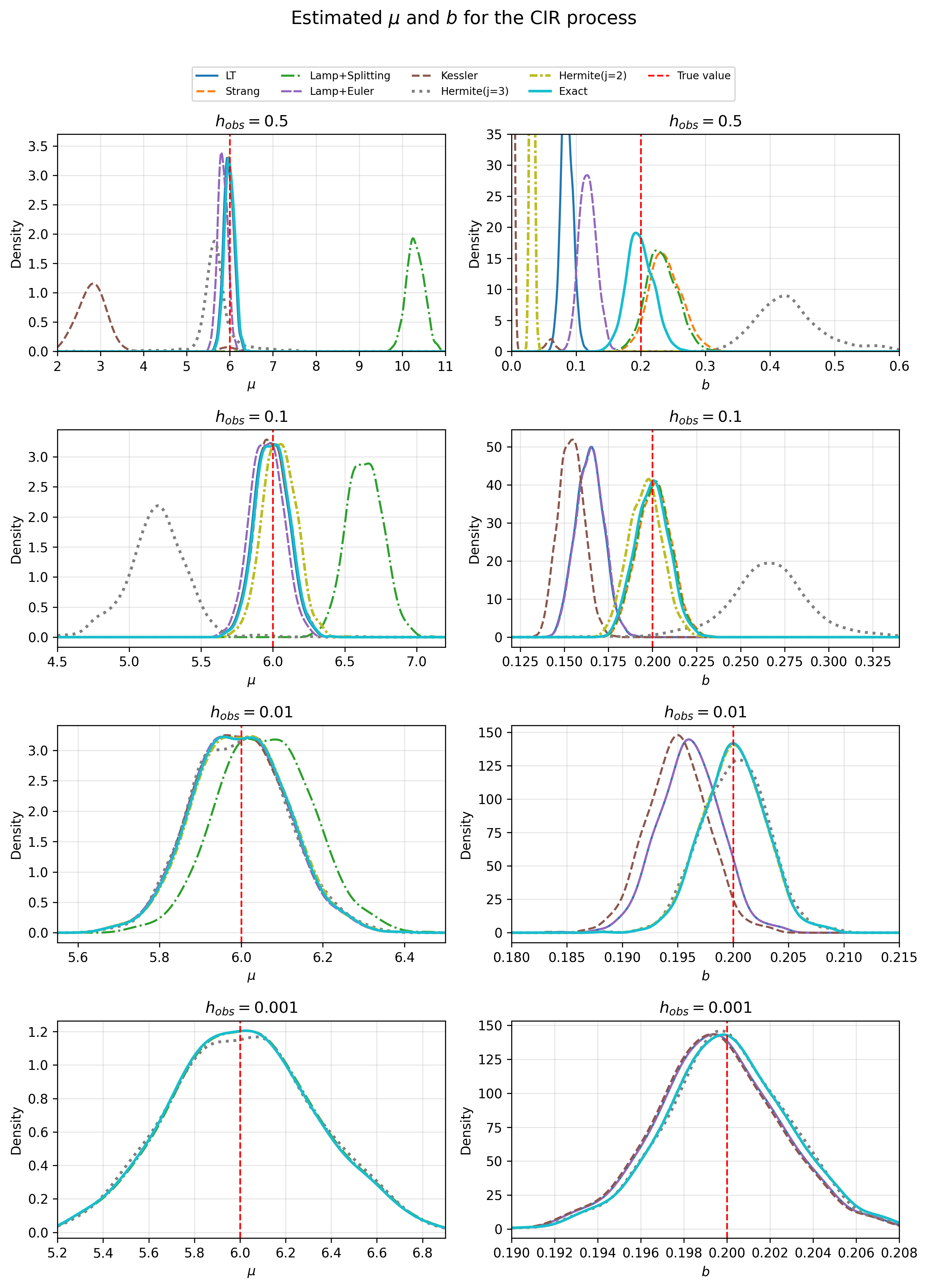}
    \caption{
    Inference results for the CIR process: densities of the pseudo MLEs derived under different estimators. We fix $\theta=2.0$ and infer the unknown $\mu$ (left column) and $b$ (right column) using $M=1000$ simulated paths,  generated exactly using $h^{\textrm{fine}}=10^{-3}$, and then subsampled according to the  observation time step $h_\textrm{obs}$, here $0.5, 0.1, 0.01, 0.001$, from top to bottom.}
    \label{fig:CIR inference}
\end{figure}

\subsubsection{Student diffusion - $g(x)=\sqrt{2\theta a(x^2+1)},  a>0$}
We consider this process because it is the only one within the Pearson diffusion class to have monotone inverses of $v^{-1}$ and $(\varphi_{h/2}^{[1]}\circ v^{-1})$, allowing the use of \eqref{eqn:LT loglikelihood} and \eqref{eqn:ST loglikelihood} as negative pseudo-log-likelihoods. This process is defined in $\mathbb{R}$, and admits a unique ergodic solution for all $a>0$. 
If $\mu=0$, the invariant distribution is a scaled $t$-distribution with $\nu=1+1/a$ degrees of freedom and scaled parameter $\nu^{-\frac{1}{2}}$, while if $\mu \neq 0$, the invariant distribution is a skew $t$-distribution, also known as  Pearson's type IV distribution \cite{forman_pearson_2008}. Using \eqref{ODEsolPearson} and $\varphi^{[2]}_h$ reported in Table \ref{TablePearson}, we immediately obtain the LT and S splitting schemes
\begin{align*}
X^{\textrm{LT}}_{t_{k+1}}=&\sinh\left(\sqrt{2\theta a }
\xi_k+\operatorname{arcsinh}\left(\tilde{\mu}+e^{-\tilde{\theta}h}\left(X^{\textrm{LT}}_{t_k}-\tilde{\mu}\right)\right)\right),\\
    X^{\textrm{S}}_{t_{k+1}}=&\tilde{\mu}+e^{-\frac{\tilde{\theta}h}{2}}\left(\sinh\left(\sqrt{2\theta a }\xi_k+\operatorname{arcsinh}\left(\tilde{\mu}+e^{-\frac{\tilde{\theta}h}{2}}\left(X^{\textrm{S}}_{t_k}-\tilde{\mu}\right)\right)\right)-\tilde{\mu}\right),
\end{align*}
with one-step conditional means given in Table \ref{TablePearson}.
In our experiment, we fix $\theta=2.0$ and $T=100.0$, and infer the unknown parameter $(\mu,b)$ from the $1000$ paths simulated with $h^\textrm{fine}=10^{-5}$ and $\widetilde X^{\textrm{fine}}$ being the SD scheme, using different level of observation time step $h_\textrm{obs}$. As shown in Figure \ref{fig:T inference}, the Hermite estimators (both $j=2$ and $3$) and the Kessler estimator \lq\lq collapse\rq\rq\ for all but $h_\textrm{obs}=0.001$, with notably less accurate inference though. Meanwhile, the two splitting estimators and the Lamperti plus EuM estimator (obtained by solving the underlying Lamperti SDE \eqref{Lamperti} with EuM) can be used even for larger observation time steps (including $h_\textrm{obs}=1$, figures not shown), with the S estimator performing better overall, followed by LT and Lamperti plus EuM. 

In terms of computation costs, on the cluster, it takes 70.16 seconds to perform 1000 pseudo MLEs  paths using LT estimator, followed by S (72.02s), Lamperti plus EuM (86.16s), Kessler (246.64s), Hermite (264.05s, j=3), Hermite (272.51s, j=2).

\begin{figure}
    \centering
\includegraphics[scale=.5]{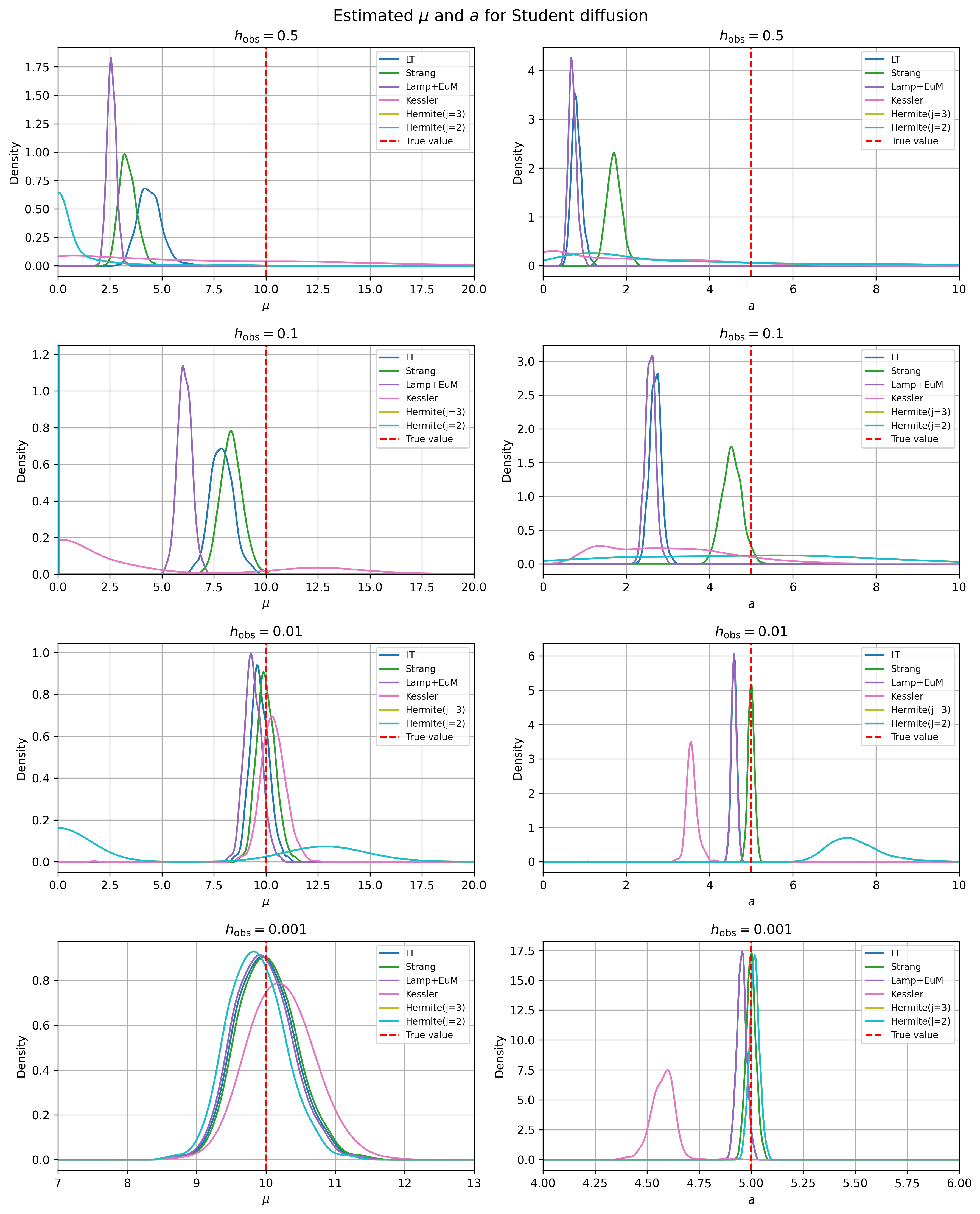}
    \caption{Inference results for the Student diffusion: densities of the pseudo MLEs derived under different estimators. We fix $\theta=2.0$, and infer the unknown $\mu$ (left column) and $a$ (right column) using $M=1000$ paths generated using $h^{\textrm{fine}}=10^{-5}$ with $\widetilde X^{\textrm{fine}}$ being the SD scheme, and then subsampled according to the  observation time step $h_\textrm{obs}$, here $1.0,  0.5, 0.1, 0.01, 0.001$, from top to bottom. }
    \label{fig:T inference}
\end{figure}
\subsection{Ahn and Gao model}
Finally, we consider the Ahn and Gao model \cite{ahn_parametric_1999}, which solves the following SDE
\begin{equation}\label{eqn:AG}
    \mathrm{d} X_t=\kappa(\theta-X_t)X_t\mathrm{~d}t+\sigma X_t^{\frac{3}{2}} \mathrm{~d} W_t,
\end{equation}
with $\kappa,\theta,\sigma>0$ to guarantee that the state space is positive, and the boundary $0$ is of entrance type. This model is a generalisation of the CIR process, with an explicit transition density given by \cite{iacus_simulation_2008}
\[
f_{X_t|X_0}(y|x_0)=\frac{1}{y^2}f^{\textrm{CIR}}_{X_t|X_0}\left(\frac{1}{y},\frac{1}{x_0}\right),
\]
where  $f^{\textrm{CIR}}_{X_t|X_0}$ denotes the true transition density of the CIR given in \eqref{fCIR}. The desired LT \eqref{eqn:LT subsystem} and S \eqref{eqn:Strang subsystem} schemes are obtained using the subequations' solutions reported in Table \ref{TablePearson}. From there, we see that $X^{[1]}_{t_{k+1}}$ is always positive, and thus the $X^{[2]}_{t_{k+1}}$ is well defined, and the state space is preserved by both splitting schemes. For the inference part, we fix $\theta=2.0, \sigma=0.5, \kappa=0.2, h^{\textrm{fine}}=10^{-3}, T=100.0$, and simulate exactly via the true transition density. We then fix $\kappa$ and infer the unknown parameter $(\theta,\sigma)$ for different observation times steps $h_{\textrm{obs}}=0.001, 0.01, 0.1$ using subsampling. Among all estimators, the LT is the most robust (with respect to the observation time step) and accurate (compared to the true MLE). Moreover, for the first time, S estimator does not yield satisfactory inference, collapsing for $h_\textrm{obs}=1.0$, and returning estimates which either centred around a wrong value (see e.g. the mass around $\theta=5$) for $h_\textrm{obs}=0.5$. While the S scheme is well defined from a numerical point of view, outperforming alternative schemes in terms of mean-square error, the inverse of $\varphi^{[1]}\circ v^{-1}$, and thus the S transition density, may not be well defined for the observation time step $h_\textrm{obs}$ from a statistical point of view. The lack of invertibility was also encountered for partially observed SDEs with locally Lipschitz drift and constant diffusion coefficients in \cite{Huangetal2025}, where the authors proposed using bridge processes (dealt in the underlying pseudo-likelihood via controlled sequential Monte Carlo) to obtain discretisation steps small enough to make the quantity invertible. Alternatively, in order to take advantage of the numerical robustness of S, one could perform simulation-based inference based on it, bypassing the invertibility issues. It is important to stress that all estimators but LT lead to not-well-defined transition densities, especially for large time steps. This can be observed in Figure \ref{fig:AG heatmap}, where we report the one-step transition density heatmaps $f_{X_{t_{k+1}}|X_{t_k}}(y|x)$ for $x,y\in[0,5]$, rather than only likelihood values. For coarse observations, $h=1.0$ and $h=0.5$, the approximate densities differ substantially from the true density, expect LT. In particular, S and Hermite produce large non-evaluable regions, while Kessler places probability mass in visibly distorted regions. As $h$ decreases, the LT and S heatmaps become much closer to the true one, whereas the other approximations still show noticeable support and shape artefacts. This illustrates how similar inference performance may mask substantial off-path differences in the transition density approximation.

Finally, the runtimes for performing 1000 pseudo MLEs are 63.4s (LT), 109.43s (true MLE), 121.28s (S), 134.23s (Kessler), 238.79s (Hermite, j=2), and 239.28s (Hermite, j=3). 

\begin{figure}[t]
    \centering
    \makebox[\textwidth][c]{        \includegraphics[width=0.8\textwidth]{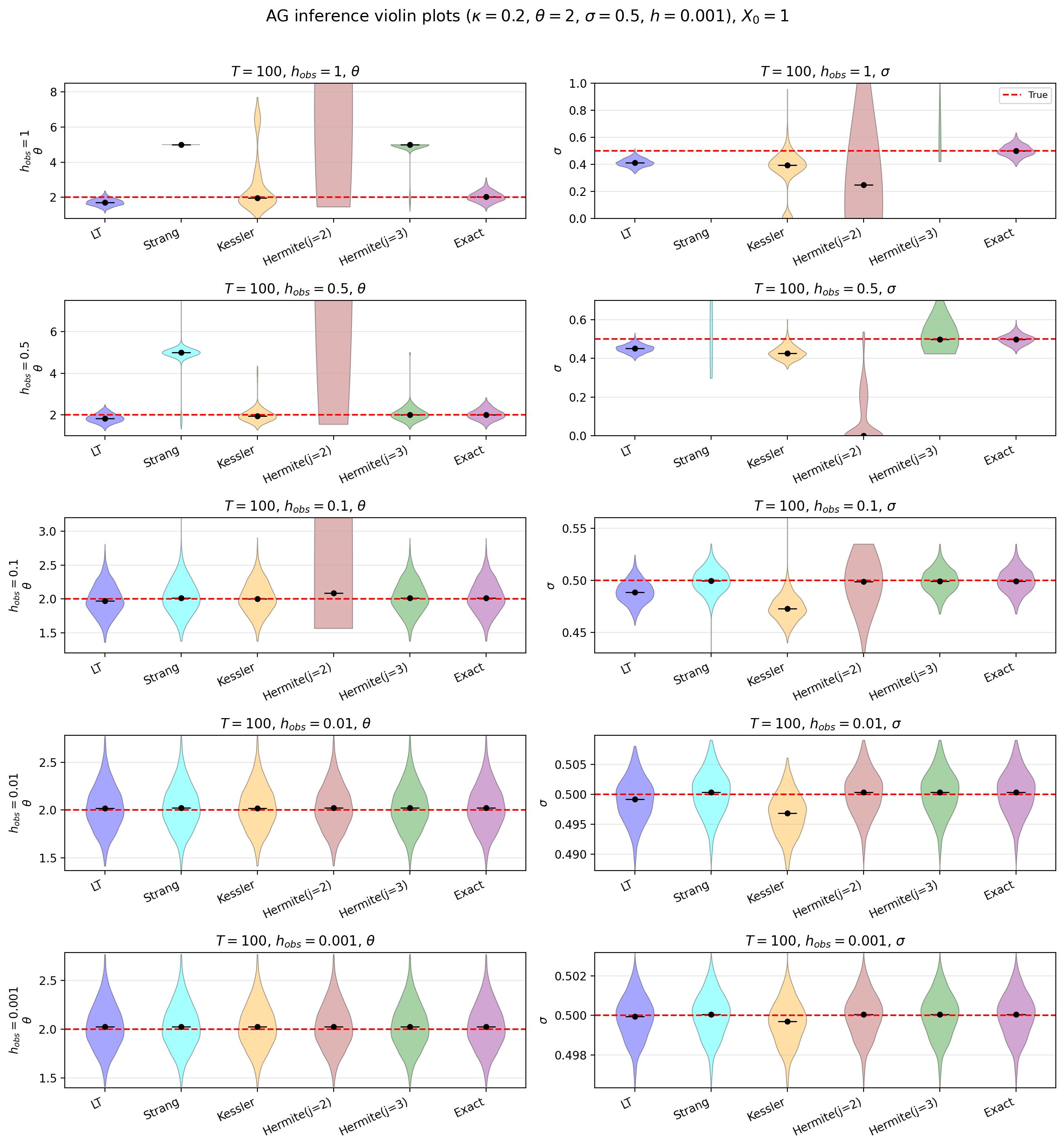}
    }
    \caption{Inference result for the Ahn-Gao model: violin plots of the estimates derived under different estimators. We fix $\kappa=0.2$, and infer the unknown $\theta$ (left columns) and $\sigma$ (right columns) using $M=1000$ simulated paths,  generated exactly using $h^{\textrm{fine}}=10^{-3}$, and then subsampled according to the  observation time step $h_\textrm{obs}$, here from $1.0$ to $0.001$.}
    \label{fig:AG-inference}
\end{figure}

\begin{figure}
    \centering
    \includegraphics[width=1.0\linewidth]{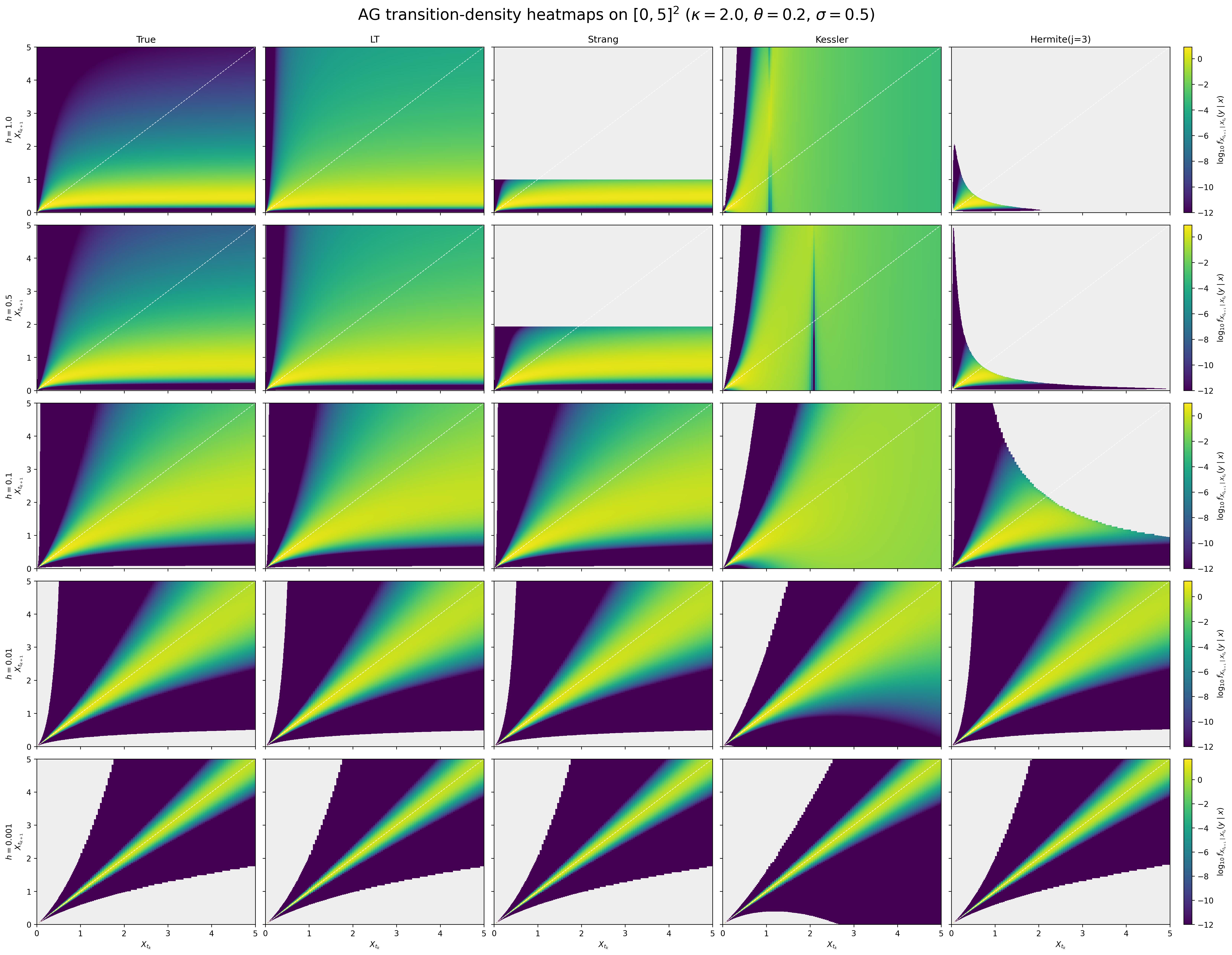}
    \caption{
One-step transition density heatmaps of $\log_{10} f_{X_{t_{k+1}}|X_{t_k}}(y|x)$ for the AG model on $(x,y)\in [0,5]^2$ with
$(\kappa,\theta,\sigma)=(2.0,0.2,0.5)$. The columns compare the true density with the LT, Strang, Kessler, and Hermite approximations (left to right), while rows reports the heatmaps for different  $h=1.0,0.5,0.1,0.01$ (top to bottom). Grey regions
indicate non-positive or non-evaluable density values, and the dashed line marks
$y=x$.
}
    \label{fig:AG heatmap}
\end{figure}
\section{Conclusion}\label{sec:conclusion}
In this paper, we tackled both numerical and statistical challenges by deriving, for the first time, mean-square convergent splitting schemes for scalar SDEs with locally Lipscthiz drift and H\"older diffusion coefficient, leading to consistent and asymptotically normal estimators (with rate of convergence $\sqrt{N}$ for diffusion parameters and $\sqrt{Nh}$ for drift parameters). The proposed splitting schemes are obtained via an innovative SDE decomposition based on  reducible SDEs, allowing the exact solution of the underlying SDE subequation with multiplicative noise: this was traditionally deemed to be the hardest task for splitting schemes, something  bypassed by applying them on the Lamperti-transformed SDEs in the very few existing results.

From a numerical point of view, the mean-square convergent splitting schemes derived here are easy to construct,  preserve the state space of the models and are shown, via numerical simulations, to be more robust than competing discretisation schemes for large time steps $h$. All features are highly desirable for simulation and modelling purposes, e.g., when they are  used to investigate the model properties, or embedded in simulation-based approaches. From a statistical point of view,  we showed, via an extensive simulation study, that the proposed splitting estimators perform better than state-of-art alternatives, such as Kessler or Hermite, especially for large time steps. As expected from the proved higher one-step accuracy, the S estimator performed better than LT when available, that is, when the underlying function mapping a Gaussian distribution to the distribution of the one-step numerical S scheme is invertible. If it is not invertible, LT
may be used instead, while S may be used for simulation-based inference.

While we derived and analysed our numerical schemes and estimators with a focus on scalar SDEs, the proposed approach can be extended to multivariate SDEs with H\"older diffusion coefficients by decomposing the original SDE via multivariate reducible SDEs (through the multivariate Lamperti transformation). Still confining ourselves within one-dimensional models, our approach could also be extended to scalar SDEs with H\"older drift coefficients, allowing other types of decompositions into reducible SDEs than those considered here.

Pilipovic et al. \cite[April 17, 2026]{pilipovic2026} independently introduced a S splitting estimator for multivariate SDEs with Pearson-type multiplicative noise (i.e., SDEs with a multivariate version of the scalar diffusion coefficient $g(x)=\sqrt{ax^2+b x+c}$), proving consistency and asymptotic normality of the estimators. Despite its similarity at a first sight, there are some substantial differences between our approaches. One first straightforward remark is that they deal with multivariate SDEs, while we deal with scalar SDEs but with a more general diffusion coefficient. Further, their adopted splitting scheme, whose convergence is not investigated, decomposes the SDE into a multivariate Pearson-type SDE and an ODE. Since the former cannot be explicitly solved, they rely on Kessler's approach, approximating the SDE solution with a Gaussian distribution
 with derived exact moments. 
If applied to scalar Pearson-type SDEs such as some of the models considered here, their approach will coincide with Kessler's estimator, which we have shown to perform less favourably compared to our derived estimators, and which furthermore fails to preserve 
 the state space domain of the processes (e.g., the univariate Cox-Ingersoll-Ross and Wright-Fisher models are defined on $\mathbb{R}_+$ and $(0,1)$, respectively), owed to
 the Gaussianity on which Kessler's (splitting) scheme relies. Similar numerical considerations may hold in the multivariate case. Ad-hoc adjustments (e.g. time-adaptive time steps or alternative numerical methods, as the extended splitting scheme mentioned above) may then be adopted for simulation purposes, or when there is an interest in embedding these schemes within a simulation-based-inference framework, for partially observed SDEs.  These considerations go beyond \cite{pilipovic2026}, whose proposed numerical scheme has the merit of giving rise to an explicit pseudo-likelihood in the fully observed setting. 
From a statistical point of view, their derived pseudo-likelihood can be decomposed in functions which depend only on either the drift parameters, or the diffusion parameters. Hence, their theoretical framework for proving consistency and asymptotic normality cannot be applied in our setting, where the presence of terms with tangled parameters require different proof techniques.

Finally, it is worth stressing that the statistical benefit arising from the proposed splitting schemes goes beyond the  maximum likelihood estimation considered here. Indeed, the one-step conditional means and variances derived from the splitting schemes may be used within the generalised method of moments, while the splitting schemes may be used in simulation-based inference methods like ABC, for example when the S transition density is not available due to the lack of invertibility.

\section*{Acknowledgement}
We are grateful to the Scientific Computing Research Technology Platform (SCRTP) at the University of Warwick for the provision of computational resources. Bowen Fang gratefully acknowledges funding from the Statistics Centre for Doctoral Training at the University of Warwick. For the purpose of open access, the authors have applied a Creative Commons Attribution (CC BY) licence to any Accepted
Manuscript version arising from this submission.
  \bibliographystyle{abbrvnat}
\bibliography{references}     
\begin{appendix}
\section{Proofs of numerical results} 
\label{sec:appendix A}
\subsection{Auxiliary results}
 We now recall some auxiliary results that we will use frequently in our proofs of the numerical results of the splitting schemes from Section \ref{sec:convergence}. The first is a lemma whose proof is immediate from H\"older's inequality, see for example \cite{schurz_axiomatic_2006}.

\begin{lemma}\label{lemma:open_p_bracket}
    Suppose $\left(\mathbb{B},\|\cdot\|_{\mathbb{B}}\right)$ is a Banach space with respect to the norm $\|\cdot\|_{\mathbb{B}}$. Let $X_i\in \mathbb{B},\; i\in \{1,\dots,n\}$. Then, for $n\in \mathbb{N},p\geq 1$, 
    \begin{equation}\label{equation:open pth bracket}
        \left\|\sum_{i=1}^n X_i\right\|_{\mathbb{B}}^p \leq n^{p-1} \sum_{i=1}^n\left\|X_i\right\|_{\mathbb{B}}^p.
    \end{equation}
\end{lemma}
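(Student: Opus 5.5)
The inequality in Lemma \ref{lemma:open_p_bracket} reduces to a scalar inequality followed by convexity of the power function. First, the triangle inequality in $\mathbb{B}$ gives
\[
\Bigl\|\sum_{i=1}^n X_i\Bigr\|_{\mathbb{B}} \leq \sum_{i=1}^n \|X_i\|_{\mathbb{B}},
\]
and since $t\mapsto t^p$ is nondecreasing on $[0,\infty)$ for $p\geq 1$, it suffices to prove the scalar statement
\[
\Bigl(\sum_{i=1}^n a_i\Bigr)^p\leq n^{p-1}\sum_{i=1}^n a_i^p \quad\text{for } a_i=\|X_i\|_{\mathbb{B}}\geq 0.
\]

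For the scalar statement I would use H\"older's inequality with conjugate exponents $p$ and $p'=p/(p-1)$, applied to the vectors $(a_1,\dots,a_n)$ and $(1,\dots,1)$:
\[
\sum_{i=1}^n a_i\cdot 1\leq \Bigl(\sum_{i=1}^n a_i^p\Bigr)^{1/p}\Bigl(\sum_{i=1}^n 1\Bigr)^{(p-1)/p}=n^{(p-1)/p}\Bigl(\sum_{i=1}^n a_i^p\Bigr)^{1/p}.
\]
Raising both sides to the power $p$ gives the claim. The case $p=1$ is the triangle inequality itself, with $n^{0}=1$, so H\"older is only needed for $p>1$.

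An equivalent route is Jensen's inequality for the convex map $t\mapsto t^p$ applied to the uniform average:
\[
\Bigl(\tfrac1n\sum_i a_i\Bigr)^p\leq \tfrac1n\sum_i a_i^p,
\]
and multiplying by $n^p$ yields the same bound.

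There is no real obstacle here. The only points to check are that monotonicity of $t^p$ lets the triangle inequality pass through the $p$-th power, and that the exponent bookkeeping $(n^{(p-1)/p})^p=n^{p-1}$ is correct. In the later proofs this lemma will be applied with $\mathbb{B}=\mathbb{R}$, or to random variables inside expectations, so the scalar version is what matters.
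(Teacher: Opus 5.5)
Your proof is correct and follows the route the paper indicates, which says the result is immediate from H\"older's inequality. You use the triangle inequality and monotonicity of $t\mapsto t^p$ to reduce to the scalar case, then apply H\"older with exponents $p$ and $p/(p-1)$ against $(1,\dots,1)$; the Jensen variant is an equally valid alternative, and completeness of $\mathbb{B}$ is never used.
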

\noindent The next is Burkholder-Davis-Gundy's  (BDG) inequality, see Theorem 1.7.3 in \cite{mao_stochastic_2011} for its proof.
\begin{lemma}[Burkholder-Davis-Gundy's inequality]\label{lemma:BDG_inequality}
    Assume $W_t$ is a $m-$dimensional Brownian motion defined on a complete probability space $(\Omega,\mathcal{F},\mathbb{P})$ adapted to the filtration $\{\mathcal{F}_t\}_{t\geq0}$. Let $g\in L^p(\mathbb{R}^+,\mathbb{R}^{d\times m})$. 
    Now define $x(t)=\int_0^t g(s) d W_s$ and $y(t)=\int_0^t|g(s)|^2 d s$. Then, for every $p>0$, there exist constants $c_p, C_p$ depending only on $p$, such that for all $t\geq 0$ we have
    \begin{equation}\label{eqn:BDG inequality}
        c_p \mathbb{E}[|y(t)|^{\frac{p}{2}}] \leq \mathbb{E}\left(\sup _{0 \leq s \leq t}|x(s)|^p\right) \leq C_p \mathbb{E}[|y(t)|^{\frac{p}{2}}].
    \end{equation}
\end{lemma}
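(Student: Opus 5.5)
Since the Burkholder--Davis--Gundy inequality is classical and cited from \cite{mao_stochastic_2011}, we only outline the standard argument. Write $x(t)=\int_0^t g(s)\,dW_s$ and $\langle x\rangle_t=y(t)$. We first localise with the stopping times $\tau_n=\inf\{t:|x(t)|\vee y(t)\geq n\}$. This makes $x(\cdot\wedge\tau_n)$ a bounded martingale and $y(t\wedge\tau_n)$ bounded, so every expectation below is finite. The general case then follows from monotone convergence as $n\to\infty$. We also reduce to the scalar case $d=1$: $|x|^2$ is a sum of components, and Lemma \ref{lemma:open_p_bracket} only costs constants depending on $d$ and $p$.

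\emph{Case $p\geq 2$, upper bound.} We apply It\^o's formula to $|x(t)|^p$:
\[
|x(t)|^p=\int_0^t p|x|^{p-2}x\,g\,dW_s+\tfrac{p(p-1)}{2}\int_0^t|x(s)|^{p-2}|g(s)|^2ds .
\]
Taking expectations kills the martingale term. We bound the integrand by $\sup_{s\leq t}|x(s)|^{p-2}\,dy(s)$ and then use H\"older with exponents $p/(p-2)$ and $p/2$. This gives
\[
\mathbb{E}|x(t)|^p\leq C\bigl(\mathbb{E}\sup_{s\le t}|x(s)|^p\bigr)^{(p-2)/p}\bigl(\mathbb{E}\,y(t)^{p/2}\bigr)^{2/p}.
\]
Doob's $L^p$ maximal inequality gives $\mathbb{E}\sup_{s\leq t}|x(s)|^p\leq (p/(p-1))^p\,\mathbb{E}|x(t)|^p$. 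Combining the two and dividing by the (finite, by localisation) factor $(\mathbb{E}\sup|x|^p)^{(p-2)/p}$ yields the right-hand inequality.

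\emph{Case $p\geq 2$, lower bound.} We use $y(t)=|x(t)|^2-2\int_0^t x\,g\,dW$. Taking the $p/2$ moment and applying Lemma \ref{lemma:open_p_bracket} reduces the problem to bounding the moment of $\int x\,g\,dW$. By the upper bound just proved, this is at most $C\,\mathbb{E}\bigl(\int_0^t|x|^2|g|^2ds\bigr)^{p/4}\leq C\,\mathbb{E}\bigl[\sup_{s\le t}|x(s)|^{p/2}\,y(t)^{p/4}\bigr]$. Young's inequality $ab\leq \varepsilon a^2+C_\varepsilon b^2$ then lets us absorb $\varepsilon\,\mathbb{E}\,y^{p/2}$ into the left-hand side, giving $c_p\,\mathbb{E}\,y^{p/2}\leq \mathbb{E}\sup|x|^p$.

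\emph{Case $0<p<2$.} This is where I expect the main obstacle, since neither It\^o's formula nor Doob's inequality is directly available for $|x|^p$. The plan is to invoke Lenglart's domination inequality, which is convenient since it handles both directions symmetrically. Suppose $A$ is an adapted continuous increasing process that dominates the nonnegative continuous process $U$, in the sense that $\mathbb{E}U_\tau\leq\mathbb{E}A_\tau$ for all bounded stopping times $\tau$. Then
\[
\mathbb{E}\bigl(\sup_{s\le t}U_s\bigr)^{k}\leq \tfrac{2-k}{1-k}\,\mathbb{E}A_t^{k}\qquad\text{for } k\in(0,1).
\]
For the upper bound we take $U=|x|^2$, $A=y$ and $k=p/2$; the domination $\mathbb{E}|x_\tau|^2=\mathbb{E}y_\tau$ holds by the It\^o isometry. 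For the lower bound we take $U=y$ and $A=\sup_{s\le\cdot}|x(s)|^2$; here domination is the same identity. This gives $\mathbb{E}\,y(t)^{p/2}\leq C_p\,\mathbb{E}\sup_{s\le t}|x(s)|^p$. Finally, removing the localisation by monotone convergence completes the proof, with constants depending only on $p$ (and on $d$).
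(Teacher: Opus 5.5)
The paper gives no proof of this lemma; it only cites Theorem 1.7.3 of \cite{mao_stochastic_2011}. Your outline is essentially correct and is a valid alternative to that textbook argument.

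\textbf{Comparison with Mao's proof.} Your $p\ge 2$ upper bound (It\^o on $|x|^p$, H\"older with exponents $p/(p-2)$ and $p/2$, then Doob) is exactly Mao's argument. Elsewhere you differ. Mao gets the $p>2$ lower bound and both $0<p<2$ bounds from weighted stochastic integrals and integration by parts. For $p>2$ he uses $\int_0^t y(s)^{(p-2)/4}g(s)\,dW_s$. Its second moment equals $\frac{2}{p}\mathbb{E}\,y(t)^{p/2}$, and its modulus is at most $2\sup_{s\le t}|x(s)|\,y(t)^{(p-2)/4}$. For $p<2$ he uses the weights $(\varepsilon+y)^{(p-2)/4}$ and $(\varepsilon+\sup_{r\le s}|x(r)|)^{(p-2)/2}$, then lets $\varepsilon\to 0$. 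Each case there needs only It\^o, Doob and H\"older.

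Your route is shorter and treats both directions for $p<2$ symmetrically. The price is importing Lenglart's domination inequality as an external tool. In addition, your $p\ge 2$ lower bound, via $y=|x|^2-2\int x^\top g\,dW$ and Young absorption, needs an upper bound at exponent $p/2$.

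\textbf{Ordering slip.} That last dependence breaks the order in which you present the cases. For $2<p<4$ the exponent $p/2$ lies in $(1,2)$, and ``the upper bound just proved'' (the case $p\ge 2$) does not cover it. The argument is not circular, because your Lenglart step for exponents below $2$ uses nothing from the $p\ge 2$ case. So you only need to prove the case $0<p<2$ first. Alternatively, Mao's weighted integral avoids the dependence altogether.

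\textbf{Dimension dependence.} Reducing to $d=1$ makes your constants depend on $d$, whereas the statement asks for constants depending only on $p$. The reduction is unnecessary, for three reasons.
\begin{enumerate}
\item The vector It\^o formula, together with $|x^\top g|^2\le |x|^2|g|^2$, gives $d|x|^p\le p|x|^{p-2}x^\top g\,dW+\tfrac{p(p-1)}{2}|x|^{p-2}|g|^2\,dt$ for $p\ge 2$.
\item The quadratic variation of $\int x^\top g\,dW$ is at most $\int |x|^2|g|^2\,ds$.
\item The domination $\mathbb{E}|x(\tau)|^2=\mathbb{E}\,y(\tau)$ that you feed into Lenglart holds in any dimension.
\end{enumerate}
Working directly with vector-valued $x$ therefore gives dimension-free constants. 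With these two adjustments your proof is complete; the paper only uses the case $d=m=1$ in any event.
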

We also need Young's inequality. 
\begin{lemma}[Young's inequality]\label{lemma:Youngs inequality}
If $u>1$ and $v>1$ are real numbers such that $1/u+1/v=1$, we have
    \begin{equation}\label{eqn:Young's inequality}
        a b \leq \frac{\delta}{u} a^u+\frac{1}{v \delta^{v / u}} b^v \quad \forall a, b, \delta>0 .
    \end{equation}
\end{lemma}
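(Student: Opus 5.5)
The statement is Young's inequality with the parameter $\delta$: for conjugate exponents $u,v>1$ and $a,b,\delta>0$,
\[
ab\leq \frac{\delta}{u}a^u+\frac{1}{v\,\delta^{v/u}}b^v .
\]
The plan is to reduce it to the classical, unweighted Young inequality by rescaling $a$ and $b$, and to obtain the classical inequality from the concavity of the logarithm.

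\emph{Step 1 (classical case).} For $x,y>0$, concavity of $\log$ together with $1/u+1/v=1$ gives
\[
\log\Big(\tfrac{1}{u}x^u+\tfrac{1}{v}y^v\Big)\geq \tfrac{1}{u}\log x^u+\tfrac{1}{v}\log y^v=\log(xy).
\]
Exponentiating yields $xy\leq x^u/u+y^v/v$.

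\emph{Step 2 (rescaling).} Apply Step 1 with $x=\delta^{1/u}a$ and $y=\delta^{-1/u}b$, whose product is $xy=ab$. Then
\[
x^u=\delta a^u,\qquad y^v=\delta^{-v/u}b^v .
\]
Substituting into Step 1 gives exactly $ab\leq \frac{\delta}{u}a^u+\frac{1}{v\delta^{v/u}}b^v$, which is the claim.

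No step is a genuine obstacle. The only point to verify is that the exponent of $\delta$ in the second term comes out as $-v/u$, and this follows from $y^v=(\delta^{-1/u}b)^v$.
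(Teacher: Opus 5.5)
Your proof is correct: concavity of $\log$ gives the unweighted inequality $xy\le x^u/u+y^v/v$, and substituting $x=\delta^{1/u}a$, $y=\delta^{-1/u}b$ yields exactly the weighted form with the factor $\delta^{-v/u}$. The paper gives no proof of this lemma and simply recalls it as a standard auxiliary result, so your rescaling argument supplies the standard justification.
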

We then recall the generalised Gronwall's inequality \cite{mao_stochastic_2011,tian2020nonlinear} that has been extensively used  in the analysis of differential equations. 
\begin{lemma}[Gronwall's inequality]\label{lemma:Gronwall}
    Suppose $q\geq 1,b\geq 0$ are constants, $u(t),a(t)\in C(\mathbb{R}^+,\mathbb{R}^+)$, and
    \begin{equation*}
        u(t)\leq a(t)+b\int_0^t u^q(s)ds, \quad \forall t\geq 0.
    \end{equation*}
    Then 
    \begin{equation}
        u(t)\leq\begin{cases}
             a(t)+b\int_0^t e^{b(t-s)} a(s) ds,\quad & \text { if } q=1 ,\\
             a(t)+\left(\kappa^{1-q}(t)-(q-1)2^{q-1}bt\right)^{\frac{1}{1-q}},\quad & \text { if } q>1,
        \end{cases}\label{eqn:Gronwall inequality}
    \end{equation}
    where $\kappa(t):=2^{q-1}b\int_0^t a^q(s)ds$, and $\kappa^{1-q}(t)-(q-1)2^{q-1}bt>0$.
\end{lemma}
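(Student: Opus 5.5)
The plan is to reduce both cases to a differential inequality for the integral term and integrate it explicitly, in the spirit of the classical Gronwall and Bihari arguments. In both cases I set $w(t):=\int_0^t u^q(s)\,ds$, which is $C^1$ because $u$ is continuous, and use $u(t)\le a(t)+b\,w(t)$.

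\emph{Case $q=1$.} Here $w'(t)=u(t)\le a(t)+b\,w(t)$. Multiplying by the integrating factor $e^{-bt}$ gives $\frac{d}{dt}\big(e^{-bt}w(t)\big)\le e^{-bt}a(t)$. Integrating from $0$ to $t$ and using $w(0)=0$ yields $w(t)\le\int_0^t e^{b(t-s)}a(s)\,ds$. Substituting this back into $u\le a+bw$ gives the first line of \eqref{eqn:Gronwall inequality}.

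\emph{Case $q>1$.} First I apply Lemma \ref{lemma:open_p_bracket} with $n=2$ (the convexity bound $(x+y)^q\le 2^{q-1}(x^q+y^q)$) to get
\[
w'(t)=u^q(t)\le 2^{q-1}a^q(t)+2^{q-1}b^q w^q(t).
\]
Put $z:=b\,w$. Multiplying by $b$ and integrating gives
\[
z(t)\le\kappa(t)+2^{q-1}b\int_0^t z^q(s)\,ds.
\]
Next I fix a horizon $T>0$. Since $\kappa$ is nondecreasing, for $t\le T$ we have $z(t)\le Y(t):=\kappa(T)+2^{q-1}b\int_0^t z^q(s)\,ds$. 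Then $Y'(t)=2^{q-1}b\,z^q(t)\le 2^{q-1}b\,Y^q(t)$, so
\[
\frac{d}{dt}Y^{1-q}(t)=(1-q)Y^{-q}(t)Y'(t)\ge-(q-1)2^{q-1}b .
\]
Integrating on $[0,T]$ gives $Y^{1-q}(T)\ge\kappa^{1-q}(T)-(q-1)2^{q-1}bT$. The right-hand side is positive by hypothesis, and raising to the negative power $1/(1-q)$ reverses the inequality, so $b\,w(T)=z(T)\le Y(T)\le\big(\kappa^{1-q}(T)-(q-1)2^{q-1}bT\big)^{1/(1-q)}$. Since $u(T)\le a(T)+b\,w(T)$ and $T$ was arbitrary, this is the second line of \eqref{eqn:Gronwall inequality}.

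The main obstacle is the technical care in the nonlinear case. The quantity $Y^{1-q}$ requires $Y>0$, which fails if $\kappa(T)=0$ (e.g.\ when $a\equiv0$ on $[0,T]$). I would handle this by replacing $\kappa(T)$ with $\kappa(T)+\varepsilon$, running the argument, and letting $\varepsilon\downarrow0$; in the degenerate case this forces $z\equiv0$. The second point is the freezing trick of replacing $\kappa(t)$ by $\kappa(T)$. It is what makes the comparison ODE autonomous, and it is valid precisely because $a\ge0$ makes $\kappa$ monotone.
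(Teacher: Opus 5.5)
Your proof is correct. The paper does not prove this lemma; it simply quotes it from \cite{mao_stochastic_2011,tian2020nonlinear}. Your argument is exactly the standard derivation behind the cited statement, with matching constants: an integrating factor for $q=1$, and for $q>1$ the bound $(x+y)^q\le 2^{q-1}(x^q+y^q)$, freezing $\kappa$ at the endpoint $T$, and the Bihari-type comparison $\frac{d}{dt}Y^{1-q}\ge-(q-1)2^{q-1}b$. Your $\varepsilon$-regularisation is only needed for the degenerate case $\kappa(T)=0$, which the hypothesis $\kappa^{1-q}(t)-(q-1)2^{q-1}bt>0$ already implicitly excludes.
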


\subsection{Derivation of the LT scheme over the entire time interval}
\label{AuxiliaryNum}
The first step to prove the numerical properties of the derived LT scheme is writing its solution over the entire time interval, as described below. First, writing the solution to the ODE \eqref{eqn:LT subequations} in integral form, and using the nontrivial Taylor expansion 
(see Chapter 5 of \cite{kloeden_numerical_1992})
\begin{equation}\label{eqn:ODE expansion}
X_t=X_0+f_1(X_0)t+\frac{t^2}{2}f_1(X_0)f^{\prime}_1(X_0)+O(t^3),
\end{equation}
when solving in one-step starting from $X_{t_k}^{[1]}$, we get
\begin{eqnarray}
    \nonumber X^{[1]}_{t_{k+1}}&=&X^{[1]}_{t_k}+\int_{t_k}^{t_{k+1}}\left[f_1(X^{[1]}_{t_k})+\int_{t_k}^s f'_1(X^{[1]}_z)
    f_1(X^{[1]}_z)dz\right]ds\\
    \label{eqn:ODE one step}&=&X_{t_k}^{[1]}+f_1(X^{[1]}_{t_k})h+\mathcal{R}(h^2;X^{[1]}_{t_k}),
\end{eqnarray}
where 
\begin{equation*}
    \mathcal{R}(h^2;X^{[1]}_{t_k})=\int_{t_k}^{t_{k+1}}\int_{t_k}^s 
    f'_1(X^{[1]}_z)
    f_1(X^{[1]}_z)dzds,
\end{equation*}
with the right-hand side in \eqref{eqn:ODE one step} corresponding to the Euler approximation of the ODE plus the remainder term $\mathcal{R}$. 
Then, the one-step LT scheme (for generic $f_1, f_2$) starting from $X_{t_k}$ becomes 
\begin{equation}\label{eqn:LT one step}
    X^{\textrm{LT}}_{t_{k+1}}=X_{t_k}+f_1(X_{t_k})h+\mathcal{R}(h^2;X_{t_k})+\int_{t_k}^{t_{k+1}}f_2(X^{LT}_s)ds+\int_{t_k}^{t_{k+1}}g(X^{LT}_s)dW_s,
\end{equation} 
coinciding with the SD scheme \cite{halidias_semi-discrete_2012,stamatiou_numerical_2016,stamatiou_boundary_2018} when the remainder is ignored, see Remark \ref{semi-discrete}. 
To prove the strong mean square convergence of the LT scheme, we follow the framework of \cite{higham_strong_2002,stamatiou_boundary_2018}. First, we write a continuous-time extension of \eqref{eqn:LT one step} at time $t\in[t_k,t_{k+1})$ as 
\begin{equation}\label{eqnLTt}
    X^{\textrm{LT}}_{t}=X_{t_k}+\int_{t_k}^t\left(f_1(X_{t_k})+\frac{1}{h}\mathcal{R}(h^2;X_{t_k})+f_2(X^{LT}_s)\right)ds+\int_{t_k}^{t}g(X^{LT}_s)dW_s,
\end{equation} 
with $f_1(X_{t_k})=f_1(X_{\hat t})$ and $\mathcal{R}(h^2;X_{t_k})=\mathcal{R}(h^2;X_{\hat t})$ as $\hat t=t_k$. Then, we derive the (continuous-time extension of the) LT solution on the time interval $[0,t]$ for $t\leq T$ starting from $X_0$ by  back iterating \eqref{eqnLTt} in $X_{t_k}=X_{t_k}^\textrm{LT}$, obtaining
\begin{equation}\label{eqn:LT full step}
    X^{\textrm{LT}}_t=X_0+\int_0^t \left(f_1(X^{LT}_{\hat{s}})+f_2(X^{LT}_s)+\frac{1}{h}\mathcal{R}(h^2;X^{LT}_{\hat{s}})\right)ds+\int_0^t g(X^{LT}_s)dW_s,
\end{equation}
 where $\hat s=t_l$ for any $s\in [t_l,t_{l+1}), l=0,\ldots, k-1$ and $\hat s=t_k$ for $s\in[t_k,t)$.
 
\subsection{Proofs of Lemma \ref{lemma:LT_onestep_stop} and Proposition \ref{prop:LT L1}}
In both proofs, the arbitrary constants $C_{\cdot}$ may vary in each line.
\begin{proof}[Proof of Lemma \ref{lemma:LT_onestep_stop}]
If $t\leq \theta_R$, i.e. $t\wedge\theta_R=t$,  by Assumption \ref{assump:main}(i) on $f_1(x)$, we have
    \begin{equation*}
        \mathcal{R}(h^2;X^{\textrm{LT}}_{t\wedge \theta_R})\leq C_Rh^2.
    \end{equation*}
If $t>\theta_R$, i.e. $t\wedge \theta_R=\theta_R\in [t_k,t_{k+1})$ for some $k\in \{0,1,\dots,N-1\}$, and $p\geq1$, we have
    \begin{align*}
         \left|X^{\textrm{LT}}_{t \wedge \theta_R}-X^{\textrm{LT}}_{\widehat{t \wedge \theta_R}}\right|^p \leq &\left|\int_{t_k}^{t\wedge \theta_R} \left[f_1(X^{\textrm{LT}}_{\hat{s}})+f_2(X^{\textrm{LT}}_s)+\frac{1}{h}\mathcal{R}(h^2;X^{\textrm{LT}}_{\hat{s}})\right]ds+\int_{t}^{t\wedge \theta_R}g(X^{\textrm{LT}}_s)dW_s\right|^p\\
         \leq& 4^{p-1}\left(C_R h^p+\left|\int_{t_k}^{t\wedge \theta_R}f_2(X^{\textrm{LT}}_s)ds\right|^p+C_Rh^{2p}+\left|\int_{t}^{t\wedge \theta_R}g(X^{\textrm{LT}}_s)dW_s\right|^p\right)\\
         \leq&C_{R,p}h^p+4^{p-1}h^{p-1}\int_{t_k}^{t\wedge \theta_R}\left|f_2(X^{\textrm{LT}}_s)\right|^pds+C_Rh^{2p}+4^{p-1}\left|\int_{t}^{t\wedge \theta_R}g(X^{\textrm{LT}}_s)dW_s\right|^p\\
         \leq&C_{R,p}h^p+C_Rh^p+C_Rh^{2p}+C_p\left|\int_{t}^{t\wedge \theta_R}g(X^{\textrm{LT}}_s)dW_s\right|^p,
    \end{align*}
    where we use Lemma \ref{lemma:open_p_bracket} in the second inequality, H\"older's inequality (on $f_1$ and $f_2$) in the third step, and Assumption \ref{assump:main}(i) on $f_2(x)$ using the fact that $t\wedge\theta_r\leq t_{k+1}$ and $f_2(x)$ is bounded in $[t_k,t_{k+1})$ for $|x|\leq R$. Now, if we take the expectation of the last term and use BDG's inequality \eqref{eqn:BDG inequality} on $x(s)=\int_t^{t\wedge\theta_R}g(s)dW_s$, we have 
    \begin{equation*}
        \mathbb{E}\left[C_p\left|\int_{t}^{t\wedge \theta_R}g(X^{\textrm{LT}}_s)dW_s\right|^p\right]\leq C_p\mathbb{E}\left|\int_{t_k}^{t\wedge \theta_R}|g(X^{\textrm{LT}}_s)|^{ 2}ds\right|^{\frac{p}{2}}\leq C_{R,p}h^{\frac{p}{2}}.
    \end{equation*}
    Combining, we get \eqref{eqn:LT onestep bound}, and \eqref{eqn:LT onestep convergence rate} follows immediately.\\
    Finally, if $p\in (0,1)$, using Jensen's inequality we have
    \begin{equation}
        \mathbb{E}\left|X^{\textrm{LT}}_{t\wedge \theta_R}-X^{\textrm{LT}}_{\widehat{t\wedge \theta_R}}\right|^p\leq  \left(\mathbb{E}\left|X^{\textrm{LT}}_{t\wedge \theta_R}-X^{\textrm{LT}}_{\widehat{t\wedge \theta_R}}\right|^2\right)^{\frac{p}{2}}\leq C_{R,p}h^{\frac{p}{2}},
    \end{equation}
    which concludes the proof.
\end{proof}

\begin{proof}[Proof of Proposition \ref{prop:LT L1}]
    We use the method of Yamada and Watanabe \cite{yamada_uniqueness_1971} to approximate the function $f(x)=|x|$ by sequence of smooth functions $\phi_m(x), m\in \mathbb{N}$:
    \begin{equation*}
        \phi_m(x)=\int_0^{|x|} d y \int_0^y \psi_m(u) d u,
    \end{equation*}
    where $\psi_m(x)$ is a continuous function defined on $(e_m,e_{m-1}), e_m=e^{-m(m+1)/2}$ with $0 \leq \psi_m(x) \leq 2 /(m x)$. Then, the  following relations hold
    \begin{align}
        &|x|-e_{m-1} \leq \phi_m(x) \leq|x|, \quad x \in \mathbb{R},\label{eqn:phi_rel_1} \\
        &\left|\phi_m^{\prime}(x)\right| \leq 1, \quad x \in \mathbb{R},\label{eqphi2}\\
        &\left|\phi_m^{\prime \prime}(x)\right| \leq \frac{2}{m|x|}, \quad \text { when } e_m<|x|<e_{m-1} \quad \text { and } \quad\left|\phi_m^{\prime \prime}(x)\right|=0 \quad \text { otherwise. }\label{eqn:phi_rel_2}
    \end{align}
    Denoting
    \[
    \mathcal{E}_t:=X^{\textrm{LT}}_t
    -X_t,
    \]
    from \eqref{eqn:phi_rel_1} on $x=\mathcal{E}_{t\wedge\theta_R}$, we have
    \begin{equation}
\left|\mathcal{E}_{t \wedge \theta_R}\right| \leq e_{m-1}+\phi_m\left(\mathcal{E}_{t \wedge \theta_R}\right) .
\end{equation}
Now, if we apply Ito's lemma to $\phi_m\left(\mathcal{E}_{t \wedge \theta_R}\right)$,  then
\begin{align*}
    \phi_m(\mathcal{E}_{t\wedge \theta_R})=&\int_0^{t\wedge \theta_R}\phi^{\prime}_m(\mathcal{E}_s)\left(f_1(X^{LT}_{\hat{s}})-f_1(X_s)+f_2(X^{LT}_s)-f_2(X_s)+\frac{1}{h}\mathcal{R}(h^2;X^{LT}_{\hat{s}})\right)ds\\
    &+\int_0^{t\wedge \theta_R}\phi^{\prime}_m(\mathcal{E}_s)\left[g(X^{LT}_s)-g(X_s)\right]dW_s+\int_0^{t\wedge \theta_R}\frac{\phi^{\prime\prime}_m(\mathcal{E}_s)}{2}\left[g(X^{LT}_s)-g(X_s)\right]^2ds\\
    \leq&C_R \int_0^{t\wedge \theta_R}\left|\mathcal{E}_s\right|ds+C_R\int_0^{t\wedge \theta_R}\left|X^{LT}_s-X^{LT}_{\hat{s}}\right|ds+C_Rh^2+\frac{C_R}{m}\int_0^{t\wedge \theta_R}\left|\mathcal{E}_{s}\right|^{2q-1}ds\\
    &+\int_0^{t\wedge \theta_R}\phi^{\prime}_m(\mathcal{E}_s)\left[g(X^{LT}_s)-g(X_s)\right]dW_s,
\end{align*}
where we use 
Assumption \ref{assump:main}, \eqref{eqphi2}, \eqref{eqn:phi_rel_2} 
and the fact that
\begin{equation*}
    \left|X^{LT}_{\hat{s}}-X_{s}\right|\leq \left|X^{LT}_{\hat{s}}-X^{LT}_s\right|+\left|X^{LT}_s-X_s\right|.
\end{equation*}
Now, using Young's inequality \eqref{eqn:Young's inequality} and \eqref{eqn:phi_rel_2} for non-zero $\phi''(x)$ on the last term in the inequality above, we have
\begin{align*}
    \frac{C_R}{m} \int_0^{t \wedge \theta_R}\left|\mathcal{E}_s\right|^{2 q-1} d s \leq& \frac{C_R}{m} \int_0^{t \wedge \theta_R}\frac{\left(q\left|\mathcal{E}_s\right|^2\left(e_m\right)^{\frac{2 q-2}{q}}+(1-q)\left(e_m\right)^2\right)}{\left|\mathcal{E}_s\right|} d s\\
    \leq& \frac{C_R q \left(e_m\right)^{\frac{2 q-2}{q}}}{m} \int_0^{t \wedge \theta_R}\left|\mathcal{E}_s\right| d s+\frac{C_R}{m} e_m.
\end{align*}
Also, from Theorem 1.5.8 in \cite{mao_stochastic_2011}, the fact that 
\begin{equation*}
    \mathbb{E}\int_0^{t\wedge \theta_R}\left|\phi^{\prime}_m(\mathcal{E}_s)\left(g(X^{LT}_s)-g(X_s)\right)\right|^2ds< \infty
\end{equation*}
implies that $\mathbb{E}\int_0^{t\wedge \theta_R}\phi^{\prime}_m(\mathcal{E}_s)\left[g(X^{LT}_s)-g(X_s)\right]dW_s=0$. Now, combining everything together and using \eqref{lemma:LT_onestep_stop}, we have 
\begin{align*}
   \mathbb{E}\left|\mathcal{E}_{t\wedge \theta_R}\right|\leq& \left(\frac{C_Re_m}{m}+C_Rh^{\frac{1}{2}}+C_Rh^2\right)+\left(C_R+\frac{C_Rqe_m^{2-2/q}}{m}\right)\int_0^{T}\mathbb{E}\left|\mathcal{E}_{l\wedge \theta_R}\right|ds\\
    \leq&\left(\frac{C_Re_m}{m}+C_Rh^{\frac{1}{2}}+C_Rh^2\right)e^{\left(C_R+\frac{C_Rqe_m^{2-2/q}}{m}\right)T},
\end{align*}
where we use Lemma \ref{lemma:LT_onestep_stop} to the second term in the first inequality, and 
Gronwall's inequality \eqref{eqn:Gronwall inequality} for $q=1$ in the second inequality. Taking the supremum over $t\in[0,T]$ concludes the proof.
\end{proof}
\subsection{Proof of Theorem \ref{thrm:LT convergence}}
We are now ready to prove the strong mean square convergence of the LT scheme for the generic decomposition \eqref{eqn:LT subequations}.
\begin{proof}[Proof of Theorem \ref{thrm:LT convergence}] Under the assumptions of Theorem \ref{thrm:LT convergence}, we denote 
\begin{equation*}
    \Omega_R:=\{\omega\in \Omega: \; \sup_{0\leq t\leq T}|X^{\textrm{LT}}_t|\leq R,\; \sup_{0\leq t \leq T}|X_t|\leq R\},
\end{equation*}
and $\theta_R,\;\mathcal{E}_t$ same as before. Then, we have
    \begin{align}
        \mathbb{E}\left[ \sup _{0 \leq t \leq T}\left|\mathcal{E}_t\right|^2\right]=&\mathbb{E} \left[\sup _{0 \leq t \leq T}\left|\mathcal{E}_t\right|^2 \mathbb{I}_{\Omega_R}\right]+\mathbb{E} \left[\sup _{0 \leq t \leq T}\left|\mathcal{E}_t\right|^2 \mathbb{I}_{\Omega_R^c}\right]\nonumber\\
        \leq& \mathbb{E}\left[ \sup _{0 \leq t \leq T}\left|\mathcal{E}_{t \wedge \theta_R}\right|^2\right]+\left(\mathbb{E}\left[\sup _{0 \leq t \leq T}\left|\mathcal{E}_t\right|^p\right]\right)^{2 / p}\left[\mathbb{P}\left(\Omega_R^c\right)\right]^{(p-2) / p}\nonumber\\
        \leq& \mathbb{E}\left[\sup _{0 \leq t \leq T}\left|\mathcal{E}_{t \wedge \theta_R}\right|^2\right]+\left(2^{p-1} \mathbb{E} \left[\sup _{0 \leq t \leq T}\left(\left|X^{\textrm{LT}}_t\right|^p+\left|X_t\right|^p\right)\right]\right)^{2 / p}\left(\mathbb{P}\left(\Omega_R^c\right)\right)^{(p-2) / p}\nonumber\\
        \leq& \mathbb{E}\left[\sup _{0 \leq t \leq T}\left|\mathcal{E}_{t \wedge \theta_R}\right|^2\right]+4 A^{2 / p}\left(\mathbb{P}\left(\Omega_R^c\right)\right)^{(p-2) / p}\label{eqn:L2 bound decompose},
    \end{align}
    where we use H\"older's inequality in the second step on the second term, Lemma \ref{lemma:open_p_bracket} in the third step, and the moment bound assumption \eqref{assump:LT_moment_bound} in the last step. Now, from Markov's inequality and again \eqref{assump:LT_moment_bound}, we have 
    \begin{align}
    \mathbb{P}\left(\Omega_R^c\right) \leq& \mathbb{P}\left(\sup _{0 \leq t \leq T}\left|X^{\textrm{LT}}_t\right|>R\right)+\mathbb{P}\left(\sup _{0 \leq t \leq T}\left|X_t\right|>R\right)\nonumber\\
    \leq&\left(\mathbb{E} \left[\sup _{0 \leq t \leq T}\left|X^{\textrm{LT}}_t\right|^p\right]\right) R^{-p}+\left(\mathbb{E} \left[\sup _{0 \leq t \leq T}\left|X_t\right|^p\right]\right) R^{-p}\nonumber\\
    \leq& 2 A R^{-p}\label{eqn:L2 decompose 1}.
\end{align}
For $\mathcal{E}_{t\wedge \theta_R}^2$, using Ito's lemma,  we get
\begin{align*}
    \mathcal{E}_{t\wedge \theta_R}^2=&\int_0^{t\wedge \theta_R}2|\mathcal{E}_s|\left(f_1(X^{\textrm{LT}}_{\hat{s}})-f_1(X_s)+f_2(X^{\textrm{LT}}_{s})-f_2(X_s)+\frac{1}{h}\mathcal{R}(h^2;X^{\textrm{LT}}_{\hat{s}})\right)ds\\
    &+\int_0^{t\wedge \theta_R}2|\mathcal{E}_s|\left(g(X^{\textrm{LT}}_{s})-g(X_s)\right)dW_s+\int_0^{t\wedge \theta_R}\left(g(X^{\textrm{LT}}_{s})-g(X_s)\right)^2 ds.
\end{align*}
If we denote the three integrals on the right-hand side of the equation above as $\varepsilon^{(1)}_t,\varepsilon^{(2)}_t,\varepsilon^{(3)}_t$, then, by Assumption \ref{assump:main} (i) and Lemma \ref{lemma:LT_onestep_stop}, we have 
\begin{align}
    \mathbb{E}\left[\sup_{0\leq t\leq T}\varepsilon^{(1)}_t\right]\leq& C_R\int_0^{T}\mathbb{E}\left[\sup_{0\leq l \leq s}\left|\mathcal{E}_{l\wedge \theta_R}\right|^2\right]ds+C_R\mathbb{E}\left[\sup_{0\leq t\leq T}\int_0^{t\wedge \theta_R}\left|X^{\textrm{LT}}_s-X^{\textrm{LT}}_{\hat{s}}\right|^2 ds\right]+C_Rh^2\nonumber\\
    \leq&C_R\int_0^{T}\mathbb{E}\left[\sup_{0\leq l \leq s}\left|\mathcal{E}_{l\wedge \theta_R}\right|^2\right]ds+C_Rh+C_Rh^2\label{eqn:varepsilon_1}.
\end{align}
Using \eqref{eqn:BDG inequality}, we have 
\begin{align}
    \mathbb{E}\left[\sup_{0\leq t\leq T}\varepsilon^{(2)}_t\right]\leq&C\mathbb{E}\sqrt{\int_0^{T\wedge \theta_R}|\mathcal{E}_s|^2\left(g(X^{\textrm{LT}}_{s})-g(X_s)\right)^2ds}\nonumber\\
    \leq & C\mathbb{E}\sqrt{\sup_{0\leq t \leq T}|\mathcal{E}_{t\wedge \theta_R}|^2\int_0^{T\wedge \theta_R}\left(g(X^{\textrm{LT}}_{s})-g(X_s)\right)^2ds}\nonumber\\
    \leq& C\mathbb{E}\left[\sup_{0\leq t\leq T}\left|\mathcal{E}_{t\wedge \theta_R}\right|^2\right]+ C\mathbb{E}\left[\int_0^{T\wedge \theta_R}\left(g(X^{\textrm{LT}}_{s})-g(X_s)\right)^2ds\right].\label{eqn:varepsilon_2}
\end{align}
For $\varepsilon^{(3)}_t$, using Assumption \ref{assump:main} (ii), the fact that $2q<1$ and Jensen's inequality, we have
\begin{align}
    \mathbb{E}\sup_{0\leq t \leq T}\varepsilon^{(3)}_t\leq& C_R\mathbb{E}\int_0^{T\wedge \theta_R}\left(X^{\textrm{LT}}_s-X_s\right)^{2q}ds\nonumber\\
    \leq&C_R\int_0^{T}\left(\mathbb{E}\left|\mathcal{E}_{l\wedge \theta_R}\right|\right)^{2q}ds\nonumber\\
    \leq&C_R(K_{r,h,m,q,T})^{2q}\label{eqn:varepsilon_3}.
\end{align}
Here $K_{r,h,m,q,T}=\left(\frac{C_Re_m}{m}+C_Rh^{\frac{1}{2}}+C_Rh^{2}\right)e^{C_R+\frac{C_Rqe_m^{(2-2/q)}}{m}T}$ is from \eqref{eqn:L1 bound}. Now if we combine \eqref{eqn:varepsilon_1}, \eqref{eqn:varepsilon_2} and \eqref{eqn:varepsilon_3}, we have
\begin{align}
    \mathbb{E} \sup _{0 \leq t \leq T}\left|\mathcal{E}_{t \wedge \theta_R}\right|^2\leq& C_R\left(h+h^2+(K_{r,h,m,q,T})^{2q}\right)+C_R\int_{0}^T\mathbb{E}\sup_{0\leq l\leq s}\left|\mathcal{E}_{l\wedge \theta_R}\right|^2ds\nonumber\\
    \leq& C_R\left(h+h^2+(K_{r,h,m,q,T})^{2q}\right)e^{C_RT},\label{eqn:L2 decompose 2}
\end{align}
where we used Gronwall's inequality \eqref{eqn:Gronwall inequality}.
By plugging \eqref{eqn:L2 decompose 1} and \eqref{eqn:L2 decompose 2} into \eqref{eqn:L2 bound decompose}, we have 
\begin{equation}\label{calc}
    \mathbb{E}\sup_{0\leq t\leq T}\left|\mathcal{E}_t\right|^2\leq C_R\left(h+h^2+(K_{r,h,m,q,T})^{2q}\right)e^{C_RT}+CAR^{2-p}.
\end{equation}
Now, remembering that $p>2$ by assumption, for any $\epsilon>0$, we can first choose $R$ large enough such that $CAR^{2-p}\leq \epsilon/2$, ad then $h$ small enough and $m$ large enough to bound the first term in \eqref{calc}
by $\epsilon/2$, concluding the proof.
\end{proof}

\subsection{Proof of Proposition \ref{prop:pearson consistency}}\label{ProofProp4.5}
We now focus on the one-step numerical consistency of the LT \eqref{eqn:LT subsystem} and S \eqref{eqn:Strang subsystem} schemes for the specific decomposition \eqref{eqn:ODE1}-\eqref{eqn:SDE2} instead of the more general one. To simplify the notation, we sometimes use $f$ in place of $f(x)$ in the proof below.
\begin{proof}[Proof of Proposition \ref{prop:pearson consistency}] 
   
Using Lemma \ref{lemma:expectation expansion}, we have
\begin{equation}\label{eqn:x expectation}
    \mathbb{E}[X_{t_{k+1}}|X_{t_k}=x]=x+hf(x)+\frac{h^2}{2}(f(x)f^{\prime}(x)+\frac{1}{2}g^2(x)f^{\prime \prime}(x))+O(h^3).
\end{equation}
By considering the SDE decomposition \eqref{eqn:ODE1}-\eqref{eqn:SDE2}, we have that
\begin{align}
    \frac{1}{2}ff^{\prime}&=\frac{1}{2}(f_1+\frac{1}{2}gg^{\prime})(f_1^{\prime}+\frac{1}{2}((g^{\prime})^2+gg^{\prime\prime}))\nonumber\\
    &=\frac{f_1f_1^{\prime}}{2}+\frac{gg^{\prime}f_1^{\prime}}{4}+\frac{f_1(g^{\prime})^2}{4}+\frac{g(g^{\prime})^3}{8}+\frac{f_1gg^{\prime\prime}}{4}+\frac{g^2g^{\prime}g^{\prime\prime}}{8}\label{eqn:f fprime},\\
    \frac{1}{4}g^2f^{\prime\prime}&=\frac{1}{4}g^2(f_1^{\prime \prime}+g^{\prime} g^{\prime \prime}+\frac{1}{2} g^{\prime} g^{\prime}+\frac{1}{2} g g^{\prime \prime \prime})=\frac{1}{4} g^2 f_1^{\prime \prime}+\frac{3}{8} g^2 g^{\prime} g^{\prime \prime}+\frac{1}{8} g^3 g^{\prime \prime \prime}.\label{eqn:g square fpp}
\end{align}
For the LT splitting \eqref{eqn:LT subsystem}, denote $\mathcal{V}_1=v\left(\varphi_h^{[1]}(X_{t_k})\right)$. By expanding it around $\xi_k$, we have
\begin{align}
X^{\textrm{LT}}_{t_{k+1}}&=v^{-1}(\xi_{k}+\mathcal{V}_1)\nonumber\\
    &=\varphi_h^{[1]}(X_{t_{k}})+\xi_k\frac{d v^{-1}}{d x}(\mathcal{V}_1)+\frac{\xi_k^2}{2}\frac{d^2v^{-1}}{dx^2}(\mathcal{V}_1)+\frac{\xi_k^3}{6}\frac{d^3 v^{-1}}{d x^3} (\mathcal{V}_1)+\frac{\xi_k^4}{24}\frac{d^4 v^{-1}}{d x^4} (\mathcal{V}_1)+O(\xi_k^5).\label{eqn:LT expand 1}
\end{align}
Since $v(x)=\int^x 1/{g(s)}ds$, if its inverse $v^{-1}$ is well defined on the state space $\mathcal{X}$, then
\begin{align}
    \frac{d v^{-1}}{d x}&=g_v,\quad \frac{d^2 v^{-1}}{d x}=g^{\prime}_vg_v,\quad \frac{d^3 v^{-1}}{dx^3}=g^{\prime \prime}_vg^2_v+(g^{\prime}_v)^2g_v\label{eqn:v derivatives}\\
    \frac{d^4 v^{-1}}{dx^4}&=g^{\prime\prime\prime}_vg^3_v+4g^{\prime\prime}_vg^2_vg^{\prime}_v+(g^{\prime}_v)^3g_v\nonumber,
\end{align}
where we denoted $g_v(x):=g(v^{-1}(x))$. 
Since given $X_{t_k}=x$, the Brownian increment $\xi_k$ is independent on $X_{t_k}$, by the Gaussianity of $\xi_k$, \eqref{eqn:ODE expansion},  \eqref{eqn:LT expand 1} and \eqref{eqn:v derivatives}, we have
\begin{align*}
    &\mathbb{E}[X^{\textrm{LT}}_{t_{k+1}}|X_{t_k}=x]=\varphi_h(x)+\frac{h}{2}g^{\prime}_vg_v+\frac{h^2}{8}(g^{\prime\prime\prime}_vg^3_v+4g^{\prime\prime}_vg^2_vg^{\prime}_v+(g^{\prime}_v)^3g_v)+O(h^3)\\
    &=x+f_1h+\frac{h^2}{2}f_1f_1^{\prime}+\frac{h}{2}\left[g(x+hf_1+O(h^2))g^{\prime}(x+hf_1+O(h^2))\right]\\
    &\qquad +\frac{h^2}{8}(g^{\prime\prime\prime}_vg^3_v+4g^{\prime\prime}_vg^2_vg^{\prime}_v+(g^{\prime}_v)^3g_v)+O(h^3)\\
    &=x+h(f_1+\frac{gg^{\prime}}{2})+\frac{h^2}{2}(f_1f_1^{\prime}+f_1gg^{\prime\prime}+f_1(g^{\prime})^2)+\frac{h^2}{8}(g^{\prime\prime\prime}g^3+4g^{\prime\prime}g^2g^{\prime}+(g^{\prime})^3g)+O(h^3).
\end{align*}
Comparing this expression with \eqref{eqn:x expectation}, \eqref{eqn:f fprime} and \eqref{eqn:g square fpp} lead to \eqref{eqn:ergodic LT consistency}.

For the S \eqref{eqn:Strang subsystem} scheme, if we denote $\mathcal{V}_2=v(\varphi_{\frac{h}{2}}^{[1]}(X_{t_k}))$, then again using \eqref{eqn:ODE expansion}
\begin{equation*}
    X^{S}_{t_{k+1}}=\varphi_{\frac{h}{2}}^{[1]}(v^{-1}(\xi_k+\mathcal{V}_2))=U_1+U_2+U_3+O(h^3),
\end{equation*}
where
\begin{equation*}
    U_1=v^{-1}(\xi_k+\mathcal{V}_2),\qquad U_2=\frac{h}{2}f_1(U_1),\qquad U_3=\frac{h^2}{8}f_1(U_1)f^{\prime}_1(U_1).
\end{equation*}
Using a similar argument as for LT, we have
\begin{equation}\label{eqn:U1 expectation}
    \mathbb{E}[U_1|X_{t_k}=x]=x+\frac{h}{2}(f_1+gg^{\prime})+\frac{h^2}{8}(f_1f_1^{\prime}+2f_1gg^{\prime\prime}+2f_1(g^{\prime})^2+g^{\prime\prime\prime}g^3+4g^{\prime}g^2g^{\prime\prime}+g(g^{\prime})^3)+O(h^3).
\end{equation}
For $U_2$, we have
\begin{align*}
    U_2&=\frac{h}{2}f_1(U_1)=\frac{h}{2}f_1\left(x+\frac{h}{2}f_1+\xi_k\frac{dv^{-1}}{dx}(\mathcal{V}_2)+\frac{\xi_k^2}{2}\frac{d^2v^{-1}}{dx^2}(\mathcal{V}_2)+O(\xi_k^3)\right)\\
    &=\frac{h}{2}\left(f_1+f^{\prime}_1(\frac{h}{2}f_1+\xi_k\frac{dv^{-1}}{dx}(\mathcal{V}_2)+\frac{\xi_k^2}{2}\frac{d^2v^{-1}}{dx^2}(\mathcal{V}_2))+\frac{f^{\prime\prime}_1}{2}\xi_k^2(\frac{dv^{-1}}{dx}(\mathcal{V}_2))^2+\mathcal{R}(
h^3;\mathcal{V}_2)\right),
\end{align*}
where $\mathcal{R}(h^3;\mathcal{V}_2)$ contains all terms that either become $0$ or have order $h^3$ after taking the conditional expectation. Then we have
\begin{equation}\label{eqn:U2 expectation}
    \mathbb{E}[U_2|X_{t_k}=x]=\frac{h}{2}f_1+\frac{h^2}{4}(f_1^{\prime}f_1+f_1^{\prime}gg^{\prime}+f_1^{\prime\prime}g^2)+O(h^3).
\end{equation}
For $U_3$, it is trivial that
\begin{equation}\label{eqn:U3 expectation}
    \mathbb{E}[U_3|X_{t_k}=x]=\frac{h^2}{8}f_1^{\prime}f_1+O(h^3).
\end{equation}
Combining \eqref{eqn:U1 expectation}, \eqref{eqn:U2 expectation}, and  \eqref{eqn:U3 expectation} to compute the one-step conditional expectation of $X^{\textrm{S}}_{t_{k+1}}$, and compare it with \eqref{eqn:x expectation}, \eqref{eqn:f fprime} and \eqref{eqn:g square fpp} leads to \eqref{eqn:ergodic Strang consistency}.
\end{proof}
\section{Proofs of consistency and asymptotic normality}\label{AppendixB}
\subsection{Auxiliary results} We start by recalling some auxiliary results which will be needed in our statistical proofs.

\begin{lemma}[Lemma 9, \cite{jacod_estimation_1993}]\label{lemma:Jacob pointwise convergence}
    Let $\left(X_k^N\right)_{N \in \mathbb{N}, 1 \leq k \leq N}$ be a triangular array with each row $N$ adapted to a filtration $(\mathcal{G}^N_k)_{1\leq k\leq N}$, and let $U$ be a random variable. If
    \begin{equation*}
        \sum_{k=1}^N \mathbb{E}\left[X_k^N \mid \mathcal{G}_{k-1}^N\right] \xrightarrow[N \rightarrow \infty]{\mathbb{P}} U, \quad \sum_{k=1}^N \mathbb{E}\left[\left(X_k^N\right)^2 \mid \mathcal{G}_{k-1}^N\right] \xrightarrow[N \rightarrow \infty]{\mathbb{P}} 0,
    \end{equation*}
    then $\sum_{k=1}^N X_k^N \xrightarrow[N \rightarrow \infty]{\mathbb{P}} U$.
\end{lemma}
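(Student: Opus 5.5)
The approach is to split the sum into a predictable part and a martingale part. The predictable part converges to $U$ by the first hypothesis. The martingale part is then shown to vanish in probability by localising it with a predictable truncation controlled by the conditional second moments.

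Write $\mathcal{G}^N_{k-1}$ for the conditioning $\sigma$-field and set $M^N_k:=X^N_k-\mathbb{E}[X^N_k\mid\mathcal{G}^N_{k-1}]$, so that
\begin{equation*}
\sum_{k=1}^N X^N_k=\sum_{k=1}^N \mathbb{E}[X^N_k\mid\mathcal{G}^N_{k-1}]+\sum_{k=1}^N M^N_k .
\end{equation*}
By assumption the first sum converges in probability to $U$. It therefore suffices to show $\sum_{k=1}^N M^N_k\to0$ in probability. For each fixed $N$, the $M^N_k$ form martingale differences with respect to $(\mathcal{G}^N_k)_k$. By the conditional Jensen inequality,
\begin{equation*}
\mathbb{E}[(M^N_k)^2\mid\mathcal{G}^N_{k-1}]\leq \mathbb{E}[(X^N_k)^2\mid\mathcal{G}^N_{k-1}].
\end{equation*}
Here I implicitly assume the $X^N_k$ are square integrable, or at least that these conditional moments are a.s. finite, so that everything is well defined; this is implicit in the hypotheses.

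Next comes the localisation, which is the only real step. Fix $\varepsilon>0$ and define $A^N_k:=\sum_{j=1}^k \mathbb{E}[(X^N_j)^2\mid\mathcal{G}^N_{j-1}]$. This quantity is $\mathcal{G}^N_{k-1}$-measurable, so $\mathbb{I}_{\{A^N_k\leq\varepsilon\}}$ is predictable. Consider the martingale transform
\begin{equation*}
\tilde S_N:=\sum_{k=1}^N \mathbb{I}_{\{A^N_k\leq\varepsilon\}}M^N_k .
\end{equation*}
By orthogonality of martingale increments, the predictability of the indicators, and the monotonicity of $A^N_k$ in $k$,
\begin{equation*}
\mathbb{E}[\tilde S_N^2]=\mathbb{E}\sum_{k=1}^N \mathbb{I}_{\{A^N_k\leq\varepsilon\}}\mathbb{E}[(M^N_k)^2\mid\mathcal{G}^N_{k-1}]\leq \mathbb{E}\bigl[A^N_{\tau}\bigr]\leq\varepsilon,
\end{equation*}
where $\tau$ is the last index with $A^N_k\leq\varepsilon$.

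Chebyshev's inequality then gives, for any $\eta>0$,
\begin{equation*}
\mathbb{P}\Bigl(\Bigl|\sum_k M^N_k\Bigr|>\eta\Bigr)\leq \mathbb{P}(A^N_N>\varepsilon)+\varepsilon/\eta^2 .
\end{equation*}
This holds because $\tilde S_N=\sum_k M^N_k$ on the event $\{A^N_N\leq\varepsilon\}$. By the second hypothesis $A^N_N\to0$ in probability, so the $\limsup_N$ of the left-hand side is at most $\varepsilon/\eta^2$. Since $\varepsilon$ is arbitrary, the martingale part vanishes in probability.

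Combining the two parts proves the lemma. The only delicate point is the predictability of the truncation. This is what lets us avoid any unconditional integrability of $A^N_N$, and it is essentially Lenglart's inequality in discrete time.
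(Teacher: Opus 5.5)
Your proof is correct. The compensator--martingale split followed by the predictable truncation $\mathbb{I}_{\{A^N_k\le\varepsilon\}}$ is exactly the discrete-time Lenglart argument. The truncation also makes your square-integrability caveat unnecessary: as soon as the conditional second moments are a.s. finite, $\mathbb{E}[\mathbb{I}_{\{A^N_k\le\varepsilon\}}(M^N_k)^2]=\mathbb{E}[\mathbb{I}_{\{A^N_k\le\varepsilon\}}\mathbb{E}[(M^N_k)^2\mid\mathcal{G}^N_{k-1}]]\le\varepsilon$, which is what justifies your orthogonality step. The paper gives no proof of its own and simply quotes the lemma from Genon-Catalot and Jacod; the standard proof of that result is this same decomposition combined with Lenglart's inequality, so your route matches.
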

\begin{lemma}[Proposition A1, \cite{gloter2006parameter}]\label{lemma:Gloter uniform convergence}
    Let $(\Omega, \mathcal{F}, \mathbb{P})$ be a probability space,  and let $\Theta$ be a product of compact intervals of $\mathbb{R}$. Let $S_N(\omega,\theta):\Omega\times \Theta\mapsto \mathbb{R}$ be a sequence of measurable maps that converges to a constant $C$ in probability for all $\theta\in \Theta$. Assume that there exists on open neighbourhood of $\Theta$ such that $S_N(\omega,\cdot)$ is continuously differentiable for all $\omega\in \Omega$. Suppose
\begin{equation*}
    \sup _{N \in \mathbb{N}} \mathbb{E}\left[\sup _{\boldsymbol{\theta} \in \Theta}\left|\nabla_\theta S_N(\theta)\right|\right]<\infty .
\end{equation*}
Then, $S_N(\theta) \underset{N \rightarrow \infty}{\overset{\mathbb{P}}\longrightarrow} C$ uniformly in $\theta$.
\end{lemma}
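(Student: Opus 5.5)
The plan is a finite-grid (equicontinuity) argument. It uses only the pointwise convergence, the mean value theorem on the convex set $\Theta$, and Markov's inequality applied to the uniform gradient bound. Write $M:=\sup_{N}\mathbb{E}\left[\sup_{\theta\in\Theta}|\nabla_\theta S_N(\theta)|\right]<\infty$. Fix $\epsilon>0$ and $\eta>0$. The goal is to show that for all large $N$,
\begin{equation*}
\mathbb{P}\left(\sup_{\theta\in\Theta}|S_N(\theta)-C|>\epsilon\right)\leq\eta .
\end{equation*}

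\textbf{Step 1 (discretisation).} Since $\Theta$ is a product of compact intervals, it is compact and convex. For any $\delta>0$ there is a finite $\delta$-net $\{\theta_1,\dots,\theta_{J(\delta)}\}\subset\Theta$: every $\theta\in\Theta$ has some $\theta_j$ with $\|\theta-\theta_j\|\leq\delta$. The segment $[\theta_j,\theta]$ lies in $\Theta$, and $S_N(\omega,\cdot)$ is $C^1$ on an open neighbourhood of $\Theta$. The mean value inequality then gives, for every $\omega$,
\begin{equation*}
|S_N(\theta)-S_N(\theta_j)|\leq \delta\,\sup_{\vartheta\in\Theta}|\nabla_\theta S_N(\vartheta)|.
\end{equation*}
Hence
\begin{equation*}
\sup_{\theta\in\Theta}|S_N(\theta)-C|\leq \max_{1\leq j\leq J(\delta)}|S_N(\theta_j)-C|+\delta\,\sup_{\vartheta\in\Theta}|\nabla_\theta S_N(\vartheta)|.
\end{equation*}

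\textbf{Step 2 (gradient term).} By Markov's inequality and the uniform bound,
\begin{equation*}
\mathbb{P}\left(\delta\sup_{\vartheta\in\Theta}|\nabla_\theta S_N(\vartheta)|>\epsilon/2\right)\leq \frac{2\delta M}{\epsilon}
\end{equation*}
for every $N$. Choose $\delta=\eta\epsilon/(4M)$ (any $\delta$ works if $M=0$), so this probability is at most $\eta/2$ uniformly in $N$.

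\textbf{Step 3 (finite maximum).} With $\delta$, and hence $J=J(\delta)$, now fixed, a union bound gives
\begin{equation*}
\mathbb{P}\left(\max_{j}|S_N(\theta_j)-C|>\epsilon/2\right)\leq\sum_{j=1}^{J}\mathbb{P}\left(|S_N(\theta_j)-C|>\epsilon/2\right).
\end{equation*}
By pointwise convergence in probability, each of these finitely many terms tends to $0$. So the sum is at most $\eta/2$ for all $N$ large enough. Combining Steps 1 to 3 yields the claim.

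The only delicate point is the order of quantifiers. The mesh $\delta$ must be chosen from the $N$-uniform gradient bound before $N$ is sent to infinity, so that the grid is finite and fixed. Measurability of $\sup_\theta|S_N(\theta)-C|$ and of the gradient supremum follows from continuity in $\theta$: both can be taken over a countable dense subset of $\Theta$.
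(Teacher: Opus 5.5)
Your proof is correct and complete. The paper gives no proof of this lemma; it is simply quoted from \cite{gloter2006parameter}, so there is no in-paper argument to compare with.

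The usual alternative route for such statements is to prove tightness of $(S_N)$ in $C(\Theta)$:
\begin{itemize}
\item the estimate $\sup_{\|\theta-\theta'\|\le\delta}|S_N(\theta)-S_N(\theta')|\le\delta\sup_{\Theta}|\nabla_\theta S_N|$, combined with Markov's inequality, is exactly the modulus-of-continuity criterion for tightness;
\item pointwise convergence identifies the finite-dimensional limits;
\item weak convergence in $C(\Theta)$ to a deterministic function then upgrades to convergence in probability of the supremum.
\end{itemize}
Your $\delta$-net argument uses the same key estimate but avoids the Prokhorov-type machinery. It also makes the order of quantifiers transparent: $\delta$ is fixed from the $N$-uniform bound before $N\to\infty$.

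One addition is worth making. The paper applies this lemma with parameter-dependent limits, for example $U_1\to\int g(x;\sigma_0)^2/g(x;\sigma)^2\,d\mu_0(x)$ uniformly in $\sigma$. The stated version, with a constant $C$, does not literally cover this. Your argument extends at no cost to any deterministic pointwise limit $C(\theta)$, because such a limit is automatically Lipschitz with constant $M$.

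To see this, note that $|S_N(\theta)-S_N(\theta')|\le\|\theta-\theta'\|\sup_\Theta|\nabla_\theta S_N|$ and Markov's inequality give $\mathbb{P}\bigl(|S_N(\theta)-S_N(\theta')|>b\bigr)\le M\|\theta-\theta'\|/b$ for every $b>0$. Suppose $|C(\theta)-C(\theta')|>M\|\theta-\theta'\|$, and take $b$ strictly between these two values. Then the bound above is strictly less than $1$, which contradicts $|S_N(\theta)-S_N(\theta')|\to|C(\theta)-C(\theta')|$ in probability. So in Step 1 you only pick up an extra deterministic term $M\delta$, which is handled by the same choice of small $\delta$.
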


\begin{lemma}[Lemma 8, \cite{kessler_estimation_1997}]\label{lemma:Kessler convergence}
Let $X$ be the solution of \eqref{eqn:the sde},
and 
$u:\mathbb{R}\times \Theta\mapsto \mathbb{R}$ a differentiable function with respect to $x$ an $\alpha$, with derivative of polynomial growth uniformly in $x$, uniformly in $\Theta$. 
    Under Assumption \ref{assump:main} and Assumption \ref{Assumption3}(iii), if $h\rightarrow0$ and $Nh\rightarrow\infty$, then
    \begin{equation*}
        \frac{1}{N}\sum_{k=1}^{N}u(X_{t_k},\alpha)\xrightarrow[\substack{N h\rightarrow\infty \\ h \rightarrow 0}]{\mathbb{P}_{\alpha_0}}\int u(x,\alpha)d\mu_0(x),
    \end{equation*}
    uniformly in $\alpha$.
\end{lemma}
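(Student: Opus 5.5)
The plan is to first prove the convergence for each fixed $\alpha$, and then upgrade it to uniform convergence with Lemma \ref{lemma:Gloter uniform convergence}. For fixed $\alpha$ I compare the discrete average with a time average of the continuous path. Set $T_N=Nh$ and write
\begin{equation*}
\frac{1}{N}\sum_{k=1}^{N}u(X_{t_k},\alpha)=\frac{1}{Nh}\int_0^{Nh}u(X_s,\alpha)\,ds+\frac{1}{Nh}\sum_{k=1}^{N}\int_{t_{k-1}}^{t_k}\left(u(X_{t_k},\alpha)-u(X_s,\alpha)\right)ds .
\end{equation*}

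\textbf{First term.} Since $X$ is ergodic with invariant law $\mu_0$ (Assumption \ref{Assumption3}(i)) and $Nh\to\infty$, the continuous-time ergodic theorem gives
\begin{equation*}
\frac{1}{Nh}\int_0^{Nh}u(X_s,\alpha)\,ds\to\int u(x,\alpha)\,d\mu_0(x)
\end{equation*}
almost surely, hence in $\mathbb{P}_{\alpha_0}$-probability. This needs $u(\cdot,\alpha)\in L^1(\mu_0)$. I get that from the polynomial growth of $u$ (inherited from that of its derivative) together with finiteness of all moments of $\mu_0$. The latter follows from $\sup_t\mathbb{E}|X_t|^p<\infty$ by Fatou's lemma along the convergence of the law of $X_t$ to $\mu_0$, or directly from stationarity.

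\textbf{Second term (Riemann-sum error).} I show it tends to zero in $L^1$. By the mean value theorem and the polynomial growth of $\partial_x u$,
\begin{equation*}
|u(X_{t_k},\alpha)-u(X_s,\alpha)|\le C\left(1+|X_{t_k}|^{m}+|X_s|^{m}\right)|X_{t_k}-X_s| .
\end{equation*}
After Cauchy--Schwarz and the uniform moment bound of Assumption \ref{Assumption3}(i), it remains to prove $\mathbb{E}|X_{t_k}-X_s|^2\le Ch$ for $s\in[t_{k-1},t_k]$. I would write
\begin{equation*}
X_{t_k}-X_s=\int_s^{t_k}f(X_r)\,dr+\int_s^{t_k}g(X_r)\,dW_r .
\end{equation*}
Then I apply Lemma \ref{lemma:open_p_bracket}, Hölder's inequality on the drift integral, and the BDG inequality (Lemma \ref{lemma:BDG_inequality}) on the stochastic integral. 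Combined with the polynomial growth of $f$ and $g$ (Assumption \ref{Assumption3}(ii)) and the uniform moments, this gives $\mathbb{E}|X_{t_k}-X_s|^2\le C(h^2+h)$. Summing over $k$, the second term is $O(h^{1/2})$ in $L^1$, so it vanishes as $h\to0$. The constants are uniform in $\alpha$ because the growth bounds are uniform on $\Theta$.

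\textbf{Uniformity in $\alpha$.} I apply Lemma \ref{lemma:Gloter uniform convergence} to $S_N(\alpha)=\frac1N\sum_{k}u(X_{t_k},\alpha)$. This is continuously differentiable in $\alpha$ on a neighbourhood of the compact set $\Theta$. Its gradient satisfies
\begin{equation*}
\sup_\alpha|\nabla_\alpha S_N|\le\frac1N\sum_k C\left(1+|X_{t_k}|^{m}\right),
\end{equation*}
whose expectation is bounded uniformly in $N$ by $\sup_t\mathbb{E}|X_t|^m<\infty$. Together with the pointwise limit from the first two steps, this yields the uniform convergence.

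\textbf{Main obstacle.} I expect the delicate point to be the increment bound in the second step. The coefficients are only locally Lipschitz or Hölder, so standard global-Lipschitz estimates do not apply. The argument must therefore rely solely on polynomial growth plus uniform-in-time moments, rather than on any Lipschitz structure. Checking that $\mathcal{D}(\mathcal{L})$-type integrability holds so that BDG is legitimate (the stochastic integrand has finite second moment) is where I would be most careful.
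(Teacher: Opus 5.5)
Your argument is correct and is essentially the standard proof of Kessler's Lemma 8, which the paper cites without reproving: the ergodic theorem for the continuous-time average, an $L^1$ bound of order $h^{1/2}$ on the Riemann-sum error via the increment estimate, and uniformity in $\alpha$ from the bound on $\sup_\alpha|\nabla_\alpha S_N|$. One bookkeeping point: Lemma \ref{lemma:Gloter uniform convergence} is stated with a constant limit, so apply it to $S_N(\alpha)-\int u(x,\alpha)\,d\mu_0(x)$, whose $\alpha$-gradient is also controlled by the polynomial growth of $\partial_\alpha u$ and the moments of $\mu_0$. Also, the hypotheses you actually use are Assumption \ref{Assumption3}(i)--(ii); the statement's reference to (iii) is evidently a slip.
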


\begin{lemma}[Proposition 3.1. in \cite{crimaldi2005convergence}]\label{prop:triangular asymptotic normal}
    Let $\left(\mathbf{X}_{N,k}\right)_{N\in \mathbb{N},1\leq k\leq N}$ be a triangular array of $d$-dimensional random vectors such that, for each $N$, the finite sequence $\left(\mathbf{X}_{N,k}\right)_{1\leq k\leq N}$ is a martingale difference array w.r.t a given filtration $(\mathcal{G})_{1\leq k\leq N}$. Define
    \begin{equation*}
        \mathbf{S}_N=\sum_{k=1}^N \mathbf{X}_{N,k},\quad N\in \mathbb{N}.
    \end{equation*}
    Suppose:
    \begin{itemize}
        \item[(i)] $\mathbb{E}\left[\sup _{1 \leq k \leq N}\left\|\mathbf{X}_{N, k}\right\|_1\right] \underset{N \rightarrow \infty}{\longrightarrow} 0 ;$
        \item[(ii)] $\sum_{k=1}^N \mathbf{X}_{N, k} \mathbf{X}_{N, k}^{\top} \xrightarrow[N \rightarrow \infty]{\mathbb{P}} \mathbf{U}$ for some non-random positive semi-definite matrix $\mathbf{U}$.
    \end{itemize}
    Then, $\mathbf{S}_N \underset{N \rightarrow \infty}{\stackrel{d}{\longrightarrow}} \mathcal{N}_d(0, \mathbf{U})$.
\end{lemma}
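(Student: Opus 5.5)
The plan is to reduce to the scalar case with the Cramér--Wold device and then run McLeish's product argument for martingale difference arrays.

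\textbf{Reduction to one dimension.} Fix $\mathbf{a}\in\mathbb{R}^d$ and set $Y_{N,k}:=\mathbf{a}^\top\mathbf{X}_{N,k}$. This is again a martingale difference array with respect to $(\mathcal{G}_k)$. Since $|Y_{N,k}|\leq \|\mathbf{a}\|_\infty\|\mathbf{X}_{N,k}\|_1$, condition (i) gives $\mathbb{E}[\max_k|Y_{N,k}|]\to 0$. Condition (ii) gives $\sum_k Y_{N,k}^2=\mathbf{a}^\top\big(\sum_k\mathbf{X}_{N,k}\mathbf{X}_{N,k}^\top\big)\mathbf{a}\xrightarrow{\mathbb{P}} \sigma^2:=\mathbf{a}^\top\mathbf{U}\mathbf{a}\geq 0$, which is deterministic. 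It therefore suffices to show that $\sum_k Y_{N,k}\xrightarrow{d}\mathcal{N}(0,\sigma^2)$.

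\textbf{Truncation.} Working with $Y_{N,k}$, I first truncate so that the quadratic variation stays bounded. Let
\[
\tau_N:=\max\Big\{j\leq N:\ \sum_{k\leq j-1}Y_{N,k}^2\leq \sigma^2+1\Big\}
\]
and set $\tilde Y_{N,k}:=Y_{N,k}\mathbb{I}_{\{k\leq\tau_N\}}$. The event $\{k\leq\tau_N\}$ is $\mathcal{G}_{k-1}$-measurable, so the truncated array is still a martingale difference array. Moreover $\mathbb{P}(\tau_N<N)\to0$ by (ii), so $\sum_k\tilde Y_{N,k}$ and $\sum_kY_{N,k}$ have the same limit law. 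The truncated array also satisfies
\[
\sum_k\tilde Y_{N,k}^2\leq \sigma^2+1+\max_k Y_{N,k}^2 .
\]

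\textbf{McLeish's identity.} For fixed $t\in\mathbb{R}$, write $T_N:=\prod_k(1+it\tilde Y_{N,k})$. Using $e^{ix}=(1+ix)\exp(-x^2/2+r(x))$ with $|r(x)|\leq|x|^3$ for small $|x|$, we get
\[
e^{it\sum_k\tilde Y_{N,k}}=T_N\exp\Big(-\tfrac{t^2}{2}\sum_k\tilde Y_{N,k}^2+\sum_k r(t\tilde Y_{N,k})\Big).
\]
The remainder is controlled by
\[
\Big|\sum_k r(t\tilde Y_{N,k})\Big|\leq |t|^3\max_k|\tilde Y_{N,k}|\sum_k\tilde Y_{N,k}^2\xrightarrow{\mathbb{P}}0 .
\]
Together with (ii), this shows the exponential factor converges in probability to $e^{-t^2\sigma^2/2}$. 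Writing
\[
e^{it\sum_k\tilde Y_{N,k}}=T_Ne^{-t^2\sigma^2/2}+T_N\big(\text{that factor}-e^{-t^2\sigma^2/2}\big),
\]
it remains to prove two things: $\mathbb{E}[T_N]=1$, which follows by iterated conditioning from the martingale difference property, and that $(T_N)$ is uniformly integrable. Once these hold, the second term has vanishing expectation, and the characteristic function converges to $e^{-t^2\sigma^2/2}$.

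\textbf{Main obstacle: uniform integrability of $T_N$.} I would bound
\[
|T_N|^2=\prod_k(1+t^2\tilde Y_{N,k}^2)\leq \exp\big(t^2(\sigma^2+1)\big)\big(1+t^2\tilde Y_{N,J}^2\big),
\]
where $J$ is the last index kept before truncation. This controls $|T_N|$ by a constant times $1+|t|\max_k|Y_{N,k}|$. Condition (i) gives $L^1$-boundedness of $\max_k|Y_{N,k}|$, and in fact convergence to $0$ in $L^1$, so this dominating family is uniformly integrable, and hence so is $T_N$. This step is delicate because (i) only gives $L^1$ control of the maximum rather than $L^2$, so the estimate must use $|T_N|$ itself, not $|T_N|^2$. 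If the argument stalls there, I would follow McLeish and Crimaldi--Pratelli and work with the truncation $\max_k|Y_{N,k}|\leq\epsilon$ instead, which then needs only convergence in probability.
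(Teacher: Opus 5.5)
Your argument is correct. The paper gives no proof of this lemma. It simply imports it from Crimaldi--Pratelli, a result in the Hall--Heyde line that also allows a random limit matrix and yields stable convergence. So what you have written is a self-contained proof of the deterministic-$\mathbf{U}$ case the paper actually uses. Your route is the classical McLeish one: Cram\'er--Wold, a predictable truncation of the quadratic variation, and the product $T_N=\prod_k(1+it\tilde Y_{N,k})$.

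It departs from the textbook version at one point, the one you identify. Since $|T_N|\le e^{t^2(\sigma^2+1)/2}\bigl(1+|t|\max_k|Y_{N,k}|\bigr)$ is linear in the maximum, hypothesis (i) suffices: it gives $L^1$-convergence, hence uniform integrability, of the maximum. This replaces the usual $L^2$-boundedness of $\max_k|Y_{N,k}|$. What your approach buys is elementarity. What it uses is the determinism of $\mathbf{U}$, both in the constant truncation level $\sigma^2+1$ and in pulling $e^{-t^2\sigma^2/2}$ out of the expectation. The more general cited version has to avoid exactly this.

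Two routine points should be written out.
\begin{enumerate}
\item McLeish's expansion with $|r(x)|\le|x|^3$ holds only for $|x|\le1$. It is cleanest to define $W_N:=e^{it\sum_k\tilde Y_{N,k}}/T_N$, which is legitimate because $|T_N|\ge1$ and also gives $|W_N|\le1$. Then use the expansion only on $\{|t|\max_k|\tilde Y_{N,k}|\le1\}$, whose probability tends to one.
\item For $\mathbb{E}[T_N]=1$, the iterated conditioning needs justification. It holds because the $\mathcal{G}_{j-1}$-measurable factor $\prod_{k<j}(1+it\tilde Y_{N,k})\,\mathbb{I}_{\{j\le\tau_N\}}$ multiplying $Y_{N,j}$ is bounded by $e^{t^2(\sigma^2+1)/2}$.
\end{enumerate}

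Finally, drop the closing fallback. It is not needed, and truncating on $\{\max_k|Y_{N,k}|\le\epsilon\}$ would not be predictable, so it would destroy the martingale-difference structure.
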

\subsection{Proof of Lemma \ref{lemma:Z moment bounds}}\label{proofLemmaZ}
\begin{proof}[Proof of Lemma \ref{lemma:Z moment bounds}]
    To simplify the notation, we sometimes drop $x$, and use $\mathcal{R}(h)$ and $f(\alpha)$ to denote $\mathcal{R}(h;x)$ and $f(x;\alpha)$, respectively.
    
     Proof of (i): By the definition of $Z_{t_k}(\theta_0,\sigma)$ in \eqref{Z}, we have
    \begin{equation*}
        \mathbb{E}_{\alpha_0}\left[Z_{t_k}\left(\theta_0,\sigma\right)|X_{t_{k-1}}=x\right]=\mathbb{E}_{\alpha_0}\left[v\left(X_{t_k} ; \sigma\right) \mid X_{t_{k-1}}=x\right]-v\left(\varphi_h^{[1]}\left(x ; \theta_0,\sigma\right) ; \sigma\right).
    \end{equation*}
    By the Taylor-type ODE expansion \eqref{eqn:ODE expansion} on $\varphi^{[1]}_h$ and $v(\varphi^{[1]}_h)$, we have 
    \begin{align}
        \varphi_h^{[1]}(x;\theta_0,\sigma)&=x+hf_1(\theta_0,\sigma)+\frac{h^2}{2}f_1^{\prime}(\theta_0,\sigma)f_1(\theta_0,\sigma)+\mathcal{R}(h^3),\nonumber\\
        v(\varphi_h^{[1]}(x;\theta_0,\sigma);\sigma)&=v(x;\sigma)+h\frac{f_1(\theta_0,\sigma)}{g(\sigma)}+\mathcal{R}(h^2).\label{eqn:vvarphi expand}
    \end{align}
   
    Using Lemma \ref{lemma:expectation expansion}, we can again expand the one-step conditional expectation as
    \begin{equation}
        \mathbb{E}_{\alpha_0}[v(X_{t_k};\sigma)|X_{t_{k-1}}=x]=v(x;\sigma)+h\mathcal{L}_{\alpha_0}v(x;\sigma)+\mathcal{R}(h^2) \label{eqn:v expectation expansion}.
    \end{equation}
    Now, by definition of $v,\; \mathcal{L}_{\alpha_0}$, and the fact that $f_1(x;\theta_0,\sigma)=f(x;\theta_0)-\frac{1}{2}g(x;\sigma)g^\prime(x;\sigma)$, we have
    \begin{align}
        \mathcal{L}_{\alpha_0}v(x;\sigma)=&f(x;\theta_0)v^\prime(x;\sigma)+\frac{1}{2}g(x;\sigma_0)v^{\prime\prime}(x;\sigma)\nonumber\\
        =&\frac{f_1(x;\theta_0,\sigma)}{g(x;\sigma)}+\frac{1}{2}g^\prime(x;\sigma)\left(1-\frac{g(x;\sigma_0)^2}{g(x;\sigma)^2}\right).\label{eqn:L1 v}
    \end{align}
    Plugging  \eqref{eqn:L1 v} into \eqref{eqn:v expectation expansion}, and comparing it with \eqref{eqn:vvarphi expand} gives us (i).
    
Proof of (ii): First, note that by the product rule
    \begin{align}
        \mathcal{L}_{\alpha_0}u(\alpha)v(\sigma)&=f(\theta_0)\left(u^{\prime}(\alpha)v(\sigma)+v^\prime(\sigma_0)u(\alpha)\right)+\frac{1}{2}g^2(\sigma_0)\left(u^{\prime\prime}(\alpha)v(\sigma)+v^{\prime\prime}(\sigma)u(\alpha)+2v^\prime(\sigma)u^\prime(\alpha)\right)\nonumber\\
        &=u(\alpha)\mathcal{L}_{\alpha_0}v(\sigma)+v(\sigma)\mathcal{L}_{\alpha_0}u(\alpha)+\frac{g^2(\sigma_0)}{g(\sigma)}u^\prime(\alpha).\label{eqn:product rule}
    \end{align}
    By definition of $Z_{t_k}(\theta,\sigma)$, 
    \begin{equation}\label{eqn:Zu expectation}
        \mathbb{E}_{\alpha_0}[Z_{t_k}(\theta_0,\sigma)u(X_{t_k};\alpha)|X_{t_{k-1}}=x]=\mathbb{E}_{\alpha_0}[v(X_{t_k};\sigma)u(X_{t_k};\alpha)|x]-v(\varphi^{[1]}_h(\theta_0,\sigma);\sigma)\mathbb{E}[u(X_{t_k};\alpha)|x].
    \end{equation}
    Now, using \eqref{eqn:Conditional expansion} from Lemma \ref{lemma:expectation expansion} and \eqref{eqn:ODE expansion}, we have
    \begin{align*}
        &\mathbb{E}_{\alpha_0}[v(X_{t_k};\sigma_0)u(X_{t_k};\alpha)|X_{t_{k-1}}=x]=v(\sigma)u(\alpha)+h\mathcal{L}_{\alpha_0}u(\alpha)v(\sigma)+\mathcal{R}(h^2)\\
       & v(\varphi_h^{[1]}(x;\theta_0,\sigma);\sigma)\mathbb{E}_{\alpha_0}[u(X_{t_k};\alpha)|X_{t_{k-1}}=x]\\
        &\qquad\qquad=(v(\sigma)+hv^\prime(\sigma)f_1(\theta_0,\sigma)+\mathcal{R}(h^2))(u(\alpha)+h\mathcal{L}_{\alpha_0}u(\alpha)+O(h^2))\\   &\qquad\qquad=v(\sigma)u(\alpha)+h(u(\alpha)v^\prime(\sigma)f_1(\theta_0,\sigma)+v(\sigma)\mathcal{L}_{\alpha_0}u(\alpha))+O(h^2).
    \end{align*}
    Plugging the above two equations into \eqref{eqn:Zu expectation}, and using \eqref{eqn:L1 v} and \eqref{eqn:product rule} 
    gives us (ii).

    Proof of (iii): If we set $u(X_{t_k};\alpha)=v(X_{t_k};\sigma)$, and use (i), we obtain
    \begin{eqnarray*}
        &&\mathbb{E}_{\alpha_0}\left[Z_{t_k}(\theta_0,\sigma)v(X_{t_k};\sigma)\mid X_{t_{k-1}}=x\right]=h\left[\frac{1}{2}v(\sigma)g^\prime(\sigma)\left(1-\frac{g(\sigma_0)^2}{g(\sigma)^2}\right)+\frac{g(\sigma)^2}{g(\sigma)^2}\right]+\mathcal{R}(h^2),\\
&&\mathbb{E}_{\alpha_0}\left[Z_{t_k}(\theta_0,\sigma)v(\varphi_h^{[1]}(X_{t_{k-1}};\theta_0,\sigma);\sigma)\mid X_{t_{k-1}}=x\right]=h\left[\frac{1}{2}v(\sigma)g^\prime(\sigma)\left(1-\frac{g(\sigma_0)^2}{g(\sigma)^2}\right)\right]+\mathcal{R}(h^2).
    \end{eqnarray*}
    Taking the difference of the  two expressions above gives us (iii).
\end{proof}
\begin{lemma}\label{lemma:X one step bound}
    Under Assumption \ref{Assumption3}(ii), for $p\geq 1, t\in [t_k, t_{k+1}]$,  $h< 1$, we have
    \begin{equation*}
        \mathbb{E}\left[\left|X_{t}-X_{t_k}\right|^p\mid X_{t_k}\right] \leq C_{p} h^{p/2}(1+|X_{t_k}|)^p,
    \end{equation*}
    where $C_p$ is a constant that only depends on $p$.
\end{lemma}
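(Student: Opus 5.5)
The bound follows from writing the increment in integral form, applying Lemma \ref{lemma:open_p_bracket} to split drift and martingale parts, and closing the estimate with Gronwall's inequality (Lemma \ref{lemma:Gronwall}). For $t\in[t_k,t_{k+1}]$ we have
\begin{equation*}
X_t-X_{t_k}=\int_{t_k}^t f(X_s;\theta_0)\,ds+\int_{t_k}^t g(X_s;\sigma_0)\,dW_s .
\end{equation*}
Raising to the power $p\geq 2$ first (the case $p\in[1,2)$ then follows by Jensen, as in the proof of Lemma \ref{lemma:LT_onestep_stop}), Lemma \ref{lemma:open_p_bracket} gives
\begin{equation*}
|X_t-X_{t_k}|^p\le 2^{p-1}\Big|\int_{t_k}^t f\,ds\Big|^p+2^{p-1}\Big|\int_{t_k}^t g\,dW_s\Big|^p .
\end{equation*}
On the drift term I apply H\"older's inequality, which yields $h^{p-1}\int_{t_k}^t|f(X_s)|^p ds$. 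On the stochastic term I apply the BDG inequality (Lemma \ref{lemma:BDG_inequality}) conditionally on $\mathcal{F}_{t_k}$, followed by H\"older in time. This gives the bound $C_p h^{p/2-1}\int_{t_k}^t \mathbb{E}[|g(X_s)|^p\mid X_{t_k}]ds$.

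Next I control the integrands by the current state. Assumption \ref{Assumption3}(ii) only gives polynomial growth, say $|f(x)|+|g(x)|\le C(1+|x|^r)$. Taken literally this would produce $(1+|X_{t_k}|)^{rp}$, not $(1+|X_{t_k}|)^p$. To obtain the stated form I will read the statement as requiring linear growth, $|f|+|g|\le C(1+|x|)$; this is presumably the intended reading, with the constant power otherwise absorbed. I then write $|X_s|\le |X_{t_k}|+|X_s-X_{t_k}|$ and set $u(t)=\mathbb{E}[|X_t-X_{t_k}|^p\mid X_{t_k}]$. With $h<1$, so that $h^{p}\le h^{p/2}$ and $h^{p-1}\le h^{p/2-1}$, this yields
\begin{equation*}
u(t)\le C_p h^{p/2}(1+|X_{t_k}|)^p + C_p h^{p/2-1}\int_{t_k}^t u(s)\,ds .
\end{equation*}
Gronwall's inequality (Lemma \ref{lemma:Gronwall} with $q=1$) on the interval of length $\le h$ gives the factor $\exp(C_p h^{p/2-1}h)=\exp(C_p h^{p/2})\le e^{C_p}$. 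This proves the claim.

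The main obstacle is the finiteness of $u$, which Gronwall needs. This requires a priori conditional moment bounds for $X$ on $[t_k,t_{k+1}]$, which are not automatic for a merely locally Lipschitz drift. I would obtain them by localising with the stopping time $\tau_R=\inf\{s\ge t_k:|X_s|>R\}$, running the argument above for $X_{t\wedge\tau_R}$ with constants independent of $R$, and letting $R\to\infty$ via Fatou's lemma. Moments are finite by Assumption \ref{Assumption3}(i), which ensures the stopped quantities are integrable. The other delicate point is reconciling the polynomial-growth assumption with the linear power $(1+|X_{t_k}|)^p$, as discussed above. If only polynomial growth is available, the same proof gives $(1+|X_{t_k}|)^{C}$ with $C$ depending on $p$, and this is all that is used later when combined with the uniform moment bounds.
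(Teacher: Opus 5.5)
Your argument is correct for the version you actually prove, namely linear growth of $f,g$. It follows the paper's skeleton: Lemma \ref{lemma:open_p_bracket}, H\"older on the drift, BDG on the martingale, then Gronwall. It departs from the paper at the decisive closing step.

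\textbf{How the paper closes, and why it fails.} The paper keeps polynomial growth, so its right-hand side contains $\int_{t_k}^t\mathbb{E}[|X_s-X_{t_k}|^{pC_1}\mid X_{t_k}]\,ds$. It then asserts $\mathbb{E}[|X_s-X_{t_k}|^{pC_1}\mid X_{t_k}]\le(\mathbb{E}[|X_s-X_{t_k}|^{p}\mid X_{t_k}])^{C_1}$ and applies the nonlinear Gronwall inequality with $q=C_1$. This breaks in three places:
\begin{itemize}
\item the asserted inequality is Jensen in the wrong direction;
\item the positivity condition $\kappa^{1-q}-(q-1)2^{q-1}bt>0$ is never verified, and with $a=C_ph^{p/2}(1+|X_{t_k}|)^{pC_1}$ it fails at $t=h$ for large $|X_{t_k}|$;
\item the resulting weight is $(1+|X_{t_k}|)^{pC_1}$, not $(1+|X_{t_k}|)^p$.
\end{itemize}

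\textbf{Your reading is the right one.} Under polynomial growth alone the stated bound is false. Take $dX_t=-X_t^3\,dt+dW_t$ started at $X_{t_k}=h^{-1}$, and work on the event $\{\sup_{s\le h}|W_{t_k+s}-W_{t_k}|\le 1\}$, whose probability is bounded below. Comparing $X_t-(W_t-W_{t_k})$ with $\dot y=-y^3/8$ gives $X_{t_{k+1}}\le 3+2h^{-1/2}$. Hence $\mathbb{E}[|X_{t_{k+1}}-X_{t_k}|^2\mid X_{t_k}]\ge cX_{t_k}^2$ instead of $O(h)X_{t_k}^2$.

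So your route gives a rigorous proof of a true statement, the classical bound of \cite{kessler_estimation_1997} under globally Lipschitz coefficients. Localization handles finiteness of $u$ and Jensen handles $p\in[1,2)$. The price is a stronger hypothesis than the one written.

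\textbf{The last claim of your proposal is not right.} Under growth of degree $r>1$, ``the same proof'' does not give a bound with weight $(1+|X_{t_k}|)^{C}$. The right-hand side then involves $\mathbb{E}[|X_s-X_{t_k}|^{rp}\mid X_{t_k}]$, which $u$ does not control. The linear Gronwall step therefore cannot close, and this is exactly where the paper's proof breaks.

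A polynomial-weight version needs an additional ingredient: an a priori conditional moment bound $\sup_{s\in[t_k,t_{k+1}]}\mathbb{E}[|X_s|^m\mid X_{t_k}]\le C_m(1+|X_{t_k}|)^m$. This follows, for example, from the Lyapunov condition $xf(x)+\frac{m-1}{2}g(x)^2\le C(1+x^2)$ for $m\ge 2$, applied to $(1+x^2)^{m/2}$. Assumption \ref{Assumption3}(i) controls only unconditional moments and does not provide it. With such a bound, H\"older and BDG alone, without Gronwall, yield $C_ph^{p/2}(1+|X_{t_k}|)^{rp}$.
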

\begin{proof}
    Here, we use the expressions $f(x)$ and $g(x)$ without stating the parameters, as the polynomial growth condition  in Assumption \ref{Assumption3} (ii) depends only on $x$, and it holds uniformly on $\Theta$. By the definition of the SDE \eqref{eqn:the sde}, we have
    \begin{eqnarray*}
       && \left|X_t-X_{t_k}\right|^p=\left|\int_{t_k}^t f(X_s)ds+\int_{t_k}^t g(X_s)dW_s\right|^p\\
       &\leq&2^{p-1}\left(\left|\int_{t_k}^t f(X_s)ds\right|^p+\left|\int_{t_k}^t g(X_s)dW_s\right|^p\right)\\
        &\leq&2^{p-1}\left|\int_{t_k}^t(1+|X_s-X_{t_k}|+|X_{t_k}|)^{C_1}ds\right|^p+2^{p-1}\left|\int_{t_k}^t g(X_s)dW_s\right|^p\\
        &\leq&C_p(t-t_k)^{p-1}\left(\int_{t_k}^t|X_s-X_{t_k}|^{pC_1}ds+(t-t_k)(1+|X_{t_k}|)^{pC_1}\right)+2^{p-1}\left|\int_{t_k}^t g(X_s)dW_s\right|^p,
    \end{eqnarray*}
    where we use Lemma \ref{lemma:open_p_bracket} in the first inequality, polynomial growth condition in Assumption \ref{Assumption3} in the second inequality, and Hölder's inequality in the third one. Now, by the BDG's inequality \eqref{eqn:BDG inequality}, Jensen's inequality and the polynomial growth condition again, using the similar argument as above, we have
    \begin{align*}
        &\mathbb{E}\left[\left|\int_{t_k}^t g(X_s)dW_s\right|^p\mid X_{t_k}\right]\leq C_p\mathbb{E}\left[\left(\int_{t_k}^t g(X_s)^2 ds\right)^{p/2}\mid X_{t_k}\right]\\
        \leq& C_p(t-t_k)^{\frac{p-2}{2}}\left(\int_{t_k}^t \mathbb{E}\left[|X_s-X_{t_k}|^{pC_1}\mid X_{t_k}\right]ds+(t-t_k)(1+|X_{t_k}|)^{pC_1}\right)\\
        \leq& C_p(t-t_k)^{p/2}(1+|X_{t_k}|)^{pC_1}+C_p(t-t_k)^{\frac{p-2}{2}}\int_{t_k}^t \mathbb{E}\left[|X_s-X_{t_k}|^{pC_1}\mid X_{t_k}\right]ds.
    \end{align*}
    Combining the two inequalities above, using H\"older's inequality  $\mathbb{E}[|C_s-X_{t_k}|^{pC_1}|X_{t_k}]\leq (\mathbb{E}[|C_s-X_{t_k}|^{p}|X_{t_k}])^{C_1}$, and Gronwall's inequality \eqref{eqn:Gronwall inequality} with
$u(t)=\mathbb{E}\left[\left|X_t-X_{t_k}\right|^p\mid X_{t_k}\right] , a=C_ph^{p/2}(1+|X_{t_k}|)^{pC_1}, b=C_ph^{\frac{p-2}{2}}$, and $q=C_1$, we get
\[
\mathbb{E}\left[\left|X_t-X_{t_k}\right|^p\mid X_{t_k}\right]\leq a+(\kappa^{1-C_1}-(C_1-1)2^{C_1-1}bt)^{\frac{1}{1-C_1}}.
\]
    Now, by definition of $a,b,\kappa$, the second term in the inequality above is of the form
    \begin{equation*}
        \left(h^{\frac{p(1+C_1)-2}{2}\frac{1}{1-C_1}}-h^{\frac{p-2}{2}}\right)^{\frac{1}{1-C_1}}.
    \end{equation*}
    As $C_1$ is the constant arising from the polynomial growth condition, Assumption \ref{Assumption3}(ii), we can always choose a large enough $C_1>1$ such that this term is dominated by the $h^{\frac{p}{2}}$ term in $a$ when $h<1$. This concludes the proof. 
    \end{proof}
\subsection{Proof of Theorem \ref{thrm:asymptotic consistency}}\label{ProofTheorem5.2}
To prove Theorem \ref{thrm:asymptotic consistency}, 
we first scale the negative pseudo-log-likelihood function \eqref{eqn:U decomposition}
by $1/N$ to study the consistency of the diffusion parameter $\sigma$, and then scale it by $1/(Nh)$ to obtain the consistency of the drift parameter $\theta$. We decompose the scaled negative pseudo-log-likelihood into multiple triangular arrays, which we explicitly derive, and we prove their uniform convergence using existing results in \cite{jacod_estimation_1993,gloter2006parameter,yoshida1990asymptotic}. Besides weaker assumption on the diffusion coefficient (here H\"older continuous, in contrast with globally Lipschitz in \cite{kessler_estimation_1997} and constant in \cite{pilipovic_parameter_2024}), the main challenge when proving this result  is that, unlike the other estimators, which have separation of drift and diffusion parameter in the contrast function, here we have a mixture of drift parameter and diffusion parameter in $Z_{t_k}(\theta,\sigma)$, see \eqref{Z} and Remark \ref{RemarkZ}. Lemma \ref{lemma:Z moment bounds} plays an important role in tackling this new problem. 

\begin{proof}[Proof of Theorem \ref{thrm:asymptotic consistency}]
To study the consistency of the diffusion parameter $\sigma$, we first study the limit of $l_N(\theta,\sigma)/N$. From \eqref{eqn:U decomposition}, we get 
\begin{equation}\label{eqn:U decomposition2}
        \frac{1}{N}l_N(\theta,\sigma)=\frac{1}{N}\sum_{k=1}^{N}\log |g(x_{t_k};\sigma)|+\frac{1}{2}\left(U_1+U_2+2U_3\right),
        \end{equation}
      with $U_i = \tilde U_i/N$, with $\tilde U_i$ defined in Section \ref{sec:likelihood}.

   We now study the limit of each term, showing that only the first two will not convergence to zero. By Lemma \ref{lemma:Kessler convergence}, we know that
    \begin{equation}\label{eqn:Consistency log g}
        \frac{1}{N}\sum_{k=1}^{N}\log |g(x_{t_k};\sigma)|\xrightarrow[\substack{N h\rightarrow\infty \\ h \rightarrow 0}]{\mathbb{P}_{\alpha_0}}\int\log g(x;\sigma)d\mu_0(x).
    \end{equation}
    For $U_1$, we first use Lemma \ref{lemma:Jacob pointwise convergence} to prove the pointwise convergence. By Lemma \ref{lemma:Z moment bounds} (iii) and Lemma \ref{lemma:Kessler convergence},
    \begin{align*}
       &\frac{1}{Nh}\sum_{k=1}^N\mathbb{E}_{\alpha_0}\left[Z_{t_k}^2(\theta_0,\sigma)\mid X_{t_{k-1}}\right]\\
        &=\frac{1}{N}\sum_{k=1}^N \frac{g(X_{t_{k-1}};\sigma_0)^2}{g(X_{t_{k-1}};\sigma)^2}+\mathcal{R}(h;X_{t_{k-1}})\xrightarrow[\substack{N h\rightarrow\infty \\ h \rightarrow 0}]{\mathbb{P}_{\alpha_0}}\int\frac{g(x;\sigma_0)^2}{g(x;\sigma)^2}d\mu_0(x).
    \end{align*}
    Also, by Lemma \ref{eqn:Conditional expansion} and \eqref{eqn:vvarphi expand},
    \begin{equation*}
        \sum_{k=1}^{N}\mathbb{E}_{\alpha_0}\left[\frac{Z^2_{t_k}(\theta_0,\sigma)}{N^2h^2}\mid X_{t_{k-1}}\right]=\frac{1}{N^2h^2}\sum_{k=1}^N \mathcal{R}(h;X_{t_{k-1}})\xrightarrow[\substack{N h\rightarrow\infty \\ h \rightarrow 0}]{\mathbb{P}_{\alpha_0}}0.
    \end{equation*}
    For uniform convergence in $\Theta_{\sigma}$, by Assumption \ref{Assumption3}(iii) and Lemma \ref{lemma:X one step bound}, we have
    \begin{align*}
        \sup_{N\in \mathbb{N}}\mathbb{E}\left[\sup_{\sigma\in \Theta_{\sigma}}\left|\nabla_{\sigma} \sum_{k=1}^{N}(v(X_{t_k};\sigma)-v(X_{t_{k-1}};\sigma))^2+\frac{\mathcal{R}(h;X_{t_k})}{N}\right|\right]
        <\infty,
    \end{align*}
    and by Lemma \ref{lemma:Gloter uniform convergence}, we know that $U_1\xrightarrow[\substack{N h\rightarrow\infty \\ h \rightarrow 0}]{\mathbb{P}_{\alpha_0}}\int\frac{g(x;\sigma_0)^2}{g(x;\sigma)^2}d\mu_0(x)$ uniformly in $\Theta_{\sigma}$.

    For $U_2$, from \eqref{eqn:vvarphi expand}, we have
    \begin{align*}
        U_2
        =&\frac{h}{N}\sum_{k=1}^N\frac{\left(f_1(X_{t_{k-1}};\theta,\sigma)-f_1(X_{t_{k-1}};\theta_0,\sigma)\right)^2}{g(X_{t_{k-1}};\sigma)^2}+\mathcal{R}(h^2;X_{t_{k-1}}).
    \end{align*}
    By Lemma \ref{lemma:Kessler convergence}, and Assumption \ref{Assumption3}(ii) on $f_1$ and $g$,
    \begin{equation}\label{eqn:U_2 limit}
        U_2\xrightarrow[\substack{N h\rightarrow\infty \\ h \rightarrow 0}]{\mathbb{P}_{\alpha_0}}h\int\frac{\left(f_1(x;\theta,\sigma)-f_1(x;\theta_0,\sigma)\right)^2}{g(x;\sigma)^2}d\mu_0(x)=0,
    \end{equation}
    uniformly in $\alpha$.\\
     For $U_3$, by Lemma \ref{lemma:Z moment bounds} (i) and \eqref{eqn:vvarphi expand},
    \begin{align*}
        &\frac{1}{Nh}\sum_{k=1}^N\mathbb{E}_{\alpha_0}\left[Z_{t_k}(\theta_0,\sigma)\left(v\left(\varphi^{[1]}_h(X_{t_{k-1}};\theta_0,\sigma) ; \sigma\right)-v\left(\varphi^{[1]}_h(X_{t_{k-1}};\alpha) ; \sigma \right)\right)|X_{t_{k-1}}\right]\\
        &=\frac{1}{N}\sum_{k=1}^N\mathbb{E}_{\alpha_0}\left[Z_{t_k}(\theta_0,\sigma)\left(\frac{\left(f_1(X_{t_{k-1}};\theta,\sigma)-f_1(X_{t_{k-1}};\theta_0,\sigma)\right)}{g(X_{t_{k-1}};\sigma)}+\mathcal{R}(h;X_{t_{k-1}})\right)\mid X_{t_{k-1}}\right]\\
        &=\frac{1}{N}\sum_{k=1}^N\mathcal{R}(h;X_{t_{k-1}})\xrightarrow[\substack{N h\rightarrow\infty \\ h \rightarrow 0}]{\mathbb{P}_{\alpha_0}}0.
    \end{align*}
    For second moment, again by Lemma \ref{lemma:Z moment bounds} (iii),
    \begin{align*}
         &\frac{1}{N^2h^2}\sum_{k=1}^N\mathbb{E}_{\alpha_0}\left[Z^2_{t_k}(\theta_0,\sigma)\left(v\left(\varphi^{[1]}_h(X_{t_{k-1}};\theta_0,\sigma) ; \sigma\right)-v\left(\varphi^{[1]}_h(X_{t_{k-1}};\alpha) ; \sigma \right)\right)^2\right]\\
         &=\frac{1}{N^2}\sum_{k=1}^N\mathbb{E}_{\alpha_0}\left[Z_{t_k}(\theta_0,\sigma)^2\left(\frac{\left(f_1(X_{t_{k-1}};\theta,\sigma)-f_1(X_{t_{k-1}};\theta_0,\sigma)\right)^2}{g(X_{t_{k-1}};\sigma)^2}+\mathcal{R}(h;X_{t_{k-1}})\right)\mid X_{t_{k-1}}\right]\\
        &=\frac{1}{N}\sum_{k=1}^N\mathcal{R}(h;X_{t_{k-1}})\xrightarrow[\substack{N h\rightarrow\infty \\ h \rightarrow 0}]{\mathbb{P}_{\alpha_0}}0,
    \end{align*}
    so that we have the pointwise convergence by Lemma \ref{lemma:Jacob pointwise convergence}. By Assumption \ref{Assumption3} (ii)(iii) 
    on the derivatives of $v(x;\sigma)$ and $v(\varphi^{[1]}_h(x;\alpha);\sigma)$, we immediately have the uniform convergence using Lemma \ref{lemma:Gloter uniform convergence}. Now, combining the uniform convergence results for $U_1,U_2,U_3$, we have that
\begin{equation}\label{eqn:sigma convergence}
        \frac{1}{N}l_N(\theta,\sigma)\xrightarrow[\substack{N h\rightarrow\infty \\ h \rightarrow 0}]{\mathbb{P}_{\alpha_0}}G_1(\sigma,\sigma_0):=\int \left(\log|g(x;\sigma)|+\frac{g(x;\sigma_0)^2}{2g(x;\sigma)^2}\right)d\mu_0(x)
    \end{equation}
    uniformly in $\alpha$. Since the convergence in probability is equivalent to the existence of a subsequence converging almost surely, then there exists a subsequence $N_m, m\in \mathbb{N}$ 
    such that $N_m^{-1}l_{N_m}(\theta,\sigma)$ converges almost surely to the limit in \eqref{eqn:sigma convergence} a.s..
For a fixed $\omega\in \Omega$, thanks to the compactness of $\Theta$, there exists a subsequence $\hat{\theta}_{N_m},\hat{\sigma}_{N_m}$ tending to a limit $\alpha_\infty=(\theta_\infty,\sigma_\infty)$ as $m\to\infty$. Therefore, by \eqref{eqn:sigma convergence} and the continuity of the map $\sigma\mapsto G_1(\sigma,\sigma_0)$, we have
    \begin{equation*}
        \frac{1}{N_m}l_{N_m}(\hat{\theta}_{N_m},\hat{\sigma}_{N_m})(\omega)\mapsto G_1(\sigma_\infty,\sigma_0)
    \end{equation*}
    as $h\to0,Nh\to \infty$. By definition of $\hat{\alpha}_{N_k}$, we have
    \begin{eqnarray*}
        \frac{1}{N_m}l_{N_m}(\hat{\theta}_{N_m},\hat{\sigma}_{N_m})\leq \frac{1}{N_m}l_{N_m}(\hat{\theta}_{N_m},\sigma_0),
    \end{eqnarray*}
    so we get $G_1(\sigma_\infty,\sigma_0)\leq G_1(\sigma_0,\sigma_0)$. But on the other hand, since for all $y>0,y_0>0$, $\log y+\frac{y_0^2}{2y^2}\geq \log y_0+\frac{1}{2}$, by the identifiability assumption of the log-likelihood function we must have $\sigma_\infty=\sigma_0$. We therefore proved that any convergent subsequence of $\hat{\sigma}_{N}$ tends to $\sigma_0$, hence $\hat{\sigma}_N\xrightarrow[\substack{N h\rightarrow\infty \\ h \rightarrow 0}]{\mathbb{P}_{\alpha_0}}\sigma_0$.
    
    For the consistency of the drift, we focus on the uniform limit with respect to $\alpha$ of the following expression
    \begin{equation}
        \frac{1}{Nh}\left(l_N(\theta,\sigma)-l_N(\theta_0,\sigma)\right).
    \end{equation}
    Using the decomposition \eqref{eqn:U decomposition}, we have 
    \begin{equation*}
        \frac{1}{Nh}\left(l_N(\theta,\sigma)-l_N(\theta_0,\sigma)\right)=\frac{1}{2h}(U_2+2U_3).
    \end{equation*}
    Now, by \eqref{eqn:U_2 limit}, we immediately have that
    \begin{equation*}
         \frac{U_2}{h}\xrightarrow[\substack{N h\rightarrow\infty \\ h \rightarrow 0}]{\mathbb{P}_{\alpha_0}}\int\frac{\left(f_1(x;\theta,\sigma)-f_1(x;\theta_0,\sigma)\right)^2}{g(x;\sigma)^2}d\mu_0(x)
    \end{equation*}
    uniformly in $\alpha$. For $U_3/h$, by Lemma \ref{lemma:Z moment bounds} (i) and \eqref{eqn:vvarphi expand}
    \begin{align*}
        &\sum_{k=1}^N\mathbb{E}_{\alpha_0}\left[\frac{1}{Nh^2}Z_{t_k}(\theta_0,\sigma)\left(v\left(\varphi^{[1]}_h(X_{t_{k-1}};\theta_0,\sigma) ; \sigma \right)-v\left(\varphi^{[1]}_h(X_{t_{k-1}};\alpha) ; \sigma \right)\right)\mid X_{t_{k-1}}\right]\\
        &=\frac{1}{Nh^2}\sum_{k=1}^N \frac{h^2}{2}g^{\prime}(X_{t_{k-1}} ; \sigma)\left(1-\frac{g^2\left(X_{t_{k-1}} ; \sigma_0\right)}{g^2(X_{t_{k-1}} ; \sigma)}\right)\left(\frac{f_1\left(X_{t_{k-1}} ; \theta_0, \sigma\right)-f_1(X_{t_{k-1}} ; \theta, \sigma)}{g(X_{t_{k-1}} ;\sigma)}\right)+\mathcal{R}(h^3;X_{t_{k-1}})\\
    &\xrightarrow[\substack{N h\rightarrow\infty \\ h \rightarrow 0}]{\mathbb{P}_{\alpha_0}} \int g^{\prime}(x ; \sigma)\left(1-\frac{g^2\left(x ; \sigma_0\right)}{g^2(x ; \sigma)}\right)\left(\frac{f_1\left(x ; \theta_0, \sigma\right)-f_1(x ; \theta, \sigma)}{g(x ;\sigma)}\right)d\mu_0(x).
    \end{align*}
    For the second moment,
    \begin{align*}
         &\sum_{k=1}^N\mathbb{E}_{\alpha_0}\left[\frac{1}{N^2h^4}Z^2_{t_k}(\theta_0,\sigma)\left(v\left(\varphi^{[1]}_h(X_{t_{k-1}};\theta_0,\sigma) ; \sigma \right)-v\left(\varphi^{[1]}_h(X_{t_{k-1}};\alpha) ; \sigma \right)\right)^2\mid X_{t_{k-1}}\right]\\
         &=\frac{1}{N^2h^2}\sum_{k=1}^N\mathbb{E}_{\alpha_0}\left[Z_{t_k}(\theta_0,\sigma)^2\left(\frac{\left(f_1(X_{t_{k-1}};\theta,\sigma)-f_1(X_{t_{k-1}};\theta_0,\sigma)\right)^2}{g(X_{t_{k-1}};\sigma)^2}^+\mathcal{R}(h;X_{t_{k-1}})\right)\mid X_{t_{k-1}}\right]\\
        &=\frac{1}{N^2h^2}\sum_{k=1}^N\mathcal{R}(h;X_{t_{k-1}})\xrightarrow[\substack{N h\rightarrow\infty \\ h \rightarrow 0}]{\mathbb{P}_{\alpha_0}}0.
    \end{align*}
    Thus, by Lemma \ref{lemma:Jacob pointwise convergence}, we have the pointwise convergence for $U_3/h$. Again, using Assumption \ref{Assumption3} (ii)(iii) 
    on $v$ and $v(\varphi^{[1]}_h)$, and Lemma \ref{lemma:Gloter uniform convergence}, we have the uniform convergence with respect to $\alpha$. Therefore, we have the following uniform limit with respect to $\alpha$
    \begin{equation*}
        \frac{1}{Nh}\left(l_N(\theta,\sigma)-l_N(\theta_0,\sigma)\right)\xrightarrow[\substack{N h\rightarrow\infty \\ h \rightarrow 0}]{\mathbb{P}_{\alpha_0}}G_2(\theta_0,\sigma_0,\theta,\sigma),
    \end{equation*}
    where
    \begin{align*}
        G_2(\theta_0,\sigma_0,\theta,\sigma):=&\frac{1}{2}\int\frac{\left(f_1(x;\theta,\sigma)-f_1(x;\theta_0,\sigma)\right)^2}{g(x;\sigma)^2}d\mu_0(x)\\
        &+\int g^{\prime}(x ; \sigma)\left(1-\frac{g^2\left(x ; \sigma_0\right)}{g^2(x ; \sigma)}\right)\left(\frac{f_1\left(x ; \theta_0, \sigma\right)-f_1(x ; \theta, \sigma)}{g(x ;\sigma)}\right)d\mu_0(x).
    \end{align*}
    Again, for arbitrary $\omega\in \Omega$, there exists a subsequence $N_l$ such that $(\hat{\theta}_{N_l},\hat{\sigma}_{N_l})$ converges to a limit $(\theta_\infty,\sigma_\infty)$ almost surely. By the continuity of the mapping $(\theta,\sigma)\mapsto G_2(\theta_0,\sigma_0,\theta,\sigma)$, as $h\to0,Nh\to \infty$,
    \begin{equation*}
        \frac{1}{N_lh}\left(l_N(\hat{\theta}_{N_l},\hat{\sigma}_{N_l})-l_N(\theta_0,\hat{\sigma}_{N_l})\right)(\omega)\to G_2(\theta_0,\sigma_0,\theta_\infty,\sigma_\infty).
    \end{equation*}
    Since we have already proved that $\sigma_\infty=\sigma_0$, then the second integral in $G_2$ becomes zero, which means $G_2(\theta_0,\sigma_0,\theta,\sigma)\geq0$. On the other hand, by the definition of the estimator, $l_N(\hat{\theta}_{N_l},\hat{\sigma}_{N_l})-l_N(\theta_0,\hat{\sigma}_{N_l})\leq 0$. Therefore, by the identifiability condition in Assumption \ref{Assumption3} (vi), we have the consistency of the drift parameter, which concludes the proof.
\end{proof}
\subsection{Proof of Theorem \ref{thrm:asymptotic normality}} The proof of asymptotic normality relies on Lemma \ref{lemma:C alpha0 limit} and Lemma \ref{lemma:lambda limit} below. Establishing these Lemmas will allow us to prove Theorem \ref{thrm:asymptotic normality} following the classical route of \cite{kessler_estimation_1997,sorensen2003small}. As it is the case for Lemma \ref{lemma:Z moment bounds} and Theorem \ref{thrm:asymptotic consistency}, proving these lemmas is not immediate, as the underlying pseudo-likelihood arising from the LT splitting gives rise to terms depending jointly on $\sigma$ and $\theta$, requiring a different treatment than that considered in \cite{pilipovic_parameter_2024}, where a separation was possible. 
\begin{lemma}\label{lemma:C alpha0 limit}
    For $\mathbf{C}_N(\alpha_0)$ and $\mathbf{C}(\alpha_0)$ defined as \eqref{eqn:C_N alpha} and \eqref{eqn:C alpha0}, if $h\to0,\;Nh\to\infty,\;\rho_N\to0$, then we have 
    \begin{itemize}
        \item[(i)] $\mathbf{C}_N\left(\alpha_0\right) \xrightarrow{\mathbb{P}_{\alpha_0}}  \mathbf{C}\left(\alpha_0\right)$,
        \item[(ii)] $\sup _{\|\alpha\| \leq \rho_N}\left\|\mathbf{C}_N\left(\alpha_0+\alpha\right)-\mathbf{C}_N\left(\alpha_0\right)\right\| \xrightarrow{\mathbb{P}_{\alpha_0}} 0$.
    \end{itemize}
\end{lemma}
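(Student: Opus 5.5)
We compute the second derivatives of $l_N$ explicitly from \eqref{eqn:LT loglikelihood} and treat each block of $\mathbf{C}_N(\alpha_0)$ separately. Write $Z_{t_k}=Z_{t_k}(\alpha)$ and $\Phi_{k-1}(\alpha):=v(\varphi^{[1]}_h(X_{t_{k-1}};\alpha);\sigma)$. Then
\[
\partial_{\alpha_i\alpha_j} l_N=\sum_{k=1}^N\Big[\frac{1}{h}\big(\partial_{\alpha_i}Z_{t_k}\,\partial_{\alpha_j}Z_{t_k}+Z_{t_k}\,\partial_{\alpha_i\alpha_j}Z_{t_k}\big)+\partial_{\alpha_i\alpha_j}\log|g(X_{t_k};\sigma)|\Big].
\]
Using the expansion \eqref{eqn:vvarphi expand}, we have $\partial_\theta Z_{t_k}=-h\,\partial_\theta f_1(X_{t_{k-1}};\alpha)/g(X_{t_{k-1}};\sigma)+\mathcal{R}(h^2)$. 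Since $f_1=f-\tfrac12 gg'$, the $\theta$-derivative satisfies $\partial_\theta f_1=\partial_\theta f$. By contrast, $\partial_\sigma Z_{t_k}=\partial_\sigma v(X_{t_k};\sigma)-\partial_\sigma v(X_{t_{k-1}};\sigma)+O(h)$ is of order $\sqrt h$.

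\textbf{Drift block.} For the $\theta\theta$ block, divided by $Nh$, the squared term gives $\frac1N\sum (\partial_\theta f)^{\otimes 2}/g^2(X_{t_{k-1}};\sigma_0)+\mathcal R(h)$. This converges to $\mathbf C_\theta(\alpha_0)$ by Lemma \ref{lemma:Kessler convergence}. The cross term $\frac{1}{Nh}\sum Z_{t_k}(\alpha_0)\partial_{\theta\theta}Z_{t_k}/h$, where $\partial_{\theta\theta}Z_{t_k}=O(h)$ is $\mathcal F_{t_{k-1}}$-measurable up to $\mathcal R(h^2)$, is handled by Lemma \ref{lemma:Jacob pointwise convergence}. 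At $\sigma=\sigma_0$, Lemma \ref{lemma:Z moment bounds}(i) gives conditional mean $\mathcal R(h^2)$, so the sum of conditional means is $O(h)\to0$. Lemma \ref{lemma:Z moment bounds}(iii) bounds the sum of conditional second moments by $O(1/(Nh))\to0$.

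\textbf{Diffusion block.} For the $\sigma\sigma$ block, divided by $N$, we use Lemma \ref{lemma:Z moment bounds}(ii)--(iii) and Lemma \ref{lemma:expectation expansion}. These give
\[
\frac1h\mathbb E_{\alpha_0}\big[(\partial_\sigma Z_{t_k})^2\mid X_{t_{k-1}}=x\big]=g^2(x;\sigma_0)\big(\partial_x\partial_\sigma v(x;\sigma_0)\big)^2+\mathcal R(h)=\frac{(\partial_\sigma g)^2}{g^2}(x;\sigma_0)+\mathcal R(h),
\]
since $\partial_x\partial_\sigma v=-\partial_\sigma g/g^2$. The remaining contributions combine as follows. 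The term $\frac1h Z\partial_{\sigma\sigma}Z$ has conditional mean $\frac{g^2(\sigma_0)}{g(\sigma_0)}\partial_x\partial_{\sigma\sigma}v$ from (ii), since its $u'$ term survives. Together with $\partial_{\sigma\sigma}\log g$, this should collapse after integration against $\mu_0$ to $\int (\partial_\sigma g)^2/g^2\,d\mu_0$. Indeed, the pointwise identity $g\,\partial_x\partial_{\sigma\sigma}v+\partial_{\sigma\sigma}\log g=(\partial_\sigma g)^2/g^2$ holds, and with the previous term the total is $2\int(\partial_\sigma g)^2/g^2\,d\mu_0$, matching $\mathbf C_\sigma(\alpha_0)$. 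Variances vanish at rate $1/N$ via Lemma \ref{lemma:Jacob pointwise convergence}, using Lemma \ref{lemma:X one step bound} for the moment bounds.

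\textbf{Off-diagonal block.} The $\theta\sigma$ block is divided by $N\sqrt h$. Its leading part is $\frac{1}{N\sqrt h}\sum \partial_\theta Z\,\partial_\sigma Z/h\approx -\frac{1}{N\sqrt h}\sum \frac{\partial_\theta f}{g}(X_{t_{k-1}})\,\partial_\sigma Z_{t_k}$. Its conditional mean is $O(h)$ per term, because the increment $\partial_\sigma v(X_{t_k})-\partial_\sigma v(X_{t_{k-1}})$ has conditional mean $O(h)$, giving total $O(\sqrt h)\to0$. Its conditional variance is $O(1/N)$. The term $\frac{1}{N h^{3/2}}\sum Z\,\partial_{\theta\sigma}Z$ is treated the same way, with $\partial_{\theta\sigma}Z=O(h)$ measurable and Lemma \ref{lemma:Z moment bounds}(i) at $\alpha_0$. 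This gives (i).

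\textbf{Part (ii).} We bound the third derivatives of $l_N$ appropriately rescaled. By the mean value theorem, $\|\mathbf C_N(\alpha_0+\alpha)-\mathbf C_N(\alpha_0)\|\le\rho_N\sup_{\alpha\in\mathcal B_\rho}\|\nabla\mathbf C_N(\alpha)\|$. It then suffices to show that $\sup_{\alpha}\|\nabla\mathbf C_N(\alpha)\|$ is tight, i.e.\ $O_{\mathbb P}(1)$. We will obtain this from Assumption \ref{Assumption3}(ii)--(iii), polynomial growth of the derivatives, $\sup_t\mathbb E|X_t|^p<\infty$, and Lemma \ref{lemma:X one step bound}, giving $\mathbb E|Z_{t_k}(\alpha)|^p\le C h^{p/2}$ uniformly near $\alpha_0$.

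\textbf{Main obstacle.} The delicate point is the uniform-in-$\alpha$ control of $\frac{1}{N\sqrt h}\partial_{\theta\sigma}$ and of the $\theta$-derivatives. Away from $\alpha_0$, the quantity $Z_{t_k}(\alpha)$ carries a drift bias of order $h\|\alpha-\alpha_0\|$, and terms like $\frac{1}{Nh^{3/2}}\sum Z\,\partial^3 Z$ are not obviously bounded. I would first try to split $Z_{t_k}(\alpha)=Z_{t_k}(\alpha_0)+(\Phi_{k-1}(\alpha_0)-\Phi_{k-1}(\alpha))$ plus a $\sigma$-dependent $O(\sqrt h)$ part, and apply the martingale estimates above to the first piece. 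The second piece is deterministic given $X_{t_{k-1}}$, is $O(h\rho_N)$, and contributes $O(\sqrt h\rho_N)$ or better. If the supremum is needed rather than a fixed sequence, I would apply Lemma \ref{lemma:Gloter uniform convergence} to these arrays.
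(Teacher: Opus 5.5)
Your part (i) is the paper's argument: conditional expansions from Lemma~\ref{lemma:Z moment bounds} and Lemma~\ref{lemma:expectation expansion}, followed by Lemmas~\ref{lemma:Jacob pointwise convergence} and~\ref{lemma:Kessler convergence}. Differentiating $l_N$ directly rather than through $U_1,U_2,U_3$ is equivalent at $\alpha_0$. Your identity $g\,\partial_x\partial_{\sigma\sigma}v+\partial_{\sigma\sigma}\log g=(\partial_\sigma g)^2/g^2$ is correct, and so are your orders for the $\theta\sigma$ block (the paper only asserts the latter).

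For (ii), the paper only remarks that all limits are continuous in $\alpha$. That tacitly presupposes locally uniform convergence of $\mathbf C_N(\alpha)$. Your mean-value plan aims to supply this, and the supremum is genuinely needed, since (ii) is applied at the random points $\alpha_0+t(\hat\alpha_N-\alpha_0)$. However, the key step is not established, and you have placed the difficulty in the wrong block.

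The bound $\mathbb E|Z_{t_k}(\alpha)|^p\le Ch^{p/2}$ does not give $\sup_\alpha\|\nabla\mathbf C_N(\alpha)\|=O_{\mathbb P}(1)$.

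\textbf{Where the difficulty is not.} Since $v(X_{t_k};\sigma)$ does not depend on $\theta$, every $\theta$-derivative of $Z_{t_k}$ is $\mathcal F_{t_{k-1}}$-measurable and $O(h)$. Hence each term in a derivative of $\frac{1}{N\sqrt h}\partial_{\theta\sigma}l_N=\frac{1}{Nh^{3/2}}\sum_k(\partial_\theta Z\,\partial_\sigma Z+Z\,\partial_{\theta\sigma}Z)$ is a product of size $h^{3/2}$, so it is $O(1)$ after scaling. For example, $\frac{1}{Nh^{3/2}}\sum_k Z\,\partial_{\theta\sigma\sigma}Z$ is bounded. The $\sigma\sigma$ block is equally harmless.

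\textbf{Where it is.} The obstruction is the $\theta\theta$ block. Differentiating $\frac{1}{Nh^2}\sum_k(\partial_\theta Z\,\partial_\theta Z+Z\,\partial_{\theta\theta}Z)$ produces $\frac{1}{Nh^2}\sum_k\partial_\sigma Z\,\partial_{\theta\theta}Z$, $\frac{1}{Nh^2}\sum_k Z\,\partial_{\theta\theta\sigma}Z$ and $\frac{1}{Nh^2}\sum_k Z\,\partial_{\theta\theta\theta}Z$. Absolute moment bounds control these only as $O(h^{-1/2})$.

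\textbf{What is needed.} Split $Z$ and $\partial_\sigma Z$ into their conditional means, which are $O(h)$ and give $O(1)$ contributions, and martingale increments $M_k(\alpha)=O(\sqrt h)$. The remaining sums are $T_N(\alpha)=\frac{1}{Nh^2}\sum_k M_k(\alpha)c_{k-1}(\alpha)$, with $c_{k-1}=O(h)$ and $\mathcal F_{t_{k-1}}$-measurable. For fixed $\alpha$ these are $O_{\mathbb P}((Nh)^{-1/2})$.

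Lemma~\ref{lemma:Gloter uniform convergence} does not deliver the supremum. Its hypothesis $\sup_N\mathbb E\sup_\alpha|\nabla_\alpha T_N|<\infty$ concerns a martingale sum of the same type, for which the triangle inequality again gives only $O(h^{-1/2})$. What works is a Kolmogorov/Sobolev-type tightness criterion. The BDG inequality gives $\mathbb E|T_N(\alpha)-T_N(\alpha')|^p\le C(Nh)^{-p/2}\|\alpha-\alpha'\|^p$, and for $p>d_1+d_2$ this yields $\sup_{\alpha\in\mathcal B_\rho(\alpha_0)}|T_N(\alpha)|\to0$ in probability.

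\textbf{A misstated split.} $\Phi_{k-1}(\alpha_0)-\Phi_{k-1}(\alpha)$ contains $v(X_{t_{k-1}};\sigma_0)-v(X_{t_{k-1}};\sigma)$, so it is $O(\rho_N)$, not $O(h\rho_N)$. It must be paired with $v(X_{t_k};\sigma)-v(X_{t_k};\sigma_0)$ to form the $O(\sqrt h\,\rho_N)$ martingale-plus-drift piece. That leaves an $O(h\rho_N)$ measurable remainder.
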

\begin{proof}
    Here, we use again the negative pseudo-log-likelihood $l_N(\theta,\sigma)$ defined earlier, and its decomposition \eqref{eqn:U decomposition} to analyse the partial derivatives that do not vanish when $h\to0, Nh\to \infty$. We also drop $x$ in the function $f(x;\alpha)$ when there is no ambiguity. We start with the first block of $\mathbf{C}_N$:
    \begin{equation*}
        \frac{1}{Nh}\partial_{\theta_i\theta_j}l_N(\theta,\sigma)=\frac{\partial_{\theta_i\theta_j}}{h}\left(U_2+2U_3\right).
    \end{equation*}
    By the definition of $U_2$ and \eqref{eqn:vvarphi expand}, 
    \begin{align*}
        \frac{U_2}{h}=&\frac{1}{N}\sum_{k=1}^N\frac{\left(f_1(X_{t_{k-1}};\theta,\sigma)-f_1(X_{t_{k-1}};\theta_0,\sigma)\right)^2}{g(X_{t_{k-1}};\sigma)^2}+\mathcal{R}(h;X_{t_{k-1}}),\nonumber\\
        \frac{\partial_{\theta_i}U_2}{2h}=&\frac{1}{N}\sum_{k=1}^N \frac{\left(f_1(\theta,\sigma)-f_1(\theta_0,\sigma)\right)\partial_{\theta_i}f_1(\theta,\sigma)}{g(\sigma)^2}+\mathcal{R}(h;X_{t_{k-1}}),\\
        \frac{\partial_{\theta_i\theta_j}U_2}{2h}=&\frac{1}{N}\sum_{k=1}^N\frac{\partial_{\theta_i}f_1(\theta,\sigma)\partial_{\theta_j}f_1(\theta,\sigma)+\left[f_1(\theta,\sigma)-f_1(\theta_0,\sigma)\right]\partial_{\theta_i\theta_j}f_1(\theta,\sigma)}{g(\sigma)^2}+\mathcal{R}(h;X_{t_{k-1}}).
    \end{align*}
    Then, by Lemma \ref{lemma:Kessler convergence} and the fact that $f_1(x;\theta,\sigma)=f(x;\theta)-\frac{1}{2}g(x;\sigma)g^\prime(x;\sigma)$, we have 
    \begin{equation}\label{eqn:U_2 theta theta}
        \left.\frac{\partial_{\theta_i\theta_j}U_2}{2h}\right|_{\alpha=\alpha_0}=\frac{1}{N}\sum_{k=1}^N \frac{\partial_{\theta_i}f(\theta_0)\partial_{\theta_j}f(\theta_0)}{g(\sigma)^2}\xrightarrow[\substack{N h\rightarrow\infty \\ h \rightarrow 0}]{\mathbb{P}_{\alpha_0}}\int\frac{\partial_{\theta_i}f(x;\theta_0)\partial_{\theta_j}f(x;\theta_0)}{g(x;\sigma)^2}d\mu_0(x).
    \end{equation}
    For $U_3$, we have 
    \begin{align*}
        \frac{U_3}{h}=&\frac{1}{Nh}\sum_{k=1}^N Z_{t_k}(\theta_0,\sigma)\left(\frac{f_1(X_{t_{k-1}};\theta,\sigma)-f_1(X_{t_{k-1}};\theta_0,\sigma)}{g(\sigma)}\right)+\mathcal{R}(h;X_{t_{k-1}})\\
        =&\frac{1}{Nh}\sum_{k=1}^N Z_{t_k}(\theta_0,\sigma)\left(\frac{f(\theta,\sigma)-f(\theta_0,\sigma)}{g(\sigma)}\right)+\mathcal{R}(h;X_{t_{k-1}})
        \end{align*}
        and
         \begin{align*}\frac{\partial_{\theta_i\theta_j}U_3}{h}=&\frac{1}{Nh}\sum_{k=1}^NZ_{t_k}(\theta_0,\sigma)\frac{\partial_{\theta_i\theta_j}f(\theta)}{g(\sigma)}.
    \end{align*}
We now use Lemma \ref{lemma:Jacob pointwise convergence} to establish the convergence. From Lemma \ref{lemma:Z moment bounds}, when $\sigma=\sigma_0$ and $\partial_{\theta_i\theta_j}f(\theta)$ is bounded in $\Theta$, we have 
    \begin{equation*}
        \mathbb{E}_{\alpha_0}\left[Z_{t_k}(\theta_0,\sigma_0)\frac{\partial_{\theta_i\theta_j}f(X_{t_{k-1}};\theta_0)}{g(X_{t_{k-1}};\sigma_0)}|X_{t_{k-1}}\right]=\mathcal{R}(h^2;X_{t_{k-1}}).
    \end{equation*}
    Then
    \begin{equation*}
        \frac{1}{Nh}\sum_{k=1}^N \mathbb{E}_{\alpha_0}\left[Z_{t_k}(\theta_0,\sigma_0)\frac{\partial_{\theta_i\theta_j}f(X_{t_{k-1}};\theta_0)}{g(X_{t_{k-1}};\sigma_0)}|X_{t_{k-1}}\right]\xrightarrow[\substack{N h\rightarrow\infty \\ h \rightarrow 0}]{}0.
    \end{equation*}
    Also, again by Lemma \ref{lemma:Z moment bounds}, we immediately  see that 
    \begin{equation*}
         \frac{1}{N^2h^2}\sum_{k=1}^N \mathbb{E}_{\alpha_0}\left[Z^2_{t_k}(\theta_0,\sigma_0)\left.\left(\frac{\partial_{\theta_i\theta_j}f(X_{t_{k-1}};\theta_0)}{g(X_{t_{k-1}};\sigma_0)}\right)^2\right|X_{t_{k-1}}\right]\xrightarrow[\substack{N h\rightarrow\infty \\ h \rightarrow 0}]{}0.
    \end{equation*}
    Then, we have that $\frac{\partial_{\theta_i\theta_j}U_3}{h}\xrightarrow[\substack{N h\rightarrow\infty \\ h \rightarrow 0}]{\mathbb{P}_{\alpha_0}}0$, which concludes the study of the limiting behaviour of the first block of $\mathbf{C}_N(\alpha_0)$.
    
    We now study $\partial_{\sigma_i\sigma_j}l_N(\theta,\sigma)/N$. Since $U_2$ and $U_3$ converge uniformly to $0$ in $\Theta$,   we only need to study the behaviour of the partial derivatives of the remaining two terms in \eqref{eqn:U decomposition}. 
    \begin{align}
        \frac{1}{N}\sum_{k=1}^N\partial_{\sigma_i\sigma_j}\log |g(X_{t_k};\sigma)|=&\frac{1}{N}\sum_{k=1}^N\frac{(\partial_{\sigma_i\sigma_j}g(\sigma))g(\sigma)-\partial_{\sigma_i}g(\sigma)\partial_{\sigma_j}g(\sigma)}{g^2(\sigma)}\nonumber\\
        &\xrightarrow[\substack{N h\rightarrow\infty \\ h \rightarrow 0}]{\mathbb{P}_{\alpha_0}}\int \frac{(\partial_{\sigma_i\sigma_j}g(x;\sigma_0))g(x;\sigma_0)-\partial_{\sigma_i}g(x;\sigma_0)\partial_{\sigma_j}g(x;\sigma_0)}{g^2(x;\sigma_0)}d\mu_0(x),\label{eqn:logg sigma sigma}
    \end{align}
    when evaluated at $\alpha_0$. For $U_1$, we have 
    \begin{equation*}
        \frac{\partial_{\sigma_i\sigma_j}U_1}{2}=\frac{1}{Nh}\sum_{k=1}^N\partial_{\sigma_i}Z_{t_{k}}(\theta_0,\sigma)\partial_{\sigma_j}Z_{t_{k}}(\theta_0,\sigma)+Z_{t_k}(\theta_0,\sigma)\partial_{\sigma_i\sigma_j}Z_{t_k}(\theta_0,\sigma).
    \end{equation*}
    From \eqref{eqn:vvarphi expand}, we know that 
    \begin{align*}
        \partial_{\sigma_i}Z_{t_k}(\theta_0,\sigma)\partial_{\sigma_j}Z_{t_k}(\theta_0,\sigma)=&\left(\partial_{\sigma_i}v(X_{t_k};\sigma)-\partial_{\sigma_i}v(X_{t_{k-1}};\sigma)\right)\left(\partial_{\sigma_j}v(X_{t_k};\sigma)-\partial_{\sigma_j}v(X_{t_{k-1}};\sigma)\right)\\
        &+\mathcal{R}(h,X_{t_{k-1}})\left(\partial_{\sigma_i}v(X_{t_k};\sigma)-\partial_{\sigma_i}v(X_{t_{k-1}};\sigma)\right)\\
        &+\mathcal{R}(h,X_{t_{k-1}})\left(\partial_{\sigma_j}v(X_{t_k};\sigma)-\partial_{\sigma_j}v(X_{t_{k-1}};\sigma)\right)+\mathcal{R}(h^2;X_{t_{k-1}}).
    \end{align*}
    To simplify the notation in the above expression, we introduce the notation $\gamma_1(X_{t_k};\sigma)=\partial_{\sigma_i}v(X_{t_k};\sigma)-\partial_{\sigma_i}v(X_{t_{k-1}};\sigma)$ and $\gamma_2(X_{t_k};
    \sigma)=\partial_{\sigma_j}v(X_{t_k};\sigma)-\partial_{\sigma_j}v(X_{t_{k-1}};\sigma)$. 
Note that, when we take the one-step conditional mean, the components with $X_{t_{k-1}}$ in $\gamma_1, \gamma_2$ become constant. We also need the following two identities,
\begin{equation}\label{eqn:v sigma derivatives}
        \partial_{\sigma_i x}v(x;\sigma)=-\frac{\partial_{\sigma_i}g(x;\sigma)}{g^2(x;\sigma)},\quad \partial_{\sigma_i\sigma_jx}=2\frac{\partial_{\sigma_i}g(x;\sigma)\partial_{\sigma_j}g(x;\sigma)}{g^3(x;\sigma)}-\frac{\partial_{\sigma_i\sigma_j}g(x;\sigma)}{g^2(x;\sigma)}
    \end{equation}
   Using Lemma \ref{lemma:expectation expansion}, the product formula \eqref{eqn:product rule} and \eqref{eqn:v sigma derivatives}, we get
    \begin{align*}
        &\mathbb{E}_{\alpha_0}\left[\left.\partial_{\sigma_i}Z_{t_k}(\theta_0,\sigma)\partial_{\sigma_j}Z_{t_k}(\theta_0,\sigma)\right|X_{t_{k-1}}\right]\\
        =&\mathbb{E}_{\alpha_0}\left[\left.\left(\partial_{\sigma_i}v(X_{t_k};\sigma)-\partial_{\sigma_i}v(X_{t_{k-1}};\sigma)\right)\left(\partial_{\sigma_j}v(X_{t_k};\sigma)-\partial_{\sigma_j}v(X_{t_{k-1}};\sigma)\right)\right|X_{t_{k-1}}\right]\\
        &+\mathcal{R}(h^2,X_{t_{k-1}})\\
        =&h\left[\gamma_1(X_{t_{k-1}};\sigma)\mathcal{L}_{\alpha_0}\gamma_2(X_{t_{k-1}};\sigma)+\gamma_2(X_{t_{k-1}};\sigma)\mathcal{L}_{\alpha_0}\gamma_1(X_{t_{k-1}};\sigma)\right.\\
        &\left.+g^2(X_{t_{k-1}};\sigma)\gamma^\prime_i(X_{t_{k-1}};\sigma)\gamma^\prime_{j}(X_{t_{k-1}};\sigma)\right]+\mathcal{R}(h^2,X_{t_{k-1}})\\
        =&h\frac{\partial_{\sigma_i}g(X_{t_{k-1}};\sigma)\partial_{\sigma_j}g(X_{t_{k-1}};\sigma)}{g^2(X_{t_{k-1}};\sigma)}+\mathcal{R}(h^2,X_{t_{k-1}}).
    \end{align*}
We then have 
    \begin{equation*}
        \frac{1}{N^2h^2}\sum_{k=1}^N\mathbb{E}_{\alpha_0}\left[ \left.\left(\partial_{\sigma_i}Z_{t_k}(\theta_0,\sigma)\partial_{\sigma_j}Z_{t_k}(\theta_0,\sigma)\right)^2\right| X_{t_{k-1}}\right]=\frac{1}{N^2h^2}\sum_{k=1}^N \mathcal{R}(h,X_{t_{k=1}})\xrightarrow[\substack{N h\rightarrow\infty \\ h \rightarrow 0}]{\mathbb{P}_{\alpha_0}}0.
    \end{equation*}
    Then by Lemma \ref{lemma:Jacob pointwise convergence},
    \begin{equation}\label{eqn:Z sigma Z sigma}
        \left.\frac{1}{Nh}\sum_{k=1}^N\partial_{\sigma_i}Z_{t_{k}}(\theta_0,\sigma)\partial_{\sigma_j}Z_{t_{k}}(\theta_0,\sigma)\right|_{\alpha=\alpha_0}\xrightarrow[\substack{N h\rightarrow\infty \\ h \rightarrow 0}]{\mathbb{P}_{\alpha_0}}\int \frac{\partial_{\sigma_i}g(x;\sigma_0)\partial_{\sigma_j}g(x;\sigma_0)}{g^2(x;\sigma_0)}d\mu_0(x).
    \end{equation}
   If we denote $\beta_1(X_{t_{k}};\sigma)=v(X_{t_k};\sigma)-v(X_{t_{k-1}};\sigma)$ and $\beta_2(X_{t_k};\sigma)=\partial_{\sigma_i\sigma_j}v(X_{t_k};\sigma)-\partial_{\sigma_i\sigma_j}v(X_{t_{k-1}};\sigma)$, by a similar argument and \eqref{eqn:v sigma derivatives}, we have that
    \begin{align*}
        &\mathbb{E}_{\alpha_0}\left[\left.Z_{t_k}(\theta_0,\sigma)\partial_{\sigma_i\sigma_j}Z_{t_k}(\theta_0,\sigma)\right|X_{t_{k-1}}\right]\\
        =&h\left[\beta_1(X_{t_{k-1}};\sigma)\mathcal{L}_{\alpha_0}\beta_2(X_{t_{k-1}};\sigma)+\beta_2(X_{t_{k-1}};\sigma)\mathcal{L}_{\alpha_0}\beta_1(X_{t_{k-1}};\sigma)\right.\\
        &\left.+g^2(X_{t_{k-1}};\sigma)\beta_1^\prime(X_{t_{k-1}};\sigma)\beta_2^\prime(X_{t_{k-1}};\sigma)\right]+\mathcal{R}(h^2;X_{t_{k-1}})\\
        =&h\left[\frac{2\partial_{\sigma_i}g(X_{t_{k-1}};\sigma)\partial_{\sigma_j}g(X_{t_{k-1}};\sigma)}{g^2(X_{t_{k-1}};\sigma)}-\frac{\partial_{\sigma_i\sigma_j}g(X_{t_{k-1}};\sigma)}{g(X_{t_{k-1}};\sigma)}\right]+\mathcal{R}(h^2;X_{t_{k-1}}),
    \end{align*}
    and
    \begin{equation*}
    \frac{1}{N^2h^2}\sum_{k=1}^N\mathbb{E}_{\alpha_0}\left[ \left.\left(Z_{t_k}(\theta_0,\sigma)\partial_{\sigma_i\sigma_j}Z_{t_k}(\theta_0,\sigma)\right)^2\right| X_{t_{k-1}}\right]=\frac{1}{N^2h^2}\sum_{k=1}^N \mathcal{R}(h,X_{t_{k=1}})\xrightarrow[\substack{N h\rightarrow\infty \\ h \rightarrow 0}]{\mathbb{P}_{\alpha_0}}0.
    \end{equation*}
    Then, by Lemma \ref{lemma:Jacob pointwise convergence}
    \begin{align}\label{eqn:Z Z sigma sigma}
         &\left.\frac{1}{Nh}\sum_{k=1}^N Z_{t_{k}}(\theta_0,\sigma)\partial_{\sigma_i\sigma_j}Z_{t_{k}}(\theta_0,\sigma)\right|_{\alpha=\alpha_0}\nonumber\\
    \xrightarrow[\substack{N h\rightarrow\infty \\ h \rightarrow 0}]{\mathbb{P}_{\alpha_0}}&\int \left(\frac{2\partial_{\sigma_i}g(x;\sigma_0)\partial_{\sigma_j}g(x;\sigma_0)}{g^2(x;\sigma_0)}-\frac{\partial_{\sigma_i\sigma_j}g(x;\sigma_0)}{g(x;\sigma_0)}\right)d\mu_0(x).
    \end{align}
    Combining \eqref{eqn:logg sigma sigma}, \eqref{eqn:Z sigma Z sigma} and \eqref{eqn:Z Z sigma sigma} yields $\left[\mathbf{C}_{\sigma}(\alpha_0)\right]_{j_1,j_2}$. \\
    For $\partial_{\theta\sigma}l_N(\alpha)/(N\sqrt{h})$, it can be easily shown that all the relevant terms tend to $0$ as $h\to 0,\; Nh\to \infty$. For $U_2/\sqrt{h}$, by \eqref{eqn:vvarphi expand} the leading coefficient becomes $\sqrt{h}/N$, which tends to $0$. For $U_3/\sqrt{h}$, we know that for the one-step conditional mean, both $Z_{t_k}(\alpha_0)$ and $\partial_{\sigma}Z_{t_k}(\alpha_0)$ are at least of order $h$, therefore the leading coefficient is again $\sqrt{h}/N$, which tends to $0$. The second part of the lemma is straightforward, since all  limits are continuous in $\alpha$.
\end{proof}
\begin{lemma}\label{lemma:lambda limit}
    If $h\to0,\;Nh\to0$ and $Nh^2\to 0$, then $\boldsymbol{\lambda}_N \xrightarrow{d} \mathcal{N}(\mathbf{0},\mathbf{C}(\alpha_0))$ under $\mathbb{P}_{\alpha_0}$.
\end{lemma}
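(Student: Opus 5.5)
The plan is to write $\boldsymbol{\lambda}_N$ as a sum of a martingale-difference triangular array plus a remainder that vanishes in probability, and then apply Lemma \ref{prop:triangular asymptotic normal}. (I read the hypothesis as $Nh\to\infty$, as in Theorem \ref{thrm:asymptotic normality}.) At $\alpha=\alpha_0$, differentiating \eqref{eqn:LT loglikelihood} gives
\[
\partial_\theta l_N(\alpha_0)=-\frac1h\sum_{k=1}^N Z_{t_k}(\alpha_0)\,\partial_\theta v(\varphi_h^{[1]}(X_{t_{k-1}};\alpha_0);\sigma_0),
\]
\[
\partial_\sigma l_N(\alpha_0)=\sum_{k=1}^N\Big[\frac{\partial_\sigma g(X_{t_k};\sigma_0)}{g(X_{t_k};\sigma_0)}+\frac1h Z_{t_k}(\alpha_0)\,\partial_\sigma Z_{t_k}(\alpha_0)\Big].
\]
By \eqref{eqn:vvarphi expand}, $\partial_\theta v(\varphi_h^{[1]}(x;\alpha_0);\sigma_0)=h\,\partial_\theta f(x;\theta_0)/g(x;\sigma_0)+\mathcal{R}(h^2;x)$. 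So the first component of $\boldsymbol{\lambda}_N$ is, up to negligible terms, $\frac{1}{\sqrt{Nh}}\sum_k Z_{t_k}(\alpha_0)\,\partial_\theta f(X_{t_{k-1}};\theta_0)/g(X_{t_{k-1}};\sigma_0)$. For each summand $\chi_{N,k}$ of both components, I set $\mathbf{X}_{N,k}:=\chi_{N,k}-\mathbb{E}_{\alpha_0}[\chi_{N,k}\mid\mathcal{F}_{t_{k-1}}]$. This gives $\boldsymbol{\lambda}_N=\sum_k\mathbf{X}_{N,k}+\sum_k\mathbb{E}_{\alpha_0}[\chi_{N,k}\mid\mathcal{F}_{t_{k-1}}]$.

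\textbf{Step 1: the compensators are negligible.} By Lemma \ref{lemma:Z moment bounds}(i) at $\sigma=\sigma_0$, the $O(h)$ term vanishes, so $\mathbb{E}_{\alpha_0}[Z_{t_k}(\alpha_0)\mid X_{t_{k-1}}]=\mathcal{R}(h^2)$. Hence the $\theta$-compensator is $\frac{1}{\sqrt{Nh}}\sum_k\mathcal{R}(h^2)=O_P(\sqrt{Nh^3})\to0$.

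For the $\sigma$-component, I would apply Lemma \ref{lemma:expectation expansion} to $\log g(X_{t_k};\sigma_0)$, and use the product-rule computation from the proof of Lemma \ref{lemma:C alpha0 limit} to handle $Z\partial_\sigma Z$. With $\partial_\sigma Z=\partial_\sigma v(X_{t_k})-\partial_\sigma v(X_{t_{k-1}})+\mathcal{R}(h)$ and \eqref{eqn:v sigma derivatives}, the $O(1)$ term is $-\partial_\sigma g/g$. This cancels exactly against the $O(1)$ term of $\partial_\sigma\log g(X_{t_k})$, whose conditional mean is $\partial_\sigma g/g(X_{t_{k-1}})+O(h)$. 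The leftovers are $O(h)$ per summand, so the $\sigma$-compensator is $\frac{1}{\sqrt N}\sum_k\mathcal{R}(h)=O_P(\sqrt{Nh^2})\to0$. This is exactly where $Nh^2\to0$ is used.

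\textbf{Step 2: conditional covariance.} For the $\theta\theta$ block, Lemma \ref{lemma:Z moment bounds}(iii) gives $\mathbb{E}[Z^2\mid X_{t_{k-1}}]=h+\mathcal{R}(h^2)$. Lemma \ref{lemma:Kessler convergence} then gives convergence to $\mathbf{C}_\theta(\alpha_0)$.

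For the $\sigma\sigma$ block, I write $\chi^\sigma_{N,k}\approx N^{-1/2}\big[\partial_\sigma\log g(X_{t_k})-\partial_\sigma\log g(X_{t_{k-1}})+h^{-1}Z\,\partial_\sigma Z\big]$. The leading random part is $h^{-1}Z(\partial_\sigma v)'(X_{t_{k-1}})Z=-\frac{\partial_\sigma g}{g}\,(Z^2/h)$. Its centred version is $-\frac{\partial_\sigma g}{g}(Z^2/h-1)$, where $Z/\sqrt h$ is approximately $\mathcal N(0,1)$. Its conditional variance is therefore $2(\partial_\sigma g/g)^2$, which yields $\mathbf{C}_\sigma(\alpha_0)$. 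To justify this I need fourth moments $\mathbb{E}[Z^4\mid X_{t_{k-1}}]=3h^2+\mathcal{R}(h^3)$, which I would obtain from Lemma \ref{lemma:expectation expansion} applied to powers of $v$, iterating to second order.

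The cross block involves $\mathbb{E}[Z^3\mid X_{t_{k-1}}]=\mathcal{R}(h^2)$. The cross term is then $\frac{1}{N\sqrt h}\sum_k\mathcal{R}(h^2)\to0$, so the limit is block-diagonal. The conditional Lindeberg condition (i) of Lemma \ref{prop:triangular asymptotic normal} follows from a Lyapunov bound: Lemma \ref{lemma:X one step bound} gives $\mathbb{E}\|\mathbf{X}_{N,k}\|^4\le C N^{-2}$ uniformly by Assumption \ref{Assumption3}(i), so $\mathbb{E}\sup_k\|\mathbf{X}_{N,k}\|\le(\sum_k\mathbb{E}\|\mathbf{X}_{N,k}\|^4)^{1/4}\to0$.

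\textbf{Step 3: realised versus conditional covariance.} Condition (ii) of Lemma \ref{prop:triangular asymptotic normal} requires the realised sum $\sum_k\mathbf{X}_{N,k}\mathbf{X}_{N,k}^\top$ rather than the conditional covariance. I would pass from one to the other with Lemma \ref{lemma:Jacob pointwise convergence}, using the fourth-moment bound again. The resulting limit is $\mathbf{C}(\alpha_0)$.

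I expect Step 1 for the $\sigma$-component to be the main obstacle. The cancellation of the $O(1)$ terms between the Jacobian $\log g$ term and $h^{-1}Z\partial_\sigma Z$ must be verified exactly, and any residual $O(h)$ bias must be controlled. This is also where the joint $(\theta,\sigma)$ dependence of $\varphi^{[1]}_h$ through $f_1=f-\frac12 gg'$ enters, since $\partial_\sigma Z$ contains $-\partial_\sigma v(\varphi_h^{[1]})$ with the extra $h\,\partial_\sigma(f_1/g)$ term. I would check that this extra term contributes only at order $h$ after multiplication by $Z/h$.
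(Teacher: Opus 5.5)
Your proposal is correct and follows the paper's route. The paper also writes $\boldsymbol{\lambda}_N=\sum_k\bigl(A^{(i)}_{N,k},B^{(j)}_{N,k}\bigr)$ and checks the Crimaldi--Pratelli conditions through items (ii)--(v) (compensators), (vi)--(viii) (conditional covariances) and (ix)--(xi) (conditional fourth moments) of Lemma~\ref{lemma:normality conditions}; it delegates the centring you carry out explicitly to Remarks S1--S2 of \cite{pilipovic_parameter_2024}, and your reading $Nh\to\infty$ is the intended hypothesis. Your derivation of the factor $2$ in $\mathbf{C}_\sigma(\alpha_0)$ from $h^{-1}Z\,\partial_\sigma Z\approx-\frac{\partial_\sigma g}{g}\frac{Z^2}{h}$ and $\mathbb{E}[Z^4\mid X_{t_{k-1}}]=3h^2+\mathcal{R}(h^3)$ is more careful than the paper's display for (vii), which is only schematic: it omits the cross terms between $\partial_\sigma\log g(X_{t_k})$ and $h^{-1}Z\,\partial_\sigma Z$, which are what turn $1+3$ into $2$. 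One small slip: $\partial_\sigma Z\approx(\partial_\sigma v)'(X_{t_{k-1}})\,g(X_{t_{k-1}})\,Z$, so your intermediate expression is missing a factor $g$, although your final expression is right.
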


\begin{proof}
We first compute the partial derivatives of $l_N(\theta,\sigma)$. From the proof of Lemma \ref{lemma:C alpha0 limit} and the form of $U_i$, we immediately have
    \begin{eqnarray*}
&&\frac{\partial_{\theta_i}l)_B(N,\theta,\sigma)}{\sqrt{Nh}}=\frac{1}{\sqrt{Nh}}\sum_{k=1}^N Z_{t_k}(\theta_0,\sigma)\frac{\partial_{\theta_i}f(X_{t_{k-1}};\theta)}{g(X_{t_{k-1}};\sigma)}, \\&&       \frac{\partial_{\sigma_i}l(N,\theta,\sigma)}{\sqrt{N}}=\frac{1}{\sqrt{N}}\sum_{k=1}^N\left[\frac{\partial_{\sigma_i}g(X_{t_{k}};\sigma)}{g(X_{t_k};\sigma)}+\frac{1}{h}Z_{t_k}(\alpha)\left(\partial_{\sigma_i}v(X_{t_k};\sigma)-\partial_{\sigma_i}v(X_{t_k};\sigma)+\mathcal{R}(h;X_{t_{k-1}}) \right)\right].
    \end{eqnarray*}
    We define
    \begin{align}
        A^{(i)}_{N,k}(\alpha_0)=&\frac{1}{\sqrt{Nh}}Z_{t_k}(\alpha_0)\frac{\partial_{\theta_i}f(X_{t_{k-1}};\theta_0)}{g(X_{t_{k-1}};\sigma_0)},\label{eqn:A_ink}\\
        B^{(i)}_{N,k}(\alpha_0)=&\frac{1}{\sqrt{N}}\left[\frac{\partial_{\sigma_i}g(X_{t_{k}};\sigma_0)}{g(X_{t_k};\sigma_0)}+\frac{1}{h}Z_{t_k}(\alpha_0)\left(\partial_{\sigma_i}v(X_{t_k};\sigma_0)-\partial_{\sigma_i}v(X_{t_k};\sigma_0)+\mathcal{R}(h;X_{t_{k-1}}) \right)\right].\label{eqn:B_ink}
    \end{align}
    Then,  $\mathbf{\lambda}_N=\sum_{k=1}^N\left[A^{(1)}_{N,k}(\alpha_0),\dots,A^{(d_1)}_{N,k}(\alpha_0),B^{(1)}_{N,k}(\alpha_0),\dots,B^{(d_2)}_{N,k}(\alpha_0)\right]^T$. To establish the two conditions of Lemma \ref{prop:triangular asymptotic normal}, by Remarks S1 and S2 in \cite{pilipovic_parameter_2024}, it is sufficient to prove the following lemma.
\end{proof}

\begin{lemma}\label{lemma:normality conditions}
    Let $A^{(i)}_{N,k}(\alpha_0)$ and $B^{(i)}_{N,k}(\alpha_0)$ be defined as \eqref{eqn:A_ink} and \eqref{eqn:B_ink}, respectively. Then, the followings hold for all $i,i_1,i_2=1,2,\dots,d_1$ and $j,j_1,j_2=1,2,\dots,d_2$, if $h\to 0,\;Nh\to \infty,\; Nh^2\to 0$.
    \begin{itemize}
        \item[(i)]$\mathbb{E}_{\alpha_0}\left[\sup _{1 \leq k \leq N}\left|A_{N, k}^{(i)}\left(\alpha_0\right)\right|\right] \longrightarrow 0$, and $\mathbb{E}_{\alpha_0}\left[\sup _{1 \leq k \leq N}\left|B_{N, k}^{(j)}\left(\alpha_0\right)\right|\right] \longrightarrow 0 $;
        \item[(ii)]$\sum_{k=1}^N \mathbb{E}_{\alpha_0}\left[A_{N, k}^{(i)}\left(\alpha_0\right) \mid X_{t_{k-1}}\right] \xrightarrow{\mathbb{P}_{\alpha_0}} 0$, and $\sum_{k=1}^N \mathbb{E}_{\alpha_0}\left[B_{N, k}^{(j)}\left(\alpha_0\right) \mid X_{t_{k-1}}\right] \xrightarrow{\mathbb{P}_{\alpha_0}} 0 $;
        \item[(iii)]$\sum_{k=1}^N \mathbb{E}_{\alpha_0}\left[A_{N, k}^{\left(i_1\right)}\left(\alpha_0\right) \mid X_{t_{k-1}}\right] \mathbb{E}_{\alpha_0}\left[A_{N, k}^{\left(i_2\right)}\left(\alpha_0\right) \mid X_{t_{k-1}}\right] \xrightarrow{\mathbb{P}_{\alpha_0}} 0$;
        \item[(iv)] $\sum_{k=1}^N \mathbb{E}_{\alpha_0}\left[B_{N, k}^{\left(j_1\right)}\left(\alpha_0\right) \mid X_{t_{k-1}}\right] \mathbb{E}_{\alpha_0}\left[B_{N, k}^{\left(j_2\right)}\left(\alpha_0\right) \mid X_{t_{k-1}}\right] \xrightarrow{\mathbb{P}_{\alpha_0}} 0$;
        \item[(v)]$\sum_{k=1}^N \mathbb{E}_{\alpha_0}\left[A_{N, k}^{\left(i_1\right)}\left(\alpha_0\right) \mid X_{t_{k-1}}\right] \mathbb{E}_{\alpha_0}\left[B_{N, k}^{\left(j_2\right)}\left(\alpha_0\right) \mid X_{t_{k-1}}\right] \xrightarrow{\mathbb{P}_{\alpha_0}} 0$;
        \item[(vi)]$\sum_{k=1}^N \mathbb{E}_{\alpha_0}\left[A_{N, k}^{\left(i_1\right)}\left(\alpha_0\right) A_{N, k}^{\left(i_2\right)}\left(\alpha_0\right) \mid X_{t_{k-1}}\right] \xrightarrow{\mathbb{P}_{\alpha_0}}\left[\mathbf{C}_{\theta}(\alpha_0)\right]_{i_1i_2}$;
        \item[(vii)] $\sum_{k=1}^N \mathbb{E}_{\alpha_0}\left[B_{N, k}^{\left(j_1\right)}\left(\alpha_0\right) B_{N, k}^{\left(j_2\right)}\left(\alpha_0\right) \mid X_{t_{k-1}}\right] \xrightarrow{\mathbb{P}_{\alpha_0}}\left[\mathbf{C}_{\sigma}(\alpha_0)\right]_{j_1j_2}$;
        \item[(viii)] $\sum_{k=1}^N \mathbb{E}_{\alpha_0}\left[A_{N, k}^{\left(i\right)}\left(\alpha_0\right) B_{N, k}^{\left(j\right)}\left(\alpha_0\right) \mid X_{t_{k-1}}\right] \xrightarrow{\mathbb{P}_{\alpha_0}}0$;
        \item[(ix)] $\sum_{k=1}^N \mathbb{E}_{\alpha_0}\left[\left(A_{N, k}^{\left(i_1\right)}\left(\alpha_0\right) A_{N, k}^{\left(i_2\right)}\left(\alpha_0\right)\right)^2 \mid X_{t_{k-1}}\right] \xrightarrow{\mathbb{P}_{\alpha_0}} 0$;
        \item[(x)] $\sum_{k=1}^N \mathbb{E}_{\alpha_0}\left[\left(B_{N, k}^{\left(j_1\right)}\left(\alpha_0\right) B_{N, k}^{\left(j_2\right)}\left(\alpha_0\right)\right)^2 \mid X_{t_{k-1}}\right] \xrightarrow{\mathbb{P}_{\alpha_0}} 0$;
        \item[(xi)]  $\sum_{k=1}^N \mathbb{E}_{\alpha_0}\left[\left(A_{N, k}^{\left(i\right)}\left(\alpha_0\right) B_{N, k}^{\left(j\right)}\left(\alpha_0\right)\right)^2 \mid X_{t_{k-1}}\right] \xrightarrow{\mathbb{P}_{\alpha_0}} 0$.
    \end{itemize}
\end{lemma}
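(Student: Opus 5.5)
The plan is to reduce every item to conditional moment computations at $\alpha=\alpha_0$. These come from Lemma \ref{lemma:Z moment bounds} together with Lemma \ref{lemma:expectation expansion}, and averages of the form $\frac1N\sum_k u(X_{t_{k-1}})$ are then handled by Lemma \ref{lemma:Kessler convergence}. Each conditional moment has the form $h^a c(X_{t_{k-1}})+\mathcal{R}(h^{a+1};X_{t_{k-1}})$, where the remainder is of polynomial growth. By Assumption \ref{Assumption3}(i), $\sup_t\mathbb{E}|X_t|^p<\infty$, so $\frac1N\sum_k\mathcal{R}(h^{b};X_{t_{k-1}})=O_{\mathbb{P}}(h^b)$.

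At $\sigma=\sigma_0$, Lemma \ref{lemma:Z moment bounds}(i) gives $\mathbb{E}[Z_{t_k}(\alpha_0)\mid X_{t_{k-1}}]=\mathcal{R}(h^2)$, since the factor $1-g(x;\sigma_0)^2/g(x;\sigma)^2$ vanishes. Part (ii) gives $\mathbb{E}[Z_{t_k}u(X_{t_k})\mid X_{t_{k-1}}]=h\,g(x;\sigma_0)u'(x)+\mathcal{R}(h^2)$, and part (iii) gives $\mathbb{E}[Z^2_{t_k}\mid\cdot]=h+\mathcal{R}(h^2)$.

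\textbf{Items (ii)--(v).} For $A$ we get $\sum_k\mathbb{E}[A^{(i)}_{N,k}\mid\cdot]=\frac{1}{\sqrt{Nh}}\sum_k\mathcal{R}(h^2)=O_{\mathbb{P}}(\sqrt{Nh^3})\to0$, using $Nh^2\to0$. The products in (iii) are of order $N\cdot h^4/(Nh)=h^3$.

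For $B$, I first rewrite the bracket as $\partial_{\sigma_j}v(X_{t_k})-\partial_{\sigma_j}v(X_{t_{k-1}})+\mathcal{R}(h)$, which is what the definition intends. I then expand $\mathbb{E}[\partial_\sigma g(X_{t_k})/g(X_{t_k})\mid\cdot]$ via Lemma \ref{lemma:expectation expansion}, and $\mathbb{E}[Z_{t_k}\,\partial_\sigma v(X_{t_k})\mid\cdot]$ via part (ii) with $u=\partial_{\sigma_j}v$. By \eqref{eqn:v sigma derivatives}, $u'=-\partial_\sigma g/g^2$, so $g\,u'=-\partial_\sigma g/g$. This cancels the leading term $\partial_\sigma g/g$ at order $h^0$ after division by $h$. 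The conditional mean of $B$ is therefore $N^{-1/2}\mathcal{R}(h)$, and its sum is $O_{\mathbb{P}}(\sqrt{N}h)=O(\sqrt{Nh^2})\to0$. Items (iv) and (v) follow in the same way, being of order $Nh^2/N$ and $\sqrt{Nh^3}\sqrt{h}$ respectively.

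\textbf{Items (vi)--(viii).} For (vi), part (iii) gives $\frac{1}{Nh}\sum_k h\,\partial_{\theta_{i_1}}f\,\partial_{\theta_{i_2}}f/g^2+\mathcal{R}$, which converges to $[\mathbf{C}_\theta]_{i_1i_2}$. For (vii), the term $h^{-2}Z^2\gamma_1\gamma_2$ dominates. Using the product formula \eqref{eqn:product rule} twice, or equivalently a fourth-order expansion with Gaussian-like moments $\mathbb{E}[Z^2\gamma_1\gamma_2]=h^2(3\,\text{or}\,1)\cdots$, I expect the cross terms with $\partial_\sigma g/g$ to combine with the squared term and give $2\,\partial_{\sigma_{j_1}}g\,\partial_{\sigma_{j_2}}g/g^2$. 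This is the delicate bookkeeping step.

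Concretely, I would write $\Delta v:=Z_{t_k}$, which is approximately $g(x)\,\Delta W$-like, and $\gamma\approx u'(x)\,g\,\Delta W$-like. Then $\mathbb{E}[Z^2\gamma_1\gamma_2]=3h^2 g^2u_1'u_2'+\mathcal{R}(h^3)$, while the cross terms $\frac{\partial g}{g}\cdot\frac1h\mathbb{E}[Z\gamma]$ contribute $-2$ copies and the squared term contributes $+1$, for a total of $3-2+1=2$. I expect this to be the main obstacle: Lemma \ref{lemma:Z moment bounds} only covers moments of order at most 2 in $Z$, so I need a fourth-order analogue. I would obtain it by applying Lemma \ref{lemma:expectation expansion} to the polynomial $(v(y)-v(x))^2(\partial v(y)-\partial v(x))^2$ in $y$, which vanishes to fourth order at $y=x$, so that $\mathcal{L}^2$ picks up the $3g^4(\cdot)$ term. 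The $O(h)$ mismatch between $v(\varphi_h)$ and $v(x)$ only adds $\mathcal{R}(h^3)$. Item (viii) is then $\frac{1}{\sqrt h}\mathbb{E}[Z\cdot B\text{-terms}]$, whose leading odd moments cancel in the same way, leaving $O(\sqrt h)\to0$.

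\textbf{Items (ix)--(xi) and (i).} For (ix)--(xi), I bound the fourth-order conditional moments crudely. From Lemma \ref{lemma:X one step bound} and the mean value theorem, $\mathbb{E}[|Z_{t_k}|^p\mid\cdot]\le C h^{p/2}(1+|X_{t_{k-1}}|)^{C}$, and likewise for $\gamma$. This gives $\sum_k\mathbb{E}[A^4\mid\cdot]\lesssim N\,h^2/(N^2h^2)=1/N\to0$, and $\sum_k\mathbb{E}[B^4\mid\cdot]\lesssim N/N^2\to0$. Finally, (i) follows from $\mathbb{E}\sup_k|A_{N,k}|\le(\sum_k\mathbb{E}A_{N,k}^4)^{1/4}\to0$, and similarly for $B$, by the same moment bounds.
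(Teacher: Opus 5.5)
Your proposal is correct. For (ii)--(vi) and (viii) it is the paper's argument: conditional moments at $\alpha_0$ from Lemma~\ref{lemma:Z moment bounds} and Lemma~\ref{lemma:expectation expansion}, the identity $g\,\partial_{\sigma_j x}v=-\partial_{\sigma_j}g/g$ from \eqref{eqn:v sigma derivatives} to cancel the $O(1)$ part of $\sqrt{N}\,\mathbb{E}_{\alpha_0}[B^{(j)}_{N,k}\mid X_{t_{k-1}}]$, and ergodic averages via Lemma~\ref{lemma:Kessler convergence}. You depart from the paper in two places, both to your advantage.

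For (vii), the paper reduces the sum to $\frac{1}{Nh^2}\sum_k\mathbb{E}_{\alpha_0}[Z_{t_k}^2\gamma_1\gamma_2\mid X_{t_{k-1}}]$ plus $\frac1N\sum_k\partial_{\sigma_{j_1}}g\,\partial_{\sigma_{j_2}}g/g^2$ and then asserts the limit. It never evaluates the quartic term or the cross terms. Your fourth-order generator expansion gives $\mathbb{E}_{\alpha_0}[Z_{t_k}^2\gamma_1\gamma_2\mid X_{t_{k-1}}]=3h^2\,\partial_{\sigma_{j_1}}g\,\partial_{\sigma_{j_2}}g/g^2+\mathcal{R}(h^3)$, and with the count $3-2+1=2$ this is exactly the computation that yields $[\mathbf{C}_\sigma(\alpha_0)]_{j_1j_2}$. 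One small correction: the heuristic should read $Z_{t_k}\approx\Delta W$, not $g\,\Delta W$, since $v'=1/g$. The formula you then write is the right one.

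For (i), the paper only bounds the \emph{mean} of $A^{(i)}_{N,k}$ by $\mathcal{R}(h^{3/2}/\sqrt{N})$. That does not control $\mathbb{E}_{\alpha_0}\sup_k|A^{(i)}_{N,k}|$, because each $|A^{(i)}_{N,k}|$ is typically of size $N^{-1/2}$. The paper also dismisses (ix)--(xi) as ``the same logic''. Your argument actually proves (i) and (ix)--(xi): you combine $L^p$ bounds on $Z_{t_k}$ and $\gamma$ with $\mathbb{E}\sup_k|A^{(i)}_{N,k}|\le(\sum_k\mathbb{E}|A^{(i)}_{N,k}|^4)^{1/4}=O(N^{-1/4})$ and Assumption~\ref{Assumption3}(i).

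The price is in transferring Lemma~\ref{lemma:X one step bound} to $Z_{t_k}$ and $\gamma$ by the mean value theorem. That step needs polynomial growth of $1/g$ and $\partial_\sigma g/g^2$, which is not among the stated assumptions; it fails near $0$ for the CIR. You can either add it as a hypothesis, or obtain these moments from generator expansions as you do in (vii). The latter keeps you within the blanket smoothness hypothesis of Lemma~\ref{lemma:Z moment bounds}. For the eighth-order terms in (x) and in (i) for $B$, this route only gives $\mathcal{R}(h^3)$, which still suffices since $Nh\to\infty$.
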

\begin{proof}
Throughout this proof, to shorten the notation, we omit $x$ in $\mathcal{R}(h;x)$. 
    We start with (i). For $A^{(i)}_{N,k}$, since by Lemma \ref{eqn:Zu expectation} the expectation of $Z_{t_k}(\alpha_0)$ is of order $h^2$, then the expectations of $A^{(i)}_{N,k}(\alpha_0)$ are bounded by $\mathcal{R}(\frac{h^{3/2}}{\sqrt{N}})$, which converge to $0$. The reasoning for $B^{(i)}_{N,k}$ is similar: since we assume $\frac{\partial_{\sigma_i} g\left(X_{t_k} ; \sigma_0\right)}{g\left(X_{t_k} ; \sigma_0\right)}$ is bounded, all terms in $B^{(i)}_{N,k}(\alpha_0)$ are either $\mathcal{R}(\frac{1}{\sqrt{N}})$ or $\mathcal{R}(\sqrt{\frac{h}{N}})$, which converge to $0$.
    
    For (ii), we first calculate the expectation of $A^{(i)}_{N,k}(\alpha_0)$ and $ B^{(i)}_{N,k}(\alpha_0)$. For $A^{(i)}_{N,k}(\alpha_0)$,\\   $\mathbb{E}_{\alpha_0}\left[\left.A^{(i)}_{N,k}(\alpha_0)\right| X_{t_{k-1}}\right]=\mathcal{R}(h^{3/2}/\sqrt{N})$, so
   \begin{equation*}
        \sum_{k=1}^N \mathbb{E}_{\alpha_0}\left[A_{N, k}^{(i)}\left(\alpha_0\right) \mid X_{t_{k-1}}\right]=\mathcal{R}(\sqrt{Nh^3})\xrightarrow{\mathbb{P}_{\alpha_0}} 0.
    \end{equation*}
    For $B^{(i)}_{N,k}(\alpha_0)$, using \eqref{eqn:v sigma derivatives}, we have 
    \begin{align*}
        &\mathbb{E}_{\alpha_0}\left[\left.B^{(i)}_{N,k}(\alpha_0)\right| X_{t_{k-1}}\right] =\mathbb{E}_{\alpha_0}\left[\left.\frac{1}{\sqrt{N}h}\left(Z_{t_k}(\alpha_0)\partial_{\sigma_i}v(X_{t_k};\sigma_0)\right)+\frac{1}{\sqrt{N}}\frac{\partial_{\sigma_i}g(X_{t_k};\sigma_0)}{g(X_{t_k};\sigma_0)}\right|X_{t_{k-1}}\right]+\mathcal{R}\left(\frac{h}{\sqrt{N}}\right)\\
        =&\frac{1}{\sqrt{N}}\left(g^2(X_{{t_{k-1}}};\sigma_0)\partial_xg(X_{t_{k-1}};\sigma_0)\partial_{\sigma_ix}g(X_{t_{k-1}};\sigma_0)+\frac{\partial_{\sigma_i}g(X_{t_{k-1}};\sigma_0)}{g(X_{t_{k-1}};\sigma_0)}\right)+\mathcal{R}\left(\frac{h}{\sqrt{N}}\right)=\mathcal{R}\left(\frac{h}{\sqrt{N}}\right),
    \end{align*}
    implying
    \begin{equation*}
        \sum_{k=1}^N \mathbb{E}_{\alpha_0}\left[\left.B^{(i)}_{N,k}(\alpha_0)\right| X_{t_{k-1}}\right]=\mathcal{R}(\sqrt{Nh^2})\xrightarrow{\mathbb{P}_{\alpha_0}} 0,
    \end{equation*}
    which concludes (ii). Then (iii)-(v) follow immediately.
    
    For (vi), again from Lemma \ref{lemma:Z moment bounds}, we know that we have 
    \begin{eqnarray*}
        &&\sum_{k=1}^N \mathbb{E}_{\alpha_0}\left[A_{N, k}^{\left(i_1\right)}\left(\alpha_0\right) A_{N, k}^{\left(i_2\right)}\left(\alpha_0\right) \mid X_{t_{k-1}}\right]
        =\frac{1}{Nh}\sum_{k=1}^N\mathbb{E}_{\alpha_0}\left[\left. Z^2_{t_k}(\alpha_0)\frac{\partial_{\theta_i}f(X_{t_{k-1}};\theta_0)\partial_{\theta_j}f(X_{t_{k-1}};\theta_0)}{g^2(X_{t_{k-1}};\sigma_0)}\right|X_{t_{k-1}}\right]\\
        &=&\frac{1}{N}\sum_{k=1}^N\frac{\partial_{\theta_i}f(X_{t_{k-1}};\theta_0)\partial_{\theta_j}f(X_{t_{k-1}};\theta_0)}{g^2(X_{t_{k-1}};\sigma_0)}+\mathcal{R}\left(\frac{h}{N}\right)\xrightarrow{\mathbb{P}_{\alpha_0}}\left[\mathbf{C}_{\theta}(\alpha_0)\right]_{i_1,i_2}.
    \end{eqnarray*}
    Similarly, using \eqref{eqn:product rule} and \eqref{eqn:v sigma derivatives}, we have 
    \begin{align*}
        &\sum_{k=1}^N \mathbb{E}_{\alpha_0}\left[B_{N, k}^{\left(j_1\right)}\left(\alpha_0\right) B_{N, k}^{\left(j_2\right)}\left(\alpha_0\right) \mid X_{t_{k-1}}\right]\\
        =&\sum_{k=1}^N\mathbb{E}_{\alpha_0}\left[\frac{1}{Nh^2}Z^2_{t_k}(\alpha_0)\left(\partial_{\sigma_i}v(X_{t_k};\sigma_0)-\partial_{\sigma_i}v(X_{t_{k-1}};\sigma_0)\right)\left(\partial_{\sigma_j}v(X_{t_k};\sigma_0)-\partial_{\sigma_j}v(X_{t_{k-1}};\sigma_0)\right)\right]\\
        &+\frac{1}{N}\frac{\partial_{\sigma_i}g(X_{t_{k-1}};\sigma_0)\partial_{\sigma_j}g(X_{t_{k-1}};\sigma_0)}{g^2(X_{t_{k-1}})}+\mathcal{R}\left(\frac{h}{N}\right)    \xrightarrow{\mathbb{P}_{\alpha_0}}\left[\mathbf{C}_{\sigma}(\alpha_0)\right]_{j_1,j_2}.
    \end{align*}
    The proof of (viii)-(xi) follows the same logic, with repeated application of Lemma \ref{lemma:Z moment bounds}, \eqref{eqn:product rule} and \eqref{eqn:v sigma derivatives}.
\end{proof}

\section{Numerical results}\label{AppendixC}
\begin{table}
\tiny
    \centering
        \begin{tabular}{ccc}
        \toprule
             & {\bf OU} - $g(x)=\sqrt{2\theta }$ & {\bf IGBM} -$g(x)=\sqrt{2\theta a}x$ \\\midrule
        $(\tilde\mu,\tilde\theta)$& &$\left(\frac{\mu}{1+a},(1+a)\theta\right)$\\ \midrule
        Extra& $A^{\textrm{LT}}=1,A^{\textrm{S}}=e^{\frac{-\theta h}{2}}$&$A^{\textrm{LT}}=(e^{\theta ah}-e^{-\theta h}), A^{\mathrm{S}}=\left(1+e^{-\frac{\tilde\theta h}{2}}+e^{\theta ah-\frac{\tilde\theta h}{2}}-e^{-\theta h}\right)$\\ \midrule
        $X^{[2]}_{t_{k+1}}=\varphi^{[2]}_h(x)$&  $x+\sqrt{2\theta}\xi_k$&$xe^{\sigma \xi_k}$\\ \midrule
        $\mathbb{E}[X^{\bullet}_{t_{k+1}}|X_{t_k}]$&$\mu+e^{-\theta h}(X_{t_k}-\mu)$&$e^{-\theta h}X_{t_k}+\tilde\mu A^\bullet$\\ \midrule
        $\mathbb{E}[X^\bullet_{t}|X_0]$&$e^{-\theta t}X_0+\mu(1-e^{-\theta t})$&$A^{\bullet}\frac{1-e^{-\theta t}}{1-e^{-\theta h}}$\\ \midrule
        $v(x);v^{-1}(x)$&&$\frac{\ln(x)}{\sqrt{2\theta a}},e^{\sqrt{2\theta a}x}$\\ 
        \bottomrule
        \end{tabular}
    \vspace{.5cm}
        \begin{tabular}{ccc}
        \toprule
    & {\bf F diffusion} - $g(x)=\sqrt{2a\theta x(x+1)}$ & {\bf Wright-Fisher} $g(x)=\sqrt{2a\theta x(x-1)}, a<0$ \\ \midrule
    $(\tilde\mu,\tilde\theta)$ & $(\frac{2\mu-a}{2(1+a)},(1+a)\theta)$& $(\frac{2\mu+a}{2(1+a)},(1+a)\theta)$\\ \midrule
    Extra&$l(x,h)=\sqrt{1+\tilde{\mu}+e^{-\tilde{\theta}h}\left(x-\tilde{\mu}\right)}-\sqrt{\tilde{\mu}+e^{-\tilde{\theta}h}\left(x-\tilde{\mu}\right)}$& $A^{\textrm{LT}}=\tilde{\mu}\left(e^{a\theta h}-e^{-\theta h}\right)+\frac{1}{2}\left(1-e^{a\theta h}\right)$,\\
&&{$A^{\textrm{S}}=\tilde{\mu}\left[1-e^{-\theta h}-e^{-\frac{\tilde{\theta}h}{2}}(1-e^{a\theta h})\right]+\frac{e^{-\frac{\tilde{\theta}h}{2}}}{2}
\left(1-e^{a\theta h}\right)$}\\ \midrule
$\varphi^{[2]}_{h}(x)$& $\sinh^2\left(\frac{\sqrt{2a\theta}}{2}{\xi_k}-\log(\sqrt{1+x}-\sqrt{x})\right)$&$\sin^2\left(\frac{\sqrt{-2a\theta}}{2}\xi_k+\arcsin\sqrt{x}\right)$\\ \midrule
$\mathbb{E}[X^{\textrm{LT}}_{t_{k+1}}|X_{t_k}]$& $\frac{1}{2}e^{a\theta h}\left(\frac{1}{l^2(X_{t_k},h)}+l^2(X_{t_k},h)\right)-\frac{1}{2}$&$e^{-\theta h}X_{t_k}+A^{\textrm{LT}}$\\ \midrule
$\mathbb{E}[X^{\textrm{S}}_{t_{k+1}}|X_{t_k}]$& $\tilde{\mu}+e^{-\frac{\tilde{\theta}h}{2}}\left(\frac{1}{2}e^{a\theta h}\left(\frac{1}{l^2(X_{t_k},\frac{h}{2})}+l^2(X_{t_k},\frac{h}{2})\right)-\frac{1}{2}-\tilde{\mu}\right)$&$e^{-\theta h}X_{t_k}+ A^{\textrm{S}}$\\ \midrule
$\mathbb{E}[X^{\bullet}_{t}|X_{0}]$&& $e^{-\theta t}X_0+A^{\bullet}\frac{1-e^{-\theta t}}{1-e^{-\theta h}}$ \\ \midrule
$v(x); v^{-1}(x)$&$\frac{-2}{\sqrt{2\theta a}}\ln(\sqrt{1+x}-\sqrt{x});\quad \sinh^2(\frac{\sqrt{2\theta a}x}{2})$&$\frac{2\arcsin (\sqrt{x})}{\sqrt{-2a\theta}};\quad \sin^2\left(\frac{\sqrt{-2a\theta}}{2}x\right)$\\ 
        \bottomrule
        \end{tabular}
    \vspace{.5cm}
        \begin{tabular}{cccc}
        \toprule
    & {\bf Ginzburg-Landau} - $g(x)= \sigma x$& {\bf Verhulst} - $g(x)=\sigma x$\\ \midrule
$X^{[1]}_{t_{k+1}}=\varphi^{[1]}_h(x)$&$\frac{x}{\sqrt{2x^2\lambda h+1}}$&$\frac{\eta e^{\eta h} x}{\eta-\lambda x+\lambda e^{\eta h} x}$\\ \midrule
$X^{[2]}_{t_{k+1}}=\varphi_h^{[2]}(x)$&$xe^{\sigma \xi_k}$&$xe^{\sigma \xi_k}$\\ \midrule
    $v(x); v^{-1}(x)$&$\frac{\ln(x)}{\sigma}; \quad e^{\sigma x}$& $\frac{\ln(x)}{\sigma};\quad e^{\sigma x}$\\ \midrule  
    $\varphi^{-1}_h(x)$&$\sqrt{\frac{x^2}{1-2\lambda hx^2}}$& $\frac{\eta x}{\eta e^{\eta h}-\lambda x\left(e^{\eta h}-1\right)}$\\ \midrule
    $\frac{d\varphi^{-1}_h(x)}{dx}$&$\frac{1}{\left(1-2\lambda h x^2\right)^{3/2}}$&$\frac{\eta^2 e^{\eta h}}{\left(\eta e^{\eta h}-\lambda x\left(e^{\eta h}-1\right)\right)^2}$\\ \midrule
    \bottomrule
    \end{tabular}    \caption{
 Characterisation of the derived LT  \eqref{eqn:LT subsystem} and S \eqref{eqn:Strang subsystem} splitting schemes for the remaining four members of the Pearson diffusion class \eqref{eqn:Pearson diffusion} and additional considered models. Here, $\varphi^{[1]}_h$ and $\varphi^{[2]}_h$ represent the one-step solutions to the subequations \eqref{eqn:ODE1}-\eqref{eqn:SDE2}. The one-step conditional means are also reported, together with the whole conditional mean, when explicitly derivable. We refer to \cite{forman_pearson_2008} for the invariant distributions and additional properties of each Pearson diffusion process. Here, $\xi_k\sim \mathcal{N}(0,h)$, $k=0,\ldots, N-1$ are iid normal increments.}
    \label{TablePearson2}
\end{table}
Here, we first explicitly derive the LT and S splitting schemes for each of the considered models, investigating their properties and theoretically computing their one-step conditional means. Then, we numerically validate some of the theoretical properties derived, comparing the proposed splitting schemes with competing methods in terms of mean-square convergence order, and preservation of the state space.

\subsection{Splitting schemes: Derivation and properties for the Pearson diffusion models}
\label{PearsonAppendix}
As we previously analysed the CIR and the Student diffusion in Section \ref{resultsPearson}, we consider here the remaining members of the Pearson diffusion class, listing them as they are presented in \cite{forman_pearson_2008}. As written in Section \ref{resultsPearson}, the one-step solution of the ODE subequation \eqref{eqn:ODE1} is given by \eqref{ODEsolPearson}, with $\tilde\theta=\theta(a+1),\tilde\mu=(\mu-b/2)/(a+1)$, while the solution to the SDE subequation \eqref{eqn:SDE2}, depending on the specific $g(x)$, is reported in Table \ref{TablePearson2}, together with the one-step and global conditional means, when explicitly available. Composing these solutions lead to the LT and S splitting schemes below.

\subsubsection{Ornstein-Uhlenbeck (OU) process - $g(x)=\sqrt{2\theta }$} 
The OU transition density is well known, given by
\begin{equation*}
    X_t|X_0\sim \mathcal{N}\left(e ^{-\theta t}X_0+\mu\left(1-e ^{-\theta t}\right),1-e^{-2\theta t}\right).
\end{equation*}
 While exact simulation is possible, we still derive the LT and S splitting schemes
\begin{align*}
    X^{\bullet}_{t_{k+1}}=&\mu+e^{-\theta h}\left(X_{t_k}-\mu\right)+\sqrt{2\theta}
    A^{\bullet}\xi_k,\qquad \bullet=\textrm{LT, S},
\end{align*}
 to compute their properties, 
with $A^{\textrm{LT}}=1, A^{\textrm{S}}=e^{-\theta h/2}$. 
For $t=t_{k},\; k\leq N$, by back i\-te\-ra\-ting, we obtain
\[
    X^{\bullet}_{t}=e^{-\theta t}X_0+\mu\left(1-e^{-\theta t}\right)+\sqrt{2\theta}\sum_{j=0}^{k-1}e^{-\theta jh}A^{\bullet}\xi_j,
\]
leading to
\[
    X^{\bullet{}}_t|X_0\sim \mathcal{N}\left(e^{-\theta t}X_0+\mu\left(1-e^{-\theta t}\right),\frac{2\theta h \left(1-e^{-2\theta t}\right)}{1-e^{-2\theta h}}A^{\bullet}\right).\label{eqn:OU LT dist}
\]
By comparing them with the true transition density, we can see that both splitting schemes are normally distributed with unbiased mean, and biased variance, with biases, lower for S, vanishing as $h\to 0$.

\subsubsection{Inhomogeneous Geometric Brownian Motion (IGBM) -  $g(x)=\sqrt{2\theta ax^2}$}
This process, defined for $a>0$, was considered in \cite{buckwar_splitting_2022} with a different spitting decomposition. Here, the LT and S schemes are given by 
\[
X^{\textrm{LT}}_{t_{k+1}}=\left(\tilde\mu + e^{-\tilde\theta h}(X_{t_k}-\tilde\mu)\right)e^{\sigma\xi_k},\qquad X^{\textrm{S}}_{t_{k+1}}=\tilde\mu+e^{-\tilde\theta h/2}\left(e^{\sigma \xi_k}\left(\tilde\mu+e^{-\tilde\theta h/2}(X_{t_k}-\tilde\mu)-\tilde\mu\right)\right),
\]
with $\tilde\mu=\mu/(1+a)$, 
with one-step condition mean and conditional mean reported therein. Both schemes are positive when $\mu>0$ (i.e. 0 is an entrance boundary), so the positivity of the process is preserved.

\subsubsection{$F$ diffusion  - $g(x)=\sqrt{2\theta ax(x+1)}$}
The name of this process comes from the fact that when $a>0$ and $\mu\geq a$, there exists a unique ergodic solution on the positive half line, with the invariant distribution being a scaled $F-$distribution with $2\mu/a$ and $ 2(1+1/a)$ degrees of freedom and scale parameter $\mu/(1+a)$. If $\mu\geq 1$, the 0 boundary is of entrance type. Here, the splitting schemes are given by 
\begin{align}
X^{\textrm{LT}}_{t_{k+1}}=&\sinh^2\left(\frac{\sqrt{2a\theta}}{2}{\xi_k}-l(X_{t_k},h)\right),\label{eqn:F LT}\\
    X^{\textrm{S}}_{t_{k+1}}=&\tilde{\mu}+e^{-\tilde{\theta}h}\left(\sinh^2\left(\frac{\sqrt{2a\theta}}{2}{\xi_k}-l(X_{t_k},\frac{h}{2})
    \right)-\tilde{\mu}\right),\label{eqn:F Strang}
\end{align}
where $\tilde{\theta}=(1+a)\theta,\; \tilde{\mu}=(\mu-a/2)/(1+a)$ and 
\begin{equation*}
l(X_{t_k},h)=\sqrt{1+\tilde{\mu}+e^{-\tilde{\theta}h}\left(X_{t_k}-\tilde{\mu}\right)}-\sqrt{\tilde{\mu}+e^{-\tilde{\theta}h}\left(X_{t_k}-\tilde{\mu}\right)}.
\end{equation*}
For the 0 entrance boundary and the ergodic case, i.e. $\mu\geq a$, both $\varphi_h^{[1]}$ and $\varphi_h^{[2]}$ are positive, so the LT and S splitting schemes preserve the state space of the process if $X_{t_k}>0$. If $\mu<a$, the 0 boundary may be obtained and the ODE solution may become  negative, making the LT and S schemes not well defined. In that case, we could choose an adaptive time-step as done for the CIR process. 

For inference, if we denote $Y^\textrm{LT}\sim \mathcal{N}(-l(X_{t_k},h),\theta ah/2)$,  we have
\begin{equation}\label{eqn:F LT transition}
    f^{\textrm{LT}}_{X_{t_{k+1}}|X_{t_k}}(y|x)=\frac{1}{2\sqrt{y(1+y)}}\left(f_{Y^\textrm{LT}}(\operatorname{arcsinh}\sqrt{y})+f_{Y^\textrm{LT}}(\operatorname{arcsinh}(-\sqrt{y}))\right).
\end{equation}
If we now denote $Y^\textrm{S}\sim \mathcal{N}(-l(X_{t_k},h/2),\theta ah/2)$ and $\widetilde{y}=e^{\frac{\tilde{\theta}h}{2}}(y-\tilde{\mu})+\tilde{\mu}$,
we have
\begin{equation}\label{eqn:F ST den}
    f^{\textrm{S}}_{X_{t_{k+1}}|X_{t_k}}(y|x)=\left|\frac{e^{\tilde{\theta }h/2}}{2\sqrt{\widetilde{y}(1+\widetilde{y})}}\right|\left(f_{Y^{\textrm{S}}}(\operatorname{arcsinh}(\widetilde{y}))+f_{Y^{\textrm{S}}}(\operatorname{arcsinh}(-\widetilde{y}))\right).
\end{equation}
In Figure \ref{fig:F inf}, we report the inference results. We simulate the underlying paths using $\mu=10.0,\theta=2.0, a=2.0, T=10.0, h^\textrm{fine}=10^{-4}, \widetilde X_\textrm{fine}=X^{\textrm{LT}}$, fix $\theta$ and infer the unknown parameter $(\mu,a)$. Our splitting estimators again outperform other baseline methods, especially when the observation time step is large. 
\begin{figure}[t]
    \centering
    \includegraphics[width=.8\linewidth]{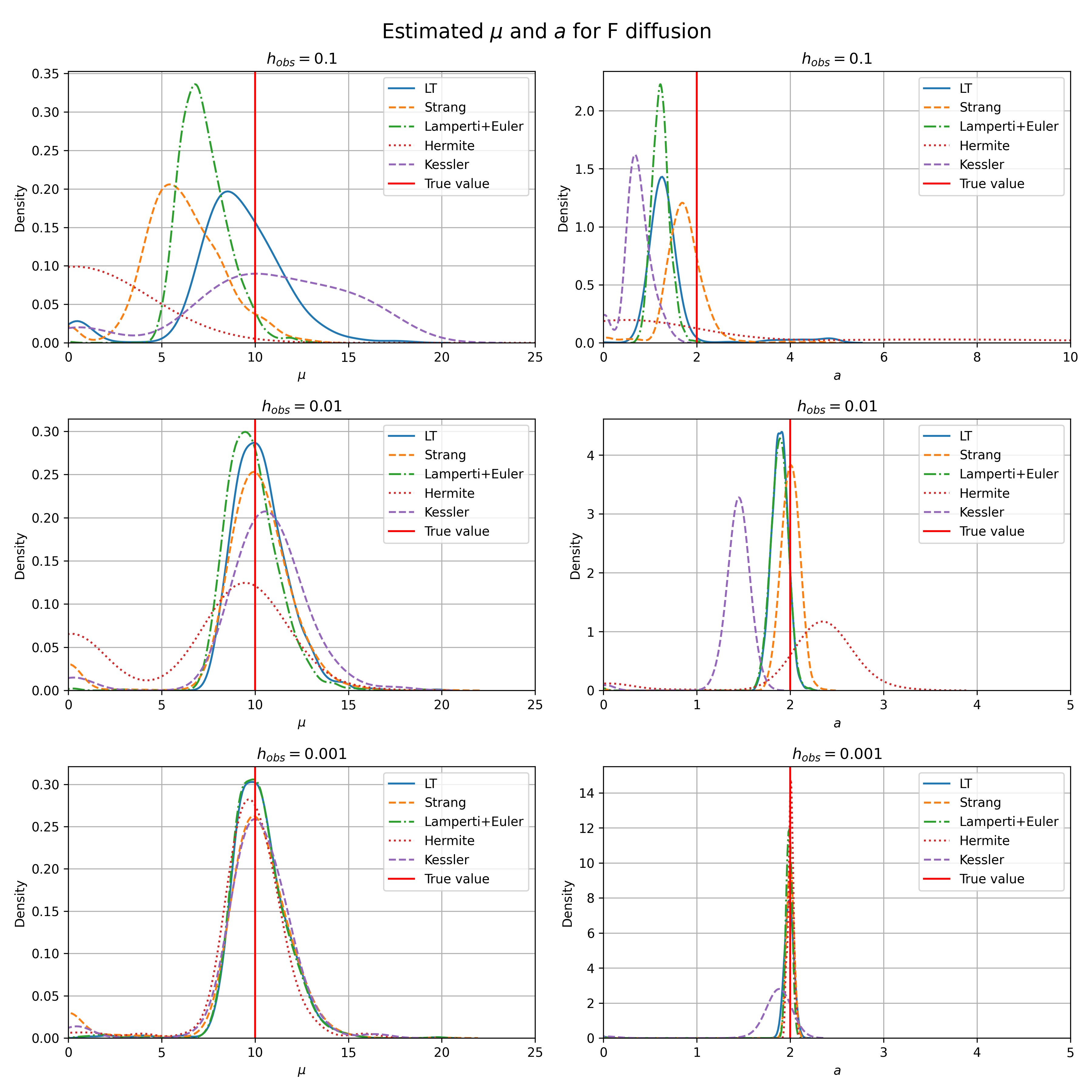}
    \caption{Inference results for the F diffusion: densities of the pseudo MLEs derived under different estimators. We fix $\theta=2.0$, and infer the unknown $\mu$ (left column) and $a$ (right column) using $M=1000$ simulated paths,  generated using $h^{\textrm{fine}}=10^{-4}$ with $\widetilde{X}^{\textrm{fine}}$ being LT, and then subsampled according to the  observation time step $h_{\textrm{obs}}$, here $ 0.1, 0.01, 0.001$, from top to bottom.}
    \label{fig:F inf}
\end{figure}

\subsubsection{Wright-Fisher - $g(x)=\sqrt{2\theta ax(x-1)}, a<0$} The Wright-Fisher diffusion, defined for $a<0$, is also known as  Jacobi diffusion because the eigenfunctions of the associated semigroup are Jacobi polynomials \cite{bingham_diffusion_2010}. Its state space is $[0,1]$. If $\mu>0$ and $\min(\mu,1-\mu)\geq -a$, that is, $\mu\in(0,1), a\in(-1/2,0)$, 
both $0$ and $1$ boundaries are of entrance type, a unique ergodic solution exists on the interval $(0,1)$, and the invariant distribution is the Beta distribution with shape parameter $-\mu/a,(\mu-1)/a$. This ergodic solution still exists for other types of boundaries that are instantaneously reflecting, but we do not discuss Feller's boundary classification here, referring to  \cite{forman_pearson_2008,karlin_second_1981} for details. 
If we denote $\tilde{\mu}=(2\mu+a)/[2(1+a)]$ and $\tilde{\theta}=\theta(1+a)$, the LT and S schemes are given by
\begin{align}
X^{\textrm{LT}}_{t_{k+1}}=&\sin^2\left(\frac{\sqrt{-2a\theta}}{2}{\xi_k}+\arcsin\sqrt{\tilde{\mu}+e^{-\tilde{\theta}h}\left(X_{t_k}-\tilde{\mu}\right)}\right),\label{eqn:WF LT}\\
X^{\textrm{S}}_{t_{k+1}}=&\tilde{\mu}+e^{-\tilde{\theta}\frac{h}{2}}\left(\sin^2\left(\frac{\sqrt{-2a\theta}}{2}{\xi_k}+\arcsin\sqrt{\tilde{\mu}+e^{-\tilde{\theta}\frac{h}{2}}\left(X_{t_k}-\tilde{\mu}\right)}\right)-\tilde{\mu}\right)\label{eqn:WF Strang}.
\end{align}
Also in this case, the LT and S splitting schemes preserve the state space of the process when the boundaries are of entrance type. Indeed, if we start from $X_{t_k}\in (0,1)$ for entrance boundaries $0$ and $1$, both splitting schemes are in $(0,1)$, as  the condition $\min(\mu,1-\mu)\geq -a$ implies $\tilde{\mu} \in(0,1), \tilde{\theta}\geq 0$, guaranteeing that  $X^{[1]}_{t_{k+1}}{\in (0,1)}$, and that $X^{[2]}_{t_{k+1}}$ is well defined and trivially in $(0,1)$. 
 We illustrate the state space  preservation in Figure \ref{fig:WF paths}, where we report $10$ trajectories simulated with the LT and S scheme in $[0,3]$ with step size $h=0.01$. 

\begin{figure}
    \centering
    \includegraphics[scale=0.55]{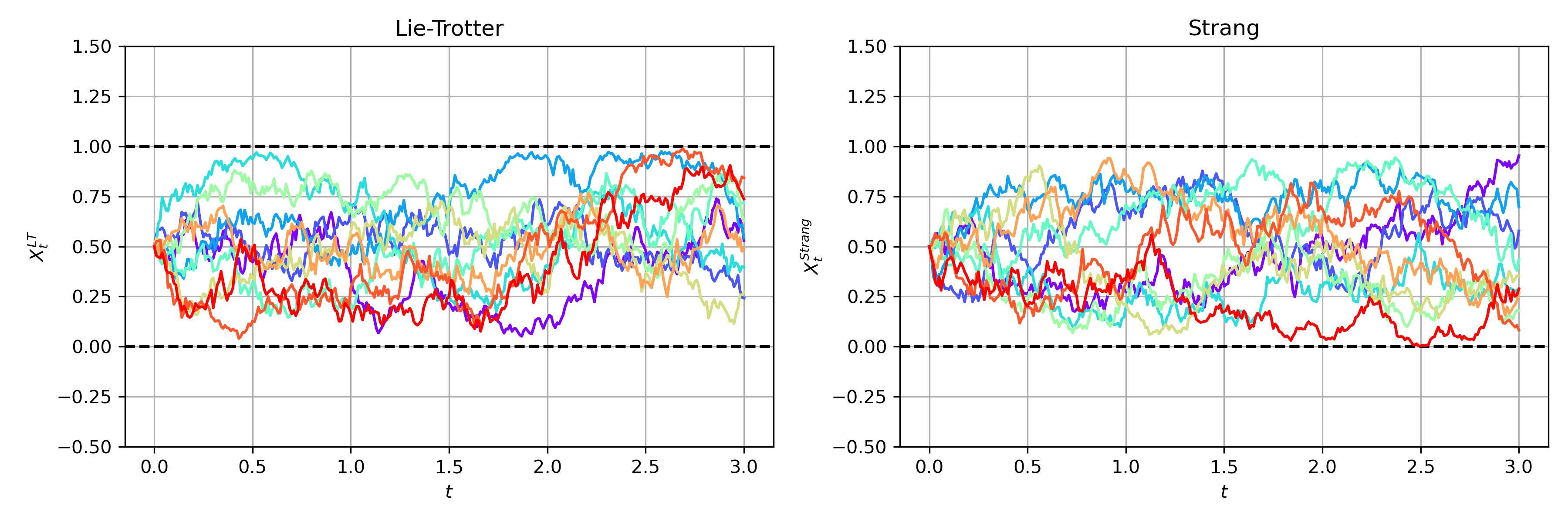}
    \caption{Ten trajectories of the Wright-Fisher diffusion simulated using the LT and S splitting schemes for entrance boundaries, 
    with $\mu=0.5,\theta=1.0,a=-0.3$.}
    \label{fig:WF paths}
\end{figure}
In Figure \ref{fig:WF X_T}, we report the asymptotic distribution of the LT scheme obtained at $T=15$, comparing it to the true beta distribution. We fix $\theta=1.0, h=0.05,T=15.0$, and vary the values of $\mu,a$. In all cases, the distribution from the LT scheme nicely matches that of the true invariant distribution with different shape parameters.   

\begin{figure}[t]
    \centering
    \includegraphics[width=0.7\textwidth]{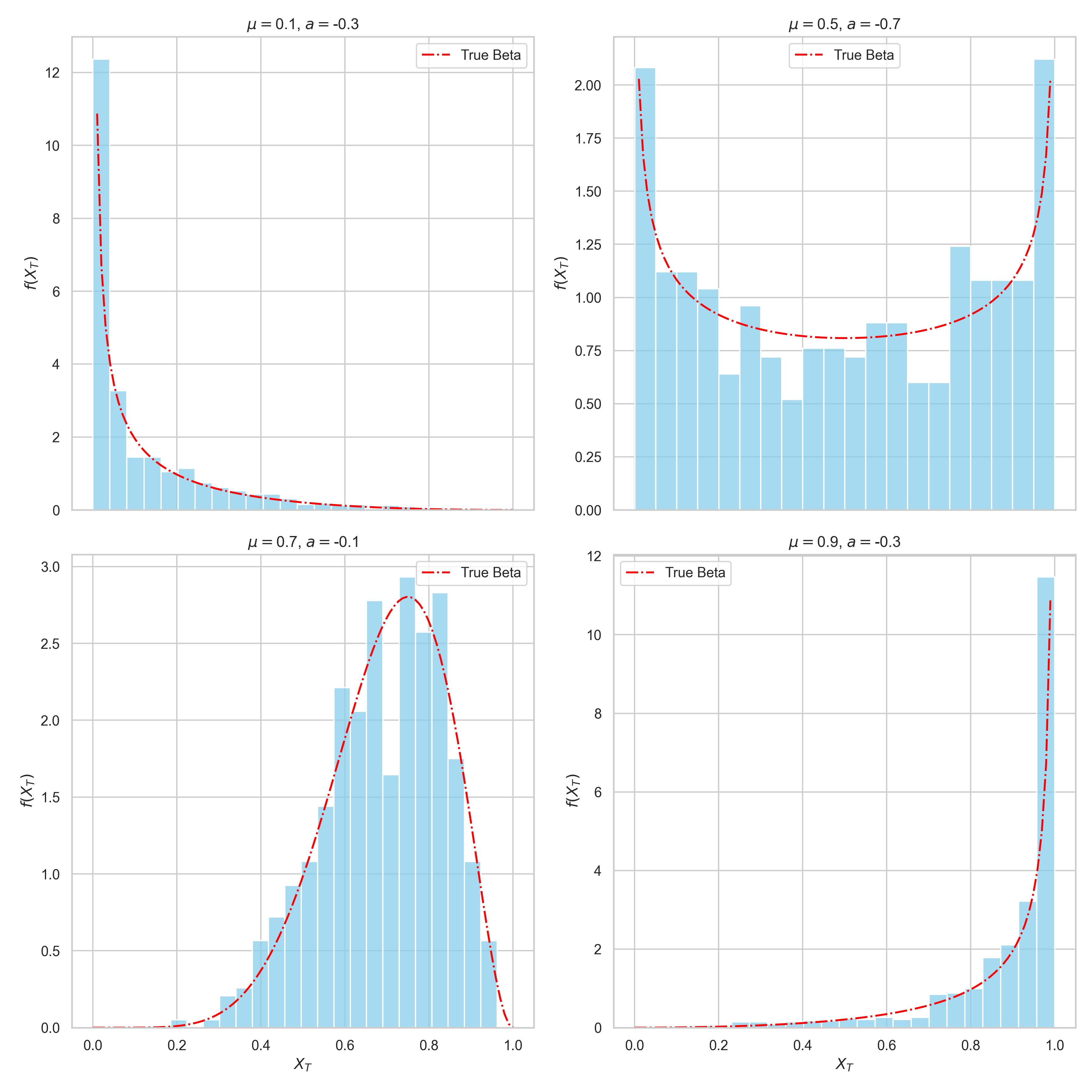}
    \caption{Wright-Fisher diffusion: Empirical distribution of the LT splitting scheme \eqref{eqn:WF LT} at $T=15$, compared with the true invariant beta distribution. We fix $\theta=1.0,h=0.05, M=1000$, and vary the values of $\mu$ and $a$.} 
    \label{fig:WF X_T}
\end{figure}

\begin{figure}
    \centering
    \includegraphics[width=0.8\linewidth]{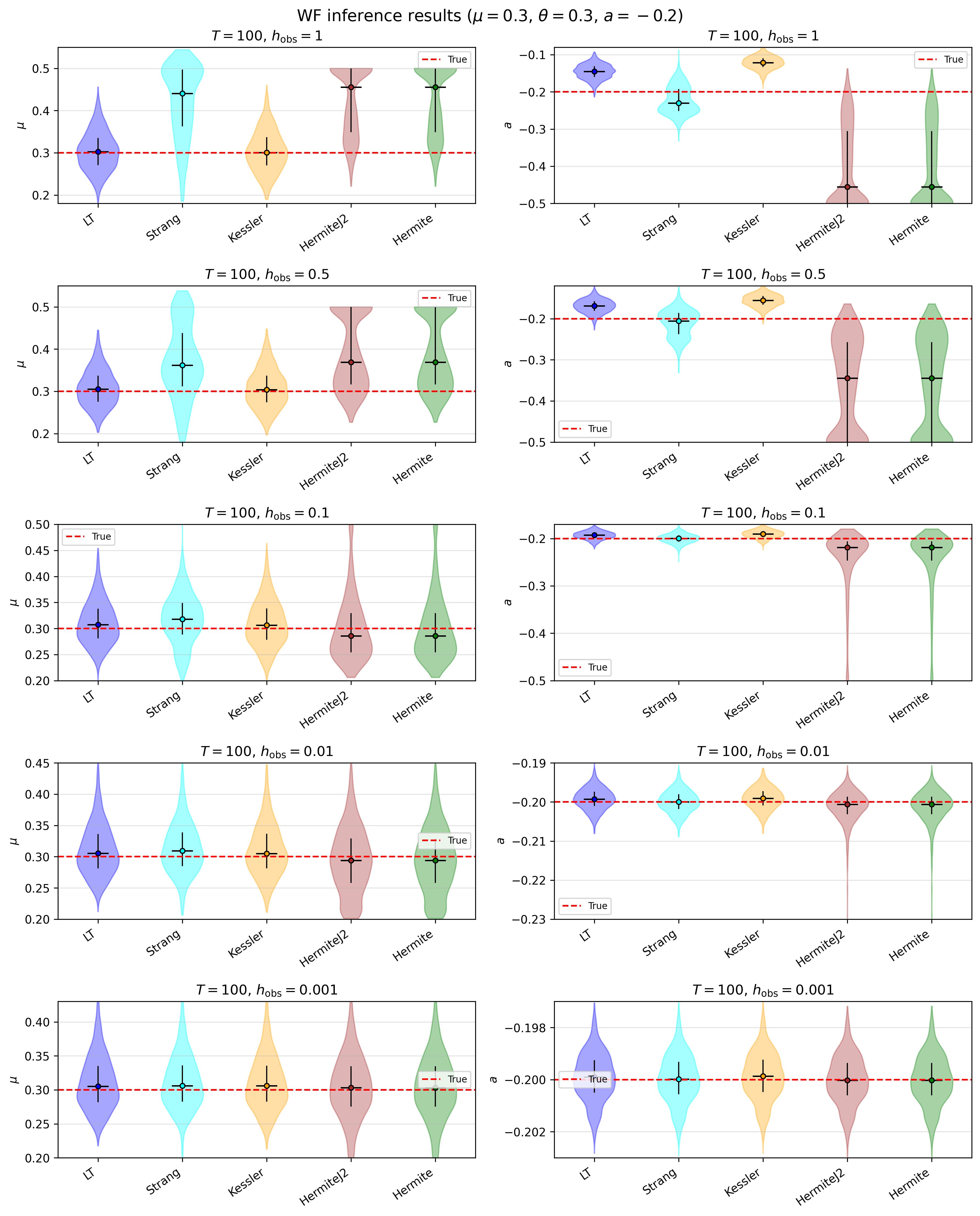}
    \caption{Inference result for the Wright-Fisher model: violin plots of the estimates derived under different estimators. We fix $\theta=0.3$, and infer the unknown $\mu$ (left columns) and $\theta$ (right columns) using $M=1000$ simulated paths,  generated using $h^{\textrm{fine}}=10^{-5}$ with SD, and then subsampled according to the observation time step $h_\textrm{obs}$, here from $1.0$ to $0.001$.}
    \label{fig:WF inf}
\end{figure}
Finally, in Figure \ref{fig:WF inf}, we report the inference results of the different estimators. We fix $\theta=0.3, a=-0.2, T=100.0$, and simulate using $h^{\textrm{fine}}=10^{-5}$ with the SD scheme. We then infer the unknown $\mu,a$ for different time steps $h_\textrm{obs}=1.0,0.5,0.1,0.01,0.001$. In this case, LT is again the most robust estimator across different time steps, together Kessler (with LT having a slight edge on the diffusion parameter). As for the AG model, S struggles with the estimation (of $\mu$ here) for large observation time steps, but still outperforms the two Hermite estimators as they collapse, and start converging only when $h_\textrm{obs}=0.1$.

\subsection{More on Pearson SDEs}

\subsubsection{Mean-square convergence order} Here, we report an
illustration of the mean-square convergence orders of different numerical schemes on the Student diffusion (Figure \ref{fig:T convergence}), F diffusion (Figure \ref{fig:T convergence}), and Wright-Fisher (Figure \ref{fig:WF convergence}), evaluated via the mean-squared error \eqref{eqn:MSE}, with $M=1000, h^\textrm{fine}=2^{-12}, \widetilde X^{\textrm{fine}}=X^\textrm{LT}$ for Student and F diffusions, and $h^\textrm{fine}=2^{-15}$ and the SD for the Wright-Fisher as reference. Except EuM, all other numerical schemes achieve a mean-square error of order one, with the S scheme yielding the smallest error in all models. Note that, by construction, EuM and Milstein do not preserve the $(0,1)$ state space of the Wright-Fisher. For this reason, during the simulation, we manually truncate the value at time $t_{k+1}$ at the boundary, if it exits the state space.

\begin{figure}
    \centering
    \includegraphics[width=0.7\textwidth]{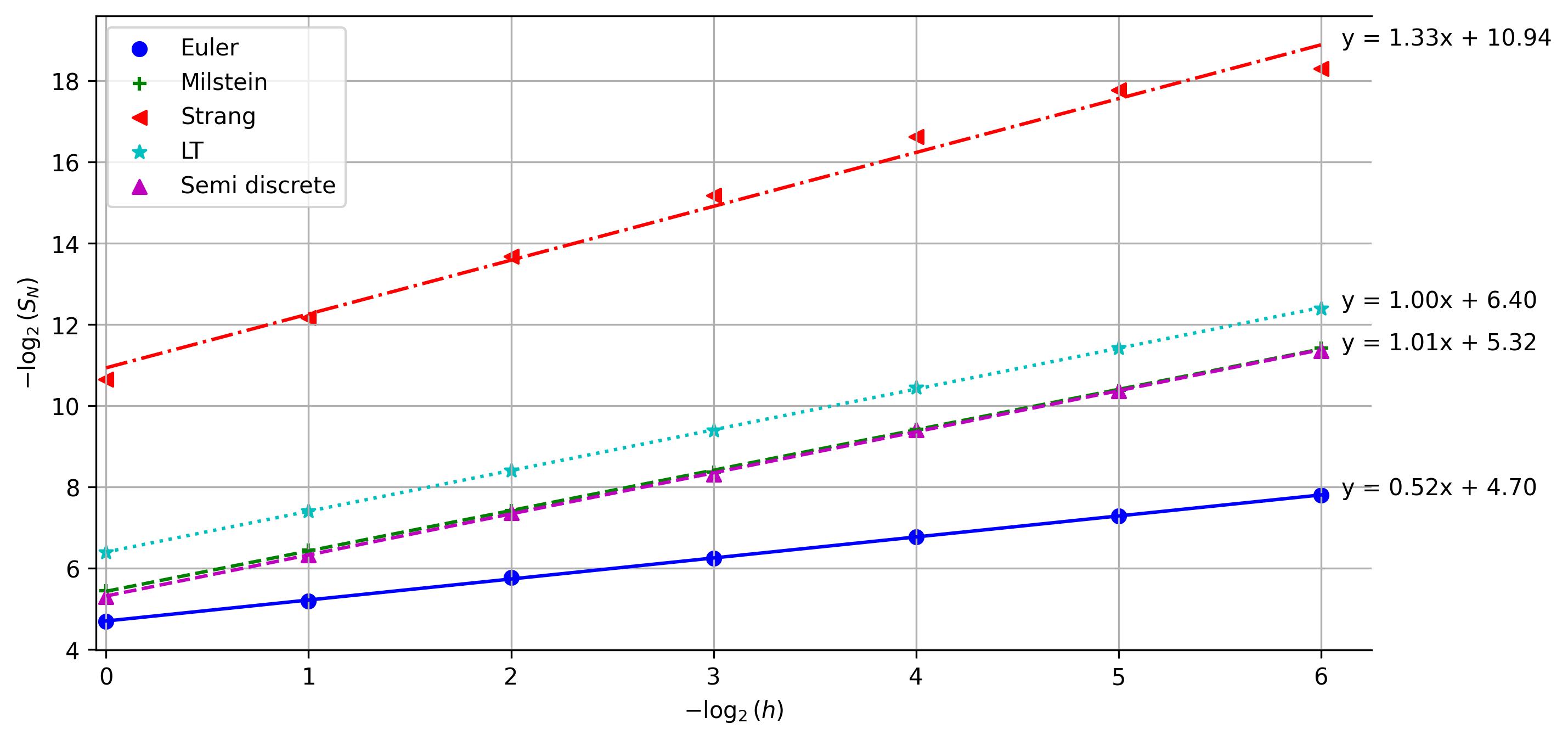}
    \caption{Illustration of the mean-square convergence order on the Student diffusion via the mean-squared error \eqref{eqn:MSE} when $\mu=10.0,\theta=2.0,a=5.0$.}
    \label{fig:T convergence}
\end{figure}

\begin{figure}[t]
    \centering
    \includegraphics[width=0.75\textwidth]{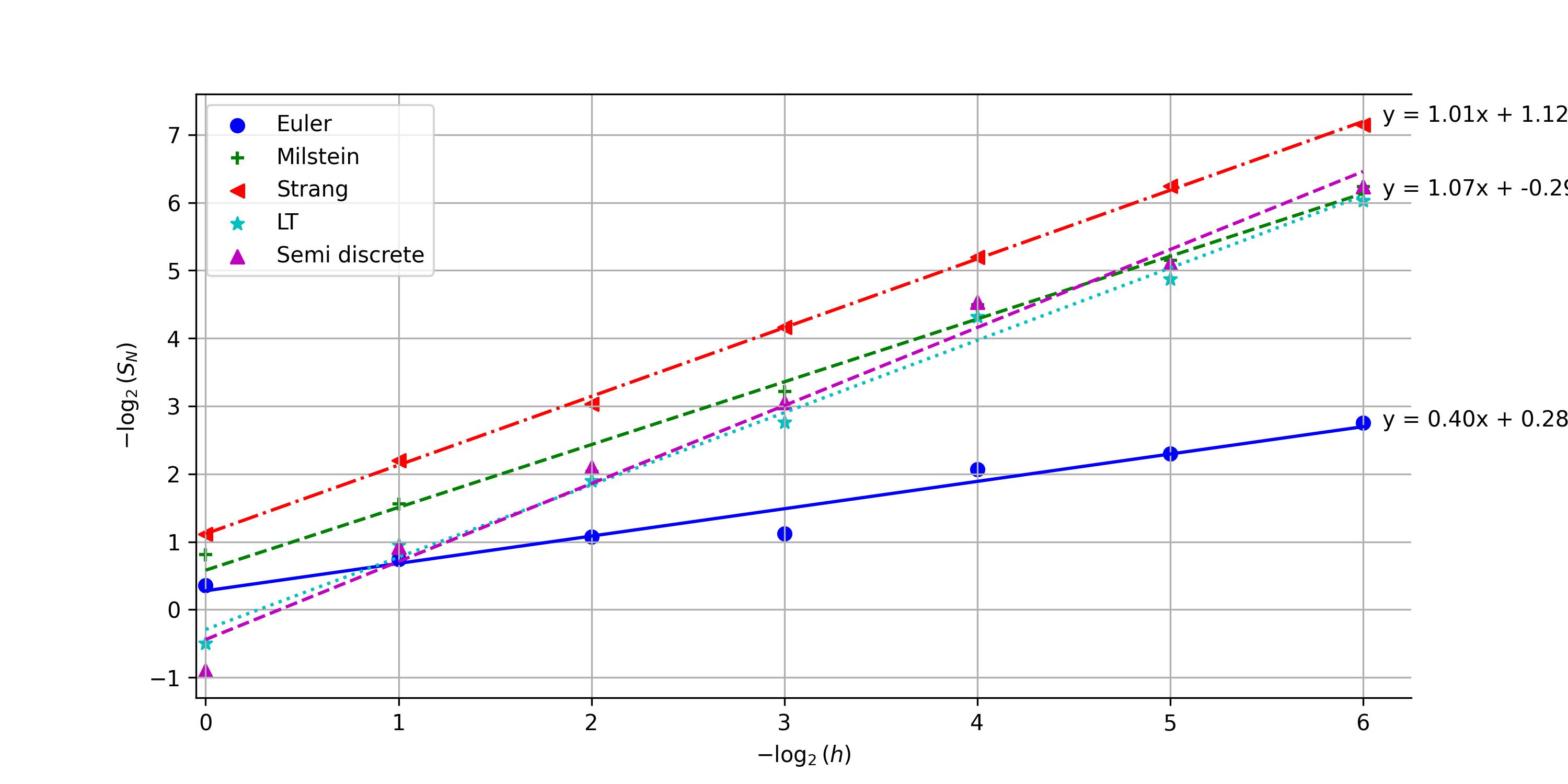}
    \caption{Illustration of the mean-square convergence order on the F diffusion  via the mean-squared error \eqref{eqn:MSE} when $\mu=2.5,\theta=0.5,a=1.0$.}
    \label{fig:F convergence}
\end{figure}

\begin{figure}[t]
    \centering
    \includegraphics[width=0.78\textwidth]{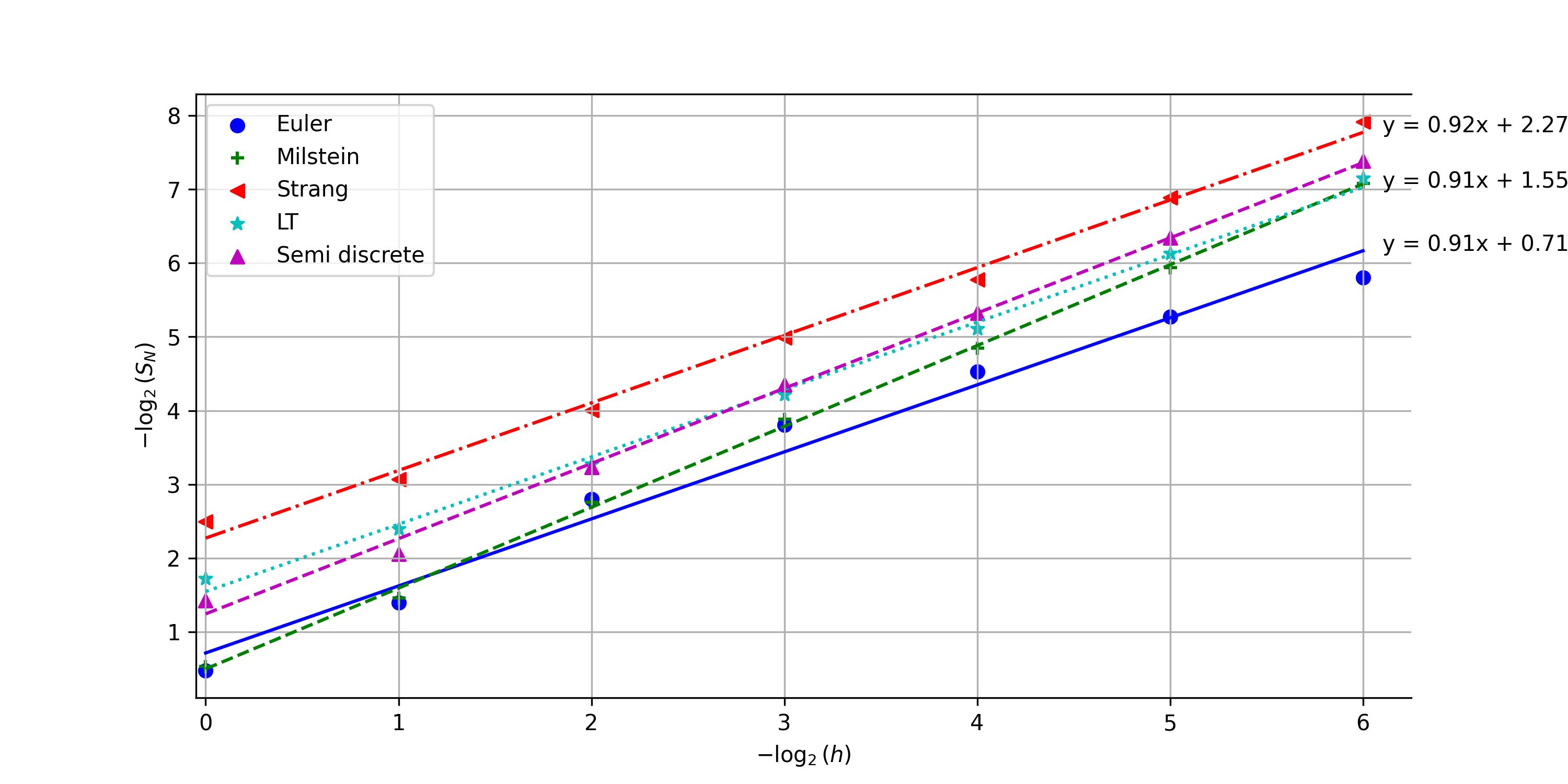}
    \caption{Illustration of the mean-square convergence order on the Wright-Fisher  via the mean-squared error \eqref{eqn:MSE} when $\mu=0.3,\theta=0.3,a=-0.2$}
    \label{fig:WF convergence}
\end{figure}

\subsubsection{Wasserstein distance for CIR process}
In Figure \ref{fig:CIR wasserstein}, we report the Wasserstein distance between the one-step transition density of the CIR process and its numerical approximation computed via Lamperti plus LT \cite{kelly_adaptive_2023}, LT, and S for different parameter values as a function of the time step $h$. In all setting, the S scheme achieves the lowest discrepancy, followed by LT. 

\begin{figure}[t]
    \centering
\includegraphics[width=.9\textwidth]{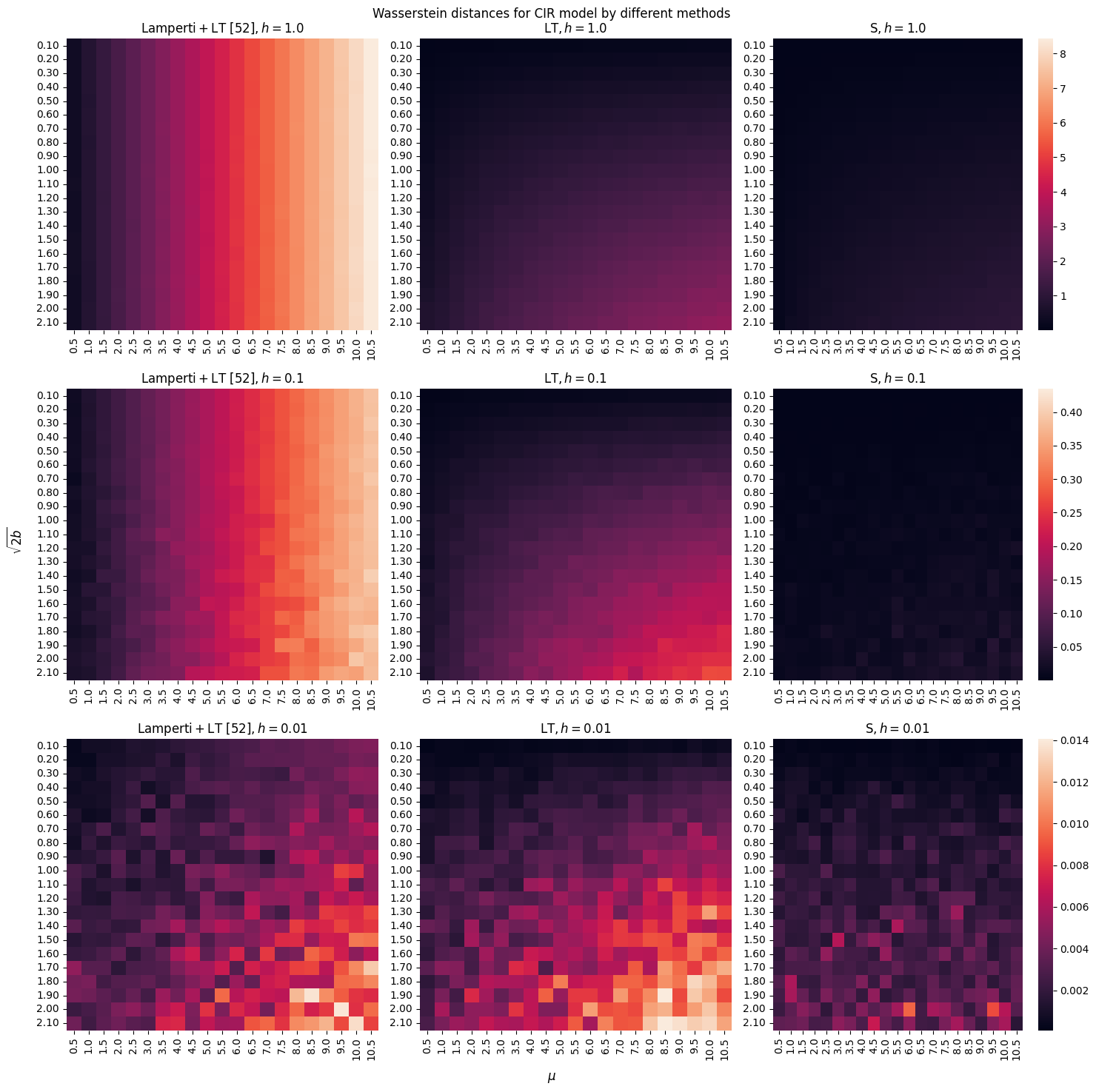}
    \caption{Wasserstein distance between the one-step transition density (from time $0$ to $h$) of the CIR process and of the different numerical schemes as a function of the parameters $\mu$ and $b$ for different time steps $h$.}
    \label{fig:CIR wasserstein}
\end{figure}

\subsection{Additional models}
Here, we consider two additional models for which our proposed splitting schemes can be derived, and inference performed (results not shown). 
\subsubsection{Stochastic Ginzburg-Landau equation}
The stochastic version of the Ginzburg–Landau model with multiplicative noise can be written as \cite{hutzenthaler_strong_2011}
\begin{equation}\label{eqn:Ginzburg-Landau}
    dX_t=\left((\eta+\frac{1}{2}\sigma^2)X_t-\lambda X_t^3\right)dt+\sigma X_t dW_t,
\end{equation}
where $\eta\geq 0,\lambda,\sigma>0$. Its analytical solution is \cite{kloeden_numerical_1992}
\begin{equation*}
    X_t=\frac{X_0 \exp \left(\eta t+\sigma W_t\right)}{\sqrt{1+2 X_0^2 \lambda \int_0^t \exp \left(2 \eta s+2 \sigma W_s\right) d s}},
\end{equation*}
for $t\in [0,T]$. Using the proposed splitting decomposition \eqref{eqn:ODE1}-\eqref{eqn:SDE2}, the corresponding solutions with initial conditions $X_{t_k}$ are given by
\[
X^{[1]}_{t_{k+1}}=\varphi^{[1]}_h(X_{t_k})=\sqrt{\frac{e^{2\eta h}K(X_{t_k})\eta }{1+\lambda e^{2\eta h}K(X_{t_k})}}, \qquad
X^{[2]}_{t_{k+1}}=\varphi^{[1]}_h(X_{t_k})=X_{t_k}e^{-\frac{\sigma^2h}{2}+\sigma \xi_k},
\]
with $K(X_{t_k})=X^2_{t_k}/(\eta -\lambda X^2_{t_k})$. Thus, the LT and S schemes are given by
\begin{align}
\label{GLLT}X^{\textrm{LT}}_{t_{k+1}}=&\sqrt{\frac{e^{2\eta h}K(X_{t_k})\eta }{1+\lambda e^{2\eta h}K(X_{t_k})}}e^{-\frac{\sigma^2h}{2}+\sigma \xi_k},\\
    X^{\textrm{S}}_{t_{k+1}}=&\sqrt{\frac{e^{\eta h}K(\sqrt{\frac{e^{\eta h}K(X_{t_k})\eta }{1+\lambda e^{\eta h}K(X_{t_k})}}e^{-\frac{\sigma^2h}{2}+\sigma 
    \xi_k})\eta }{1+\lambda e^{\eta h}K(\sqrt{\frac{e^{\eta h}K(X_{t_k})\eta }{1+\lambda e^{\eta h}K(X_{t_k})}}e^{-\frac{\sigma^2h}{2}+\sigma \xi_k})}}.
\end{align}
From \eqref{GLLT}, we immediately see that $X_{t_{k+1}}^{\textrm{LT}}|X_{t_k}\sim \textrm{LogN}(-\sigma^2h/2+\ln a,\sigma^2h)$, with $a$ being the square root term in $X^{[1]}_{t_{k+1}}$, and $\textrm{logN}$ denoting a LogNormal distribution, allowing for a direct derivation of the pseudo-log-likelihood. Moreover, this can also be used to compute the one-step conditional mean and variance of the LT scheme. 
However, the $K(x)$ term may be negative for some $x$, making the $X^{[1]}_{t_{k+1}}$ term, and thus $X^\bullet_{t_{k+1}}$, not well defined for some $x$. To tackle this, we consider a different SDE decomposition, taking advantage of another reducible SDE (case 2 in Chapter 4 of \cite{kloeden_numerical_1992}), see also Remark \ref{case2} with $d_1(x)$. In particular, 
we propose the following decomposition
\begin{align*}
    dX^{[1]}_t=&-\lambda (X^{[1]}_{t})^3dt,\\
    dX^{[2]}_t=&(\eta+\frac{\sigma^2}{2})X^{[2]}_tdt+\sigma X^{[2]}_tdW_t,
\end{align*}
with solutions
\begin{align*}
X^{[1]}_{t_{k+1}}=\frac{X_{t_k}}{\sqrt{2X_{t_k}^2\lambda h+1}},\qquad X^{[2]}_{t_{k+1}}=X_{t_k}e^{\eta h+\sigma\xi_k}.
\end{align*}
Then, the LT and S schemes are given by 
\begin{align*}
X^{\textrm{LT}}_{t_{k+1}}=\frac{X_{t_k}}{\sqrt{2X_{t_k}^2\lambda h+1}}e^{\eta h+\sigma\xi_k},\qquad
X^{\textrm{S}}_{t_{k+1}}=\frac{X_{t_k} e^{\eta h+\sigma \xi_k}}{\sqrt{1+\lambda h X_{t_k}^2+\lambda h X_{t_k}^2 e^{2(\eta h+\sigma \xi_k)}}}.
\end{align*}
Also in this case, $X^\textrm{LT}_{t_{k+1}}|X_{t_k}\sim \textrm{LogN}(\theta h+\ln(X^{[1]}_{t_{k+1}}),\sigma^2h)$. However, both subquestions' solutions are now positive and well defined, with the LT and S solutions preserving the state space of the process.

\begin{figure}
    \centering
\includegraphics[width=0.75\textwidth]{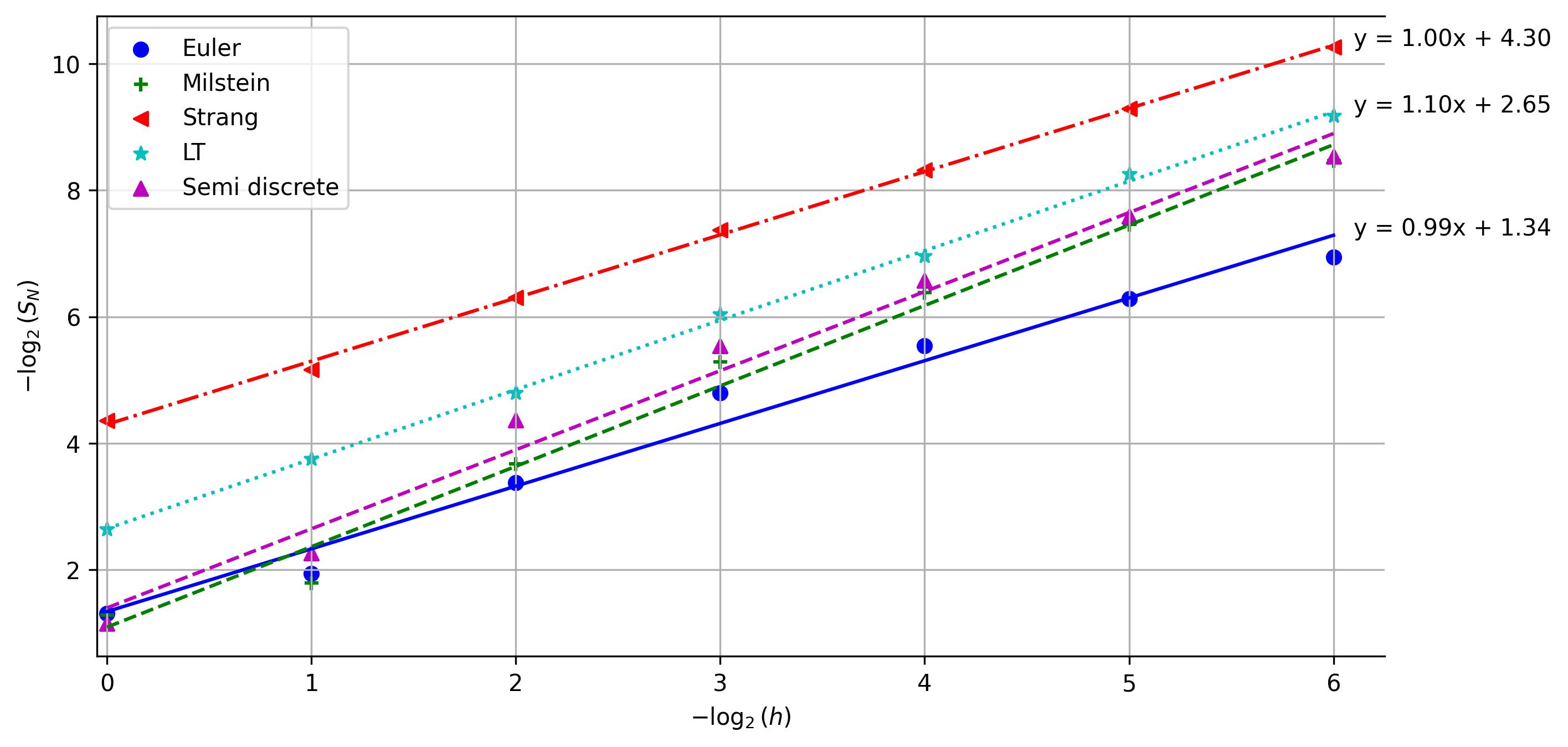}
    \caption{Illustration of the mean-square convergence order on the Ginzburg-Landau SDE \eqref{eqn:Ginzburg-Landau}
   via the mean-squared error \eqref{eqn:MSE}.}
    \label{fig:GL convergence}
\end{figure}
In Figure \ref{fig:GL convergence}, we report the numerical strong order of convergence for the LT and S splitting, with the reference scheme being the SD scheme. The pattern of the order of convergence is the same as for the Pearson diffusion class, with S achieving the best performance.

\subsubsection{Stochastic Verhulst equation}
This model is driven by the following SDE
\begin{equation}\label{eqn:Verhulst SDE}
    dX_t=\left((\eta+\frac{1}{2}\sigma^2)X_t-\lambda X_t^2\right)dt+\sigma X_t dW_t,
\end{equation}
where $\eta,\lambda,\sigma>0$, which resembles \eqref{eqn:Ginzburg-Landau}, except that the drift term is now quadratic. Also in this case, an explicit solution is available \cite{kloeden_numerical_1992}
\begin{equation}
    X_t=\frac{X_0 \exp \left(\eta t+\sigma W_t\right)}{1+X_0 \lambda \int_0^t \exp \left(\eta s+\sigma W_s\right) d s},
\end{equation}
for $t\in[0,T]$, with $0$ being an entrance boundary, i.e., the process is positive. Using the proposed splitting decomposition \eqref{eqn:ODE1}-\eqref{eqn:SDE2}, the corresponding solutions with initial conditions $X_{t_k}$ are given by
\[X^{[1]}_{t_{k+1}}=\frac{\eta e^{\eta h}X_{t_k}}{\eta+\lambda X_{t_k}(e^{\eta h}-1)},\qquad 
X^{[2]}_{t_{k+1}}=X_{t_k}e^{\sigma \xi_k},
\]
with the one-step LT and S schemes obtained via \eqref{eqn:LT subsystem} and \eqref{eqn:Strang subsystem}, respectively. Also in this case, the one-step LT conditional density is LogNormal distributed, allowing for a direct derivation of the pseudo-log-likelihood, and of the one-step conditional mean and variance. Note also that both schemes are positive, preserving thus the state space of the process.
\end{appendix}
\end{document}